\theoremstyle{plain}
\newtheorem{theorem}{Theorem}[section]
\newtheorem{proposition}{Proposition}[section]
\newtheorem{lemma}{Lemma}[section]
\newtheorem{corollary}{Corollary}[section]
\crefname{theorem}{Theorem}{Theorems}
\crefname{proposition}{Proposition}{Propositions}
\crefname{corollary}{Corollary}{Corollaries}
\theoremstyle{definition}
\newtheorem{definition}{Definition}
\newtheorem{remark}{Remark}
\crefname{definition}{Definition}{Definitions}
\crefname{remark}{Remark}{Remarks}
\def\EE{\mathbb{E}}
\def\PP{\mathbb{P}}
\def\RR{\mathbb{R}}
\def\calA{\mathcal{A}}
\def\calD{\mathcal{D}}
\def\calF{\mathcal{F}}
\def\calI{\mathcal{I}}
\def\calJ{\mathcal{J}}
\def\calN{\mathcal{N}}
\def\calS{\mathcal{S}}
\def\calT{\mathcal{T}}
\def\1{\mathbbm{1}}
\def\var{\mathsf{Var}}
\newcommand\independent{\protect\mathpalette{\protect\independenT}{\perp}}
\def\independenT#1#2{\mathrel{\rlap{$#1#2$}\mkern2mu{#1#2}}}
\newcommand{\argmin}{\mathop{\mathrm{argmin}}}
\theoremstyle{plain}
\def \var {\mathsf{Var}}
\def\independenT#1#2{\mathrel{\rlap{$#1#2$}\mkern2mu{#1#2}}}
\definecolor{myblue}{rgb}{.8, .8, 1}
\definecolor{mathblue}{rgb}{0.2472, 0.24, 0.6} % mathematica's Color[1, 1--3]
\definecolor{mathred}{rgb}{0.6, 0.24, 0.442893}
\definecolor{mathyellow}{rgb}{0.6, 0.547014, 0.24}
\newcommand{\No}{{n}}
\newcommand{\NoNull}{{n_0}}
\newcommand{\NoNonNull}{{n_1}}
\newcommand{\NoNc}{m}
\newcommand{\probNull}{{\pi}}
\newcommand{\pval}[1]{{p_{#1}}}
\newcommand{\pvalOrder}[1]{{p_{(#1)}}}
\newcommand{\pvalAll}{{\boldsymbol{p}}}
\newcommand{\qval}[1]{{p_{#1}}}
\newcommand{\qvalAll}{{\boldsymbol{p}}}
\newcommand{\testStatistics}[1]{{T_{#1}}}
\newcommand{\testStatisticsAll}{{\boldsymbol{T}}}
\newcommand{\cdfTestStatistics}[1]{{F_{#1}}}
\newcommand{\cdfTestStatisticsNull}{{F_{0}}}
\newcommand{\cdfTestStatisticsNonNull}{{F_{1}}}
\newcommand{\pdfTestStatistics}[1]{{f_{#1}}}
\newcommand{\pdfTestStatisticsNull}{{f_{0}}}
\newcommand{\pdfTestStatisticsNonNull}{{f_{1}}}
\newcommand{\ncTestStatistics}[1]{{C_{#1}}}
\newcommand{\nickname}{{\text{RANC}}}
\newcommand{\hypothesis}[1]{{H_{#1}}}
\newcommand{\hypothesisIndex}[1]{{\calI_{#1}}}
\newcommand{\nullHypothesis}[1]{{H_{0,#1}}}
\newcommand{\nullHypothesisIndex}{{\calI_{0}}}
\newcommand{\responseTreatment}[1]{{Y_{#1}^{\text{t}}}}
\newcommand{\responseControl}[1]{{Y_{#1}^{\text{c}}}}
\newcommand{\truePositive}{{S}}
\newcommand{\falsePositive}{{V}}
\newcommand{\ncFalsePositive}{{V_{\text{nc}}}}
\newcommand{\trueNegative}{{U}}
\newcommand{\falseNegative}{{T}}
\newcommand{\totalPositive}{{R}}
\newcommand{\FWER}{\text{FWER}}
\newcommand{\FDR}{\text{FDR}}
\newcommand{\FDRLevel}{q}
\newcommand{\FDP}{\text{FDP}}
\newcommand{\localFDR}{\text{local-FDR}}
\newcommand{\LocalFDR}{\text{Local-FDR}}
\newcommand{\PRDS}{{\text{PRDS}}}
\newcommand{\MTPTwo}{{$\text{MTP}_2$}}
\newcommand{\BH}{{\text{BH}}}
\newcommand{\permutationFunction}[1]{{G(#1)}}
\newcommand{\stoppingTime}{{\tau_q}}
\newcommand{\stoppingTimeLambda}{{\tau_q}}
\newcommand{\estimatedLocalFDR}[1]{{\hat{q}(#1)}}
\newcommand{\threshold}{{t}}
\newcommand{\trueThreshold}{{\tau^*_\lambda}}
\newcommand{\EmpiricalThreshold}{{\hat{\tau}_{\lambda, \No, \NoNc}}}
\newcommand{\BayesRisk}{{R}}
\newcommand{\BayesRiskLambda}{{R_{\lambda}}}
\newcommand{\EmpiricalBayesRisk}{{{\ell}_{\No, \NoNc}}}
\newcommand{\EmpiricalBayesRiskLambda}{{{\ell}_{\lambda, \No, \NoNc}}}
\newcommand{\EmpiricalCdfTestStatistics}{{{F}_n}}
\newcommand{\EmpiricalCdfTestStatisticsNull}{{{F}_{0,\NoNc}}}
\newcommand{\pdfBased}{{\text{PDF-based}}}
\newcommand{\cdfBased}{{\text{CDF-based}}}
\newcommand{\weightNull}{{w_0}}
\newcommand{\weightNonNull}{{w_1}}
\newcommand{\level}{{q}}
\title{Simultaneous Hypothesis Testing Using Internal Negative
  Controls with An Application to Proteomics}
\author{Zijun Gao\thanks{Statistical Laboratory, Department of Pure Mathematics and Mathematical Statistics, University of Cambridge, UK. Email: \{zg305\}@cam.ac.uk.} ~ and
  ~Qingyuan Zhao\thanks{Statistical Laboratory, Department of Pure Mathematics and Mathematical Statistics, University of Cambridge, UK. Email: \{qyzhao\}@statslab.cam.ac.uk.}
}
\begin{document}

\maketitle

\begin{abstract}
  % Consider a simultaneous hypothesis testing problem where each
  % hypothesis is associated with a test statistic. Suppose it is
  % difficult to obtain the null distribution of the test statistics,
  % but some null hypotheses---referred to as the internal negative
  % controls---are known to be true. When it is reasonable to assume that the
  % test statistics associated with the negative controls are
  % exchangeable with those associated with the unknown true null
  % hypotheses, we propose to use a statistic's Rank Among Negative
  % Controls (RANC) as a p-value for the corresponding hypothesis. We
  % provide two theoretical prospectives on this proposal. First, we view
  % the empirical distribution of the negative control statistics as an
  % estimate of the null distribution. We use this to show that, when the
  % test statistics are
  % exchangeable, the RANC p-values are
  % individually valid and have a positive regression dependence on the
  % subset of true nulls.
  % Second, we study the empirical processes of the test statistics
  % indexed by the rejection threshold. We use this to show that the
  % Benjamini-Hochberg procedure applied to the RANC p-values may still
  % control the false discovery rate when the test statistics are not
  % exchangeable. The practical performance of our
  % method is illustrated using numerical simulations and a real proteomic
  % dataset.

  Negative control is a common technique in scientific investigations
  and broadly refers to the situation where a null effect (``negative
  result'') is expected. Motivated by a real proteomic dataset, we will
  present three promising and
  closely connected methods of using negative controls to assist
  simultaneous hypothesis testing. The first method uses negative
  controls to construct a permutation p-value for every hypothesis under
  investigation, and we give several sufficient conditions for such
  p-values to be valid and positive regression dependent on the set
  (PRDS) of true nulls. The second method uses negative controls to
  construct an estimate of the false discovery rate (FDR), and we give a
  sufficient condition under which the step-up procedure based on this
  estimate controls the FDR. The third method, derived from an existing
  ad hoc algorithm for proteomic analysis, uses negative controls to
  construct a nonparametric estimator of the local false discovery
  rate. We conclude with some practical suggestions and connections to
  some closely related methods that are propsed recently.
\end{abstract}

\noindent\textbf{Keywords: multiple testing, negative control,
  empirical null, exchangeability, empirical process, proteomics}

\clearpage

% \doublespacing

\section{Introduction}\label{sec:introduction}

With the rapid development of high-throughput sequencing technologies,
a common task in modern statistical applications is to test a large
number of hypotheses simultaneously. A wealth of multiple testing
procedures have been proposed in the literature; most of them operate
by combining p-values for the individual hypotheses. Prominent
examples include Bonferroni's correction and Simes' test
\parencite{simes1986improved} for family-wise error rate (FWER) control,
the Benjamini-Hochberg (\BH) procedure for false discovery rate (FDR)
control \parencite{benjamini1995controlling}, the closed testing
principle \parencite{marcus1976closed}, and empirical Bayes methods
for controlling the local false discovery rate \localFDR\
\parencite{efron2001empirical}.

In many practical situations, however, the validity of these p-values
may be jeopardized by various problems. For example, many high-throughput
platforms for biological experiments are subject to batch effects
\parencite{leek10_tackl_wides_critic_impac_batch}. Other reasons for
invalid p-values include model misspecification and small sample
sizes. As a consequence, multiple testing procedures that combine
these p-values may fail to control the relevant statistical errors.

% In this paper, we propose a new approach of simultaneous hypothesis
% testing using internal negative control observations. In a
% nutshell, we propose to use, for the test statistic of each
% hypothesis, its Rank Among the Negative Control (\nickname) as a
% p-value. We will show that such p-values are individually valid and
% can be readily used in a large number of multiple testing procedures
% (including those mentioned above), if the collection of null and
% internal negative control statistics are exchangeable. As the
% \nickname~p-values are model-free, non-asymptotic, and robust to batch
% effects, they provide an appealing alternative to conventional
% model-based p-values.
In this paper, we employ internal negative controls, of which the null
hypotheses are known to be true, to perform valid simultaneous
hypothesis testing. Motivated by a real proteomic analysis, we present
three closely related methods to use the negative controls. The first
method uses, for the test statistic of each
hypothesis under investigation, its Rank Among the Negative Control
(\nickname) as a permutation p-value; alternatively, this can be
understood as using negative controls to form a (nonparametric)
empirical null distribution. The second method uses the negative
controls to give an estimate of the false discovery rate of a set of
rejected hypotheses. The third method, derived from an existing
ad hoc algorithm for proteomic analysis, uses negative controls to
construct a nonparametric estimator of \localFDR.

% As the
% \nickname~p-values are model-free, non-asymptotic, and robust to batch
% effects, they provide an appealing alternative to conventional
% model-based p-values.
% In the rest of the introduction, we will give a precise
% definition of the \nickname~p-values, illustrate them using a motivating
% proteomic dataset, and review the related literature in biology and
% statistics on using empirical null distribution and internal negative
% controls.

% In the rest of the introduction, we will introduce our motivating
% proteomic dataset, give a precise
% definition of the \nickname~p-values, and review the related literature in biology and
% statistics on using empirical null distribution and internal negative
% controls.

\subsection{Motivating example}\label{sec:motivating.data}

Our considerations are motivated by a real proteomic analysis shared
by our collaborating neuroscientists. \textcite{shuster2022situ}
used proteomic profiling to identify
candidate cell membrane proteins that affect dendrite morphogenesis of
Purkinje cells.
% There are two cell stages: postnatal days $15$ ($P15$) and $35$ ($P35$), and each cell stage contains one experimental condition with two biological replicates and two control conditions with one sample each.
% We follow \cite{shuster2022situ} and pair each control sample with one of two experimental biological replicates per stage.
To focus on the statistical problem, in this example we will consider
developing cells ($15$ days postnatal) under the treatment
condition (labelled as
HRP+H$_2$O$_2$ in the original paper) and the control condition (HRP
only). For each condition, the
Purkinje cells of one mouse were extracted, cultivated under the assigned
condition, and prepared for mass spectrometry that measures the abundance
of each protein.
% This comparison is also examined in \cite{shuster2022situ} (Figure 3E).

In total, $4,753$ proteins were detected and their subcellular
localizations were annotated in the UniProt database.
\textcite{shuster2022situ} were interested in
determining proteins annotated with plasma membrane ($740$ in total)
that show a higher level of expression under the treatment
condition. As there were no biological repeats, the authors ranked the
membrane proteins by the difference in their expression under the
treatment and control conditions, and then used internal negative control
proteins to determine a cutoff
value.\footnote{\textcite{shuster2022situ} referred to internal negative
  control proteins as ``false positives'' and proteins under
  investigation as ``true positives''.} Here, the internal negative
control
proteins are those annotated with nuclear,
mitochondrial, or cytoplasmic but not plasma membrane ($2067$ in
total).\footnote{The
  numbers here are slightly different from
  those reported in \cite{shuster2022situ} due to an update of the
  UniProt knowledgebase.}

\Cref{fig:protein} gives a side-by-side
comparison of the test statistics (treatment-minus-control
expressions) of the proteins under investigation and of the internal negative
controls. The bulk of
the proteins under investigation is approximately normally distributed
and resemble the internal negative control
proteins, but a number of proteins show a much larger difference
compared to the internal negative controls. Our scientific collaborators
informed us that such a pattern is commonly observed in similar
experiments.

One immediate challenge with this dataset is that there is no
biological repeat for each condition. This precludes us from deriving
a null distribution for any single protein based on measurements of
just that protein. One possibile solution is to use an \emph{empirical
  null} distribution. \Cref{fig:hist-pval-theoretical} shows the
histogram of the p-values obtained from a two-sample $t$-test with the
standard error estimated by pooling the
proteins. It is apparent from this plot that the standard error is likely to be overestimated, resulting in unexpected concentration of
p-values around $0.5$. In Section \ref{sec:real.data} we describe more
details of this and other ways to estimate the null distribution,
including the method suggested by \textcite{efron2004large}.

\begin{figure}[tbp]
  \centering
  % \begin{subfigure}[b]{\textwidth}
    % \centering
    \includegraphics[width = 0.8\textwidth]{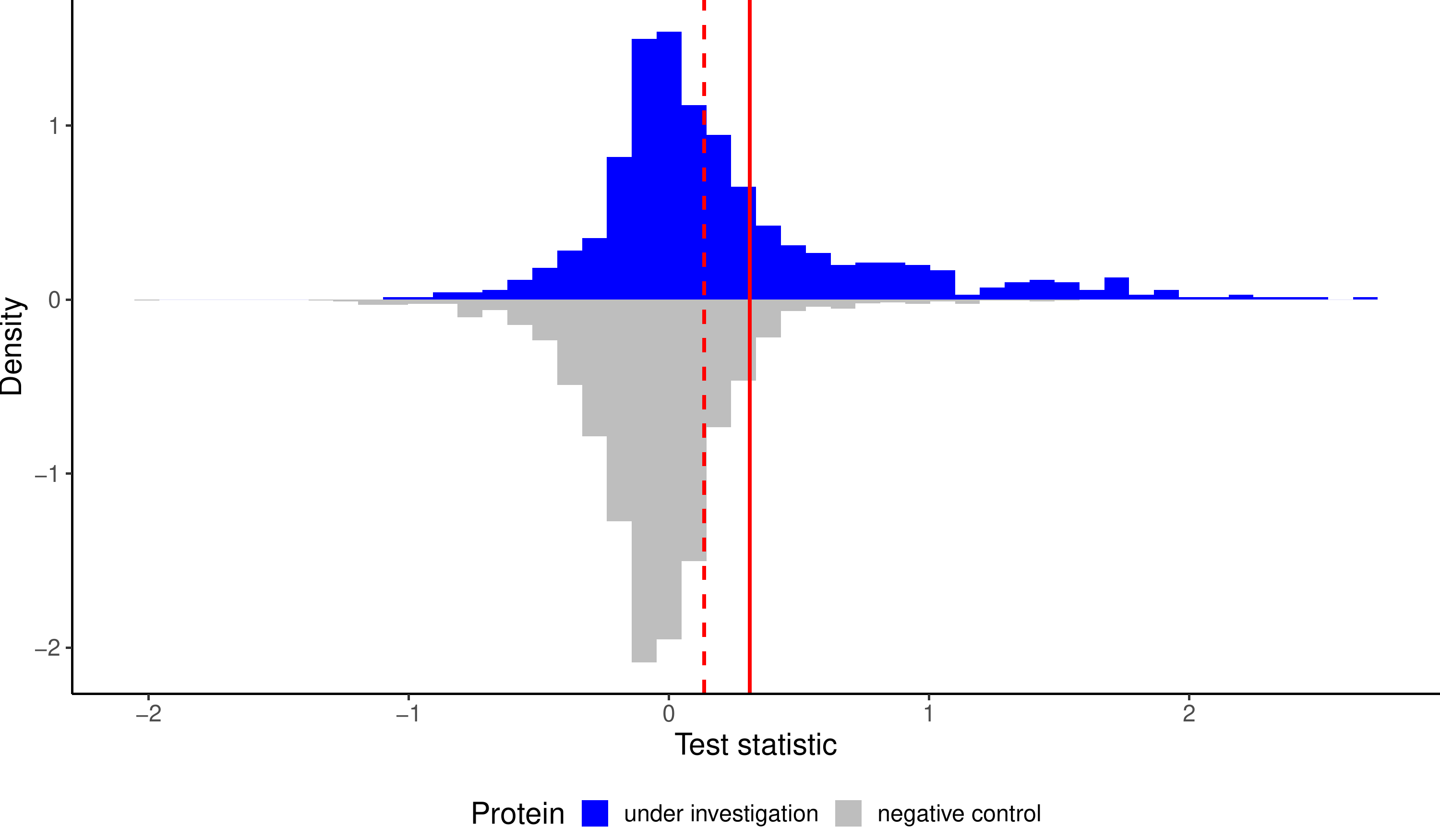}
    % \subcaption{$\chi_{2 \No}^2$}
  % \end{subfigure}

  \caption{Histogram of test statistics (abundance differences between the
      treatment and control conditions of proteins
      under investigation (blue) and internal negative control
      proteins (grey)). Superimposed are rejection thresholds
      (red) with \FDR~level $q = 0.2$ \localFDR~level $q = \pi$ ($\pi$
      is the proportion of null hypotheses). See \Cref{fig:method} for
    how the thresholds are determined.}
    \label{fig:protein}
\end{figure}

\subsection{Overview of the proposed methods}\label{sec:overview}

Our first method may be viewed as a nonparametric extension of the
methods in the last paragraph. Specifically, we propose to
estimate the null distribution using the empirical distribution of the
negative control test statistics. \Cref{fig:hist-pval-ranc} shows the
histogram of what we call the Rank Among Negative Controls (\nickname)
p-values obtained from this empirical null distribution. Compared to
\Cref{fig:hist-pval-theoretical}, a striking feature of
\Cref{fig:hist-pval-ranc} is that the negative control p-values are
almost uniformly distributed over $[0,1]$. This is expected from this
choice of the empirical null; in fact, these p-values are precisely
equal to $\{1/(\NoNc+1), \dotsc, \NoNc/(\NoNc+1), 1\}$ (assuming no
ties). Among the membrane proteins under investigation, the \nickname\
p-values follow a desirable pattern: their distribution has a spike
near $0$ and is nearly uniform elsewhere. One may apply familiar
multiple testing methods such as the \BH\ procedure to \nickname\
p-values; this will be justified in \Cref{sec:pval}.

\begin{figure}[p]
  \centering
    \begin{subfigure}[t]{0.45\textwidth}
    \centering
    \includegraphics[width =
    \textwidth]{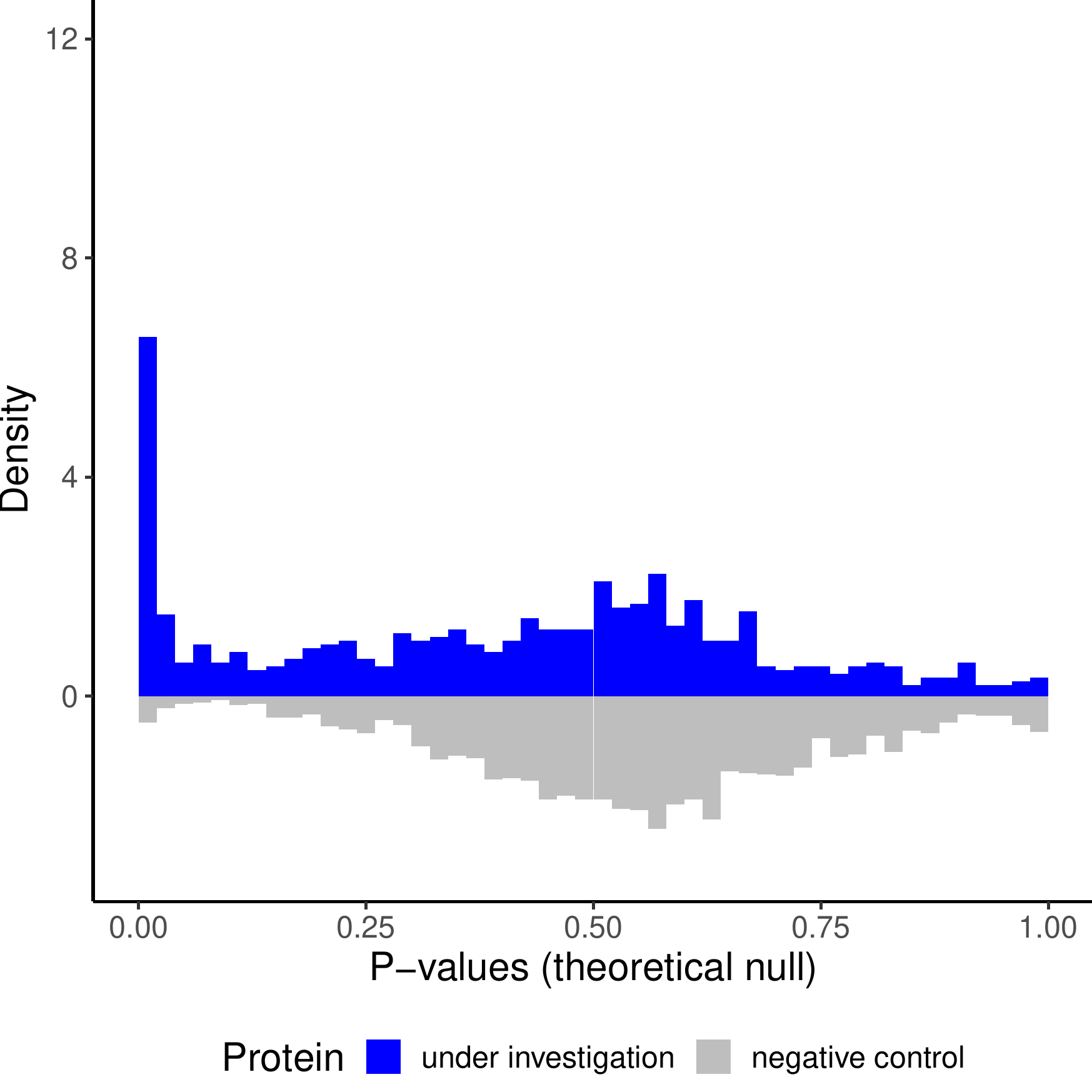}
    \caption{Histogram of p-values calculated from a two-sample
      $t$-test.}
    \label{fig:hist-pval-theoretical}
  \end{subfigure} \quad
  % \begin{minipage}{4cm}
  %   \centering
  %   \includegraphics[clip, trim = 0cm 0cm 5cm 0cm, height = 5cm]{plot/protein-pval-histogram.pdf}
  %     %   \subcaption*{(a)}
  % \end{minipage}
  % \begin{minipage}{3.3cm}
  %   \centering
  %   \includegraphics[clip, trim = 0cm 0cm 5cm 0cm, height  = 4cm]{}
  %     %   \subcaption*{(b)}
  % \end{minipage}
  \begin{subfigure}[t]{0.45\textwidth}
    \centering
    \includegraphics[width =
    \textwidth]{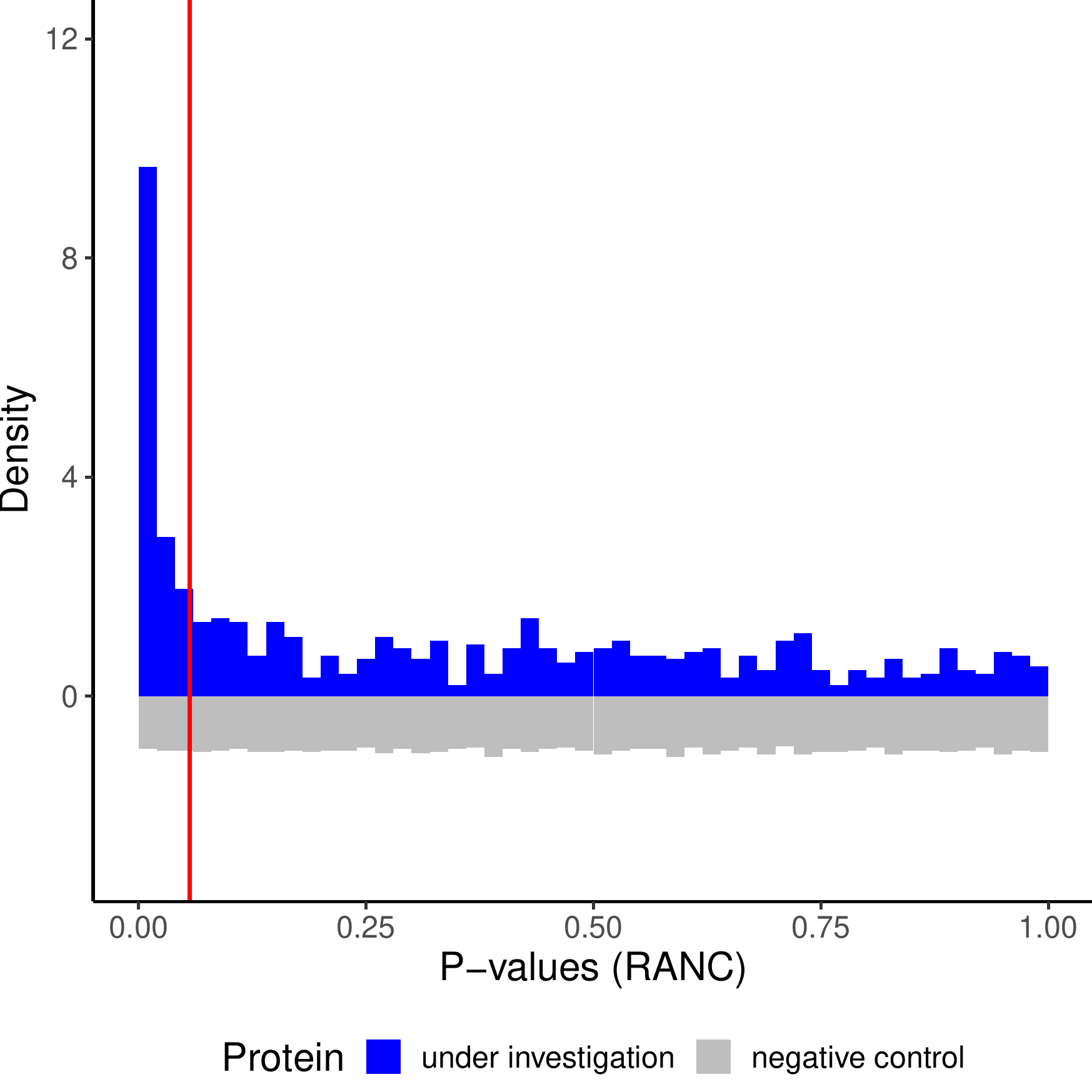}
    \caption{Histogram of p-values calculated using rank among
      negative controls (RANC, method 1). The FDR rejection threshold
      can be determined by applying the Benjamini-Hochberg procedure.}
    \label{fig:hist-pval-ranc}
  \end{subfigure}

  \begin{subfigure}[t]{0.45\textwidth}
    \centering
    \includegraphics[width = \textwidth]{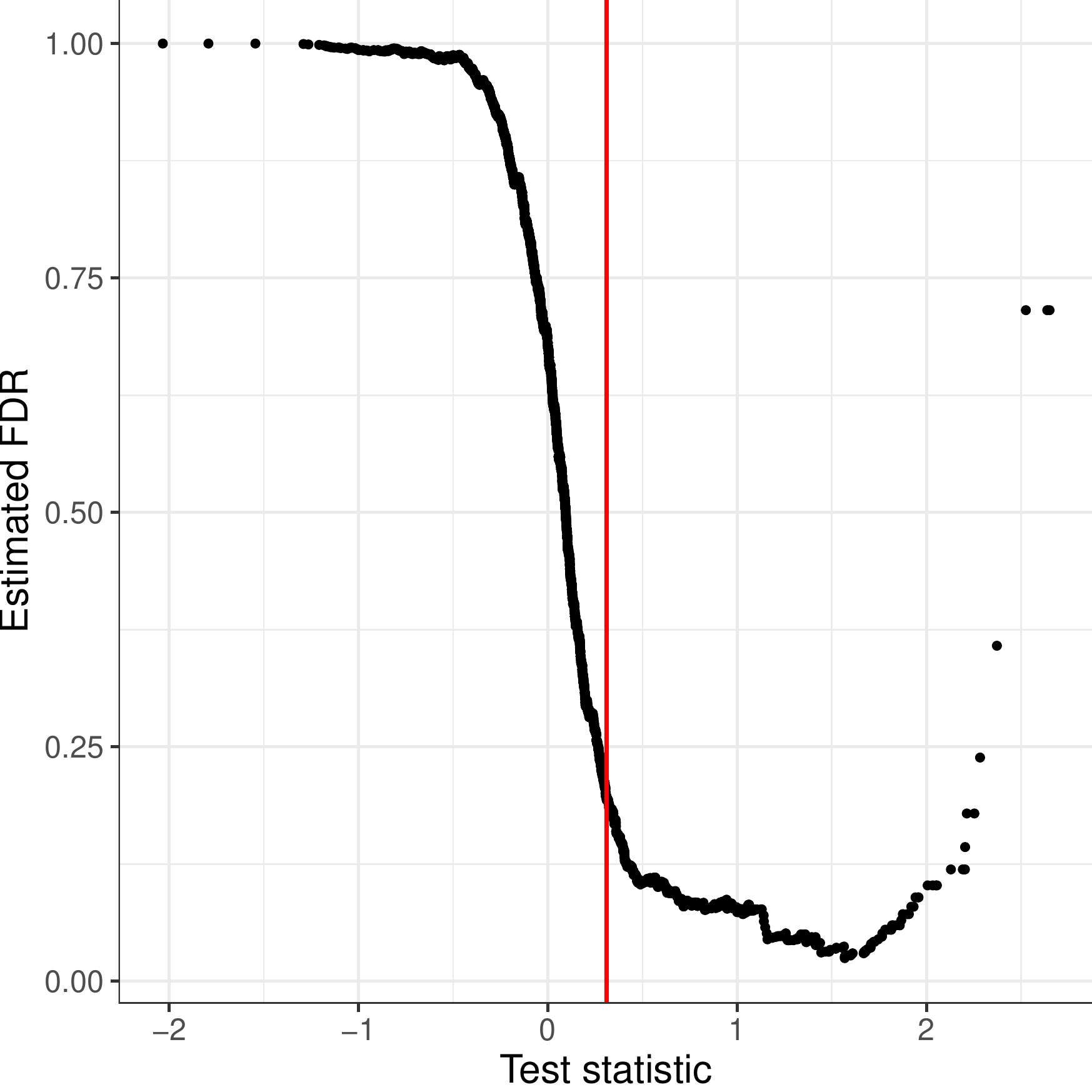}
    \caption{Rejection threshold determined using estimated FDR
      (method 2).}
    \label{fig:estimated-fdr}
  \end{subfigure}
  \quad
  \begin{subfigure}[t]{0.45\textwidth}
    \centering
    \includegraphics[width = \textwidth]{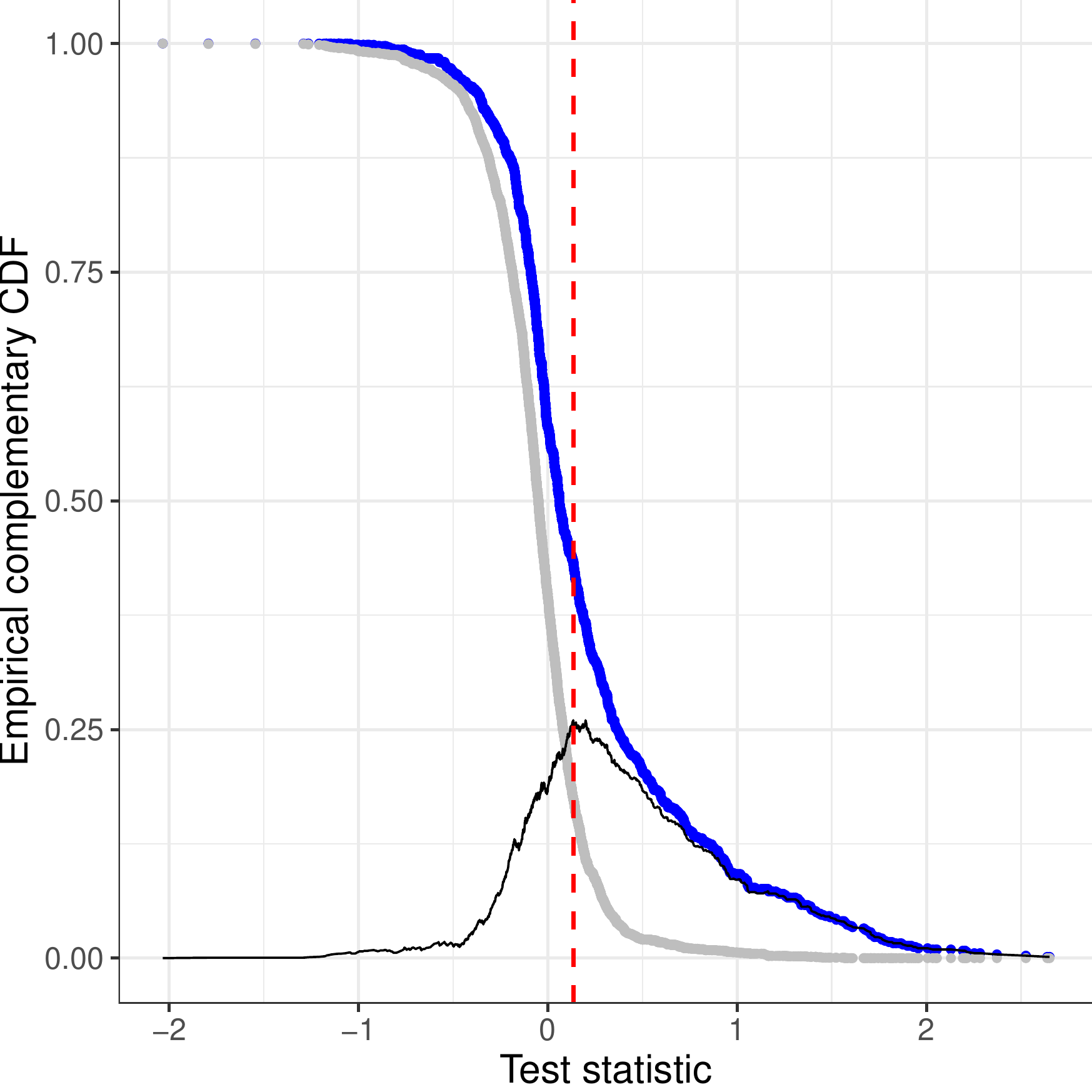}
    \caption{Rejection threshold determined by maximizing the
      difference of empirical complementary CDFs (black curve,
      method 3).}
    \label{fig:lfdr}
  \end{subfigure}

  \caption{An illustration of the methods proposed in this
    article. The red, vertical lines correspond to rejection
    thresholds with FDR level $q=0.2$ (solid) and local
    FDR level $q = \pi$ (dashed, $\pi$ is the proportion of null
    hypotheses).}
  \label{fig:method}

\end{figure}

Our second method is motivated by the empirical process perspective of
\FDR\ control
\parencite{genovese2004stochastic,storey2004strong}. Specifically, we
propose to estimate the \FDR\ above (or below) any rejection threshold
by the ratio of the proportions of test statistics and
negative controls that would be rejected by that
threshold. \Cref{fig:estimated-fdr} illustrates this proposal by
showing the estimated \FDR\ curve for the motivating proteomic
analysis, which can be used to select the rejection threshold for any
targeted \FDR\ level $q$ ($q = 0.2$ is illustrated in the figure).

Our third method is motivated by a rule-of-thumb cut-off analysis in the
original study \parencite{shuster2022situ} proposed by
\textcite{hung14_proteom_mappin_human_mitoc_inter}. This method
computes the empirical (in this case, complementary) cumulative
distribution functions of the test statistics and negative controls
and simply chooses the maximizer of their difference as the rejection
threshold; see \Cref{fig:lfdr} for an illustration. Although this
procedure seems rather ad hoc, the objective that it minimizes can indeed
be viewed as a nonparametric estimate of a weighted mis-classification
risk for multiple testing. Using this heuristic, we show that the
procedure in \textcite{hung14_proteom_mappin_human_mitoc_inter} can be
extended to control the \localFDR~at the rejection threshold.

Shortly after releasing the first preprint of this paper, we
discovered that some recent works, stemming from different motivating
applications, have proposed some almost identical ideas
above. Specifically, our first method (\nickname\ p-value) is the same
as the conformal p-values to test for outliers proposed by
\textcite{bates21_testin_outlier_with_confor_p_values}. Our
second method (empirical estimation of FDR) is essentially the same as
the so-called ``semi-supervised'' multiple testing proposed by
\textcite{mary22_semi_super_multip_testin}, and is closely related to
the analysis of knockoff filters in
\textcite{weinstein17_power_predic_analy_knock_with_lasso_statis};
they are all essentially derived from the martingale argument in
\textcite{storey2004strong}. Our
third method (estimator of \localFDR) is closely related to the method in
\textcite{soloff22_edge_discov} which assumes a known null
distribution. Thus, our paper in effect grounds these recent
ideas in an exploding and diverging literature on a simple and
concrete biological application described next. Moreover, as we are
motivated by biological applications in which negative controls could
easily be misselected, the theoretical
conditions developed below are generally weaker than the i.i.d.\ or
exchangeability conditions used before. We discuss the
connections and differences with these recent works and the broader
literature in more detail below.

\subsection{Literature review}

% We first provide an overview of negative controls including the external and internal negative controls.
% We then delve into the literature of internal negative controls:
% summarizing existing internal negative control datasets and how they
% are used.

When most of the null hypotheses are true (i.e.\ $\NoNull/\No$ is close
to $1$) and the theoretical model is correct, the bulk of the test
statistics should be close to the theoretical null
distribution. However, this is often not the case in practice. One
example is the population stratification in genome-wide association
studies, where systematic ancestry difference may distort the
null distribution
\parencite{price06_princ_compon_analy_correc_strat,hellwege17_popul_strat_genet_assoc_studies}. Another
example is batch effect or unwanted variation due to the sequencing
platform
\parencite{leek10_tackl_wides_critic_impac_batch}. \textcite{efron2004large}
argues that if such problems arise, it may be sensible to estimate the null
distribution using the bulk of the empirical distribution of the test
statistics. Other approaches based on more sophisticated models of the
data (such as latent factor models) attempt to estimate the null
distribution empirically by assuming the false hypotheses are sparse
\parencite{leek07_captur_heter_gene_expres_studies,wang17_confoun_adjus_multip_hypot_testin} or by using
negative controls
\parencite{gagnon2012using,wang17_confoun_adjus_multip_hypot_testin}.

Broadly speaking, negative control refers to the situation where a
null effect (``negative result'') is expected. For a given scientific
experiment, there are two types of negative controls: data on internal negative control
units in the same experiment and data from an external control
experiment (e.g.\ the placebo group in a clinical trial). The motivating
example in \Cref{sec:motivating.data} contains both types of negative controls.
Proteins annotated with nuclear, mitochondrial, or cytoplasmic but not
plasma membrane are used as internal negative controls, because the
different
chemical treatments (HRP+H$_2$O$_2$ and HRP only) are not expected to
change their expression levels. The HRP (horseradish peroxidase) only condition is used as an
external control in \textcite{shuster2022situ} because both HRP and H$_2$O$_2$ are needed to tag cell
membrane proteins so that they can be detected by mass spectrometry.

% In an external control experiment, the investigator is
% interested in understanding the effect of an independent variable on
% several dependent variables. The investigator varies the level of the
% independent variable while controlling the level of other sources of
% variation (confounding variables). Here, the control group means the
% units that receive a placebo or no treatment at all.
% In contrast, internal negative control units refer to a set of
% dependent variables that are not influenced by the independent
% variable or the treatment.

In this paper we focus on internal negative controls, which
essentially represent prior information
about the non-existence of certain causal connections. Such prior
information may come from scientific contexts, experimental
techniques, and previous research studies. In high-throughput
sequencing applications, it is common that the scientific
understanding of the treatment suggests that certain measured units
should not be affected by the treatment. Examples include the
non-membrane proteins in our motivating example \parencite{li2020cell,
  shuster2022situ} and housekeeping genes that are required for the
maintenance of basic cellular function
and thus have a stable level of expression
\parencite{gagnon2012using}. Researchers may also artificially create
internal negative controls by adding units that
should not be affected by the treatment. Examples
include exogenous cells or molecules (often called spike-ins) that do
not interact with the treatment in biological experiments
\parencite{lippa2010exploring} and carefully designed questions in surveys
\parencite{lipsitch2010negative}.

Although internal negative controls have long been used in scientific
investigations, their utility in statistical methodology has only been
explored since recently. In epidemiology, \textcite{lipsitch2010negative}
employed internal negative controls to detect and remove
confounding. This was formally studied by
\textcite{miao18_ident_causal_effec_with_proxy} and is often referred
to as ``proximal causal inference'' in the literature
\parencite{tchetgen20_introd_to_proxim_causal_learn}.
% rule out non-causal associations.
In microarray studies, \textcite{gagnon2012using} used internal negative
control genes to remove unwanted variation, and their
method was analyzed and extended by \textcite{wang2017confounder}.

There exists a small and scattered literature in computational biology and
biostatistics that attempts to use internal negative controls to
control the number of false discoveries. Some authors proposed to fit
a parametric model to the internal negative controls in order to
estimate the null distribution of the test statistics
\cite{nix2008empirical, listgarten2013powerful,
  slattery2011cofactor}. Naturally, such methods are sensitive to the
parametric specification. \textcite{parks2018using}
proposed to estimate the local \FDR~using a kernel density
estimator based on the internal negative controls but did not provide
theoretical justifications of their method. Other authors suggested
heuristic approaches that use the internal negative controls
to estimate the \FDR~\parencite{zhang2008model,
  song2007model}.

Conformal inference seeks distribution-free uncertainty quantification
of the predictions from black-box machine learning models. This
framework, originally developed by Vladimir Vovk and collaborators
\parencite{vovk2005algorithmic}, seeks to make predictive inference
based on how close new observations ``conform'' with the training
data. Conformal inference has received rapidly increasing attention
recently; see \textcite{angelopoulos2021gentle}
for a recent review and some historic notes and
\textcite{zhang2022randomization} for an interpretation from the
viewpoint of classical randomization/permutation tests. Our work is
closely related to conformal inference. In particular, the \nickname\
p-value proposed here may be viewed as a permutation p-value and is
indeed identical to the conformal p-value for outlier detection in
\textcite{bates21_testin_outlier_with_confor_p_values}. See
\textcite{marandon22_machin_learn_meets_false_discov_rate} and
\textcite{liang2022integrative} for some extensions. Another closely
related perspective is to view multiple testing with
negative controls as a semi-supervised problem
\parencite{mary22_semi_super_multip_testin}, a term borrowed from the
machine learning literature. Motivated by problems in astrostatistics,
\textcite{mary22_semi_super_multip_testin} proposed to use
the same p-value using the empirical null viewpoint. They analyzed the
\BH\ procedure applied to such p-values using a martingale argument
and provided some further optimality results. To our knowledge, the
connection between conformal inference, negative controls, and
empirical null in multiple testing has not been well recognized yet.

% In applications we consider, a particularly important
% concern is that the negative controls may be selected
% incorrectly. This motivates us to develop justifications of the
% proposed methods under weaker conditions that allow misspecified
% negative controls or dependent test statistics, so the theoretical
% results below may also be useful in applications using conformal
% prediction.

Last but not least, we brieflly review the literature
concerning \localFDR. \LocalFDR\ was first proposed in
\textcite{efron2001empirical} as a Bayesian alternative to
\FDR. \textcite{sun2007oracle} showed that controlling \localFDR\ is
closely related to minimizing a misclassification loss for the
multiple testing procedure and developed a procedure that controls the
marginal \FDR\ based on a given estimate of the \LocalFDR\
curve. Typically, \LocalFDR\ is estimated by fitting parametric or
semiparametric density models to the data; see, for example,
\textcite{efron2012large}. The recent paper by
\textcite{soloff22_edge_discov} proposed to estimate \localFDR\
nonparametrically using the Grenander's estimator for monotone density
functions and is most closely related to
our third method (and hence the ad hoc procedure in
\textcite{hung14_proteom_mappin_human_mitoc_inter}). The main
difference is that \textcite{soloff22_edge_discov} assumes the null
density function is known, while we use negative controls to estimate
it empirically.

% \textcite{shuster2022situ} used a rule-of-thumb
% developed by \textcite{hung14_proteom_mappin_human_mitoc_inter} to
% selected the cutoff value in proteomic analysis; specifically, the
% cutoff is chosen to maximize the difference in
% the empirical cumulative distribution functions of the negative
% controls and of the proteins under investigation.

% These types of methods
% are suitable when controlling specific criterion is the primary task
% and becomes restrictive if other false discovery control criteria are
% of interest. In addition, statistical analysis of the methods'
% properties is to be developed.

\subsection{Outline and notation}
\label{sec:outline-notation}

% \begin{table}[btp]
%   \centering
%   \caption{Summary of multiple testing procedures combined with
%     \nickname~p-values.}
%   \label{tab:summary}
%   \begin{tabular}{p{2cm}|p{2cm}|p{8cm}}
%     \toprule
%     % \multicolumn{2}{c|}{validity}                     & \Cref{prop:pvalue.validity} \\ \hline
%     % \multicolumn{2}{c|}{\PRDS}                     &  \Cref{prop:PRDS}
%     % \\ \hline
% Task & Error rate & Procedure \\
% \midrule
% {Global  null} & Type I error  & Bonferroni
%                                  (\Cref{prop:pvalue.validity}), Simes
%                                  (\Cref{prop:simes}), general
%                                  procedures
%                                  (\Cref{sec:multiple.testing}) \\
% \hline
%       {Individual hypotheses}    & \FDR & Benjamini-Hochberg~ (\Cref{coro:FDR}, \Cref{prop:FDR}) \\ \cline{2-3}
%                    & \FWER, \FDP  & Holm (\Cref{prop:pvalue.validity}), general procedures (\Cref{sec:multiple.testing}) \\
% % \cline{2-3}
% % \multicolumn{1}{c|}{}                  & \localFDR  & general (Bayes risk minimization) \\
% \bottomrule
% \end{tabular}
% \end{table}

In \Cref{sec:setup-terminology}, we describe the mathematical setup of
this paper and review some terminologies in multiple testing.
In \Cref{sec:pval}, we consider the perspective of using negative
controls to form an empirical null distribution and propose the
\nickname~p-values. We give sufficient conditions under which the
\nickname~p-values are valid and satisfy a \PRDS~property, and discuss how the
\nickname~p-values can be combined using various multiple testing
procedures. In \Cref{sec:empirical.process}, we consider the
perspective of using negative controls to form an empirical estimator
of the \FDR. We show the step-up procedure with such as estimtor
can control the \FDR\ even if some negative controls are selected incorrectly.
% In \Cref{sec:permutation}, we provide a permutation test perspective of the \nickname~p-values.
% We show how permutation tests can be used to simulate the null distribution of global null tests based on \nickname~p-values and discuss how \nickname~p-values may provide shortcuts for computing the permutation test rejection thresholds.
% In \Cref{sec:application}, we discuss examples? one-sided?
In \Cref{sec:localFDR}, we develop a method that estimates the
rejection threshold for a given level of \localFDR\ and study its
asymptotic properties. In \Cref{sec:simulation}, we investigate the
performance of \nickname~p-values using numerical simulations.
In \Cref{sec:real.data}, we come back to the motivating proteomic
dataset and investigate different choices of the null distribution. In
\Cref{sec:discussion}, we conclude with some further discussion.

We introduce some mathematical conventions used below.
For $x \in \RR$, we use $\lfloor x \rfloor$ to denote the maximal integer smaller or equal to $x$.
For a set $\calA$, we denote its cardinality by $|\calA|$.
We abbreviate cumulative distribution function as CDF,
probability density function as PDF,
almost everywhere as a.s.,
independently and identically distributed as i.i.d..
We denote the uniform distribution on the interval $[0,1]$ by $U[0,1]$, and the normal distribution with mean $\mu$ and variance $\sigma^2$ by $\calN(\mu, \sigma^2)$.
We use $\Phi(\cdot)$ to denote the CDF of the standard normal distribution.
For two random variables $X$, $Y$, $X \stackrel{d}{=} Y$ means $X$ and $Y$ follow the same distribution.
If $\PP(X \le t) \le \PP(Y \le t)$ for all $t \in \RR$, we say $X$
stochastically dominates $Y$ and denote the relationship by $X \gtrsim
Y$ or $Y \lesssim X$. Given a collection of random variables
$X_i,~i=1,\dotsc,n$, the $k$-th order statistic is defined as its $k$-th
smallest value and is denoted as $X_{(k)}$.

\section{Setup and terminology}
\label{sec:setup-terminology}

Suppose there
are $\No + \NoNc$ null hypotheses. The first $\No$
hypotheses $\hypothesisIndex{} = \{1,\dotsc,\No\}$ are under
investigation. Let $\nullHypothesisIndex$ denote the set of true
null hypotheses and $\NoNull = |\nullHypothesisIndex|$; neither
$\nullHypothesisIndex$ nor $\NoNull$ is known. The last $\NoNc$ hypotheses $\hypothesisIndex{\text{nc}} =
\{\No+1,\dotsc,\No+m\}$ are known to be true and we shall refer to
them as the (internal) negative controls. In our motivating example in
\Cref{sec:motivating.data}, $\No = 740$ and $\NoNc = 2067$. Each
hypothesis $\hypothesis{i}$ is
associated with a test statistic $\testStatistics{i}$, and we assume
that a small test statistic provides evidence against that
hypothesis. For example, $\testStatistics{i}$ can be a p-value for
$\hypothesis{i}$ calculated under some possibly misspecified
statistical model. We are interested in identifying as many non-null
hypotheses in $\hypothesisIndex{} \setminus \nullHypothesisIndex$ as
possible while maintaining control of some multiple testing error.
% unaffected by the interested variable.
% The internal negative control units and the units under investigation are examined in the same experiment, which is different from external control experiments where researchers conduct separate trials controlling all the factors
% at the same level as those under the treatment
% except the variable of interest.
% We denote the internal negative control hypotheses by
% $\ncHypothesis{j}$ and the corresponding test statistics by
% $\ncTestStatistics{j}$.
% Negative controls are the units where the null effect is expected, i.e., a negative result should be observed.
% There are two types of negative control data: external control experiments and internal negative control units.

Next, we briefly review some error rates that are commonly used for
simultaneous hypothesis testing. Given a set of hypotheses
$\{\hypothesis{i}:i \in \calI\}$, let $\totalPositive$ be the total
number of rejections, $\falsePositive$ be the number of incorrect
rejections, and $\truePositive$ be the number of
the correct rejections; see also \Cref{tab:outcome}.
% Several criteria
% have been proposed in the literature to quantify the number of false
% discoveries and here we focus on \FWER, \FDR, \FDP, and \localFDR.
% Formally,
\FWER~is the probability of
making at least one false discovery, \FDP~is the proportion of false
discoveries among all discoveries, and \FDR~is the expectation of \FDP:
\begin{align*}
    \FWER := \PP(\falsePositive \ge 1),\quad\FDP :=
  \frac{\falsePositive}{\totalPositive \vee 1},\quad\text{and}\quad\FDR
    := \EE[\FDP]
    = \EE\left[\frac{\falsePositive}{\totalPositive \vee 1}\right],
\end{align*}
where $R \vee 1$ is the maximum of $R$ and $1$.
We say a multiple testing procedure controls the \FWER~at level
$\alpha$ if $\FWER \leq \alpha$; similarly, a procedure controls the
\FDR~at $q$ if $\FDR \leq q$.
Note that unlike the other two quantities, \FDP~is random and can
only be controlled in some probabilistic sense. For example, we say
a procedure controls the tail probability of \FDP~at
$q$ at level $\alpha$ if $\PP(\FDP > q) \leq \alpha$.
% Compared with the stringent \FWER-control methods, approaches controlling \FDR~often return many candidates of interest with a few false discoveries and are more helpful for downstream validation experiments.
% % The criteria \FDP~is sometimes more interpretable than its expectation \FDR~and could be of interest in certain applications.
% \FDR~control implies the \FDP~control if \FDP~concentrates, and see
% \cite{roquain2011type} for a discussion on such concentration
% phenomenon. % occurs when the dependence between p-values does not
% vanish or the number of the proportion of non-null goes to zero as
% the $\No$ increases to infinity \cite{roquain2011type}.
Finally, it has been shown all procedures that control \FWER~or
(the tail probability of) \FDP~can be improved by the closed testing
principle that
combines tests of intersection null hypotheses
\parencite{marcus1976closed,goeman2021only};
an
intersection or global null hypothesis $\cap_{i \in \calI}
\hypothesis{i}$ is said to be true if and only if all individual
hypotheses $\hypothesis{i}$, $i \in \calI$ are true.

When discussing \FWER, \FDR, and \FDP, we will assume that
the CDF of the test statistic $\testStatistics{i}$, denoted by
$\cdfTestStatistics{i}$, is a continuous function for all $i
=1,\dotsc,\No+\NoNc$, and there are no ties (with probability one). We
discuss tiebreakers in \Cref{sec:discussion}.

\begin{table}[tbp]
  \centering
  \caption{{Outcomes in multiple testing.}}
\label{tab:outcome}
\begin{tabular}{c|cc|c}
\toprule
  & Not rejected & Rejected & Total  \\
  \midrule
$\nullHypothesis{i}$ true & $\trueNegative$ & $\falsePositive$  & $\NoNull$ \\
$\nullHypothesis{i}$ false & $\falseNegative$ & $\truePositive$ & $\NoNonNull$ \\ \hline
  Total & $\No - \totalPositive$ & $\totalPositive$ & $\No$  \\
  \bottomrule
\end{tabular}
\end{table}

We will also consider the \localFDR, a Bayesian alternative to
traditional multiple testing criteria. Suppose
$(\hypothesis{i})_{i \in \calI} \overset{\text{i.i.d.}}{\sim}
\text{Bernoulli}(1 - \pi)$, and the
test statistic $\testStatistics{i}$ follows the distribution
\begin{equation*}
  \label{eq:two-mixture}
  \testStatistics{i} \mid \hypothesis{i} = 0 \sim
  \cdfTestStatisticsNull \quad \text{and} \quad \testStatistics{i}
  \mid \hypothesis{i} = 1 \sim \cdfTestStatisticsNonNull.
\end{equation*}
% Throughout the paper, we use upper-case $F$ to denote the cumulative
% distribution function (CDF) of a probability distribution.
The marginal CDF of $\testStatistics{i}$ is thus given by the mixture
$\cdfTestStatistics{} = \probNull \cdfTestStatisticsNull +
(1-\probNull) \cdfTestStatisticsNonNull$.
When discussing \localFDR, we will assume that the density functions
of $\cdfTestStatisticsNull$,
$\cdfTestStatisticsNonNull$, and $\cdfTestStatistics{}$ exist and
denote them by $\pdfTestStatisticsNull$, $\pdfTestStatisticsNonNull$,
and $\pdfTestStatistics{}$, respectively. The \localFDR~at $t$ is
simply the posterior probability $\Pr(H_i = 0 \mid T_i = t)$ of $H_i$
being a true null given its test statistic is
$T_i=t$ \parencite{efron2001empirical}, i.e.\
\begin{align} \label{eq:local-fdr}
  \localFDR(t)
    = \frac{\probNull \pdfTestStatisticsNull(t)}{\probNull
  \pdfTestStatisticsNull(t) + (1-\probNull)
  \pdfTestStatisticsNonNull(t)},
\end{align}
We say a procedure controls \localFDR~at level $q$ if
$\localFDR(\testStatistics{i}) \le q$ for all rejected hypotheses
$\hypothesis{i}$.

\section{The empirical null perspective}\label{sec:pval}

\subsection{Rank among negative controls}
\label{sec:ranc-p-values}

Our first method uses negative control statistics to form a
nonparametric estimator of the null distribution. Specifically, we
define the \nickname~p-value for $\hypothesis{i},~i \in
\hypothesisIndex{}$ as $\pval{i} = \hat{F}(\testStatistics{i})$, where
\begin{align}\label{defi:nc.ecdf.2}
  \hat{F}(t)
  = \frac{1 + \sum_{j \in \hypothesisIndex{\text{nc}}}
  \1_{\{\testStatistics{j} \le
  t\}}}{1 + \NoNc}
\end{align}
is the empirical cumulative distribution function (CDF) of
$(-\infty,\testStatistics{\No+1},\dotsc,\testStatistics{\No+\NoNc})$.
Here, we include a $-\infty$ in the definition of
$\hat{F}$ to ensure that the \nickname~p-value does not equal zero with
a positive probability.
% So
% the \nickname~p-value may also be viewed as the p-value with respect to
% an empirical, model-free null distribution estimated from the
% negative controls.
Another way to put this is that $\pval{i}$ is simply the
normalized rank of $\testStatistics{i}$ among
$(\testStatistics{j})_{j \in \{i\} \cup \hypothesisIndex{\text{nc}}}$:
\begin{align}\label{eq:pval}
  \pval{i} := \frac{1 + \sum_{j \in \hypothesisIndex{\text{nc}}}
  \1_{\{\testStatistics{j} \le \testStatistics{i}\}}}{1 + \NoNc} =
  \frac{1 + (\text{number of negative control statistics
  }\leq T_i)}{1 + (\text{number of negative control statistics})}.
\end{align}
This is why we call $\pval{i}$ the Rank Among
Negative Control (RANC) p-value.

If the negative control statistics resemble the test statistics under
the null, we expect that $\hat{F}$ to be close to the null distribution
of the test statistics and $\pval{i} = \hat{F}(\testStatistics{i})$ to
be approximately uniformly distributed when $\NoNc \to \infty$. If we
further assume that
\begin{equation}
  \label{eq:basic-exchangeable}
  \text{the collection of random variables}~(\testStatistics{i})_{i\in
    \hypothesisIndex{0} \cup \hypothesisIndex{\text{nc}}} ~\text{is
    exchangeable},
\end{equation}
then $\pval{i}$ is exactly the p-value of the permutation test of
exchangeability using just $T_i$. Thus, the proposed \nickname~p-value
is valid under \eqref{eq:basic-exchangeable} in the sense
that $\PP(\pval{i} \leq \alpha) \leq \alpha~\text{for all}~0 < \alpha
< 1$.

Our goal in the rest of this section is to give a more precise,
non-asymptotic analysis of the \nickname~p-values. In
particular, we will give sufficient conditions under which the
\nickname~p-values are individually valid and satisfy a PRDS
property. To this end, we first give a formal definition of
exchangeability and PRDS on a subset of random variables.

\begin{definition}
  We say a sequence of random variables $(X_i)_{i \in \calI}$ is
  exchangeable on a subset $(X_i)_{i \in \calJ}$ for some $\calJ
  \subseteq \calI$, if for any
  permutation $g:\calI \to \calI$ such that $g(i) = i$ for
  all $i \not \in \calJ$, the distribution of $(X_{g(i)})_{i \in
    \calI}$ is the same as $(X_i)_{i \in \calI}$. When this holds for
  $\calJ = \calI$, we simply say the sequence $(X_i)_{i \in \calI}$ is exchangeable.
\end{definition}

\begin{remark}
  Note that this is equivalent to assuming that $(X_i)_{i \in \calJ}$
  is exchangeable conditionally on $(X_i)_{i \in \calI \setminus \calJ}$. We
  introduce this new terminology of exchangeability on a subset to
  contrast with the definition of PRDS below.
\end{remark}

To define PRDS, we say a set
$\calD \subseteq \RR^n$ is increasing if $\calD$ contains all $y
\in \RR^n$ that satisfies $y_i \ge x_i,~1 \le i \le n$ for some $x
\in \calD$.

\begin{definition}\label{defi:PRDS}
    We say a sequence of random variables $(X_i)_{i \in \calI}$ exhibits
    Positive Regression Dependence on a Subset (PRDS) $(X_j)_{j \in
      \calJ}$ for some $\calJ
    \subseteq \calI$, if for any increasing set $\calD
    \subseteq \RR^{|\calI|}$ and any $j \in \calJ$, the conditional
    probability $\PP\big((X_i)_{i \in \calI} \in \calD \mid X_j = x\big)$ is
    increasing in $x$. When this holds for $\calJ = \calI$, we simply
    say the sequence $(X_i)_{i \in \calI}$ is PRD.
\end{definition}

It follows from the definition that PRDS is preserved by co-monotone
transformtions: given some monotonically increasing (or decreasing)
functions $G_i$ for $i \in
\calI$, the assumption that $(X_i)_{i \in \calI}$ is PRDS implies that
$(G_i(X_i))_{i \in \calI}$ is also PRDS.

\subsection{Validity}\label{sec:validity}

% We introduce the conditions to guarantee the validity of the proposed \nickname~p-values.
% The \nickname~p-values are valid with exchangeable null test statistics and the internal negative controls.

\begin{proposition}\label{prop:pvalue.validity}
% Let $\testStatistics{i}$ be the test statistic associated with a true null hypothesis.
Fix a true null hypothesis $\hypothesis{i}$ for some $i \in
\nullHypothesisIndex$ and suppose the following assumptions are
satisfied:
\begin{enumerate}[label = (\alph*), ref = (\alph*)]
    \item \label{vali:assu:null.nc.conservative.general}
      $\cdfTestStatistics{i}(t) \le \cdfTestStatistics{j}(t)$ for all $j
      \in \hypothesisIndex{\text{nc}}$ and $t \in \RR$;
    \item \label{vali:assu:one.exchangeable.general}
      $(\cdfTestStatistics{j}(\testStatistics{j}))_{j \in \{i\} \cup
        \hypothesisIndex{\text{nc}}}$ is exchangeable.
\end{enumerate}
Then the \nickname~p-value $\pval{i}$ is valid in the sense
that $\PP(\pval{i} \leq \alpha) \leq \alpha$ for all $0 < \alpha < 1$.
% \begin{align}\label{prop:eq:pvalue.validity}
%     \PP(\pval{i} \le t) \le t, \quad \text{for}~ 0 \le t \le 1.
% \end{align}
\end{proposition}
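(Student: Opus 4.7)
The plan is to reduce the \nickname\ p-value to a purely rank-based quantity computed from the probability-integral-transformed variables $U_j := \cdfTestStatistics{j}(\testStatistics{j})$, and then invoke exchangeability of the $U_j$'s. Since each $\cdfTestStatistics{j}$ is continuous by the setup assumption, each $U_j$ is $U[0,1]$-distributed marginally; and by assumption \ref{vali:assu:one.exchangeable.general}, $(U_j)_{j \in \{i\} \cup \hypothesisIndex{\text{nc}}}$ is exchangeable, so the rank of $U_i$ among these $\NoNc+1$ variables is uniform on $\{1,\dotsc,\NoNc+1\}$ (using the no-ties convention).

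The heart of the argument is a monotonicity lemma that lets me translate large $T_j$ into large $U_j$ relative to $U_i$. Concretely, for any $j \in \hypothesisIndex{\text{nc}}$, I claim $\{T_j \ge T_i\} \subseteq \{U_j \ge U_i\}$. Indeed, if $T_j \ge T_i$, then monotonicity of $\cdfTestStatistics{j}$ gives $\cdfTestStatistics{j}(T_j) \ge \cdfTestStatistics{j}(T_i)$, and assumption \ref{vali:assu:null.nc.conservative.general} gives $\cdfTestStatistics{j}(T_i) \ge \cdfTestStatistics{i}(T_i)$, so $U_j = \cdfTestStatistics{j}(T_j) \ge \cdfTestStatistics{i}(T_i) = U_i$. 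Taking contrapositives, every $j$ with $U_j < U_i$ also satisfies $T_j < T_i$, and therefore
\begin{equation*}
  \sum_{j \in \hypothesisIndex{\text{nc}}} \1_{\{U_j < U_i\}}
  \;\le\; \sum_{j \in \hypothesisIndex{\text{nc}}} \1_{\{T_j < T_i\}}
  \;\le\; \sum_{j \in \hypothesisIndex{\text{nc}}} \1_{\{T_j \le T_i\}}.
\end{equation*}

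Dividing by $1+\NoNc$ and adding the $1/(1+\NoNc)$ offset from the definition \eqref{eq:pval}, this shows $\pval{i} \ge q_i$, where $q_i := (1 + \sum_{j \in \hypothesisIndex{\text{nc}}} \1_{\{U_j < U_i\}})/(1+\NoNc)$ is simply the normalized rank of $U_i$ in the exchangeable collection $(U_j)_{j \in \{i\}\cup\hypothesisIndex{\text{nc}}}$. By exchangeability and continuity, $q_i$ is uniformly distributed on $\{1/(1+\NoNc),\dotsc,(1+\NoNc)/(1+\NoNc)\}$, so $\PP(q_i \le \alpha) = \lfloor \alpha(1+\NoNc)\rfloor/(1+\NoNc) \le \alpha$, and hence $\PP(\pval{i} \le \alpha) \le \PP(q_i \le \alpha) \le \alpha$.

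The only subtlety I anticipate is the tie/continuity bookkeeping: assumption \ref{vali:assu:one.exchangeable.general} only asserts exchangeability of the PIT variables, not of the raw $T_j$'s, so I must be careful that the rank of $U_i$ really is uniform (this needs $\PP(U_j = U_{j'}) = 0$, which follows from continuity of the $\cdfTestStatistics{j}$'s and the global no-ties assumption stated in the setup). Everything else is essentially the standard permutation-test argument, with assumption \ref{vali:assu:null.nc.conservative.general} playing the role of a stochastic-dominance correction that lets the test remain conservative even when the null distribution of $T_i$ does not exactly match those of the negative controls.
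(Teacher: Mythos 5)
Your proof is correct and follows essentially the same route as the paper's: reduce to the probability-integral-transformed variables $U_j = F_j(T_j)$, use exchangeability and the no-ties convention to get a uniform rank for $U_i$, and use condition (a) through a monotone domination argument to show the RANC p-value is bounded below by that normalized rank. The only difference is cosmetic --- where the paper introduces the generalized inverse $F_j^{-1}(F_j(T_j))$ to convert $F_j(T_j) \le F_j(T_i)$ back into an inequality between the raw statistics, you take the contrapositive of the strict inequality directly, which sidesteps that construction; and the tie-bookkeeping caveat you flag (absence of ties among the $U_j$'s, which does not follow automatically from no ties among the $T_j$'s) is implicitly assumed in the paper's proof as well.
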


% Two special cases of \Cref{prop:pvalue.validity} are summarized in the following corollary.

% \begin{corollary}\label{coro:pvalue.validity}
% Suppose one of the following sets of conditions holds,
% \begin{enumerate}
%     \item
%     \begin{enumerate}[label = \alph*.,ref = \alph*]
%     \item \label{vali:assu:null.nc.conservative} $\testStatistics{i} \gtrsim \ncTestStatistics{j}$ for any $\ncHypothesis{j} \in \ncHypothesisSet$;
%     \item \label{vali:assu:one.nc.set.independent} $\{\testStatistics{i}\} \cup \left\{\ncTestStatistics{j} \right\}_{1 \le j \le \NoNc}$ are mutually independent.
%     \end{enumerate}
%     \item \label{vali:assu:one.exchangeable} $\{\testStatistics{i}\}\cup \left\{\ncTestStatistics{j} \right\}_{1 \le j \le \NoNc}$ are exchangeable.
% \end{enumerate}
% Then for any $\calH_i \in \nullHypothesisSet$, the associated \nickname~p-value defined in Eq.~\eqref{eq:pval} is valid.
% \end{corollary}

% If the marginal distributions $\cdfTestStatistics{i}$, $\ncCdfTestStatistics{j}$ are equal, the assumptions~\ref{vali:assu:null.nc.conservative.general} and \ref{vali:assu:one.exchangeable.general} reduce to that $\testStatistics{i}$ and $\ncTestStatistics{j}$ are exchangeable.
% In \zg{TODO:}, we discuss an example where $\cdfTestStatistics{i}(t)$ is dominated by $\ncCdfTestStatistics{j}(t)$.

A proof of \Cref{prop:pvalue.validity} can be found in the
Appendix. It uses a monotone coupling argument and the fact that the
rank of exchangeable random variables is uniformly distributed.

\begin{remark}
The conditions in \Cref{prop:pvalue.validity} are stated in terms of
the probability integral transforms of the test statistics and are weaker
than the exchangeability in \eqref{eq:basic-exchangeable} in two
ways. First, as we are only
concerned with the validity of $\pval{i}$ for some fixed $i$, it is
only necessary to
assume that $\testStatistics{i}$ is exchangeable with the negative
control statistics. Second, the null statistic
$\testStatistics{i}$ is allowed to be stochastically larger than the
internal negative control statistic $\testStatistics{j}$ for all $j\in
\hypothesisIndex{\text{nc}}$. This relaxation is useful for
testing one-sided hypotheses; see \Cref{rem:one-sided} below.
% % Our heuristic that the marginal distribution of the true null test statistics equals that of the internal negative control test statistics,
% Exchangeable null and internal negative control test statistics
% % i.e., $\testStatistics{i} \stackrel{d}{=} \ncTestStatistics{j}$ for any $\hypothesis{i} \in \nullHypothesisSet$, $\ncHypothesis{j} \in \ncHypothesisSet$,
% is a special case of the condition~\ref{vali:assu:null.nc.conservative.general} with $\testStatistics{i} \stackrel{d}{=} \ncTestStatistics{j}$. Condition~\ref{vali:assu:null.nc.conservative.general} relaxes $\testStatistics{i} \stackrel{d}{=} \ncTestStatistics{j}$ and allows the null test statistics to be stochastically larger than those of internal negative controls.
% The relaxation is useful for testing one-sided hypotheses discussed in \Cref{sec:example.one.sided}.
% % such as in the motivating protein dataset.
The exchangeability
condition~\ref{vali:assu:one.exchangeable.general} is
satisfied when the transformed test statistics are i.i.d.\ or follow a
mixture of i.i.d.\ distributions.
\end{remark}

% Another exchangeability beyond
% independence is the factor model in \Cref{sec:example.factor.model},
% which describes the distributions of dependent observable variables
% influenced by a set of latent variables (factors).
% If the observed variables conditional on the latent factors are i.i.d., then the exchangeability condition is fulfilled.
% See \zg{TODO} for a specific example of the factor model.
% One example is the Gaussian mixture model, arguable the most widely-used probabilistic model describing a population consisting of several sub-populations and the sub-population to which an individual observation belongs is not known.

\subsection{\PRDS}\label{sec:PRDS}

Because the \nickname~p-values $\pval{i} =
\hat{F}(\testStatistics{i}),~i=1,\dotsc,\No$ are calculated using
the same empirical null distribution
$\hat{F}(\cdot)$, they are generally not
independent even if the test satistics
$\testStatistics{i},~i=1,\dotsc,\No$ are independent. However, it can
be shown that the \nickname~p-values may satisfy a desirable
\PRDS~property that is sufficient for the validity of many multiple
hypothesis testing procedures
\parencite{sarkar1997simes,benjamini2001control}.

% The \PRDS~property describes a type of positive dependence and
% includes the case where all random variables are mutually independent.
% The property is commonly used in the literature of multiple testing to
% control false discovery \cite{sarkar1997simes,
%   benjamini2001control}.
% The next Proposition describes
% sufficient conditions for the \nickname~p-values to be \PRDS.

\begin{theorem}\label{prop:PRDS}
% Let $\testStatistics{i}$ be the test statistic associated with a true null hypothesis.
Suppose one of the two sets of conditions holds:
\begin{enumerate}[label=\roman*., ref=(\roman*)]
  \item \label{PRDS:assu:independent}
    \begin{enumerate}[label = (\alph*),ref = (i.\alph*)]
    \item \label{PRDS:assu:null.nc.identical} $\testStatistics{i}
      \stackrel{d}{=} \testStatistics{j}$ for any $i \in
      \nullHypothesisIndex$ and $j \in \hypothesisIndex{\text{nc}}$;
    \item \label{PRDS:assu:all.nc.set.independent}
      $\left( \testStatistics{i}\right)_{i \in \hypothesisIndex{}}
      \independent \left( \testStatistics{j} \right)_{j \in \hypothesisIndex{\text{nc}}}$;
    \item \label{PRDS:assu:nc} $\left( \testStatistics{j}  \right)_{j
        \in \hypothesisIndex{\text{nc}}}$ is mutually independent;
    \item \label{PRDS:assu:PRDS} $\left( \testStatistics{i} \right)_{i
        \in \hypothesisIndex{}}$ is \PRDS~on $\left(
        \testStatistics{i}  \right)_{i \in
        \nullHypothesisIndex}$;
    \end{enumerate}
    \item \label{PRDS:assu:exchangeable} $\left( \testStatistics{i}
      \right)_{i \in \hypothesisIndex{} \cup
        \hypothesisIndex{\text{nc}}}$ is exchangeable on $\left( \testStatistics{i}
      \right)_{i \in \hypothesisIndex{0} \cup
        \hypothesisIndex{\text{nc}}}$.
\end{enumerate}
Then the \nickname~p-values are valid and $(\pval{i})_{i \in
  \hypothesisIndex{}}$ is \PRDS~on $(\pval{i})_{i \in
  \nullHypothesisIndex}$.
\end{theorem}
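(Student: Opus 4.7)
For validity, I would reduce to \Cref{prop:pvalue.validity} applied to each $i \in \nullHypothesisIndex$. Under (i), (i.a) yields $\cdfTestStatistics{i} = \cdfTestStatistics{j}$ for $j \in \hypothesisIndex{\text{nc}}$, verifying hypothesis (a) with equality, while (i.a)--(i.c) together make $(\testStatistics{j})_{j \in \{i\} \cup \hypothesisIndex{\text{nc}}}$ i.i.d., so their probability integral transforms are i.i.d.\ uniform and in particular exchangeable, verifying hypothesis (b). Under (ii), the exchangeability of $(\testStatistics{j})_{j \in \nullHypothesisIndex \cup \hypothesisIndex{\text{nc}}}$ yields a common marginal CDF $F$ and exchangeability of $(F(\testStatistics{j}))_{j \in \{i\} \cup \hypothesisIndex{\text{nc}}}$, verifying both (a) and (b).

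For PRDS, fix $j \in \nullHypothesisIndex$ and an increasing set $D$, set $V = (\testStatistics{k})_{k \in \hypothesisIndex{\text{nc}}}$, and let $M_j = \{\testStatistics{j}\} \cup V$ denote the unordered multiset of $\testStatistics{j}$ and the negative control statistics. Under either set of conditions, $(\testStatistics{j}, V)$ is exchangeable --- under (i) because (i.a)--(i.c) make them i.i.d., under (ii) directly from the assumption. Hence given $M_j$ the rank of $\testStatistics{j}$ is uniform on $\{1,\dotsc,1+\NoNc\}$, so $\pval{j} \mid M_j$ is uniform, and by Bayes $M_j \mid \pval{j}=x \stackrel{d}{=} M_j$. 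Moreover, once $M_j$ and $\pval{j} = x = r/(1+\NoNc)$ are fixed, $\testStatistics{j}$ is determined as the $r$-th order statistic of $M_j$ and $\hat{F}$ (which depends on $V$ only through its multiset $M_j \setminus \{\testStatistics{j}\}$) is determined as well.

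The heart of the argument is to show that $\PP((\pval{i})_{i \in \hypothesisIndex{}} \in D \mid M_j, \pval{j}=x)$ is non-decreasing in $x$ for fixed $M_j$. As $x$ jumps from $r/(1+\NoNc)$ to $(r+1)/(1+\NoNc)$, $\testStatistics{j}$ moves from the $r$-th to the $(r+1)$-th order statistic of $M_j$, and a short check shows the new $\hat{F}$ pointwise dominates the old one (removing a larger value from the NC multiset raises the ECDF pointwise). Under (i), (i.b) gives $(\testStatistics{i})_{i \in \hypothesisIndex{}} \independent V$ and (i.d) makes the conditional distribution of each $\testStatistics{i}$ given $\testStatistics{j}$ stochastically increasing in $\testStatistics{j}$, so I can couple samples for $x_1 < x_2$ with $\testStatistics{i}^{x_1} \le \testStatistics{i}^{x_2}$ and deduce $\pval{i}^{x_1} \le \pval{i}^{x_2}$ componentwise through the chain $\hat{F}^{x_1}(\testStatistics{i}^{x_1}) \le \hat{F}^{x_2}(\testStatistics{i}^{x_1}) \le \hat{F}^{x_2}(\testStatistics{i}^{x_2})$. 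Under (ii), I would further condition on $W = (\testStatistics{i})_{i \in \hypothesisIndex{} \setminus \{j\}}$; the assumed exchangeability implies $(\testStatistics{j}, V) \mid W$ is still exchangeable, so the $M_j$-reduction applies conditional on $W$, and with $W$ fixed each $\pval{i} = \hat{F}(W_i)$ is deterministically non-decreasing in $x$ because $\hat{F}$ is.

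Integrating over $M_j \mid \pval{j}=x \stackrel{d}{=} M_j$ --- and under (ii) additionally over $W \mid \pval{j}=x \stackrel{d}{=} W$ by the same Bayes trick conditional on $W$ --- then yields $\PP((\pval{i})_{i \in \hypothesisIndex{}} \in D \mid \pval{j}=x)$ non-decreasing in $x$, which is PRDS on $(\pval{i})_{i \in \nullHypothesisIndex}$. The most subtle step is the coupling in the previous paragraph: under (i) one must combine the deterministic pointwise monotonicity of $\hat{F}$ in $r$ with the only-stochastic monotonicity of $(\testStatistics{i})_{i \ne j}$ in $\testStatistics{j}$, and under (ii) one must justify that conditioning on $W$ preserves the exchangeability required for the $M_j$-based rank argument.
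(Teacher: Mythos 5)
Your proposal is correct and implements the same underlying ``swap'' heuristic as the paper's proof, but formalizes it along a genuinely different route. The paper reduces to showing $\PP(\bm q \in \calD, q_1 = k) \le \PP(\bm q \in \calD, q_1 = k+1)$ for the unnormalized ranks, conditions on all negative control statistics except the $(k+1)$-th order statistic $C_{(k+1)}$, and then proves the inequality via three lemmas about the bivariate pair $(C_{(k+1)}, \testStatistics{1})$: symmetry of its conditional density on the square, the swap inequality $f(c,t)\le f(t,c)$ for $c<t$, and an elementary integral comparison. You instead condition on the full unordered multiset $M_j=\{\testStatistics{j}\}\cup(\testStatistics{k})_{k\in\hypothesisIndex{\text{nc}}}$ (and, under (ii), additionally on the other investigated statistics), use exchangeability to get that the rank of $\testStatistics{j}$ is uniform and independent of $M_j$, and reduce PRDS to a monotonicity statement for adjacent ranks; this buys you a cleaner bookkeeping that avoids the density-symmetry lemmas entirely, at the price of having to re-derive the Bayes step $M_j\mid\pval{j}=x\stackrel{d}{=}M_j$. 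Both proofs then rest on the same two monotonicity facts: the empirical null CDF increases pointwise when $\testStatistics{j}$ trades places with the next-larger element of $M_j$, and, under (i), the conditional law of the remaining statistics given $\testStatistics{j}$ is stochastically increasing.

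One point deserves tightening. Under condition \ref{PRDS:assu:PRDS} you write that ``the conditional distribution of each $\testStatistics{i}$ given $\testStatistics{j}$ is stochastically increasing'' and then couple componentwise; coordinate-wise stochastic monotonicity is \emph{not} sufficient to produce a joint coupling with $\testStatistics{i}^{x_1}\le\testStatistics{i}^{x_2}$ for all $i$ simultaneously. What you need (and what \ref{PRDS:assu:PRDS} actually provides) is stochastic monotonicity of the entire vector $(\testStatistics{i})_{i\in\hypothesisIndex{}}$ in the usual multivariate stochastic order, after which Strassen's theorem yields the joint monotone coupling. Alternatively, you can dispense with the coupling altogether, as the paper does: given $M_j$ and rank $r$, the event $\{\bm p\in\calD\}$ equals $\{(\testStatistics{i})_{i\in\hypothesisIndex{}}\in\calD'_r\}$ for a data-dependent increasing set $\calD'_r$ with $\calD'_r\subseteq\calD'_{r+1}$ (by the pointwise domination of the ECDFs), and then
\[
\PP\bigl((\testStatistics{i})\in\calD'_r \mid \testStatistics{j}=m_{(r)}\bigr)
\le \PP\bigl((\testStatistics{i})\in\calD'_{r+1} \mid \testStatistics{j}=m_{(r)}\bigr)
\le \PP\bigl((\testStatistics{i})\in\calD'_{r+1} \mid \testStatistics{j}=m_{(r+1)}\bigr)
\]
by inclusion and then by a direct application of \ref{PRDS:assu:PRDS}. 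With that repair (or the explicit appeal to the full-vector ordering plus Strassen), your argument is complete and correct.
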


% In \Cref{sec:multiple.testing}, we discuss how the \PRDS~property enables combining \nickname~p-values with standard multiple testing procedures.

The validity directly follows from \Cref{prop:pvalue.validity}. Our
proof of the \PRDS~property is more involved and is based on
the following heuristic: if we swap any $T_i, i \in
\hypothesisIndex{0}$ with the next smallest negative control
statistic, the probability that $\bm p \in \mathcal{D}$ for any
increasing set $\mathcal{D}$ can only increase. See the Appendix for
more detail.

\begin{remark} \label{rem:bates-condition}
  \textcite[thm.\ 2]{bates21_testin_outlier_with_confor_p_values}
  stated and proved the PRDS property in \Cref{prop:PRDS} under the
  assumption that $\left(
    \testStatistics{i}  \right)_{i \in \hypothesisIndex{} \cup
    \hypothesisIndex{\text{nc}}}$ is mutually
  independent,\footnote{\textcite[thm.\
    2]{bates21_testin_outlier_with_confor_p_values} does make any
    assumption on non-null statistics. We believe this is most likely
    a typo; see \Cref{sec:bates-typo} in the Appendix.} which
  implies the partial/conditional exchangeability condition
  \ref{PRDS:assu:exchangeable}. The set of
  conditions in \ref{PRDS:assu:independent}, especially the PRDS condition
  \ref{PRDS:assu:PRDS} on the original test statistics, appears to be
  novel and may be quite useful when the test statistics are
  positively dependent.
\end{remark}

\begin{remark}
    To our knowledge, the conclusion of \Cref{prop:PRDS} does not
  directly follow from any existing results about positively dependent
  distributions. First, the PRDS property of $(\pval{i})_{i \in
  \hypothesisIndex{}}$ in the conclusion of \Cref{prop:PRDS} does not
immediately follow from condition~\ref{PRDS:assu:PRDS}---the same PRDS
property for $(\testStatistics{i})_{i \in \hypothesisIndex{}}$---by applying a
co-monotone transformation. This is because the transformation,
defined by the internal negative control statistics, is random. Second, it is tempting to treat
$(\testStatistics{i})_{i \in \hypothesisIndex{\text{nc}}}$ as latent
and apply sufficient conditions for \PRDS~in latent variable
models. However, existing results either assumes a single latent variable
\parencite{benjamini2001control}, considers only binary random
variables \parencite{holland1986conditional}, or requires the
\MTPTwo~property in \textcite{karlin1980classes} that is not implied
by the conditions in \Cref{prop:PRDS}.
% the exchangeability condition~\ref{PRDS:assu:exchangeable}.
\end{remark}

\begin{remark} \label{rem:ia-cannot-relax}
Condition~\ref{PRDS:assu:null.nc.identical} in
\Cref{prop:PRDS} cannot be relaxed to the
stochastic dominance
condition~\ref{vali:assu:null.nc.conservative.general} in
\Cref{prop:pvalue.validity}; see \Cref{sec:example.PRDS} for a
counter-example.
\end{remark}

\subsection{Multiple testing with
  \nickname~p-values}\label{sec:multiple.testing}

Given the conclusions in \Cref{prop:pvalue.validity,prop:PRDS}, we
briefly discuss the multiple testing procedures that can be
applied to \nickname~p-values.

\subsubsection{Testing an intersection null}\label{sec:global.null}

% We present various ways of combining \nickname~p-values for global null testing.
% We first discuss several combination procedures whose type I errors are controlled under the validity and \PRDS~property of the \nickname~p-values.
% We then turn to procedures where the assumptions of the global testing procedures are not fulfilled and illustrate how to simulate the null distribution.
% We also provide a discussion on the connection of global null testing with \nickname~p-values and rank-based tests.

We first consider Bonferroni's test and Simes' test of an intersection
null $\hypothesis{\calI} = \cap_{i=1}^n \hypothesis{i}$ at level
$\alpha$. Suppose each hypothesis $\hypothesis{i}$ is associated with
a valid p-value $\pval{i}$. Let $\pval{(1)} \leq \dotsb \leq
\pval{(n)}$ be the ordered p-values. Bonferroni's test rejects
$\hypothesis{\calI}$ if $\pval{(1)} \leq \alpha / n$ and controls the type I
error at $\alpha$ as long as the individual p-values are valid. Thus,
when applied to the \nickname~p-values, Bonferroni's test is valid if the
conditions in \Cref{prop:pvalue.validity} are satisfied for
all $i$.

Simes' test rejects
$\hypothesis{\calI}$ if $\pvalOrder{i} \le i  \alpha/\No$ for some $1 \le i \le n$
\cite{simes1986improved}. Obviously, Simes' test rejects
$\hypothesis{\calI}$ whenever Bonferroni's test rejects
$\hypothesis{\calI}$. Simes' test has been shown to be valid if the
p-values are PRD under $H$ \cite{sarkar1997simes}. This property and
\Cref{prop:PRDS} lead to the following result.
% The
% following proposition characterizes the conditions under which
% combining Simes' test with the \nickname~p-values controls the type I
% error.

\begin{proposition}\label{prop:simes}
Suppose $\cdfTestStatistics{i}(t) \le \cdfTestStatistics{j}(t)$ for
all $i \in \hypothesisIndex{0}$, $j \in \hypothesisIndex{\text{nc}}$,
and $t \in \RR$,
and one of the following sets of conditions holds,
\begin{enumerate}[label = \roman*.,ref = (\roman*)]
    \item \label{simes:assu:PRDS}
      \ref{PRDS:assu:all.nc.set.independent},
      \ref{PRDS:assu:nc}, and \ref{PRDS:assu:PRDS} in
      \Cref{prop:PRDS};
    \item \label{simes:assu:exchangeable} $\left( \cdfTestStatistics{i}(\testStatistics{i})
      \right)_{i \in \hypothesisIndex{0} \cup
        \hypothesisIndex{\text{nc}}}$ is exchangeable.
\end{enumerate}
Then Simes' test applied to the \nickname~p-values controls the type I error
for testing the intersection null $\hypothesis{\calI}$.
\end{proposition}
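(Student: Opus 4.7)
The plan is to reduce this result to \Cref{prop:PRDS} and the Sarkar--Chang inequality \parencite{sarkar1997simes} for Simes' test via a monotone coupling. Under the intersection null $\hypothesis{\calI}$ every hypothesis in $\hypothesisIndex{}$ is true, so $\nullHypothesisIndex = \hypothesisIndex{}$ and it suffices to show that $(\pval{i})_{i \in \hypothesisIndex{}}$ is PRD under $\hypothesis{\calI}$. To that end, I would introduce the uniformized statistics $U_i := \cdfTestStatistics{i}(\testStatistics{i})$ for $i \in \hypothesisIndex{} \cup \hypothesisIndex{\text{nc}}$---all $U[0,1]$ by continuity of the $\cdfTestStatistics{i}$---together with the auxiliary \nickname-type p-values
\[
  \pval{i}^U := \frac{1 + \sum_{j \in \hypothesisIndex{\text{nc}}} \1_{\{U_j \le U_i\}}}{1 + \NoNc},\qquad i \in \hypothesisIndex{}.
\]

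The first step is the pointwise domination $\pval{i} \ge \pval{i}^U$. For $i \in \hypothesisIndex{}$ and $j \in \hypothesisIndex{\text{nc}}$, the stochastic dominance hypothesis $\cdfTestStatistics{i}(t) \le \cdfTestStatistics{j}(t)$ evaluated at $t = \testStatistics{i}$ gives $\cdfTestStatistics{j}(\testStatistics{i}) \ge U_i$; combined with the monotonicity identity $\{\testStatistics{j} \le \testStatistics{i}\} = \{U_j \le \cdfTestStatistics{j}(\testStatistics{i})\}$, this yields $\{U_j \le U_i\} \subseteq \{\testStatistics{j} \le \testStatistics{i}\}$, and summing indicators over $j$ gives the claimed inequality. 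The second step is to verify that the uniformized collection satisfies the hypotheses of \Cref{prop:PRDS} with the true-null index set equal to all of $\hypothesisIndex{}$. Condition~\ref{PRDS:assu:null.nc.identical} is automatic since every $U_i$ is uniform. Under case~(i) of the present statement, \ref{PRDS:assu:all.nc.set.independent} and \ref{PRDS:assu:nc} carry over to $(U_i)$ by the deterministic coupling $U_i = \cdfTestStatistics{i}(\testStatistics{i})$, and \ref{PRDS:assu:PRDS} carries over because PRDS is preserved under componentwise increasing transformations (noted after \Cref{defi:PRDS}). Under case~(ii), the hypothesis that $(\cdfTestStatistics{i}(\testStatistics{i}))_{i \in \hypothesisIndex{0} \cup \hypothesisIndex{\text{nc}}}$ is exchangeable becomes, in view of $\hypothesisIndex{0} = \hypothesisIndex{}$, exactly condition~\ref{PRDS:assu:exchangeable} for $(U_i)$.

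Applying \Cref{prop:PRDS} to the uniformized problem therefore shows $(\pval{i}^U)_{i \in \hypothesisIndex{}}$ is PRD under $\hypothesis{\calI}$, and the Sarkar--Chang theorem delivers $\PP(\exists\, i : \pvalOrder{i}^U \le i \alpha/\No) \le \alpha$. Since $\pval{i} \ge \pval{i}^U$ implies $\pvalOrder{i} \ge \pvalOrder{i}^U$ for every order statistic, the Simes rejection event for the original \nickname~p-values is contained in that for the uniformized ones and hence also has probability at most $\alpha$. The main delicate point will be the bookkeeping of how each hypothesis in case~(i) transfers across the probability-integral transform---especially the inheritance of PRDS from $(\testStatistics{i})$ to $(U_i)$---while the monotone-coupling bound itself is a short calculation.
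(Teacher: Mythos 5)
Your proposal is correct and follows essentially the same route as the paper: the paper likewise applies \Cref{prop:PRDS} to the infeasible \nickname~p-values built from the probability-integral transforms $\cdfTestStatistics{i}(\testStatistics{i})$ (for which condition~\ref{PRDS:assu:null.nc.identical} holds automatically), invokes Sarkar's result for PRD p-values, and then uses the stochastic-dominance assumption to show the feasible p-values dominate the infeasible ones, so the Simes rejection event can only shrink. The only cosmetic difference is that you spell out the monotone-coupling inequality $\pval{i} \ge \pval{i}^U$ and the transfer of conditions \ref{PRDS:assu:all.nc.set.independent}--\ref{PRDS:assu:PRDS} across the transform in more detail than the paper does.
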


Apart from a relaxation of condition
\ref{PRDS:assu:null.nc.identical} in \Cref{prop:PRDS} to stochastic
dominance, the conditions in \Cref{prop:simes} are the same as those
in \Cref{prop:PRDS} (note that $\hypothesisIndex{} =
\hypothesisIndex{0}$ if $\hypothesis{}$ is true). We cannot
use \Cref{prop:PRDS} directly to prove
\Cref{prop:simes} (see \Cref{rem:ia-cannot-relax} above). However,
\Cref{prop:simes} suggests that the (infeasible) \nickname~p-values
$(\tilde{\pval{i}})_{i \in \hypothesisIndex{}}$, obtained
from test statistics $(\cdfTestStatistics{i}(\testStatistics{i}))_{i \in
    \hypothesisIndex{}}$ and internal negative control statistics
  $(\cdfTestStatistics{i}(\testStatistics{i}))_{i \in
    \hypothesisIndex{\text{nc}}}$, are PRD under $H$. So Simes' test
  applied to $(\tilde{\pval{i}})_{i \in
    \hypothesisIndex{}}$ controls the type I error. Under the
  assumptions in \Cref{prop:simes}, we have $\pval{i} \geq
  \tilde{\pval{i}}$ for all $i \in \hypothesisIndex{}$. Thus, if Simes'
  test applied to $(\pval{i})_{i \in \hypothesisIndex{}}$ rejects $H$,
  it must also reject $H$ when applied to $(\tilde{\pval{i}})_{i \in
    \hypothesisIndex{}}$. The conclusion in \Cref{prop:simes}
  immediately follows.

There are many other global tests besides Bonferroni's correction and Simes'
test; a prominent example is Fisher's combination test that requires
independent p-values \cite{fisher1925statistical}. When applied to
\nickname~p-values, however, such methods may not always control the
type I error; in the case of Fisher's test, see
\Cref{sec:example.fisher} for a counter-example and
\textcite[sec.\ 2.1]{bates21_testin_outlier_with_confor_p_values} for
a theoretical characterization of this negative result.
Nevertheless, by assuming the exchangeability in
\eqref{eq:basic-exchangeable}, the distribution of any test statistic
under the global null
can be obtained by using permutations. \nickname~p-values come up
naturally in such permutation tests: \Cref{prop:permutation} in the
Appendix shows that all permutation tests that are invariant under
monotone transformations and certain permutations can be written as a
function of the \nickname~p-values.

\begin{remark}
It may be interesting to compare Simes' test applied to $(\pval{i})_{i
  \in \hypothesisIndex{}}$ with the permutation test applied to the
statistic $\min_{i \in \hypothesisIndex{}} \pval{(i)}/i$. As both
tests use the same test statistic and the permutation test is exact,
Simes' test can be viewed as a conservative approximation to the
permutation test with a simple rejection threshold. Numerical
simulations shows that this approximation
becomes more accurate when $\NoNc$ is much larger than
$\No$ (\Cref{fig:Simes.test}). Heuristically, this is because Simes' test is exact when the
p-values are independent, and the dependence of the \nickname~p-values
decreases as the number of negative controls increases.
\end{remark}

\begin{figure}[tbp]
    \centering
    \begin{subfigure}[b]{0.4\textwidth}
    \centering
\includegraphics[width  = \textwidth]{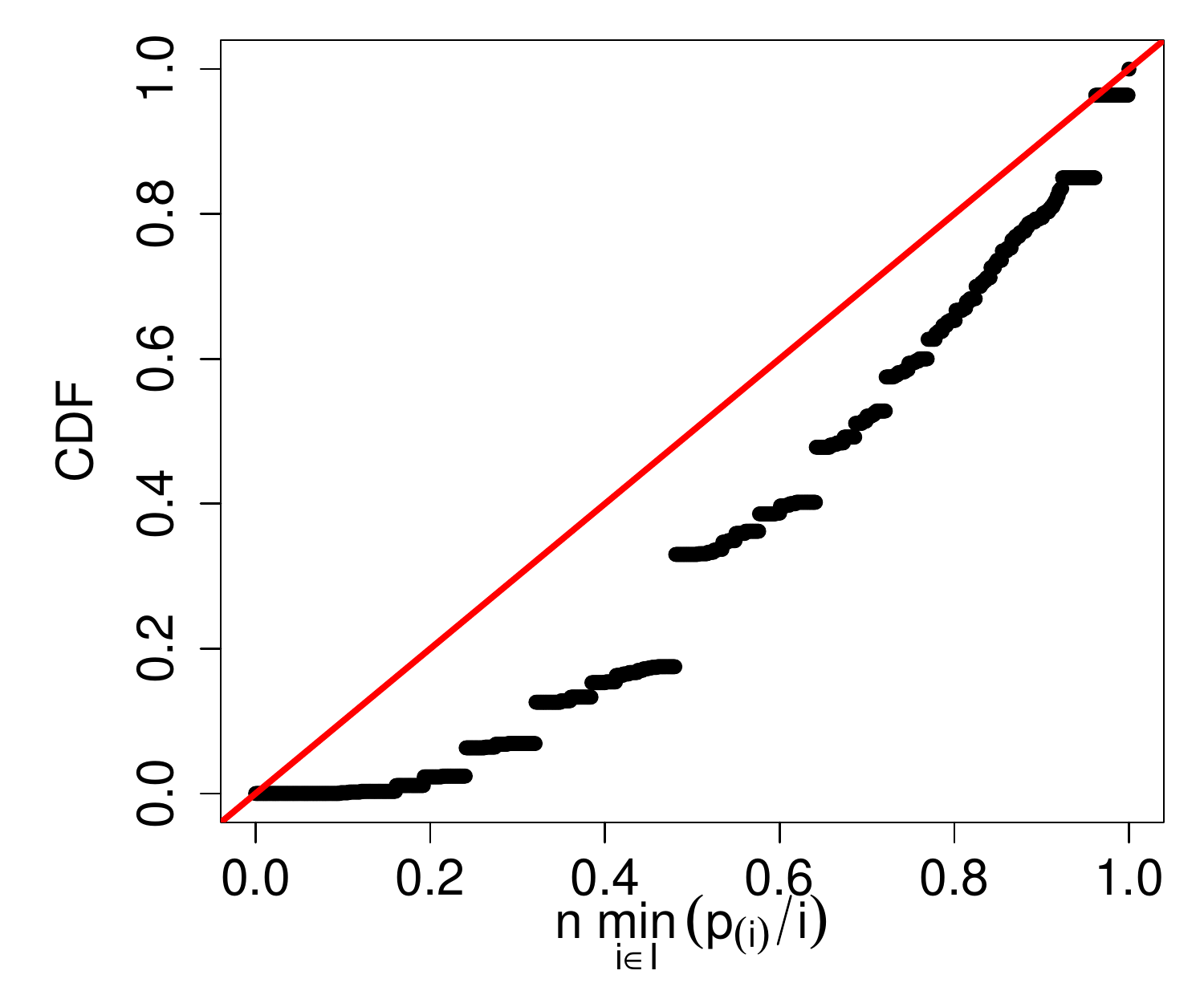}
    \subcaption{$\No = 25$, $\NoNc = 25$}
    \end{subfigure} \quad
    \begin{subfigure}[b]{0.4\textwidth}
    \centering
\includegraphics[width  = \textwidth]{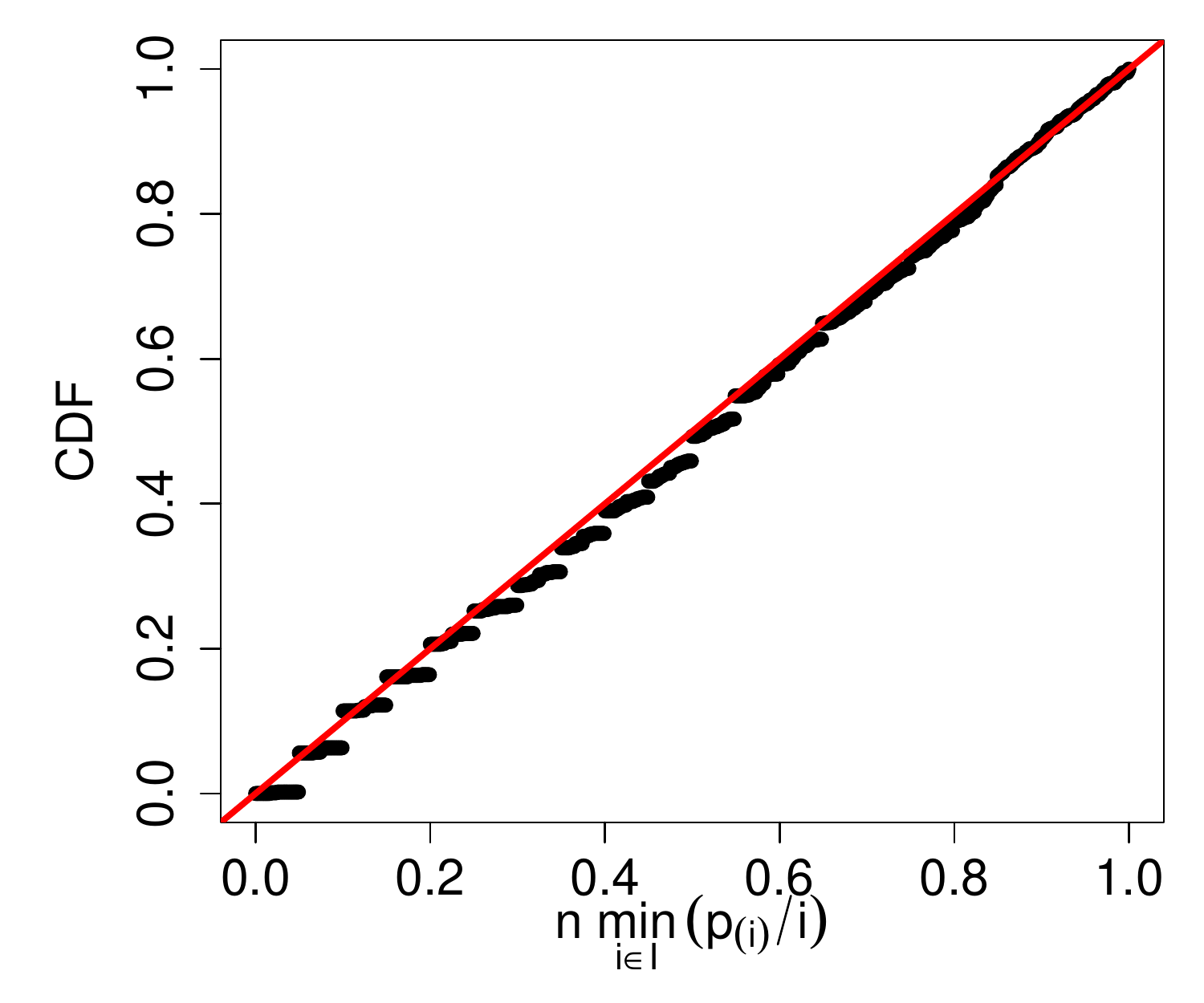}
    \subcaption{$\No = 25$, $\NoNc = 500$}
    \end{subfigure}
    \caption{{Cumulative distribution functions of the permutation
          distribution of the Simes statistic $n \min_{i
  \in \hypothesisIndex{}} (\pval{(i)}/i)$ with different numbers of
internal negative controls ($\NoNc = 25$ and $\NoNc = 500$), estimated by $1000$ random permutations. Simes' test corresponds to assuming a uniform distribution.}}
    \label{fig:Simes.test}
\end{figure}

\subsubsection{Testing individual hypotheses}\label{sec:individual.hypothesis}

We now review methods for testing the individual hypotheses
$\hypothesis{i}$, $i=1,\dotsc,\No$. Bonferroni's correction rejects
$\hypothesis{i}$ if $\pval{i} \leq \alpha / n$. Holm's step-down
procedure is obtained from closing Bonferroni's correction; it orders
the hypotheses by ranking p-values from small to large and % rejects
% $\hypothesis{i}$ if $\pval{i} < \pval{(i^{*})}$ where $i^{*}$ is the smallest
% number $j$ such that $\pval{(j)} > \alpha/(\No - j + 1)$
keeps on rejecting hypotheses as long as $p_{(i)} \leq \alpha/(\No - i
+ 1)$.
\parencite{holm1979simple}. Both procedures only require that the p-values
are individually valid and thus, when applied to the
\nickname~p-values, control the \FWER~if the conditions in
\Cref{prop:pvalue.validity} are satisfied for all $i \in
\hypothesisIndex{0}$. This also applies to graph-based procedures such
as fixed sequence testing and the fallback procedure
\parencite{wiens03_fixed_sequen_bonfer_proced_testin_multip_endpoin,bretz2009graphical}.
Hochberg's procedure
\parencite{hochberg88_sharp_bonfer_proced_multip_tests_signif} and
Hommels' procedure
\parencite{hommel88_stagew_rejec_multip_test_proced} are based on
closing the Simes' test. When applied to \nickname~p-values, they
control the \FWER~under the conditions in \Cref{prop:simes}.

For \FDP~control,
% let $\pi \in (0,1)$ be a pre-specified parameter, and the goal is to develop a procedure such that the probability of the \FDP~exceeding the threshold $\pi$ is at most $\alpha$, i.e., $\PP(\FDP > \pi) \le \alpha$.
\textcite{lehmann2005generalizations} proposed a step-down procedure
that keeps on rejecting hypotheses as long as $p_{(i)} \leq (\lfloor
q i \rfloor + 1)\alpha/(\No + \lfloor q i \rfloor + 1 - i)$. They
showed that this procedure satisfies $\PP(\FDP > q) \leq \alpha$ if
the so-called Simes' inequality is satisfied \parencite[thm.\
3.2]{lehmann2005generalizations}. Therefore, their procedure applied
to the \nickname~p-values is valid under the conditions in
\Cref{prop:simes}.

As mentioned above, \FWER~and \FDP~control are intimately related
to testing intersection nulls. Given valid tests of the intersection
nulls, \textcite{marcus1976closed} proposed a closed testing
procedure that rejects $H_i$ if all intersection nulls $\{H_{\calJ} =
\cap_{j \in \calJ} H_j: \calJ \ni i\}$ that logically imply $H_i$
are rejected. They showed that this procedure controls the
\FWER. \textcite{goeman2011multiple} extended this procedure to
simultaneously estimate the \FDP~among any subset of hypotheses
\parencite[see also][]{genovese2004stochastic,genovese06_exceed_contr_false_discov_propor}.
% ; specifically, they proposed to reject an
% intersection null $H_{\calI} =
% \cap_{i \in \calI} H_i$ if all intersection nulls
% $\{H_{\calJ}:\calJ \supseteq \calI\}$ that logically imply $H_{\calI}$
% are rejected by the original test.
\textcite{goeman2021only} showed that all methods that
control \FWER~or some tail probability of \FDP~can be written as or
uniformly improved by a closed testing procedure.
% More generally, the closed procedure \cite{marcus1976closed} is a generic tool for constructing \FWER~control method using global null testing procedures.
% This implies that, when designing methods, it is sufficient to restrict attention to closed testing regarding the statistical properties.
% Given a multiple testing procedure, its closed procedure .
% Provided with valid global null tests, the closure principle
% controls the \FWER~or \FDP~under all configurations of true and
% false hypotheses.
Because closed testing only requires validity of the tests for the
intersection nulls, in principle it can be applied with Simes' tests
using the \nickname~p-values when the conditions in \Cref{prop:simes} are
satisfied.

For \FDR~control, the \BH~step-up procedure is
most widely used. Let $i$ be
the largest index
such that $\pvalOrder{i} \le i q/\No$; the \BH~procedure rejects
$\hypothesis{j}$ for all $j$ such that $\pval{j} \le \pvalOrder{i}$
\parencite{benjamini1995controlling}.
The \BH~procedure is proven to control the \FDR~at the nominal level
when the p-values are independent \cite{benjamini1995controlling} or
 \PRDS~on the set of true nulls \cite{ benjamini2001control}. This
 implies the following result.
% As a result, under the conditions in \Cref{prop:PRDS}, the \nickname~p-values are \PRDS~and combining the \BH~procedure with \nickname~p-values controls \FDR.
\begin{corollary}\label{coro:FDR}
Under any sets of conditions in \Cref{prop:PRDS}, the
\BH~procedure applied to the \nickname~p-values controls the \FDR.
\end{corollary}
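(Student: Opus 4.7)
The plan is to derive \Cref{coro:FDR} directly from \Cref{prop:PRDS} combined with the classical FDR control theorem of Benjamini and Yekutieli \parencite{benjamini2001control}, which states that the \BH~step-up procedure at nominal level $q$ satisfies $\FDR \le (\NoNull/\No) q \le q$ whenever (i) each true-null p-value is stochastically at least as large as $U[0,1]$, and (ii) the full vector of p-values is \PRDS~on the subset of true-null p-values.

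First, I would invoke \Cref{prop:PRDS}: under either set of hypotheses listed there, the \nickname~p-values $(\pval{i})_{i \in \hypothesisIndex{}}$ are \emph{simultaneously} (i) individually valid, in the sense proven already in \Cref{prop:pvalue.validity}, and (ii) \PRDS~on the sub-collection $(\pval{i})_{i \in \nullHypothesisIndex}$ indexed by the true nulls. These are exactly the two ingredients required by Benjamini--Yekutieli, so applying their theorem to the vector $(\pval{i})_{i \in \hypothesisIndex{}}$ with nominal level $\FDRLevel$ gives $\FDR \le (\NoNull/\No)\FDRLevel \le \FDRLevel$, which is the claim.

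There is no substantive obstacle in this step: the entire analytical content of the corollary has already been absorbed into \Cref{prop:PRDS}, and the remaining reasoning is a mechanical invocation of a known theorem. The only minor point worth flagging in the write-up is that the \PRDS~property in the conclusion of \Cref{prop:PRDS} is stated for the p-vector under investigation with respect to its null coordinates --- i.e., exactly the pairing required by the Benjamini--Yekutieli condition --- so no additional reindexing, marginalization over the negative-control statistics, or auxiliary monotonicity argument is needed. Since the test statistics are assumed to have continuous CDFs, ties occur with probability zero and the \BH~thresholding is unambiguous, so the standard statement of the FDR bound applies verbatim.
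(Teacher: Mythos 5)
Your proposal is correct and is exactly the argument the paper intends: \Cref{prop:PRDS} supplies both validity and the \PRDS~property of $(\pval{i})_{i \in \hypothesisIndex{}}$ on $(\pval{i})_{i \in \nullHypothesisIndex}$, and the corollary then follows by a direct application of the Benjamini--Yekutieli theorem, which is precisely how the paper derives it in the sentence preceding the statement. Nothing further is needed.
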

% The proof of the first condition in \Cref{prop:FDR} follows from the \PRDS~property of the \nickname~p-values and that the \BH~procedure is valid under the \PRDS~property \cite{benjamini2001control}.

% The proof of the first condition in \Cref{prop:FDR} follows from the \PRDS~property of the \nickname~p-values and that the \BH~procedure is valid under the \PRDS~property \cite{benjamini2001control}.
% If any sets of the conditions in \Cref{prop:PRDS} is true, then the
% \BH~procedure combined with the \nickname~p-values in
% Eq.~\eqref{eq:pval} controls \FDR.

\begin{remark} \label{rem:emn}
  A concrete example of an exchangeable but non-i.i.d.\ sequence is the
  equicorrelated multivariate normal (EMN) model:
  \begin{equation}
    \label{eq:emn}
      \testStatistics{i} = \mu_i
+ \sqrt{\rho}Z + \sqrt{1-\rho}X_i,\quad {i \in \hypothesisIndex{} \cup
  \hypothesisIndex{\text{nc}}},
\end{equation}
where $Z$ and $X_i$ are
i.i.d.~standard normal variables and $0 \le \rho < 1$.
Jointly, the test statistics follow a multivariate normal distribution
with all pairwise correlations equal to $\rho$. Suppose $\mu_i = 0$
for all $i \in \nullHypothesisIndex \cup \hypothesisIndex{\text{nc}}$, then
$(\testStatistics{i})_{i \in \nullHypothesisIndex \cup
  \hypothesisIndex{\text{nc}}}$ is exchangeable but not
independent. It is straightforward to verify that the exchangeability
conditions in
\Cref{prop:pvalue.validity}\ref{vali:assu:one.exchangeable.general},
\Cref{prop:PRDS}\ref{PRDS:assu:exchangeable}, and
\Cref{prop:simes}\ref{simes:assu:exchangeable} are all satisfied.
A simulated example from the EMN model is given in \Cref{sec:simulation}.
\end{remark}

\section{The empirical process
  perspective}\label{sec:empirical.process}

An unsatisfactory aspect of \Cref{coro:FDR} is that it requires
the marginal distribution of the test statistic to be the same for all
true null hypotheses and negative controls. In this section, we
develop an alternative justification of this \BH~procedure that
relaxes this assumption. Our argument is based on using
negative controls to form a nonparametric estimator of the false
discovery rate, and is closely related to the empirical process
perspective in \textcite{storey2004strong,genovese2004stochastic}. A
practical advantage of this approach is that one can increase the
power of the \BH~procedure by estimating the proportion of nulls
\parencite{storey2002direct}.

% $\FDP(t)$ is regarded as a stochastic process indexed by the p-value threshold $t$.
We first set up the notation to state our main result. Following
\textcite{storey2004strong,genovese2004stochastic}, we
view \FDP~as an empirical process indexed by the rejection
threshold. As the \nickname~p-values are invariant under a monotone
transformation of the original test statistics, without loss of
generality, we assume $\testStatistics{i} \in [0,1]$ for all $i$. The
empirical processes for false rejections ($\falsePositive$ in
\Cref{tab:outcome}), all rejections ($\totalPositive$ in
\Cref{tab:outcome}), and the \FDP~are defined as
\begin{align*}
    \falsePositive(t) &:= \sum_{i \in \nullHypothesisIndex}
                        \1_{\{\testStatistics{i} \le t\}}, \quad
    % \truePositive(t) &= \left|\testStatistics{i} \in \nonNullHypothesisSet: \testStatistics{i} \le t \right|, \\
    \totalPositive(t) :=
    % V(t) + S(t) =
    \sum_{i \in \hypothesisIndex{}} \1_{\{ \testStatistics{i} \le
t\}}, \quad
    \text{and} \quad
    \FDP(t) := \frac{V(t)}{R(t) \vee 1},\quad 0 \leq t \leq 1.
\end{align*}
For fixed $t$, let the expectation of $\FDP(t)$ be $\FDR(t) =
\EE\left[\FDP(t)\right]$. A multiple testing method such as the
\BH~procedure selects a data-dependent rejection threshold $\tau$, and
we are interested in controlling $\FDR = \EE[\FDP(\tau)]$ at level $q$. This
may be achieved by directly estimating $\FDR(t)$ and stopping the
procedure when the estimated \FDR~is above $q$. This typically
involves estimating the
number of false positives $\falsePositive(t)$. For example, in the
usual setting that the test statistics are p-values and follow $U[0,1]$ under the
null, we may estimate
$\falsePositive(t)$ conservatively by $nt$, the expectation of
$\falsePositive(t)$ when $\nullHypothesisIndex = \hypothesisIndex{}$.

Compared to previous work
\parencite{storey2004strong,genovese2004stochastic}, a key
differerence in our problem is that the null distribution of the test
statistics is unknown and must be estimated from the negative
controls. To this end, let the empirical process for the negative
control rejections and its normalization be,
respectively,
\[
  V_{\text{nc}}(t) := \sum_{j \in \hypothesisIndex{\text{nc}}}
  \1_{\{\testStatistics{j} \le t \}},\quad \bar{V}_{\text{nc}}(t) :=
  \frac{n(V_{\text{nc}}(t)+2)}{\NoNc+1}, \quad
  0 \leq t \leq 1.
\]
We propose to estimate $\FDR(t)$ by
\begin{align}\label{eq:FDR.estimate}
    \widehat{\FDR}_{\lambda}(t)
    := \frac{\hat{\pi}(\lambda) \cdot
  \bar{V}_{\text{nc}}(t)}{R(t) \vee 1},\quad \text{for some}~0 <
  \lambda \leq 1,
\end{align}
where $\hat{\pi}(\lambda)$ is the following estimator of
the proportion of true nulls $|\nullHypothesisIndex|/|\hypothesisIndex{}|$:
\begin{align}\label{defi:eq:prop.null}
  \hat{\pi}(\lambda) =
  \begin{dcases}
    1, & \text{if}~\lambda = 1, \\
  \frac{\No + 1 - \totalPositive(\lambda)}{\No} \cdot \frac{\NoNc +
    1}{\NoNc -  \ncFalsePositive(\lambda)},& \text{if}~0 < \lambda <
                                             1. \\
  \end{dcases}
\end{align}
Equation \eqref{defi:eq:prop.null} is modified from \textcite[eq.\
(6)]{storey2004strong}.
It can be shown that $\widehat{\FDR}_{\lambda}(t)$ is a conservative
(i.e.\ downward biased) estimator of $\FDR(t)$ when $\lambda = 1$; see
\Cref{prop:FDR.estimate} in the Appendix.
 Finally, let the rejection threshold be
\begin{align*}
    \stoppingTime = \sup\left\{0 \le t \le \lambda:
  \widehat{\FDR}_{\lambda}(t) \le q \right\},
\end{align*}
so a hypothesis $\hypothesis{i},~i\in\hypothesisIndex{}$ is rejected
if $\testStatistics{i} \leq \stoppingTime$.

The next proposition relates this rejection threshold with the
\BH~procedure applied to the \nickname~p-values.
\begin{proposition}\label{prop:FDR.equivalence}
A hypothesis $\hypothesis{i},~i\in\hypothesisIndex{}$ is rejected by
the above procedure when $\lambda = 1$ if and only if it is rejected
by the \BH~procedure with the following modified \nickname~p-values:
\[
   \tilde{\pval{i}} = \frac{2 + \sum_{j \in \hypothesisIndex{\text{nc}}}
  \1_{\{\testStatistics{j} \le \testStatistics{i}\}}}{1 + \NoNc}
\wedge 1.
\]
\end{proposition}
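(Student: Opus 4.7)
The plan is to translate both rejection rules into explicit thresholds on the ordered test statistics $T_{(1)} \le \dotsb \le T_{(n)}$ and verify that they reject the same set of hypotheses. When $\lambda = 1$, we have $\hat{\pi}(1) = 1$, so $\widehat{\FDR}_1(t) = \frac{n(V_{\text{nc}}(t) + 2)}{(\NoNc+1)(R(t) \vee 1)}$. The modified RANC p-value $\tilde{p}_i$ is a monotone non-decreasing function of $T_i$, so $\tilde{p}_{(k)} = \frac{V_{\text{nc}}(T_{(k)}) + 2}{\NoNc + 1} \wedge 1$. Using the blanket no-ties assumption, $R(T_{(k)}) = k$, yielding the key identity
\[
\widehat{\FDR}_1(T_{(k)}) = \frac{n}{k} \cdot \frac{V_{\text{nc}}(T_{(k)}) + 2}{\NoNc + 1}.
\]

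From this identity, the BH condition $\tilde{p}_{(k)} \le kq/n$ is equivalent to $\widehat{\FDR}_1(T_{(k)}) \le q$. The $\wedge 1$ cap is irrelevant for $q < 1$, since $\tilde{p}_{(k)} = 1$ would force $kq/n \ge 1$, impossible for $k \le n$; similarly, $\widehat{\FDR}_1(T_{(k)}) \le q < 1$ cannot hold when $V_{\text{nc}}(T_{(k)})+2 \ge \NoNc+1$. Hence BH rejects the hypotheses corresponding to $T_{(1)}, \dotsc, T_{(k^*)}$ where $k^* = \max\{k \in \{0, 1, \dotsc, n\} : \widehat{\FDR}_1(T_{(k)}) \le q\}$ (with $k^* = 0$ meaning no rejection). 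The empirical-process procedure rejects $H_i$ iff $T_i \le \stoppingTime$, so it rejects $R(\stoppingTime)$ hypotheses. It suffices to show $\stoppingTime \in [T_{(k^*)}, T_{(k^*+1)})$ (using $T_{(0)} = -\infty$ and $T_{(n+1)} = +\infty$), giving $R(\stoppingTime) = k^*$.

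The lower bound $\stoppingTime \ge T_{(k^*)}$ is immediate when $k^* \ge 1$, since $\widehat{\FDR}_1(T_{(k^*)}) \le q$ places $T_{(k^*)}$ in the feasible set. For the upper bound $\stoppingTime < T_{(k^*+1)}$ (when $k^* < n$), I will analyze $\widehat{\FDR}_1$ on each sub-interval $[T_{(k)}, T_{(k+1)})$. There $R(t) = k$ is constant, while $V_{\text{nc}}(t)$ is non-decreasing with upward jumps at intermediate negative control statistics; hence $\widehat{\FDR}_1(t)$ is non-decreasing on the interval. For any $k \ge k^* + 1$, its minimum value $\widehat{\FDR}_1(T_{(k)})$ already exceeds $q$, disqualifying the whole interval. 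For $k = k^*$, the crucial observation is that the jump of $R$ at $T_{(k^*+1)}$ from $k^*$ to $k^*+1$ strictly \emph{decreases} $\widehat{\FDR}_1$ by a factor of $k^*/(k^*+1)$, so $\widehat{\FDR}_1(T_{(k^*+1)}^-) = \widehat{\FDR}_1(T_{(k^*+1)}) \cdot (k^*+1)/k^* > q$. By monotonicity on the interval, $\widehat{\FDR}_1$ first crosses $q$ at some negative-control jump point $c < T_{(k^*+1)}$; the set $\{t \in [T_{(k^*)}, T_{(k^*+1)}) : \widehat{\FDR}_1(t) \le q\}$ therefore equals $[T_{(k^*)}, c)$, and its supremum $c$ is strictly less than $T_{(k^*+1)}$.

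The main obstacle is the boundary behavior of the right-continuous step function near jump points: a priori, $\stoppingTime$ could equal $T_{(k^*+1)}$ as a sup approached from below, which would cause an extra hypothesis to be rejected. The factor $k^*/(k^*+1) < 1$ arising from the downward jump of $\widehat{\FDR}_1$ at $T_{(k^*+1)}$ is exactly what rules this out, by forcing the excursion above $q$ to begin strictly inside the interval. Corner cases ($k^* = 0$, $k^* = n$, and $q = 1$) are handled by the same reasoning with the natural conventions for $T_{(0)}$ and $T_{(n+1)}$.
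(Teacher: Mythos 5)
Your proof is correct and follows essentially the same route as the paper's: both rest on the equivalence $\widehat{\FDR}_1(T_{(k)})\le q \iff \tilde p_{(k)} \le kq/\No$ and then verify that the supremum $\stoppingTime$ cannot slip past $T_{(k^*+1)}$. Your explicit piecewise-monotonicity analysis of $\widehat{\FDR}_1$ on the intervals $[T_{(k)},T_{(k+1)})$ is just a more detailed rendering of the paper's $\varepsilon$-approximation argument for the two inequalities $T_q \ge T_q'$ and $T_q \le T_q'$.
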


\Cref{prop:FDR.equivalence} follows from the simple observation that
% its rank normalized by the total number of hypotheses under investigation $R(\testStatistics{i})/\No$,
\begin{align*}
    \widehat{\FDR}(\testStatistics{i}) \le q \quad \text{if and only
  if} \quad
    \tilde{\pval{i}} =
  \frac{2+V_{\text{nc}}(\testStatistics{i})}{1+\NoNc} \wedge 1 \le
  \frac{R(\testStatistics{i})}{\No} q.
\end{align*}
Compared to the original \nickname~p-value $p_i$ defined in
\eqref{eq:pval}, an extra $1$ is added to the numerator of
$\tilde{\pval{i}}$. This subtle modification is needed because, unlike
the problem with a known null distribution studied by
\textcite{storey2004strong} and others, due to the
discreteness of $V_{\text{nc}}(t)$ (and hence
$\bar{V}_{\text{nc}}(t)$), it
is generally not true that $\widehat{\FDR}_{\lambda}(\stoppingTime)
\leq q$. Although this modification only makes a
minuscule difference in most practical problems, it is needed in the
super-martingale proof of the next Theorem.

The final piece we need to state the main theorem of this section is a
stronger notion of stochastic dominance \parencite{zhao2019multiple}.
\begin{definition}[Uniform stochastic dominance]\label{defi:uniform.dominance}
    For two random variables $X$, $Y$ supported on $[0,1]$, we say $X$
    is uniformly stochastically larger than $Y$ if $\PP(X \le t) > 0$,
    $\PP(Y \le t) > 0$, and $\PP(X \le s \mid X \le t) \le \PP(Y \le s
    \mid Y \le t)$ for all $0 < s \leq t \leq 1$.
\end{definition}
Heuristically, uniform stochastic dominance just means stochastic
dominance after conditioning on the variable is less than $t$ for all
$t$. It is satisfed if the distributions of $X$ and $Y$ are in a family
with monotone likelihood ratio; more examples and results can be found
in \textcite{whitt1980uniform,zhao2019multiple}.

\begin{theorem}\label{prop:FDR}
Suppose the following conditions are true:
\begin{enumerate}[label = (\alph*),ref = (\alph*)]
\item \label{FDR:assu:null.nc.uniformly.conservative}
  $\testStatistics{i}$ is uniformly stochastically larger than
  $\testStatistics{j}$ for all $i \in \nullHypothesisIndex$ and $j \in \hypothesisIndex{\text{nc}}$;
 \item \label{FDR:assu:independent}
 $(\testStatistics{i})_{i \in \hypothesisIndex{} \cup
   \hypothesisIndex{\text{nc}}}$ is mutually independent.
%  $\left\{\testStatistics{i}\right\}_{1 \le i \le \No} \independent \left\{\ncTestStatistics{j} \right\}_{1 \le j \le \NoNc}$ are independent,
% \item \label{FDR:assu:nc} $\left\{\ncTestStatistics{j} \right\}_{1 \le j \le \NoNc}$ are mutually independent;
% \item \label{FDR:assu:null} $\left\{\testStatistics{i}\right\}_{1 \le i \le \No}$ are mutually independent,
\end{enumerate}
Then for any fixed $0 < \lambda \leq 1$, the step-up procedure with
rejection threshold $\stoppingTime$
% or equivalently the \BH~procedure combined with the more conservative \nickname~p-values $\pval{i,2} := \frac{2 + \sum_{j=1}^{\NoNc} \1_{\{\ncPval{j} \le \pval{i}\}}}{1  + \NoNc} \wedge 1$
controls the \FDR~at level $q$.
\end{theorem}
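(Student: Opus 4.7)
The plan is to adapt the empirical-process super-martingale argument of \textcite{storey2004strong} to the present setting, where the null CDF is replaced by the empirical CDF of the negative controls.

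The first step is to convert the definition of $\stoppingTime$ into a deterministic pointwise inequality. Because $V_{\text{nc}}(t)$ is an integer-valued right-continuous step function, the supremum in the definition of $\stoppingTime$ together with the specific $+2$ offset in $\bar{V}_{\text{nc}}(t)$ (as opposed to the $+1$ used in the original \nickname\ p-value) is designed to ensure that $\hat{\pi}(\lambda)\,\bar{V}_{\text{nc}}(\stoppingTime) \leq q\,(R(\stoppingTime)\vee 1)$ almost surely, even though $\widehat{\FDR}_{\lambda}(\stoppingTime) \leq q$ may fail on a set of positive probability. Dividing by the right-hand side and taking expectations reduces the claim $\FDR \leq q$ to showing
$$\EE\!\left[\frac{V(\stoppingTime)}{\hat{\pi}(\lambda)\,\bar{V}_{\text{nc}}(\stoppingTime)}\right]\;\leq\;1.$$

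The second step is to build a backwards super-martingale. Reversing time by $s := \lambda - t$, let $\mathcal{G}_s$ be the $\sigma$-algebra generated by $\{T_i\,\mathbf{1}\{T_i > \lambda - s\}: i \in \hypothesisIndex{} \cup \hypothesisIndex{\text{nc}}\}$. Under the independence in condition~\ref{FDR:assu:independent}, the pair $(V(\lambda-s), V_{\text{nc}}(\lambda-s))$ evolves only at jump times, and uniform stochastic dominance in condition~\ref{FDR:assu:null.nc.uniformly.conservative} implies that, conditional on $\mathcal{G}_s$, each remaining null is stochastically larger than each remaining negative control in $[0,\lambda-s]$, so the maximum of the statistics in that interval is more likely to be a null than under exchangeability, giving a conditional probability $p_0 \geq V/(V+V_{\text{nc}})$ that the next (reverse-time) jump is a null. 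A direct one-step calculation would then verify that
$$M_s \;:=\; \frac{V(\lambda - s)}{\bar{V}_{\text{nc}}(\lambda - s)} \;=\; \frac{(m+1)\,V(\lambda-s)}{n\,(V_{\text{nc}}(\lambda-s)+2)}$$
is a $(\mathcal{G}_s)$-super-martingale. The $+2$ offset is precisely what weakens the required one-step drift condition from $p_0 \geq V/(V+V_{\text{nc}})$ down to $p_0 \geq V/(V+V_{\text{nc}}+1)$, which provides the strictly positive slack needed to accommodate uniform stochastic dominance (rather than equality in distribution) of the nulls and the negative controls.

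The third step is optional stopping. Since $\stoppingTime$ is a $(\mathcal{G}_s)$-stopping time bounded by $\lambda$, $M$ is bounded, and $\hat{\pi}(\lambda)$ is $\mathcal{G}_0$-measurable, conditioning on $\mathcal{G}_0$ gives
$$\EE\!\left[\frac{V(\stoppingTime)}{\hat{\pi}(\lambda)\,\bar{V}_{\text{nc}}(\stoppingTime)}\right] \;\leq\; \EE\!\left[\frac{V(\lambda)}{\hat{\pi}(\lambda)\,\bar{V}_{\text{nc}}(\lambda)}\right].$$
Substituting the explicit form of $\hat{\pi}(\lambda)$ in \eqref{defi:eq:prop.null} reduces the right-hand side to a ratio involving $V(\lambda)$, $R(\lambda)$, and $V_{\text{nc}}(\lambda)$; a direct moment computation using independence together with uniform stochastic dominance would then show this terminal expectation is at most $1$ (the case $\lambda = 1$ is immediate since $\hat{\pi}(1) = 1$ and $V(1) \leq n \leq \bar{V}_{\text{nc}}(1)$).

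The main obstacle will be Step 2: verifying the super-martingale property under uniform stochastic dominance, which is strictly weaker than the identically-distributed assumption of \Cref{prop:PRDS}. In the classical Storey-Taylor-Siegmund setting the denominator is the deterministic function $F_0(t)$, whereas here both numerator and denominator are random jump processes with tightly coupled increments. Carefully checking that the $+2$ offset is exactly the right amount to tip the conditional drift in the correct direction at every possible state $(V, V_{\text{nc}})$, and that the boundary computation at $s = 0$ absorbs the random factor $\hat{\pi}(\lambda)$, are the delicate technical steps.
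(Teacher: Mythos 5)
Your overall architecture (time-reversed super-martingale plus optional stopping) is the same as the paper's, but Step 1 contains a genuine error that then propagates into Step 2. The inequality you assert almost surely, $\hat{\pi}(\lambda)\,\bar{V}_{\text{nc}}(\stoppingTime) \le q\,(\totalPositive(\stoppingTime)\vee 1)$, is, after dividing by $\totalPositive(\stoppingTime)\vee 1$, \emph{exactly} the statement $\widehat{\FDR}_{\lambda}(\stoppingTime)\le q$ --- the statement you concede in the same sentence ``may fail on a set of positive probability.'' What actually holds at the supremum is the weaker inequality with $\ncFalsePositive(\stoppingTime)+1$ in place of $\ncFalsePositive(\stoppingTime)+2$: when $\stoppingTime$ is not attained one takes $t_k\uparrow\stoppingTime$ inside the feasible set, uses $\totalPositive(t_k)\le \totalPositive(\stoppingTime)$ and $\lim_k \ncFalsePositive(t_k)+2\ge \ncFalsePositive(\stoppingTime)+1$ (the counting process can jump by one at $\stoppingTime$), and passes to the limit. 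Consequently the super-martingale must be $M(t)=\falsePositive(t)\big/\{(\ncFalsePositive(t)+1)/(\NoNc+1)\}$, carrying a $+1$, not your $M_s$ carrying a $+2$. The $+2$ in the estimator does not ``weaken the one-step drift condition''; its sole purpose is to leave a one-unit gap between the estimator and the $+1$ super-martingale that absorbs the possible jump of $\ncFalsePositive$ at the unattained supremum. (One can check that $\falsePositive/(\ncFalsePositive+2)$ happens to be a backward super-martingale as well, but it is useless here because the pointwise inequality at $\stoppingTime$ you would need to pair with it is precisely the one that fails.)

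Two further steps are under-justified. For the super-martingale property, the paper does not argue via the jump chain; it computes $\EE[M(s)\mid\calF_t]$ directly for arbitrary $s<t$, using independence to factorize it as $\EE[\falsePositive(s)\mid\falsePositive(t)]\cdot\EE[1/(\ncFalsePositive(s)+1)\mid\ncFalsePositive(t)]$, and uses uniform stochastic dominance only to produce a single separating constant $p_t^s$ with $\max_{i\in\nullHypothesisIndex}F_i(s)/F_i(t)\le p_t^s\le\min_{j\in\hypothesisIndex{\text{nc}}}F_j(s)/F_j(t)$; this gives binomial stochastic dominance for both factors and the closed-form bound $\frac{1}{cp_t^s}\left(1-(1-p_t^s)^c\right)$ for the second. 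Your claim that the conditional probability that the largest remaining statistic is a null is at least $\falsePositive/(\falsePositive+\ncFalsePositive)$ is plausible but not proved for heterogeneous (non-identically distributed) nulls and controls, and your finer filtration recording individual statistics makes the needed factorization less immediate; both issues are what the separating-constant construction resolves. Finally, the terminal bound for $0<\lambda<1$ is not a routine moment computation: after substituting $\hat{\pi}(\lambda)$ one must again factorize and apply the same binomial identity to bound $\EE\left[\frac{\NoNc-\ncFalsePositive(\lambda)}{\ncFalsePositive(\lambda)+1}\right]\cdot\EE\left[\frac{\falsePositive(\lambda)}{\No-\totalPositive(\lambda)+1}\right]$ by $1$.
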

% The proof of the second set of conditions is motivated by \cite{storey2004strong} and involves applying a backward super-martingale argument to the empirical processes of the number of false, internal negative control, and total discoveries.
% We remark that the super-martingale proof relies on a slightly more conservative \nickname~p-value because the empirical process of the internal negative control discoveries has discrete movements, or jumps.
% \zg{We do not exclude the possibility that the proof can be modified to work with \nickname~p-values.}
% However, according to numerical simulations, the \BH~procedure combined with the standard \nickname~p-value~\eqref{eq:pval} under the second set of conditions in \Cref{prop:FDR} appears to control \FDR~at the target level.

We provide a sketch proof of this result by modifying the martingale
argument in \cite{storey2004strong}; more details can be found in
\Cref{sec:results-empirical-process}.
We consider the time-reversals of $\falsePositive(t)$,
$\truePositive(t)$, and $\ncFalsePositive(t)$ starting from $t=1$
and define the backward filtrations as $\calF_t =
\sigma\big(\falsePositive(s), \truePositive(s), \ncFalsePositive(s): t
\le s \le 1\big)$ for $0 \leq t \leq 1$.
Our proof rests on showing the following process is a backward
super-martingale:
\begin{align}\label{eq:super.martingale}
    M(t) = \frac{\falsePositive(t)}{(1+\ncFalsePositive(t))/(1+\NoNc)}.
\end{align}
This extends the martingale $V(t)/t$ in \cite{storey2004strong} when
the null CDF is known to be $F_0(t) = t$.
% In \cite{storey2004strong}, the random process $V(t)/t$ (or more generally $V(t)/\cdfTestStatistics{i}$) is shown to be a martingale and used in the proof.
% We extend the random process by replacing the CDF of the true null estimates with an estimate of the internal negative control test statistics $(1+\ncFalsePositive(t))/(1+\NoNc)$.
More precisely, it is proved in the Appendix that
\begin{align} \label{eq:super-martingale}
    \EE\left[M(s) \mid \calF_t \right] \leq M(t)
    % \cdot {\frac{\cdfTestStatisticsNull{}(s)}{\cdfTestStatisticsNull{}(t)}}/{\frac{\ncCdfTestStatistics{}(s)}{\ncCdfTestStatistics{}(t)}}
    \cdot \left(1 - \left(1-p_t^s% \frac{\cdfTestStatistics{}(s)}{\cdfTestStatistics{}(t)}
  \right)^{\ncFalsePositive(t) + 1} \right)
    \le M(t)~\text{for some}~0 \leq p_{t}^s \leq 1 ~\text{and all}~0
  \leq s \leq t \leq 1.
\end{align}
\Cref{prop:FDR} then follows from applying the optional
stopping theorem.

\begin{remark} \label{rem:one-sided}
  Since \Cref{prop:FDR} does not require exchangeability of the test
  statistics, it can be applied to one-sided tests. Specifically,
  suppose $T_i$ is
  the likelihood-ratio statistic for testing $\hypothesis{0}: \theta_i
  \le \theta_0$ vs.\ $\hypothesis{1}: \theta_i > \theta_0$ in a
  one-dimensional exponential family with natural (or mean) parameter
  $\theta$ (and thus has a monotone likelihood ratio). We have
  $\theta_i \leq \theta_0$ for $i \in \hypothesisIndex{0}$ by
  definition. If $\theta_i \geq \theta_0$ for all $i \in
  \hypothesisIndex{\text{nc}}$, then
  condition \ref{FDR:assu:null.nc.uniformly.conservative} in
  \Cref{prop:FDR} is satisfied. This suggests another useful aspect of
  the stochastic dominance
  condition~\ref{FDR:assu:null.nc.uniformly.conservative}: the
  definition of ``negative controls'' can be relaxed and they do not need to
  be true null hypotheses. In other words, the \nickname~p-values are
  robust to incorrect selection of negative controls in the sense that
  the FDR may still be controlled at the nominal level. However, when
  too many negative controls are not true nulls, the multiple testing
  procedure may have very little power.

%   Consider independent $(\testStatistics{i})_{i \in \hypothesisIndex{} \cup \hypothesisIndex{\text{nc}}}$ testing the one-sided hypothesis of the natural parameter $\theta_i$ in a one-dimensional exponential family,  $\hypothesis{0}: \theta_i \le \theta_0$ V.S. $\hypothesis{1}: \theta_i > \theta_0$.
% % Let $\testStatistics{i}$, $\ncTestStatistics{j}$ be the uniformly most powerful test statistics of the candidate and the internal negative control hypotheses, respectively.
% If
% % the internal negative control test statistics
% $(\testStatistics{j})_{j \in \hypothesisIndex{\text{nc}}}$ is associated with the least favorable null $\theta_j = \theta_0$, then $(\testStatistics{j})_{j \in \hypothesisIndex{\text{nc}}}$ is uniformly stochastically dominated by  $(\testStatistics{i})_{i \in \nullHypothesisIndex}$ according to \cite{whitt1980uniform}.
% By \Cref{prop:pvalue.validity}, the associated $(\pval{i})_{i \in \nullHypothesisIndex}$ is valid.
% % In addition, since the location family of Gaussian distributions is a one-dimensional exponential family and thus has MLR, the true null and internal negative control test statistics of the one-sided testing problem satisfy the uniform stochastic dominance condition~\ref{FDR:assu:null.nc.uniformly.conservative}.
% By \Cref{prop:FDR}, we can apply the \BH~procedure to $(\pval{i})_{i \in \hypothesisIndex{}}$ with appropriate \FDR~control.
% The motivating proteomic data analysis in \Cref{sec:real.data} belongs to the one-sided hypothesis testing.
\end{remark}

\begin{remark}
  A similar martingale argument was developed by
  \textcite{mary22_semi_super_multip_testin} to prove that the
  \BH~procedure ($\lambda = 1$) applied to the \nickname~p-values
  controls the \FDR. A main distinction is that they
  require partial exchangeability of the test statistics (condition
  \ref{PRDS:assu:exchangeable} in \Cref{prop:PRDS}), which is weaker
  than the independence condition \ref{FDR:assu:independent} in
  \Cref{prop:FDR} but does not allow
  the case of uniformly stochastic
  dominance in condition
  \ref{FDR:assu:null.nc.uniformly.conservative}.\footnote{Although by
    using de Finetti's theorem, the
    independence condition \ref{FDR:assu:independent} in
    \Cref{prop:FDR} can be easily relaxed to (conditional)
    $\infty$-extendability \parencite{diaconis1980finite}.}
  The uniformly stochastic dominance
  condition~\ref{FDR:assu:null.nc.uniformly.conservative} arises
  naturally in our proof of the first inequality. It remains unclear to us whether this
  can be allowed in the proof in
  \textcite{mary22_semi_super_multip_testin}, as their martingale is
  not indexed by the rejection threshold. When the marginal
  distributions of the null test statistics and internal negative controls are the same,
  the gap $M(t) - \EE\left[M(s) \mid \calF_t \right] $ is small for large
  $\ncFalsePositive(t)$, so $M(t)$ is almost a martingale. This can be
  used to prove a lower bound on the \FDR; see
  \textcite[thm.\ 3.1]{mary22_semi_super_multip_testin}.
\end{remark}

\section{\LocalFDR~control}\label{sec:localFDR}

We now turn to our third method motivated by the ad hoc procedure in
\textcite{hung14_proteom_mappin_human_mitoc_inter}. We will consider
the two-mixture setup for multiple testing in
\Cref{sec:setup-terminology}. More specifically, we will assume for
the rest of this section that $(\hypothesis{i},
\testStatistics{i}),~i=1,\dotsc,\No$ are i.i.d., $\hypothesis{i} \sim
\text{Bernoulli}(1 - \pi)$, and $\testStatistics{i} \mid
\hypothesis{i} \sim \cdfTestStatistics{\hypothesis{i}}$, so the
marginal CDF of $T_i,~i \in \hypothesisIndex{}$ is given by $F(t) =
\pi F_0(t) + (1 - \pi) F_1(t)$. To simplify
the discussion, we will assume the null proportion $\probNull$ is
known. In practice, $\probNull$ is generally unknown and a wealth of
estimators of $\probNull$ have been proposed in the literature
\cite{schweder1982plots, hochberg1990more, hengartner1995finite,
  swanepoel1999limiting, benjamini2000adaptive, storey2002direct,
  jin2007estimating}; see \cite{genovese2004stochastic} for a review
of their asymptotic properties.

\subsection{\pdfBased\ methods}
\label{sec:pdfbased-methods}

Currently, most
practical applications estimate the $\localFDR$ by plugging in
estimators of $\pdfTestStatistics{}$ (and $\pdfTestStatisticsNull$ if it
is unknown) into the definition of \localFDR~in \eqref{eq:local-fdr};
we will call such methods \pdfBased, to contrast with our proposal below
that is \cdfBased. A major limitation of the \pdfBased~approach is
that the estimated densities may be very inaccurate at the tail where
the rejection threshold is likely to be located. This is worsened by
the fact that the marginal density $f(t)$ appears in the denominator
of the definition of $\localFDR(t)$.
Another problem is that \pdfBased~estimators of the $\localFDR$ are
generally not invariant to, monotone transformation of the test
statistics, but the definition of $\localFDR$ is. Thus, very different
rejection sets may be obtained if the investigator chooses to use
different transformations of the test statistics.

We illustrate the performance of \pdfBased~methods with a simple
simulation example.  We generate $\No =
400$ test statistics with $\probNull = 0.5$, $\cdfTestStatisticsNull =
t_{10}$ (t-distribution with $10$ degrees of freedom), and
$\cdfTestStatisticsNonNull = \text{Exp}(1)$, and an
independent set of $\NoNc = 1000$ internal negative controls from
$\cdfTestStatisticsNull$. We assume $\probNull$ is known and set
$\level = 0.3$. \Cref{fig:pdfbased-1} shows that when a simple kernel
density estimator is used to estimate both $\cdfTestStatistics{}$ and
$\cdfTestStatisticsNull$, the estimated \localFDR\ is highly variable
at the left tail and the step-down rejection threshold (dashed
verticle line) is too conservative to be useful. Transforming the test
statistics to z-scores improves the performance of this
\pdfBased~method, but the rejection threshold is still too small
(\Cref{fig:pdfbased-1}).

\subsection{\cdfBased\ methods}
\label{sec:cdfbased-methods}

Next, we relate the \cdfBased\ cut-off analysis in
\textcite{hung14_proteom_mappin_human_mitoc_inter} to
\localFDR~control. To this end, consider the following optimization
problem for some given $\lambda = q / \probNull$ and $0 < q < 1$:
\begin{align}\label{eq:Bayes.risk.2}
 \EmpiricalThreshold := \argmin_{\threshold}
 \EmpiricalCdfTestStatisticsNull(\threshold) - \lambda
 \EmpiricalCdfTestStatistics(\threshold),
    % + C(\weightNull, \weightNonNull, \pi),
\end{align}
where the objective function is a weighted difference between the empirical CDFs
of the negative controls and test statistics:
\[
  \EmpiricalCdfTestStatisticsNull(t) = \frac{1}{\NoNc} \sum_{j \in
    \hypothesisIndex{\text{nc}}}  \1_{\{\testStatistics{j} \le t\}},~
  \EmpiricalCdfTestStatistics(t) = \frac{1}{\No} \sum_{i \in
    \hypothesisIndex{}}  \1_{\{\testStatistics{i} \le t\}}.
\]
Thus, the procedure in
\textcite{hung14_proteom_mappin_human_mitoc_inter} (see
\Cref{fig:lfdr}) corresponds to using $\lambda=1$ or equivalently $q =
\pi$. In a moment, it will be clear that $q$ can be understood as the
\localFDR~level targeted by using the rejection threshold
$\EmpiricalThreshold$.

By the Glivenko-Cantelli theorem, $\EmpiricalCdfTestStatisticsNull$
and $\EmpiricalCdfTestStatistics$ converge uniformly
to $\cdfTestStatisticsNull{}$ and $\cdfTestStatistics{}$,
respectively. Thus, we expect $\EmpiricalThreshold$ to converge to the
minimizer
\begin{equation}
  \label{eq:optimal-threshold}
  \tau_{\lambda}^{*} := \argmin_\threshold
  \cdfTestStatisticsNull(\threshold) - \lambda
  \cdfTestStatistics{}(\threshold).
\end{equation}
By setting the derivative of Eq.~\eqref{eq:optimal-threshold} to zero, we
obtain $\pdfTestStatisticsNull(\tau_{\lambda}^{*}) = (q / \probNull)
\pdfTestStatistics{}(\tau_{\lambda}^{*})$, or equivalently,
\[
  \localFDR(\tau_{\lambda}^{*}) = \frac{\pi
    \pdfTestStatisticsNull(\tau_{\lambda}^{*})}{\pdfTestStatistics{}(\tau_{\lambda}^{*})}
  = q,
\]
where $\pdfTestStatisticsNull{}$ and $\pdfTestStatistics{}$ are the
density functions corresponding to $\cdfTestStatisticsNull{}$
and $\cdfTestStatistics{}$.
As mentioned above, the procedure in
\textcite{hung14_proteom_mappin_human_mitoc_inter} corresponds to
using $\lambda = 1$, or equivalent $q = \pi$. Intuitively, this is a
sensible choice because $\pi$ is simply the probability of making a
false discovery by rejecting a random hypothesis.

\begin{figure}[tbp]
        \centering
        \begin{minipage}{0.3\textwidth}
                \centering
                \includegraphics[clip, trim = 0cm 0cm 0cm 0cm, width = \textwidth]{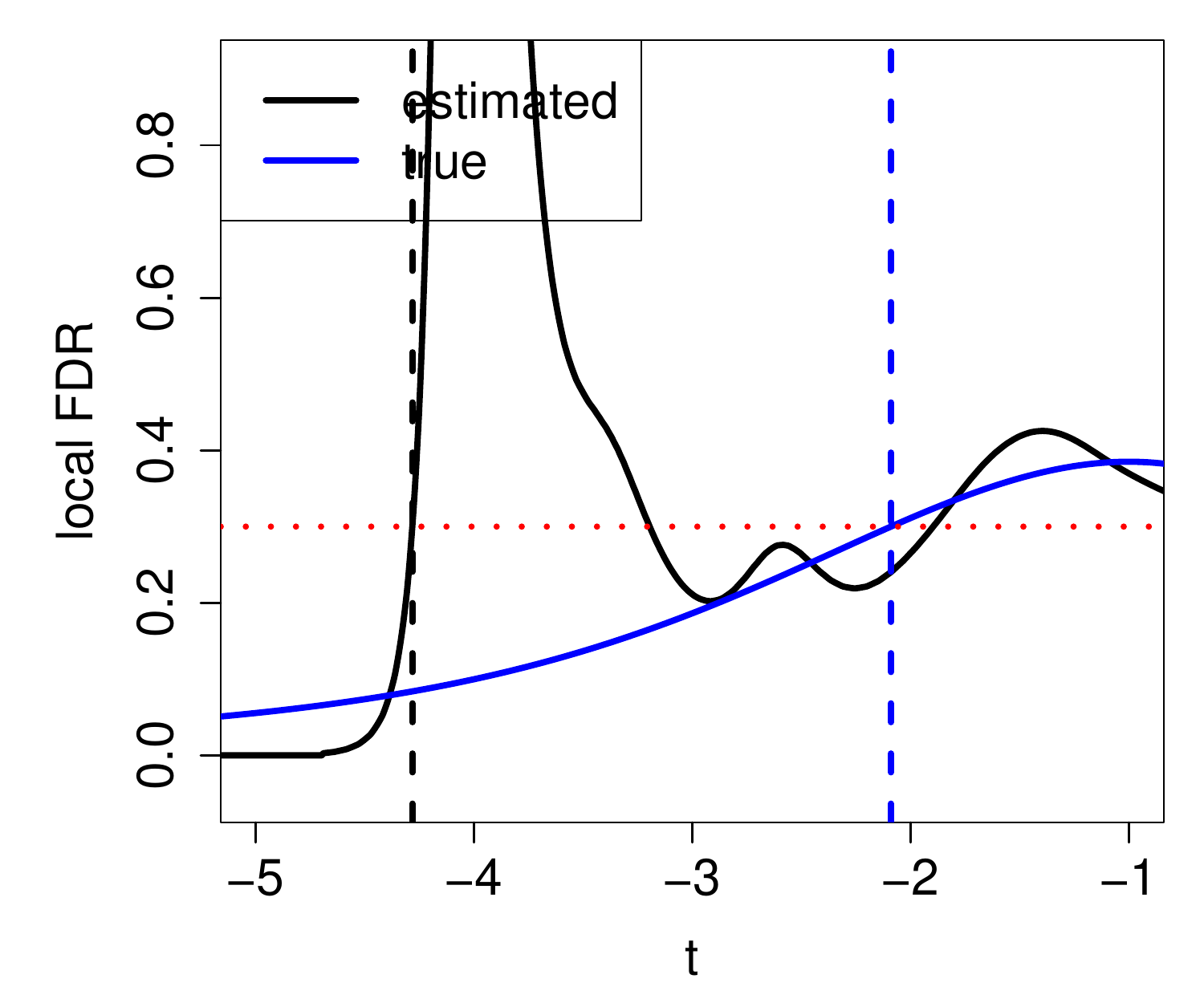}
                \subcaption{\pdfBased}
                \label{fig:pdfbased-1}
        \end{minipage}
    \begin{minipage}{0.3\textwidth}
                \centering
                \includegraphics[clip, trim = 0cm 0cm 0cm 0cm, width = \textwidth]{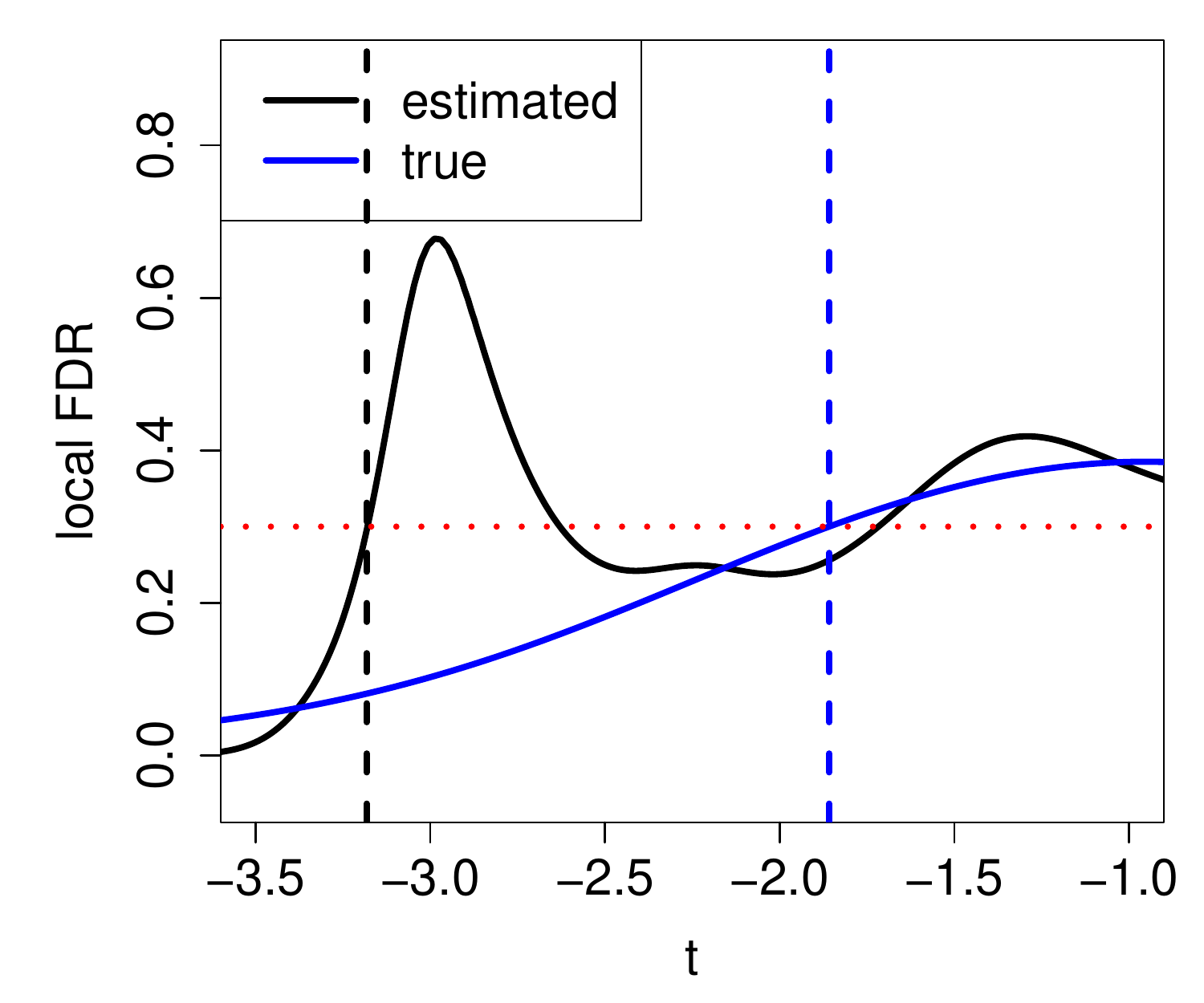}
                \subcaption{\pdfBased, transformed $\testStatistics{i}$}
                \label{fig:pdfbased-2}
        \end{minipage}
        \begin{minipage}{0.3\textwidth}
                \centering
                \includegraphics[clip, trim = 0cm 0cm 0cm 0cm, width = \textwidth]{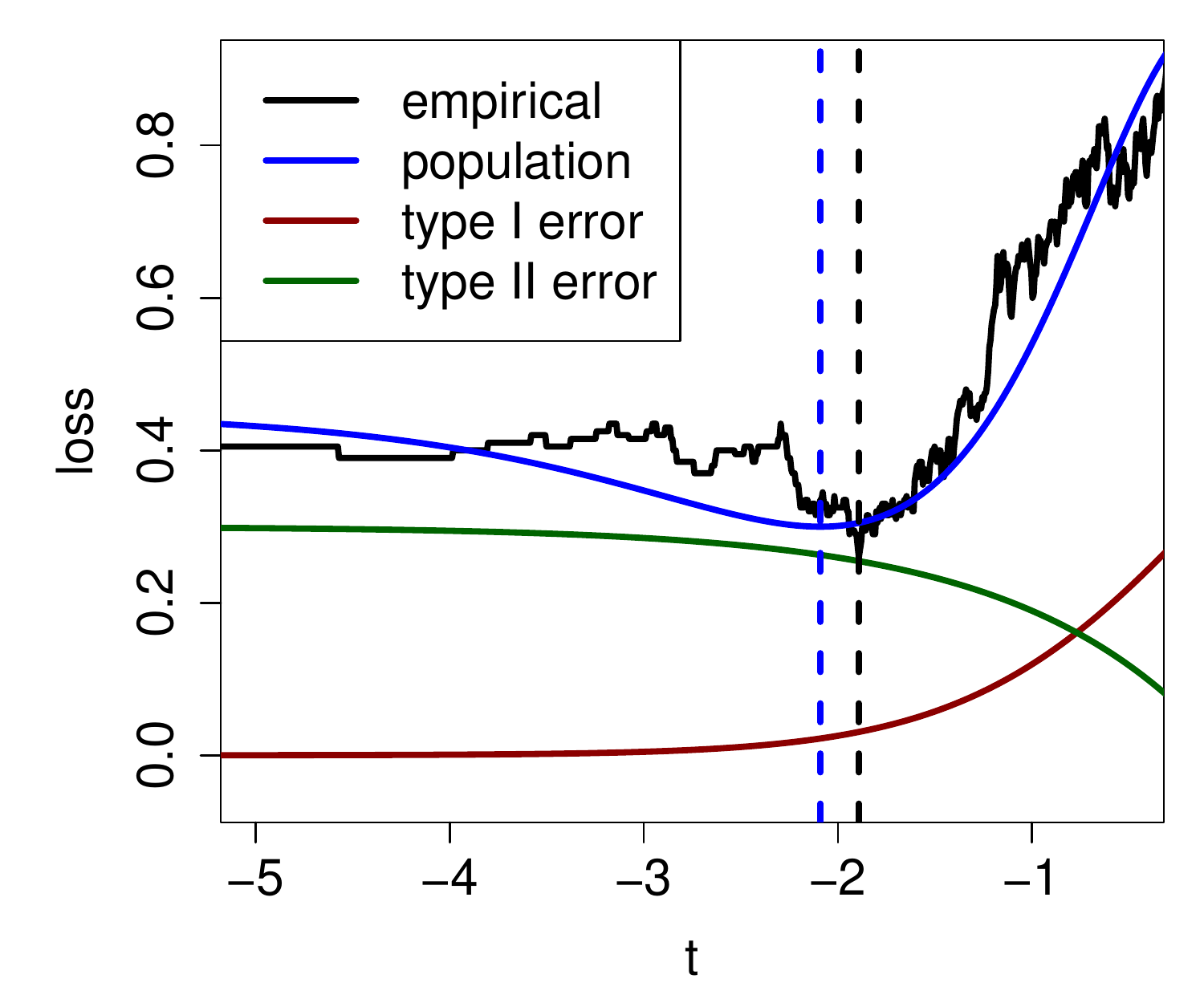}
                \subcaption{\cdfBased}
                \label{fig:cdfBased}
        \end{minipage}
        \caption{Comparison of \cdfBased~and \pdfBased~methods where
          $\No = 400$, $\NoNc = 1000$, $\pi = 0.5$, $F_0 = t_{10}$,
          $F_1 = \text{Exp}(1)$, and $q = 0.3$.
        Panel (a)
        shows the result of a \pdfBased~method, where the null density
        and the marginal density are estimated by a kernel estimator (R
        function \texttt{density} with default options). In panel (b),
        we apply the \pdfBased~method to the transformed test
        statistics
        $\Phi^{-1}(\cdfTestStatisticsNull(\testStatistics{i}))$, where
        the null distribution of the transformed test statistics is
        $\calN(0,1)$. In panel (c), we showcase the proposed
        \cdfBased~method \eqref{eq:Bayes.risk.2}. We plot the weighted
        type I error
        $(1-\level)\cdfTestStatisticsNull(\threshold)$ (dark red), the
        weighted type II error $\level
        (1-\probNull)/\probNull(1-\cdfTestStatisticsNonNull(\threshold))$
        (dark green), and superimpose the population risk (blue)---the
        sum of the two weighted errors up to a constant shift, and the
        empirical objective (black). (The population and
        empirical risk curves have been scaled for visualization
        purposes.)
        % For each method, the vertical dashed line indicates the estimated rejection threshold (black vertical dashed line ) and the true value (blue).
        % Regardless of the transformation, the \pdfBased~method is unstable in the left-tail region of heavy-tailed test statistics (panel (b)) and yields a significantly conservative rejection threshold.
    }
        \label{fig:heavyTailed}
\end{figure}

% \subsection{CDF-based method}\label{sec:localFDR.CDF}

The heuristic argument in the previous paragraph can be made precise
with the help of the empirical process theory. By assuming that the
density functions $f_0$ and $f$ are
differentiable at $\tau_{\lambda}^{*}$ and $(f_0/f)'(\tau_{\lambda}^{*})
> 0$, it can be shown that the estimated threshold
$\EmpiricalThreshold$ converges to $\tau_{\lambda}^{*}$ at
rate $(\No \wedge \NoNc)^{-1/3}$ when $n, m \to \infty$. Morevoer, by
assuming monotone likelihood ratio (i.e.\
$(\pdfTestStatisticsNull{}/\pdfTestStatistics{})'>0$) and
additional regularity conditions, it can be shown that such
convergence is uniform over $\lambda$. See \Cref{sec:results-localFDR} for some technical results.

The optimization problem \eqref{eq:optimal-threshold} can be rewritten
as a decision-theoretic problem: let $\hat{H}_i(t) = \1_{\{T_i \leq
  \threshold\}}$ be the one-sided decision rule, then
\begin{equation}
  \label{eq:optimal-threshold-2}
  \tau_{\lambda}^{*} = \argmin_\threshold
  \EE_{\cdfTestStatisticsNull}\left[\hat{H}_i(t)\right]
   - \lambda \EE_{\cdfTestStatistics{}}\left[\hat{H}_i(t)\right]
   = \argmin_\threshold \mathbb{E}\left[
     \weightNull \1_{\{\hypothesis{i} < \hat{\hypothesis{i}}(t)\}}
     + \weightNonNull \1_{\{\hypothesis{i} > \hat{\hypothesis{i}}(t)\}}
  \right],
\end{equation}
where $\weightNull = 1-q$ is the loss of a false
positive and $\weightNonNull = q$ is the loss of a
false negative. 
This connection was pointed out by
\textcite{sun2007oracle}.
% Although this connection was pointed out by
% \textcite{sun2007oracle} more than 15 years ago, it appears that the
% empirical version of \eqref{eq:optimal-threshold} that replaces the
% expectation by a sample average had not been used to estimate
% \localFDR\ until a recent paper by \textcite{soloff22_edge_discov}.

In contrast to \pdfBased~methods, the \cdfBased~method proposed here
is free of tuning parameters and invariant to monotone
transformations of the test statistics. \Cref{fig:cdfBased}
illustrates the practical performance of this procedure using the
same simulation example above, by showing the
population and empirical risk in \eqref{eq:optimal-threshold-2} as
curves of the rejection threshold $t$. It is evident that the
\cdfBased~rejection threshold $\EmpiricalThreshold$ is close to
$\tau_{\lambda}^{*}$ and its risk is close to the oracle.

% \begin{remark}\label{rmk:convergence.rate.uniform}
% \Cref{prop:convergence.rate} can be extended to a uniform convergence result under a set of stronger assumptions. Explicitly, for a set $\Lambda = (a, b)$ and the associated set of rejection thresholds $\calT = \{\trueThreshold: \lambda \in \Lambda\}$, assume the minimal value of the objective function $\cdfTestStatisticsNull(\threshold) - \lambda
% \cdfTestStatistics{}(\threshold)$ is uniquely obtained at $\trueThreshold$, and that densities $\pdfTestStatistics{}$, $\pdfTestStatisticsNull{}$, derivatives $\pdfTestStatistics{}'$, $\pdfTestStatisticsNull'$, $(\pdfTestStatisticsNull{}/\pdfTestStatistics{})'$ are uniformly upper bounded on $\calT$ and uniformly lower bounded away from zero.
% Then the estimator converges to the true threshold uniformly at the rate $(\No \wedge \NoNc)^{-1/3}$.
% Details of the uniform convergence are deferred to the \Cref{sec:results-localFDR}.
% \end{remark}

We offer some additional remarks on the \cdfBased~method above.

\begin{remark}
  By using the order statistics $\pval{(1)} < \dotsb < \pval{(\No)}$ of
  the \nickname~p-values, the optimization
  problem~\eqref{eq:Bayes.risk.2} can also be rewritten as as
  \begin{align*}
     \EmpiricalThreshold = T_{(i^{*})},~\text{where}~i^{*} = \argmin_i
      \frac{\NoNc+1}{\NoNc} \pvalOrder{i} -
      \frac{\lambda i}{\No}.
  \end{align*}
  This is almost identical to the method in
  \textcite{soloff22_edge_discov}. More precisely,
  \textcite{soloff22_edge_discov} assumes an independent sequence of
  p-values that are uniformly distributed under the null (so $F_0(t) =
  t$ is known) is given and does not include the factor $(m+1)/m$
  in the last equation. \textcite{soloff22_edge_discov} not only
  proved the same $\No^{-1/3}$ convergence rate for their estimator of
  the rejection threshold but also showed that the expected maximum
  \localFDR\ of the rejected hypotheses is controlled if the
  likelihood ratio is monotone. Our numerical simulation suggests that
  the method proposed here might also be able to control the maximum
  \localFDR, but we are unable use the technique developed by
  \textcite{soloff22_edge_discov} to prove this because the
  \nickname~p-values are not independent.
\end{remark}

\begin{remark}
The method above can be extended to estimate the \localFDR~curve,
which shall be denoted by $q(t)$. In fact,
$\hat{\tau}(\level):=\hat{\tau}_{\lambda = q/\probNull, \No, \NoNc}$
is an increasing and piece-wise constant function. Thus, the
\localFDR~curve can be estimated by inverting $\hat{\tau}(\level)$,
\begin{align}\label{eq:curve:inverse}
    \estimatedLocalFDR{\threshold} = \inf_\level \{\level: \hat{\tau}(q) \ge \threshold \}, \quad \text{for any}~\threshold.
\end{align}
The resulting \localFDR~curve $\estimatedLocalFDR{\threshold}$ is increasing, piece-wise constant, left-continuous, and the jump points are contained in $(\testStatistics{i})_{i \in \hypothesisIndex{}}$.
% See Figure \ref{fig:diagram-curve} for a graphical representation of $\hat{\tau}(\level):=\estimatedThreshold$ and $\estimatedLocalFDR{\threshold}$ in a toy example.
% As also noted by \textcite{soloff22_edge_discov},
The estimated curve
$\hat{q}(t)$ is essentially a light modification of Grenander's
estimator \cite{grenander1956theory} of a monotone density
function. It is well known that Grenander's estimator converges at
the rate $n^{-1/3}$; see \cite{durot2018limit} for a recent
review.
\end{remark}

\begin{remark}
Because the objective function
\eqref{eq:optimal-threshold} is typicaly locally convex at its optimum,
it is expected that the regret
  $\{\cdfTestStatisticsNull(\EmpiricalThreshold) - \lambda
\cdfTestStatistics{}(\EmpiricalThreshold)\} -
\{\cdfTestStatisticsNull(\tau_{\lambda}^{*}) - \lambda
\cdfTestStatistics{}(\tau_{\lambda}^{*})\}$ usually converges to zero
twice as fast as $\EmpiricalThreshold$ converges to
$\tau_{\lambda}^{*}$. However, even when the risk is almost flat around
$\tau_{\lambda}^{*}$, a simple argument using the
Dvoretzky–Kiefer–Wolfowitz (DKW)
  inequality shows that the regret
%   $\{\cdfTestStatisticsNull(\EmpiricalThreshold) - \lambda
% \cdfTestStatistics{}(\EmpiricalThreshold)\} -
% \{\cdfTestStatisticsNull(\tau_{\lambda}^{*}) - \lambda
% \cdfTestStatistics{}(\tau_{\lambda}^{*})\}$
converges at least at the
rate $\No^{-1/2}$; see \Cref{prop:convergence.rate.weak} in the
Appendix. Thus, the simple \cdfBased~method almost always has a small
regret, even if the rejection threshold $\tau_{\lambda}^{*}$ is not
estimated very accurately.
\end{remark}

% To aid the description of our proposal, we define $\BayesRiskLambda := \cdfTestStatisticsNull - \lambda \cdfTestStatistics{}$, and $\EmpiricalBayesRiskLambdaSecond := \cdfTestStatisticsNull - \lambda \EmpiricalCdfTestStatistics{}$.
% \begin{align}\label{defi:localFDR}
%     \localFDR(t) = \Pr(H_i = 0 \mid T_i = t)
%     = \frac{\probNull \pdfTestStatisticsNull(t)}{\probNull
%   \pdfTestStatisticsNull(t) + (1-\probNull)
%   \pdfTestStatisticsNonNull(t)}.
% \end{align}

\section{Simulations}\label{sec:simulation}

When using multiple testing methods based on negative controls, a
concern in practice is that they may be not very powerful. We
investigate this using numerical simulations.

% \subsection{\FDR\ control}\label{sec:validity.power}

We generate a set of
baseline p-values $(\testStatistics{i})_{i \in \hypothesisIndex{} \cup
  \hypothesisIndex{\text{nc}}}$ that might be individually invalid,
and compare three variations of the \BH~procedure:
\begin{enumerate}
\item the standard \BH~procedure (\BH) that assumes the validity of the
baseline p-values and directly aggregates $(\testStatistics{i})_{i \in
  \hypothesisIndex{}}$;
\item the \BH~procedure with \nickname~p-values (\BH~\nickname) that first
computes the \nickname~p-values $(\pval{i})_{i \in
  \hypothesisIndex{}}$ using $(\testStatistics{i})_{i \in
  \hypothesisIndex{} \cup \hypothesisIndex{\text{nc}}}$ and then
applies the \BH~procedure to $(\pval{i})_{i \in \hypothesisIndex{}}$;
\item the oracle \BH~procedure (\BH~oracle) applies the \BH~procedure to the
p-values corrected by the true null CDF.
\end{enumerate}
% i.e., $\cdfTestStatisticsNull{}(\testStatistics{i})$.

We experiment with $6 = 3 \times 2$ joint distributions of the
baseline p-values with different marginal distributions and dependency
structures.
\begin{itemize}
    \item \textit{Marginal distribution}. We consider three marginal
      distributions of null baseline p-values: $T \overset{d}{=}
      \Phi(Z + \mu)$ where $Z \sim \calN(0,1)$ and $\mu \in
      \{-0.5,0,0.5\}$, corresponding respectively to anti-conservative
      (anti-csvr.), exact (exact), and conservative (csvr.) p-values.
    The marginal distribution of the non-null baseline p-values is set to $\Phi(Z - 3)$.
    \item \textit{Dependency}. We consider two types of dependencies:
      the independent setting (ind.) where $(\testStatistics{i})_{i
        \in \hypothesisIndex{} \cup \hypothesisIndex{\text{nc}}}$ are
      mutually independent, and the exchangeable setting (exch.) where
      $(\testStatistics{i})_{i \in \hypothesisIndex{} \cup
        \hypothesisIndex{\text{nc}}}$ are generated from an EMN model
      in \Cref{rem:emn} with correlation parameter $\rho = 0.5$.
\end{itemize}
We generate $\NoNull = 100$ null test statistics, $\NoNonNull = 10$ non-null test statistics, and $\NoNc = 200$ internal negative controls.
In each trial, we record the \FDP~and the true positive rate (the
number of true discoveries divided by the total number of non-nulls)
of each method. All configurations are repeated $10^4$ times. We set
the target \FDR~level to $\FDRLevel = 0.2$.

\begin{table}[tbp]
  \centering
\caption{{\FDR~and power analysis. We compare the validity and the power of three variants of the \BH~procedure (\FDR~level $\FDRLevel = 0.2$) across $6$ joint distributions of the baseline p-values. The standard deviation of the \FDP~and the true positive rate is recorded in the bracket. All settings are repeated $10^4$ times.}}
\label{tab:simulation}
\begin{tabular}{cc|cc|cc|cc}
\toprule
\multicolumn{2}{c|}{\multirow{2}{*}{}}                                    & \multicolumn{2}{c|}{BH}              & \multicolumn{2}{c|}{BH \nickname}              & \multicolumn{2}{c}{BH oracle}                \\ \cline{3-8}
\multicolumn{2}{c|}{}
                                                                          & \multicolumn{1}{c|}{FDR}      & power & \multicolumn{1}{c|}{FDR} & power & \multicolumn{1}{c|}{FDR} & power\\
  \midrule
\multicolumn{1}{c|}{\multirow{6}{*}{ind.}}  & \multirow{2}{*}{csvr.}      & \multicolumn{1}{c|}{0.047}   & 0.8      & \multicolumn{1}{c|}{0.17}    & 0.9           & \multicolumn{1}{c|}{0.18}      & 0.93            \\
\multicolumn{1}{c|}{}                       &                             & \multicolumn{1}{c|}{(0.072)} & (0.15)   & \multicolumn{1}{c|}{(0.13)}  & (0.13)        & \multicolumn{1}{c|}{(0.12)}    & (0.087)         \\ \cline{2-8}
\multicolumn{1}{c|}{}                       & \multirow{2}{*}{exact}      & \multicolumn{1}{c|}{0.18}    & 0.82     & \multicolumn{1}{c|}{0.16}    & 0.76          & \multicolumn{1}{c|}{0.18}      & 0.82            \\
\multicolumn{1}{c|}{}                       &                             & \multicolumn{1}{c|}{(0.13)}  & (0.14)   & \multicolumn{1}{c|}{(0.14)}  & (0.21)        & \multicolumn{1}{c|}{(0.13)}    & (0.14)          \\ \cline{2-8}
\multicolumn{1}{c|}{}                       & \multirow{2}{*}{anti-csvr.} & \multicolumn{1}{c|}{0.49}    & 0.87     & \multicolumn{1}{c|}{0.16}    & 0.53          & \multicolumn{1}{c|}{0.18}      & 0.63            \\
\multicolumn{1}{c|}{}                       &                             & \multicolumn{1}{c|}{(0.13)}  & (0.12)   & \multicolumn{1}{c|}{(0.16)}  & (0.28)        & \multicolumn{1}{c|}{(0.15)}    & (0.19)          \\ \hline
\multicolumn{1}{c|}{\multirow{6}{*}{exch.}} & \multirow{2}{*}{csvr.}      & \multicolumn{1}{c|}{0.044}   & 0.76     & \multicolumn{1}{c|}{0.17}    & 1             & \multicolumn{1}{c|}{0.13}      & 0.9             \\
\multicolumn{1}{c|}{}                       &                             & \multicolumn{1}{c|}{(0.14)}  & (0.31)   & \multicolumn{1}{c|}{(0.13)}  & (0.018)       & \multicolumn{1}{c|}{(0.25)}    & (0.2)           \\ \cline{2-8}
\multicolumn{1}{c|}{}                       & \multirow{2}{*}{exact}      & \multicolumn{1}{c|}{0.13}    & 0.77     & \multicolumn{1}{c|}{0.17}    & 0.98          & \multicolumn{1}{c|}{0.13}      & 0.77            \\
\multicolumn{1}{c|}{}                       &                             & \multicolumn{1}{c|}{(0.25)}  & (0.31)   & \multicolumn{1}{c|}{(0.13)}  & (0.047)       & \multicolumn{1}{c|}{(0.25)}    & (0.31)          \\ \cline{2-8}
\multicolumn{1}{c|}{}                       & \multirow{2}{*}{anti-csvr.} & \multicolumn{1}{c|}{0.31}    & 0.77     & \multicolumn{1}{c|}{0.17}    & 0.91          & \multicolumn{1}{c|}{0.13}      & 0.57            \\
\multicolumn{1}{c|}{}                       &
                                                                                                                 & \multicolumn{1}{c|}{(0.35)}  & (0.31)   & \multicolumn{1}{c|}{(0.13)}  & (0.12)        & \multicolumn{1}{c|}{(0.25)}    & (0.38)          \\
  \bottomrule
\end{tabular}
\end{table}

\Cref{tab:simulation} reports the result of this simulation study.
The \BH~\nickname~controls \FDR~in all $6$ settings, while the
standard \BH~fails when the individual p-values are invalid (in two
anti-csvr.\ settings).
In terms of statistical power, the \BH~\nickname~is comparable to the
\BH~oracle when the p-values are independent, and, perhaps
surprisingly, is more powerful when the p-values are positively
dependent (in three exch.\ settings). In fact, \BH~\nickname~is more
powerful when the baseline p-values are positively dependent than when
they are independent.
% ---favorable when batch effects or latent factors are relevant.
We believe this surprising gain of power is due to the fact that the
\nickname~p-values are invariant to monotone transformations and are
hence invariant to the shared latent factor $Z$ in the EMN model
\eqref{eq:emn}. In other words, the \nickname~p-values effectively
have a larger signal-to-noise-ratio in the dependent case.

% \qz{I don't understand this explanation. What is $W_i$ and $H_i$?}
% $(\pval{i})_{i \in
%   \hypothesisIndex{}}$ based on  $\testStatistics{i} =
% \Phi(\sqrt{\rho}Z + \sqrt{1-\rho}W_i - 3 \hypothesis{i})$, $Z$, $W_i
% \stackrel{\text{i.i.d.}}{\sim} \calN(0,1)$ are equivalent to those
% derived from the baseline p-values $\Phi(\sqrt{1-\rho}W_i - 3
% \hypothesis{i})$.
% It is obvious that the difference between the null $\Phi(\sqrt{1-\rho}W_i)$ and the non-null $\Phi(\sqrt{1-\rho}W_i - 3)$ is more significant for larger $\rho$.
% % , and thus the power of the exch. setting ($\rho = 0.5$) is larger than that of the ind. setting ($\rho = 0$).
% % In contrast, for the \BH~procedure with oracle p-values, the differences between the corrected null and non-null p-values are the same under the independence and EMN settings.

% \subsection{Number of internal negative
% control}\label{sec:nc.sample.size}

An important question in practice is how many negative controls are
needed to use \nickname~p-values and have decent power. In order for
\BH~\nickname~to reject all non-nulls, a necessary
condition is that $p_{(\NoNonNull)} \le \FDRLevel \NoNonNull/\No$.
Since the minimal \nickname~p-value is at least $1/(1+\NoNc)$, the
above constraint implies that we need $\NoNc \ge C \No/(\FDRLevel
\NoNonNull)$ for some multiple $C > 1$ that may depend on the signal
strength. By varying the number of internal negative controls in the
the simulation setup above, we find that $C = 2$ is a good rule of
thumb for the power of \BH~\nickname~to be close to that of
\BH~oracle; see \Cref{fig:nc.sample.size} in the Appendix. When the
non-nulls have a smaller effect size, a larger $C$ may be
required. See also the simulation study in
\textcite{mary22_semi_super_multip_testin} and a related rule-of-thumb
developed there.

% \subsection{\LocalFDR\ control}
% \label{sec:localfdr-control}

%   \qz{How about power of the local FDR method? And its max lfdr? How
% many negative controls are needed for good local FDR control and
% decent power?}

\section{Real data analysis}\label{sec:real.data}

We now return to the motivating proteomic data analysis described in
\Cref{sec:motivating.data}. As mentioned already, there are $\No =
740$ proteins under investigation and $\NoNc = 2,067$ internal
negative control proteins.
We denote the protein abundance under the treatment and the control
condition by $\responseTreatment{i}$ and $\responseControl{i}$,
respectively.
% Researchers are interested in the proteins that express at higher levels under the experimental condition,
For each protein $i \in \hypothesisIndex{} = \{1,\dotsc,\No\}$, we would
like to test the one-sided hypothesis:
\begin{align*}
    \hypothesis{i,0}: \EE[\responseTreatment{i}] \le  \EE[\responseControl{i}] \quad \text{v.s.} \quad \hypothesis{i,1}: \EE[\responseTreatment{i}] >  \EE[\responseControl{i}].
\end{align*}
For internal negative control protein $j \in
\hypothesisIndex{\text{nc}} = \{741, \dotsc, 2807\}$, its expression
is anticipated to be the same over the two conditions, i.e.\
$\EE[\responseTreatment{j}] =  \EE[\responseControl{j}]$.

\subsection{Falsification of negative controls}
\label{sec:fals-negat-contr}

The validity of the internal negative controls can be falsified by
comparing the empirical distribution of $\responseTreatment{j} -
\responseControl{j}$ over different subgroups of negative controls.
In this example, \Cref{fig:nc.subgroup} shows that the test statistics
for proteins annotated with different non-membrane subcellular
locations are distributed similarly, thereby supporting the usage of
them as internal negative controls.
% By \Cref{rem:one-sided}, we can
% apply the \BH~procedure combined with the \nickname~p-values to the
% proteomic dataset.

\begin{figure}[tbp]
    \centering
    \begin{minipage}{5cm}
    \centering
\includegraphics[clip, trim = 0cm 0cm 0cm 0cm, width  = 5cm]{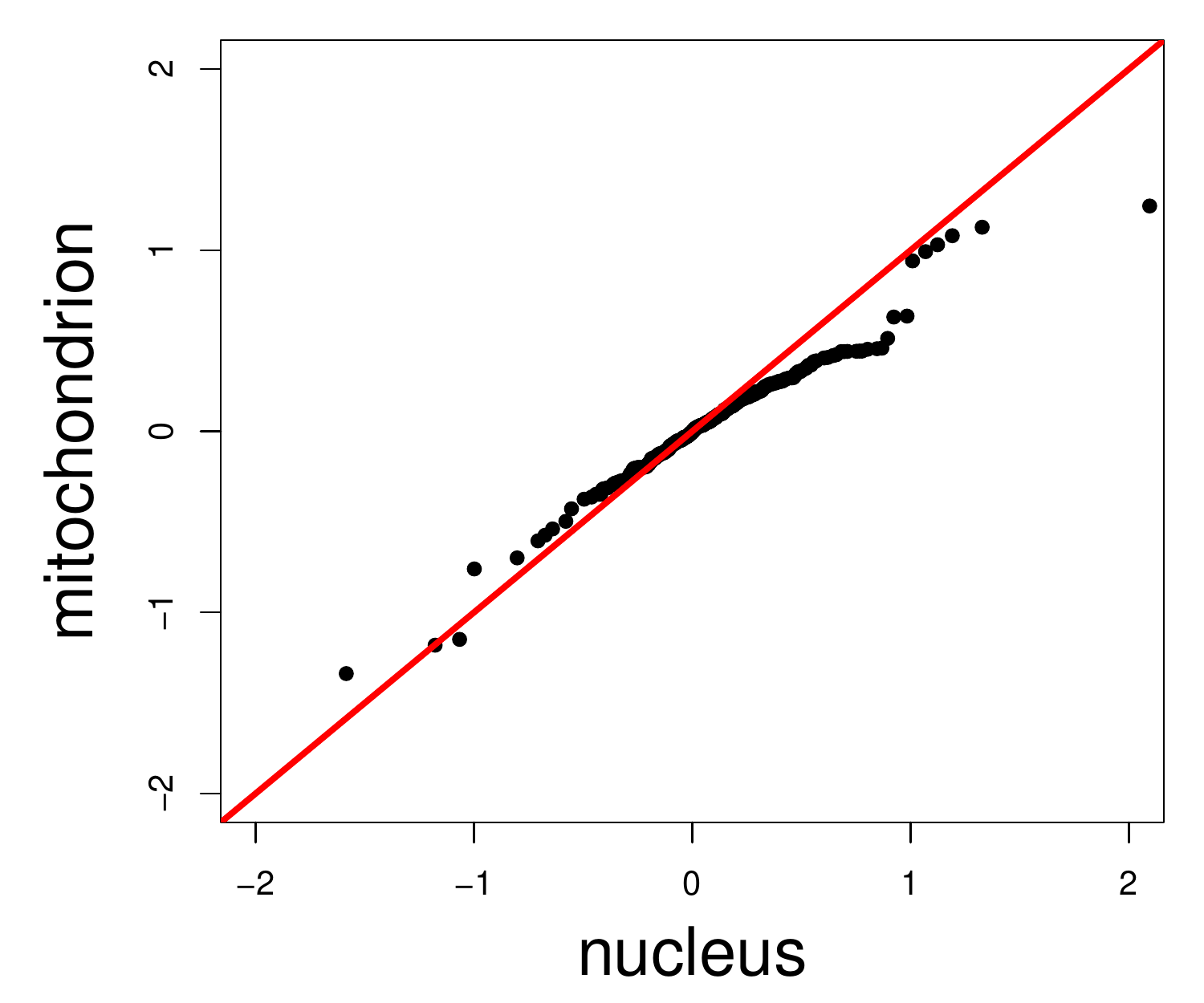}
    % \subcaption{K-S test}
    \end{minipage}
    \begin{minipage}{5cm}
    \centering
\includegraphics[clip, trim = 0cm 0cm 0cm 0cm, width  = 5cm]{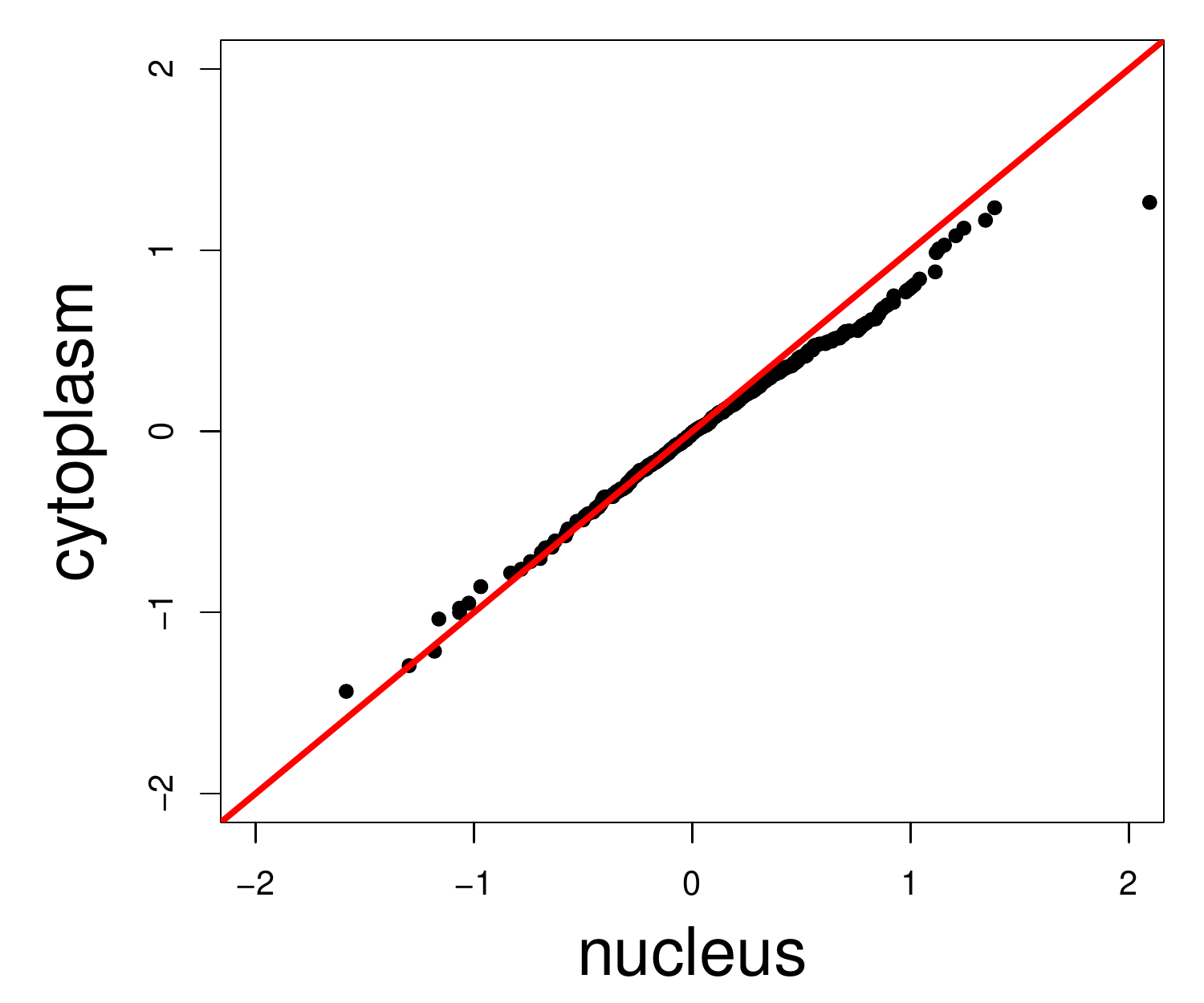}
    % \subcaption{Simulated null distribution.}
    \end{minipage}
        \begin{minipage}{5cm}
    \centering
\includegraphics[clip, trim = 0cm 0cm 0cm 0cm, width  = 5cm]{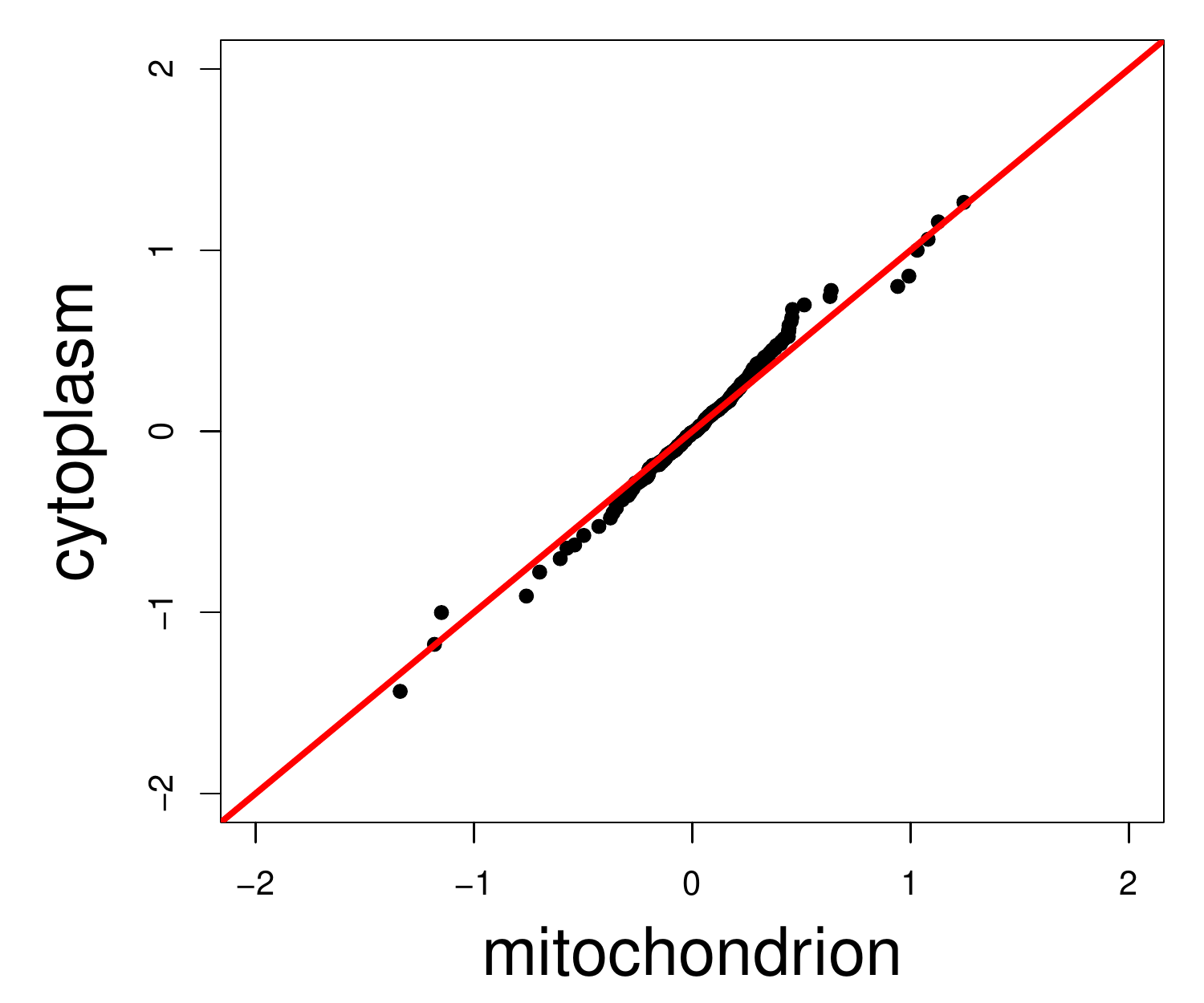}
    % \subcaption{Simulated null distribution.}
    \end{minipage}
\caption{{Quantile-quantile plots of the abundance differences across internal negative control subgroups.
We partition the internal negative controls into three subgroups according to the subcellular location keywords ``nuclear'', ``mitochondrial'', and ``cytoplasmic''.
The K-S test p-values for all three pair-wise subgroup comparisons are non-significant at level $0.05$.
% $0.067$, $0.068$, $0.467$, respectively.
% Before applying \nickname~p-values to the baseline test statistics,
% We examine the quality of the internal negative control in the proteomic dataset.
% Instead, we assess whether the internal negative controls themselves are exchangeable.
% and contrast the distributions of protein abundances.
% and expect the distributions of the internal negative control proteins to not vary over subgroups.
% \Cref{fig:nc.subgroup} displays that the distributions of the response differences from different subgroups are rather similar.
% The K-S tests also do not detect differences between subgroups at the significance level $0.05$.
}}
    \label{fig:nc.subgroup}
\end{figure}

\subsection{Choice of empirical null}
\label{sec:choice-empir-null}

We investigate different choices of the empirical null distributions
in this proteomic dataset. We apply different normally distributed
null distributions obtained using the method described next. For each
$\calJ \in \{\calI, \calI \cup
\calI_{\text{nc}}, \calI_{\text{nc}}\}$ (corresponding to using,
respectively, all
test statistics under investigation, all test statistics observed, and
only the negative control statistics), we consider three different
estimators of the mean $\mu$ and standard deviation $\sigma$ of the null
distribution:
\begin{enumerate}
\item MAD1: $\mu = 0$, $\sigma =
  \sqrt{\{\texttt{MAD}((\responseTreatment{i})_{i \in \calJ})\}^2 +
    \{\texttt{MAD}((\responseControl{i})_{i \in \calJ})^2\}}$, where
  \texttt{mad} computes the median of the absolute deviations from the
  median multiplied by a factor of 1.4826 that ensures consistency for the
  normal distribution;
\item MAD2: $\mu = 0$, $\sigma =
  \texttt{MAD}((\responseTreatment{i} - \responseControl{i})_{i \in
    \calJ})$;
\item Efron: the method described in \textcite{efron2004large}, in
  which a Poisson regression is first applied to
  $(\responseTreatment{i} - \responseControl{i})_{i \in \calJ}$ to
  estimate the density function and $\mu$ and $\sigma$ are
  then obtained from the mode and the half-width of the center peak of
  the estimated density function.
\end{enumerate}
We then compute the one-sided p-values using each empirical null
distribution. Our proposed \nickname~p-values correspond to using the
empirical cumulative distribution function (ECDF) of
$(\responseTreatment{i} - \responseControl{i})_{i \in
    \calJ}$ for $\calJ = \calI_{\text{nc}}$.
To investigate the performance of each empirical null, we used
the Kolmogorov-Smirnov and Anderson-Darling tests to
assess whether the p-values that are between 0.5 and 0.99 are
approximately uniformly distributed.

% latex table generated in R 4.1.2 by xtable 1.8-4 package
% Fri Mar 17 10:19:18 2023
\begin{table}[t]
  \centering
\caption{{A comparison of different choices of the empirical null
    distribution. NC: Negative controls; KS test: p-value of the
    Kolmogorov-Smirnov test of uniformity; AD test:
    p-value of the Anderson-Darling test of uniformity;
    \BH~rejections: number of rejections using
    the \BH~procedure with target \FDR~level $q = 0.2$.}}
\label{tab:rejection}
\begin{tabular}{lllccc}
\toprule
Statistics & Method & Empirical null & KS test & AD test & \BH~rejections \\
\midrule
Testing ($\calI$) & MAD1  & $\text{N}(0,0.4)$ & $< 10^{-16}$ & $1.9 \times 10^{-6}$ & $102$ \\
 & MAD2  & $\text{N}(0,0.3)$ & $3.3 \times 10^{-16}$ & $2.0 \times 10^{-6}$ & $144$ \\
 & Efron  & $\text{N}(-0.02,0.22)$ & $1.8 \times 10^{-5}$ & $2.5
                                                             \times 10^{-5}$ & $200$ \\
                                                             % & ECDF  &  & $1.0e+00$ & $1.0e+00$ & $\phantom{00}1$ \\
\midrule
All ($\calI \cup \calI_{\text{nc}}$) & MAD1  & $\text{N}(0,0.39)$ &
                                                              $<10^{-16}$
& $1.9 \times 10^{-6}$ & $103$ \\
 & MAD2  & $\text{N}(0,0.23)$ & $1.5 \times 10^{-5}$ & $5.3 \times 10^{-6}$ & $182$ \\
 & Efron  & $\text{N}(-0.05,0.21)$ & $4.5 \times 10^{-5}$ & $3.5
                                                            \times 10^{-5}$ & $234$ \\
                                                            % & ECDF  & N(NA,NA) & $1.1e-02$ & $1.6e-02$ & $\phantom{00}6$ \\
\midrule
NC ($\calI_{\text{nc}}$) & MAD1  & $\text{N}(0,0.4)$ &
                                                                     $<10^{-16}$ & $1.9
                                                                \times
                                                                10^{-6}$ & $102$ \\
 & MAD2  & $\text{N}(0,0.2)$ & $0.004$ & $0.002$ & $211$ \\
           & Efron  & $\text{N}(-0.02,0.18)$ & $0.039$ & $0.023$ & $240$ \\
 & ECDF (RANC)  &  & $0.019$ & $0.063$ & $214$ \\
\bottomrule
\end{tabular}
\end{table}

\Cref{tab:rejection} reports the empirical null obtained by each
method above, the p-values from the two tests of uniformity, and the
number of rejections made by the \BH~procedure ($q = 0.2$). Among
these empirical nulls, only Efron's method applied to
negative controls and the RANC p-values procedure produce p-values
whose bulk are nearly uniformly distributed; see also
\Cref{fig:empirical-null} in the Appendix. Between these methods,
Efron's method is more aggressive and produces more
rejections. However, a normal distribution does not seem to fit the
distribution of negative controls very well; see \Cref{fig:qq-efron}
in the Appendix.

\section{Discussion}\label{sec:discussion}

Motivated by a real proteomic data anslysis, we have suggested three
model-free methods for simultaneous hypothesis testing using
internal negative controls. They can be used to control various
multiple testing error rates and have appealing theoretical
properties. Moreover, these methods offer competitive practical
performane as long as there are a decent number of number of negative
controls. Another attractive property is that the methods proposed
here are all invariant to monotone transformations of the data and
requires no subjective evaluation of the goodness-of-fit of the
empirical null distribution.

% The \nickname~p-value proposed in this paper may be viewed as a
% nonparametric extension to previous methods for simultaneous
% hypothesis testing based on an empirical null distribution. Unlike
% most previous methods, the usage \nickname~p-values can be immediately
% justified based on the exchangeability of true nulls and internal
% negative controls.
As mentioned in the Introduction, several recent articles have proposed
similar ideas about using negative controls in multiple testing
\parencite{bates21_testin_outlier_with_confor_p_values,mary22_semi_super_multip_testin,soloff22_edge_discov}. These
proposals arise from different applied domains and different
terminologies are used. An advantage to use ``negative control'' to
refer to the observations that resemble the null is that negative
control is an integral component of scientific methods and the nature
of the method can be immediately understood by across different
contexts. As a consequence, it is straightforward to understand the
assumptions involved and assess them in practice. In fact, this
prompted us to develop weaker theoretical conditions that allow
dependent test statistics (see \Cref{rem:bates-condition}) or
misspecfified negative controls (see \Cref{rem:one-sided}).

Of course, the statistical power of \nickname~p-values depend closely on
the quantity and quality of the internal negative controls. With a
moderate number of negative controls, the \BH~procedure
applied to the RANC p-values rejects almost as many hypotheses as the
\BH~procedure applied to the ``oracle'' p-values, which are calculated
using the unknown true null
distribution. When the signals are strong, the FDR level is $q = 0.2$,
and
the proportion of non-nulls is $n_1/n = 0.1$, a good rule of thumb
for the number of negatve controls is $\NoNc \ge 2 \No/(\FDRLevel
\NoNonNull) = 100$; see also
\textcite{mary22_semi_super_multip_testin} for related discussion on the
number of negative controls.

Although the RANC p-value is robust to certain
misclassifications of the negative controls, using too many negative
controls of poor quality may lead to low power. Additionally, one can falsify
the crucial exchangeability assumption by considering subgroups of
negative controls defined by domain knowledge; see
\Cref{fig:nc.subgroup} in \Cref{sec:real.data} for an illustration of
this idea using the proteomic dataset.

% By definition, the \nickname~p-values are discrete and can only take
% values in $\{j/(\NoNc+1):1 \le j \le \NoNc+1\}$, so their null
% distribution is not exactly uniform over $[0,1]$.
% When $\NoNc$ is small, the \nickname~p-values can be substantially
% conservative. To correct for discreteness, we can use the following
% randomized p-value: let $U_i \sim U[0,1], i=1,\dotsc,\No$ be
% independent uniformly distributed variables and define
% \begin{align*}
%   p^{+}_{i} := \frac{{U}_i + \sum_{j \in \hypothesisIndex{\text{nc}}} \1_{\{\testStatistics{j} \leq \testStatistics{i}\}}}{{1} + \NoNc}.
% \end{align*}
% This randomized p-value is exact when $(\testStatistics{j})_{j \in
%   \{i\} \cup \hypothesisIndex{\text{nc}}}$ is
% exchangeable. Conditional on $(U_i)_{i=1}^{\No}$, the collection of
% randomized p-values $(p^{+}_{i})_{i=1}^n$ is
% PRDS on the true nulls under the same assumptions in
% \Cref{prop:PRDS}. Because the randomization is independent, it is
% straightforward to verify that $(p^{+}_{i})_{i=1}^{\No}$ is PRDS on
% the true nulls unconditionally.

For convenience, we have assumed throughout the article that there are
no ties among the test statistics. Although we regard this assumption
as inconsequential for most practical applications, one may also
consider randomly breaking ties in the definition of
\nickname~p-values. It
can be easily shown that \Cref{prop:pvalue.validity} still holds.
% and \Cref{prop:PRDS} holds under condition~\ref{PRDS:assu:exchangeable}.
% For the empirical null perspective, if we break ties randomly, .
% In fact, breaking ties randomly is equivalent to first randomly permuting the indices ${i \in \hypothesisIndex{}'}$ ($\hypothesisIndex{}' =  \hypothesisIndex{}\cup \hypothesisIndex{\text{nc}}$ or $\{i\}\cup \hypothesisIndex{\text{nc}}$), and then ranking the test statistics $(\testStatistics{i})_{i \in \hypothesisIndex{}'}$ first by the value and then by the permuted index.
% If $(\testStatistics{i})_{i \in \hypothesisIndex{}'}$ is exchangeable, then $((\testStatistics{i}, \sigma(i)))_{i \in \hypothesisIndex{}'}$ (the test statistic augmented by the permuted index $\sigma(i)$) is also exchangeable. Since $((\testStatistics{i}, \sigma(i)))_{i \in \hypothesisIndex{}'}$ contains no ties in the lexicographic order, the proofs enabled by exchangeability can be applied.
However, it remains unclear if \Cref{prop:PRDS} holds
% under its set of conditions \ref{PRDS:assu:independent},
as our proof assumes that the
test statistics have continuous distributions.
Also, the empirical process argument in \Cref{sec:empirical.process}
does not directly extend because the process $\ncFalsePositive(t)$
may have large jumps in the presence of ties. See the supplementary
materials of \textcite{bates21_testin_outlier_with_confor_p_values}
for related discussion.
% , and
% $\widehat{\FDR}_{\lambda}(\stoppingTime)$ may exceed the target
% threshold $\FDRLevel$ by a significant amount.
% \zg{How to define the filtration after the random tie-breaking?}
% empirical process: if the test statistics remain the same, the counting process remains the same. Need to adjust the value a bit, add small noise? then F are different ..., uniformly stochastic dominance? arbitrary convolution does not work, maybe convolved with a distribution of very small magnitude?

\section*{Acknowledgements}
Qingyuan Zhao and Zijun Gao are partly supported by EPSRC (grant
EP/V049968/1). We thank Jing Ren and Zora
Chan for bringing the proteomic application to us and Jiefu Li for
sharing the dataset analyzed in
\Cref{sec:motivating.data,sec:real.data}. We thank Rajen Shah and
Richard Samworth for helpful comments and Aaditya Ramdas for pointing
us to the conformal inference literature and in particular the paper
by \textcite{bates21_testin_outlier_with_confor_p_values}.

\section*{Data and computer programs}
The data and computer programs to reproduce the analysis and figures are available at \url{https://github.com/ZijunGao}.

\printbibliography

% As discussed in \Cref{sec:discussion}, our proposal remains valid using an imperfect internal negative control set but the power will be sacrificed.
\appendix
\renewcommand\thefigure{\thesection.\arabic{figure}}
\counterwithin{figure}{section}

\section{Technical proofs}\label{sec:appendix}

% proofs

\subsection{Results in \Cref{sec:pval}}

\begin{proof}[Proof of \Cref{prop:pvalue.validity}]
\textbf{Step 1}. We prove the validity if $\cdfTestStatistics{i}(t) = \cdfTestStatistics{j}(t)$ for all $j \in \hypothesisIndex{\text{nc}}$ and $t \in \RR$.
By condition~\ref{vali:assu:one.exchangeable.general}, $(\testStatistics{j})_{j \in \{i\} \cup \hypothesisIndex{\text{nc}}}$ is exchangeable.
Since there are no ties among $(\testStatistics{j})_{j \in \{i\} \cup \hypothesisIndex{\text{nc}}}$ a.s., then the rank of $\testStatistics{i}$ among $(\testStatistics{j})_{j \in \{i\} \cup \hypothesisIndex{\text{nc}}}$ is uniformly distributed on $\{1, 2, \ldots, 1+\NoNc\}$. By Definition~\eqref{eq:pval},
% \begin{align*}
%     \PP\left(\pval{i} = \frac{1+k}{1+\NoNc}\right)
%     = \PP\left(\sum_{j=1}^{\NoNc} \1_{\{\ncTestStatistics{j} \le \testStatistics{i}\}} = k\right)
%     = \frac{1}{1+\NoNc}, \quad 0 \le k \le \NoNc.
% \end{align*}
for $0 \le t \le 1$,
\begin{align*}
    \PP(\pval{i} \le t)
    &= \PP\left(\sum_{j \in \{i\} \cup \hypothesisIndex{\text{nc}}} \1_{\{\testStatistics{j} \le \testStatistics{i}\}}
    \le t(1+\NoNc)\right) \\
    &= \sum_{k=1}^{\lfloor t(1+\NoNc) \rfloor}\PP\left(\sum_{j \in \{i\} \cup\hypothesisIndex{\text{nc}}} \1_{\{\testStatistics{j}
    \le \testStatistics{i}\}}
    = k \right)
    = \frac{\lfloor t(1+\NoNc)\rfloor}{1+\NoNc}
    \le t.
\end{align*}
If we break ties randomly, $(\testStatistics{j})_{j \in \{i\} \cup \hypothesisIndex{\text{nc}}}$ is still exchangeable and the rank of $(\testStatistics{j})_{j \in \{i\} \cup \hypothesisIndex{\text{nc}}}$ is a permutation of $1$, $2$, $\ldots$, $\NoNc+1$.
The marginal distribution of the rank of $\testStatistics{i}$ is still uniform and the proof holds.

\textbf{Step 2}. We prove the validity if $\cdfTestStatistics{i}(t) \le \cdfTestStatistics{j}(t)$ for all $j \in \hypothesisIndex{\text{nc}}$ and $t \in \RR$.
% Since $\cdfTestStatistics{i}(t)$, $\cdfTestStatistics{j}(t)$, $j \in \hypothesisIndex{\text{nc}}$ are continuous, then
% the inverses $\cdfTestStatistics{i}^{-1}$, $\ncCdfTestStatistics{j}^{-1}$ exist and
% \begin{align*}
%      \cdfTestStatistics{i}(\testStatistics{i})
%      \stackrel{d}{=}
%      \cdfTestStatistics{j}(\testStatistics{j}) \stackrel{d}{=} U[0,1].
% \end{align*}
By condition~\ref{vali:assu:one.exchangeable.general} and step 1, the \nickname~p-value $\pval{i}'$ based on $(\cdfTestStatistics{j}(\testStatistics{j}))_{j \in \{i\} \cup \hypothesisIndex{\text{nc}}}$ is valid.
Define $\testStatistics{j}' = \cdfTestStatistics{j}^{-1}(\cdfTestStatistics{j}(\testStatistics{j}))$,
where $\cdfTestStatistics{j}^{-1}(p) := \inf\{t: \cdfTestStatistics{j}(t) \ge p\}$.
Then $\testStatistics{j}' = \cdfTestStatistics{j}^{-1}(\cdfTestStatistics{j}(\testStatistics{j}'))$, $\testStatistics{j}' \le \testStatistics{j}$, and $\PP(\testStatistics{j}' < \testStatistics{j}) = 0$.
On $\{\testStatistics{j}' = \testStatistics{j}\}$, by condition~\ref{vali:assu:null.nc.conservative.general},
\begin{align*}
    \cdfTestStatistics{j}(\testStatistics{j}') \le \cdfTestStatistics{i}(\testStatistics{i})
    \implies
    \cdfTestStatistics{j}(\testStatistics{j}') \le \cdfTestStatistics{j}(\testStatistics{i})
    \implies
    \testStatistics{j}'
    = \cdfTestStatistics{j}^{-1}(\cdfTestStatistics{j}(\testStatistics{j}')) \le \cdfTestStatistics{j}^{-1}\cdfTestStatistics{j}(\testStatistics{i})
    \le \testStatistics{i}.
\end{align*}
Therefore,
\begin{align*}
    \PP(\pval{i} \le t)
    &= \PP\left(\sum_{j \in \{i\} \cup \hypothesisIndex{\text{nc}}} \1_{\{\testStatistics{j}' \le \testStatistics{i}\}}
    \le t(1+\NoNc)\right)
    % = \PP\left(\sum_{j \in \{i\} \cup\hypothesisIndex{\text{nc}}} \1_{\{\cdfTestStatistics{i}(\testStatistics{j}) \le \cdfTestStatistics{i}(\testStatistics{i}')\}}
    % \le t(1+\NoNc)\right)
    \\
    &\le \PP\left(\sum_{j \in \{i\} \cup\hypothesisIndex{\text{nc}}} \1_{\{\cdfTestStatistics{j}(\testStatistics{j}') \le \cdfTestStatistics{i}(\testStatistics{i})\}}
    \le t(1+\NoNc)\right)
    =  \PP\left(\pval{i}' \le t\right)
    \le t.
\end{align*}
% Note that breaking ties randomly is equivalent to first randomly permuting the indices $\{i\} \cup \hypothesisIndex{\text{nc}}$ and breaking ties according to the permuted indices.
% Denote the permutation mapping by $\sigma: \{i\} \cup \hypothesisIndex{\text{nc}} \to [\NoNc+1]$, then by $\cdfTestStatistics{j}(\testStatistics{j}') < \cdfTestStatistics{i}(\testStatistics{i})
    % \implies \testStatistics{j}' < \testStatistics{i}$, $\cdfTestStatistics{j}(\testStatistics{j}') = \cdfTestStatistics{i}(\testStatistics{i})
    % \implies \testStatistics{j}' \le \testStatistics{i}$, we have
% \begin{align*}
%     &\quad~\1_{\{\cdfTestStatistics{j}(\testStatistics{j}') < \cdfTestStatistics{i}(\testStatistics{i})\}} + \1_{\{\cdfTestStatistics{j}(\testStatistics{j}') = \cdfTestStatistics{i}(\testStatistics{i}), \sigma(j) < \sigma(i)\}}\\
%     &\le \1_{\{\testStatistics{j}' < \testStatistics{i}, \cdfTestStatistics{j}(\testStatistics{j}') < \cdfTestStatistics{i}(\testStatistics{i})\}}
%     + \1_{\{\testStatistics{j}' < \testStatistics{i}, \cdfTestStatistics{j}(\testStatistics{j}') = \cdfTestStatistics{i}(\testStatistics{i})\}}+ \1_{\{\testStatistics{j}' = \testStatistics{i}, \cdfTestStatistics{j}(\testStatistics{j}') = \cdfTestStatistics{i}(\testStatistics{i}), \sigma(j) < \sigma(i)\}}\\
%     &\le \1_{\{\testStatistics{j}' < \testStatistics{i}\}}
%     + \1_{\{\testStatistics{j}' = \testStatistics{i}, \sigma(j) < \sigma(i)\}},
% \end{align*}
% and the above proof of the validity holds.
\end{proof}

\begin{proof}[Proof of \Cref{prop:PRDS}]

Without loss of generality, we assume $\nullHypothesisIndex = \{1,
\dotsc, \NoNull\}$ and $\hypothesisIndex{\text{nc}} = \{\No+1, \dotsc,
\No+\NoNc\}$.
When $\NoNull > 0$, $\testStatistics{1}$ corresponds to a true null.
Under both conditions in \Cref{prop:PRDS}, $\cdfTestStatistics{i} =
\cdfTestStatistics{j}$ for all $i,j \in  \nullHypothesisIndex \cup
\hypothesisIndex{\text{nc}}$. Without loss of generality, we assume
the null distribution is $U[0,1]$ so $\cdfTestStatistics{i}(t) = t$
for all $0 \leq t \leq 1$ and $i \in  \nullHypothesisIndex \cup
\hypothesisIndex{\text{nc}}$; otherwise, we can replace all statistics
$\testStatistics{i}$ by $\cdfTestStatistics{1}(\testStatistics{i})$.
Since the \PRDS~property is invariant by co-monotone transformations,
with a slight abuse of notation, it suffices to prove the
\PRDS~property is satisfied by the unnormalized ranks
\begin{align*}
    \qval{i} := \sum_{j \in \hypothesisIndex{\text{nc}}}
  \1_{\{\testStatistics{j} \le \testStatistics{i}\}},~i \in \hypothesisIndex{}.
\end{align*}
% We use $\qvalAll(\boldsymbol{t})$ to denote the vector if $\testStatisticsAll = \boldsymbol{t}$.
Under both sets of conditions, $(\testStatistics{i})_{i \in \{1\} \cup \hypothesisIndex{\text{nc}}}$ is exchangeable.
Therefore, for any $0 \le k \le \NoNc - 1$, $\PP(\qval{1} = k) =
\PP(\qval{1} = k+1) = 1/(1+\NoNc)$. Then
it suffices to show
\begin{align*}
    \PP\left(\qvalAll \in \calD, ~\qval{1} = k\right) \le \PP\left(\qvalAll \in \calD, ~\qval{1} = k+1\right)
\end{align*}
for all fixed $0 \le k \le \NoNc - 1$ and increasing set $\calD
\subseteq \RR^{\No+\NoNc}$.

Let $\bm C = (C_1,\dotsc,C_{\NoNc}) = (T_{\No+1}, \dotsc,
T_{\No+\NoNc})$ denote all negative control statistics. Let $C_{(1)} <
\dotsb < C_{(\NoNc)}$ be the order statistics. Let
$\boldsymbol{C}_{-(k+1)}$ denote all negative control test
statistics excluding $C_{(k+1)}$. It suffices to show that
\begin{align} \label{eq:intermediate-step-1}
    \PP\left(\qvalAll \in \calD, ~\qval{1} = k \mid \boldsymbol{C}_{-(k+1)}
  = \boldsymbol{c}_{-(k+1)}\right) \le \PP\left(\qvalAll \in \calD, ~\qval{1} =
  k+1 \mid \boldsymbol{C}_{-(k+1)} = \boldsymbol{c}_{-(k+1)}\right)
\end{align}
for all $\bm c_{-(k+1)}$. As a convention, let $c_{(0)} = 0$ and
$c_{(\NoNc+1)} = 1$. Define the bi-variate function
\begin{align*}
    f(t, c) = \PP\left(\pvalAll \in \calD \mid \boldsymbol{C}_{-(k+1)}
  = \boldsymbol{c}_{-(k+1)}, C_{(k+1)} = c, \testStatistics{1} = t
  \right) \quad \text{for}~c_{(k)} < c, t < c_{(k+2)}.
\end{align*}
Let $g(c,t)$ denote the density function of $(C_{(k+1)}, T_1)$ given
$\boldsymbol{C}_{-(k+1)} = \boldsymbol{c}_{-(k+1)}$ and $c_{(k)} < T_1
< c_{(k+2)}$.
To prove \eqref{eq:intermediate-step-1}, it suffices to show that
\[
  \int_{c_{(k)}}^{c_{(k+2)}} \int_{c_{(k)}}^{c_{(k+2)}} f(t,c) g(t,c) \1_{\{t
    < c\}} \,dt\, dc \leq \int_{c_{(k)}}^{c_{(k+2)}}
  \int_{c_{(k)}}^{c_{(k+2)}} f(t,c) g(t,c) \1_{\{t > c\}} \,dt\,dc.
\]
This follows from the next three Lemmas.

\begin{lemma} \label{lemm:symmetric-g}
The function $g$ is symmetric, i.e.\ $g(t,c) = g(c,t)$ for all
$c_{(k)} < c < t < c_{(k+2)}$.
\end{lemma}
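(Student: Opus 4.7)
The strategy is to relate $g$ to the joint density of $(T_1, C_1, \dotsc, C_{\NoNc})$ on $[0,1]^{\NoNc+1}$, which I denote $\tilde h$, and to derive the symmetry of $g$ from full permutation invariance of $\tilde h$. The key claim is that $\tilde h$ is symmetric in all $\NoNc+1$ of its arguments under either condition~\ref{PRDS:assu:independent} or condition~\ref{PRDS:assu:exchangeable}. Under \ref{PRDS:assu:independent} this is immediate: after the uniform reduction already made in the proof, $T_1$ is independent of $\bm C$ and all marginals are $U[0,1]$, so $\tilde h \equiv 1$. Under \ref{PRDS:assu:exchangeable}, the assumed exchangeability of $(T_i)_{i \in \hypothesisIndex{0}\cup \hypothesisIndex{\text{nc}}}$ is conditional exchangeability given the non-null statistics; integrating out the non-null and the other null coordinates then lifts any transposition of $T_1$ with a single $C_j$ to an admissible index permutation that fixes every other null and every non-null coordinate, so the transposition acts trivially on $\tilde h$.

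Given symmetry of $\tilde h$, the density of $(T_1, C_{(1)}, \dotsc, C_{(\NoNc)})$ at $(t, c_1, \dotsc, c_{\NoNc})$ with $c_1<\dotsb<c_{\NoNc}$ equals $\NoNc!\, \tilde h(t, c_1, \dotsc, c_{\NoNc})$ by summing $\tilde h$ over all orderings of the negative controls. Conditioning on $\boldsymbol C_{-(k+1)} = \boldsymbol c_{-(k+1)}$ and $c_{(k)} < T_1 < c_{(k+2)}$ absorbs a factor depending only on $\boldsymbol c_{-(k+1)}$ into a normalizer and leaves
\[
  g(c, t) \;\propto\; \tilde h\bigl(t,\, c_{(1)},\dotsc, c_{(k)},\, c,\, c_{(k+2)},\dotsc, c_{(\NoNc)}\bigr), \qquad c, t \in (c_{(k)}, c_{(k+2)}).
\]
Swapping the first argument of $\tilde h$ with the one in position $k+2$---permitted by its full symmetry---exhibits the right-hand side as the analogous expression for $g(t, c)$, and the lemma follows.

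The main obstacle, and the reason the statement is not immediate, is verifying that condition~\ref{PRDS:assu:exchangeable}, which only gives exchangeability on a subset of indices, propagates through integration of the remaining test statistics to yield the claimed symmetry of $\tilde h$ in all $\NoNc+1$ of its arguments. The rest of the argument is essentially a change-of-variables bookkeeping exercise that splits off an irrelevant normalizer.
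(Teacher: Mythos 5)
Your proof is correct and follows essentially the same route as the paper: under the independence conditions the conditional law of $(C_{(k+1)},T_1)$ given $\boldsymbol{C}_{-(k+1)}$ is uniform on the square $[c_{(k)},c_{(k+2)}]^2$ (your $\tilde h\equiv 1$), and under partial exchangeability the symmetry follows from invariance of the joint law under the admissible transposition of index $1$ with a negative-control index. The paper simply asserts the exchangeable case is ``obviously true'' and states the uniform-square fact without derivation, so your explicit marginalization and order-statistics bookkeeping merely supplies details the paper omits rather than a genuinely different argument.
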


\begin{lemma} \label{lemm:symmetric-inequality}
  $f(c,t) \leq f(t,c)$ for all $c_{(k)} < c < t < c_{(k+2)}$.
\end{lemma}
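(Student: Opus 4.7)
The plan is to compare Case A ($\testStatistics{1} = t, C_{(k+1)} = c$) with Case B ($\testStatistics{1} = c, C_{(k+1)} = t$) by establishing a pointwise inequality on the $\pvalAll$ vectors, and then promote it to the desired inequality $f(c,t) \le f(t,c)$ separately under each of the two sets of assumptions in \Cref{prop:PRDS}.

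For the pointwise step, I would fix any realization of $\bm{T}' := (\testStatistics{2}, \dots, \testStatistics{\No})$. In Case A, the multiset of negative controls is $\{c_{(1)}, \dots, c_{(k)}, c, c_{(k+2)}, \dots, c_{(m)}\}$, so $p_1 = k+1$ (because $c < t$); in Case B it is $\{c_{(1)}, \dots, c_{(k)}, t, c_{(k+2)}, \dots, c_{(m)}\}$, so $p_1 = k$. For each $i \ge 2$, the difference in $p_i$ between the two cases equals $\1_{\{c \le \testStatistics{i}\}} - \1_{\{t \le \testStatistics{i}\}} \in \{0,1\}$. Hence $\pvalAll^{A}(\bm{t}') \ge \pvalAll^{B}(\bm{t}')$ componentwise for every $\bm{t}'$, and since $\calD$ is an increasing set, $\1_{\{\pvalAll^{A}(\bm{t}') \in \calD\}} \ge \1_{\{\pvalAll^{B}(\bm{t}') \in \calD\}}$ pointwise.

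To upgrade this to $f(t,c) \ge f(c,t)$, I would handle the two sets of conditions in \Cref{prop:PRDS} separately. Under condition \ref{PRDS:assu:exchangeable}, both conditioning events specify the same multiset $\{t,c\} \cup \bm{c}_{-(k+1)}$ for $\{\testStatistics{1}\} \cup \{\testStatistics{j}\}_{j \in \hypothesisIndex{\text{nc}}}$; applying the permutation that swaps label $1$ with the label in $\hypothesisIndex{\text{nc}}$ carrying value $c$ (in Case A) or $t$ (in Case B) leaves the joint distribution of $(\testStatistics{i})_{i \in \nullHypothesisIndex \cup \hypothesisIndex{\text{nc}}}$ invariant, so the conditional laws of $\bm{T}'$ under the two conditioning events coincide, and integrating the pointwise inequality closes the case. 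Under condition \ref{PRDS:assu:independent}, independence of $(\testStatistics{i})_{i \in \hypothesisIndex{}}$ and $(\testStatistics{j})_{j \in \hypothesisIndex{\text{nc}}}$ from \ref{PRDS:assu:all.nc.set.independent} reduces the conditional distribution of $\bm{T}'$ to $\PP(\bm{T}' \in \cdot \mid \testStatistics{1} = x)$ with $x = t$ or $c$. Defining $\calA_A = \{\bm{t}' : \pvalAll^{A}(\bm{t}') \in \calD\}$ and $\calA_B$ analogously, both sets are increasing subsets of $\RR^{\No-1}$ (since $\pvalAll$ is coordinatewise monotone in $\bm{T}'$ with $p_1$ fixed and $\calD$ is increasing), and the pointwise step gives $\calA_A \supseteq \calA_B$, so
\[
  f(t,c) = \PP(\bm{T}' \in \calA_A \mid \testStatistics{1} = t) \ge \PP(\bm{T}' \in \calA_B \mid \testStatistics{1} = t) \ge \PP(\bm{T}' \in \calA_B \mid \testStatistics{1} = c) = f(c,t),
\]
where the last inequality is \PRDS~of $(\testStatistics{i})_{i \in \hypothesisIndex{}}$ on the true null $\testStatistics{1}$ applied to the increasing cylinder $\RR \times \calA_B$.

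The main obstacle is the exchangeability branch: one must handle the conditioning on the unordered multiset of negative controls (rather than on the ordered vector $\bm{C}$) carefully, in order to see that the two events really induce the same conditional law of $\bm{T}'$, despite $C_{(k+1)}$ being an order statistic rather than a fixed coordinate. Once that distributional equivalence is in place, the rest of the proof is essentially bookkeeping around the pointwise comparison and, in the PRDS case, a single monotonicity step.
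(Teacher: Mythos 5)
Your proof is correct and follows essentially the same route as the paper's: a componentwise comparison showing $\pvalAll$ can only increase when the values $c$ and $t$ are swapped between $C_{(k+1)}$ and $\testStatistics{1}$, combined with equality of the conditional law of $(\testStatistics{2},\dotsc,\testStatistics{\No})$ under exchangeability, or with independence plus a single application of the PRDS assumption to an increasing cylinder set in the other branch. Your explicit sets $\calA_A \supseteq \calA_B$ are just a more detailed rendering of the paper's claim that $f(t,c)$ is decreasing in $c$ and increasing in $t$, so there is nothing substantively different to flag.
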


\begin{lemma}\label{lemm:PRDS}
Let $h: [0,1]^2 \to \RR$ be a bi-variate function and $h(x,y) \leq
h(y,x)$ for all $0 < x < y < 1$. Then
\begin{align*}
  \int_0^1 \int_0^1 h(x,y) \1_{\{x < y\}} \, dx \, dy \leq \int_0^1
  \int_0^1 h(x,y) \1_{\{x > y\}} \, dx \, dy
\end{align*}
\end{lemma}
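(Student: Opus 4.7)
The plan is to prove the inequality by a simple change of variables (renaming) that converts the left-hand side into an integral over the region $\{x>y\}$, and then to apply the pointwise hypothesis $h(y,x)\le h(x,y)$ on that region.

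First I would rename the dummy variables $x\leftrightarrow y$ in the left-hand integral. Since the measure $dx\,dy$ is symmetric in the two variables and the indicator $\mathbf{1}_{\{x<y\}}$ becomes $\mathbf{1}_{\{y<x\}}=\mathbf{1}_{\{x>y\}}$, this gives
\begin{align*}
\int_0^1\!\!\int_0^1 h(x,y)\,\mathbf{1}_{\{x<y\}}\,dx\,dy
= \int_0^1\!\!\int_0^1 h(y,x)\,\mathbf{1}_{\{x>y\}}\,dx\,dy.
\end{align*}

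Next I would use the hypothesis of the lemma. For any pair $(x,y)$ with $x>y$, the assumption applied to the pair $(y,x)$ (so the smaller argument plays the role of "$x$" and the larger plays the role of "$y$" in the statement) yields $h(y,x)\le h(x,y)$. Integrating this pointwise inequality over the region $\{x>y\}\subseteq[0,1]^2$ gives
\begin{align*}
\int_0^1\!\!\int_0^1 h(y,x)\,\mathbf{1}_{\{x>y\}}\,dx\,dy
\le \int_0^1\!\!\int_0^1 h(x,y)\,\mathbf{1}_{\{x>y\}}\,dx\,dy,
\end{align*}
and combining with the renaming identity finishes the proof.

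There is essentially no obstacle: the lemma is a two-line symmetry argument, and the only mild caveat is measurability/integrability of $h$, which can be assumed implicitly since the two double integrals must be well-defined for the inequality to be meaningful (the diagonal $\{x=y\}$ has Lebesgue measure zero, so the omitted boundary case $x=y$ does not affect either integral).
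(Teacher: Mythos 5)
Your proof is correct and is essentially the same symmetry argument as the paper's: both rename the dummy variables in one of the two integrals so that both are taken over the same half of the square, and then apply the pointwise hypothesis there (the paper swaps variables in the right-hand integral and works over $\{x<y\}$, while you swap in the left-hand integral and work over $\{x>y\}$ — an immaterial difference). Your remark about measurability/integrability and the null diagonal is a reasonable, harmless addition.
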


\begin{proof}[Proof of \Cref{lemm:symmetric-g}]
  This is obviously true under the exchangeability condition in
  \ref{PRDS:assu:exchangeable}.
  For the first set of conditions, the
  independence conditions in \ref{PRDS:assu:all.nc.set.independent} and
  \ref{PRDS:assu:nc} imply that
\[
  \left(C_{(k+1)}, T_1\right) \mid \boldsymbol{C}_{-(k+1)}
  = \boldsymbol{c}_{-(k+1)} \sim U\left(\left[c_{(k)}, c_{(k+2)}\right]^2\right),
\]
where $U\left(\left[c_{(k)}, c_{(k+2)}\right]^2\right)$ is the uniform distribution over the
square $\left[c_{(k)}, c_{(k+2)}\right]^2$.
% For the second exchangeability condition in
% \ref{PRDS:assu:exchangeable}, let $\Sigma(\NoNc)$ denote the
% permutation group of $\{1,\dotsc,\NoNc\}$.
% \[
%   \PP(C_{(k+1)} = c, T_1 = t \mid \bm C_{-(k+1)} = \bm c_{-(k+1)}) =
%   \PP(C_{k+1} = c, T_1 = t \mid \bm
%   C_{l} = c_l, l \in \{1,\dotsc,\NoNc\} \setminus \{k+1\}).
% \]
\renewcommand{\qedsymbol}{}
\end{proof}

\begin{proof}[Proof of \Cref{lemm:symmetric-inequality}]
  Under the first set of conditions, we claim that $f(t,c)$ is
  decreasing in $c$ and increasing in $t$, so the desired conclusion
  follows. The first observation follows from the definition of $\bm
  p$ and the assumption that $\mathcal{D}$ is increasing. The second
  claim follows from the PRDS property in
  condition~\ref{PRDS:assu:PRDS}. To see this, given $\bm C = \bm c$,
  the event $\bm p \in \mathcal{D}$ can be rewritten as an event $\bm
  T \in \mathcal{D}'$ where $\mathcal{D}' = \mathcal{D}'(\bm c)$ is an
  increasing set that depends on $\bm c$. Using conditions
  \ref{PRDS:assu:all.nc.set.independent} and \ref{PRDS:assu:PRDS}, we
  have
\begin{align*}
  f(t,c)=&\PP\left(\pvalAll \in \calD \mid \boldsymbol{C}_{-(k+1)}
  = \boldsymbol{c}_{-(k+1)}, C_{(k+1)} = c, \testStatistics{1} = t
  \right) \\
  =& \PP\left(\bm T \in \calD'(\bm c) \mid \boldsymbol{C}_{-(k+1)}
  = \boldsymbol{c}_{-(k+1)}, C_{(k+1)} = c, \testStatistics{1} = t
     \right) \\
  =& \PP\left(\bm T \in \calD'(\bm c) \mid \testStatistics{1} = t
  \right)
\end{align*}
is increasing in $t$.
Now consider the exchangeability condition
\ref{PRDS:assu:exchangeable}, which implies that $(T_2,\dotsc,T_{\No})$ has
the same conditional distribution given
\[
\boldsymbol{C}_{-(k+1)}
  = \boldsymbol{c}_{-(k+1)}, C_{(k+1)} = c, \testStatistics{1} = t,
\]
and given
\[
\boldsymbol{C}_{-(k+1)}
  = \boldsymbol{c}_{-(k+1)}, C_{(k+1)} = t, \testStatistics{1} = c,
\]
for all $c_{(k)} < c, t < c_{(k+2)}$. The conclusion then follows from
the fact that the \nickname~p-values $\bm p$ only become smaller when
we swap $C_{(k+1)} = c$ with $T_1 = t$ if $c < t$ and the assumption
that $\mathcal{D}$ is increasing.
\renewcommand{\qedsymbol}{}
\end{proof}

\begin{proof}[Proof of \Cref{lemm:PRDS}]
  The conclusion follows from rewritting one of the
  integrals as follows,
  \begin{align*}
    &\int_0^1 \int_0^1 h(x,y) \1_{\{x < y\}} \, dx \, dy - \int_0^1
      \int_0^1 h(x,y) \1_{\{x > y\}} \, dx \, dy \\
    =& \int_0^1 \int_0^1 h(x,y) \1_{\{x < y\}} \, dx \, dy - \int_0^1
       \int_0^1 h(y,x) \1_{\{y > x\}} \, dx \, dy \\
    =& \int_0^1 \int_0^1 (h(x,y) -h(y,x)) \1_{\{x < y\}} \, dx \, dy
    \\
    \leq& 0.
  \end{align*}
\renewcommand{\qedsymbol}{}
\end{proof}
\vspace{-2em}
\end{proof}

\begin{proposition}\label{prop:permutation}
    If the permutation test statistic $\permutationFunction{\cdot}$ satisfies
    \begin{enumerate}
        \item \label{prop:assu:monotone.invariance}$\permutationFunction{\testStatisticsAll} = \permutationFunction{g(\testStatisticsAll)}$ for arbitrary strictly increasing function $g(\cdot): \RR \to \RR$,  $g(\testStatisticsAll) = (g(\testStatistics{i}))_{i \in  \hypothesisIndex{} \cup \hypothesisIndex{\text{nc}}}$;
        \item \label{prop:assu:permutation.invariance} $\permutationFunction{\testStatisticsAll} = \permutationFunction{\testStatisticsAll_{P}}$, where $\testStatisticsAll_{P}$ is a permutation of $\testStatisticsAll$ where $(\testStatistics{i})_{i \in  \hypothesisIndex{} }$ are not exchanged with $(\testStatistics{j})_{j \in \hypothesisIndex{\text{nc}}}$,
    \end{enumerate}
    then
    $\permutationFunction{\testStatisticsAll}$ is a function of the set of \nickname~p-values $\{\pval{i}\}_{i \in \hypothesisIndex{}}$.
\end{proposition}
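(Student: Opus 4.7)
The two invariance hypotheses on $G$ can be restated together as saying that $G(\testStatisticsAll)$ depends on $\testStatisticsAll$ only through its equivalence class under (i) strictly increasing coordinate-wise transformations and (ii) permutations preserving the bipartition $\hypothesisIndex{} \sqcup \hypothesisIndex{\text{nc}}$. The plan is therefore to show that this equivalence class is in bijection with the multiset of \nickname\ p-values $\{\pval{i}\}_{i \in \hypothesisIndex{}}$, which immediately gives the desired factorization of $G$.

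For the easy direction (equivalence class $\Rightarrow$ multiset of p-values), I would verify that $\{\pval{i}\}_{i \in \hypothesisIndex{}}$ is invariant under both operations. A strictly increasing transformation preserves each relation $\testStatistics{j} \le \testStatistics{i}$, and so preserves every individual $\pval{i}$. A permutation within $\hypothesisIndex{\text{nc}}$ does not change the multiset of negative-control statistics, so each $\pval{i}$ is again unchanged. A permutation within $\hypothesisIndex{}$ just relabels the $\pval{i}$'s, leaving their multiset fixed.

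For the harder direction, the key observation is that under the no-ties assumption the multiset $\{\pval{i}\}_{i \in \hypothesisIndex{}}$ records, for each of the $\NoNc+1$ slots determined by the ordered negative-control statistics, the number of elements of $\hypothesisIndex{}$ falling into that slot. Conversely, this slot-occupancy pattern determines the multiset of p-values. From the slot-occupancy pattern one can reconstruct the combinatorial interleaving of the $\No+\NoNc$ values in the merged sorted sequence, up to the internal ordering of the elements of $\hypothesisIndex{}$ that share a slot. Two configurations yielding the same interleaving pattern can be matched coordinate-by-coordinate by a strictly increasing bijection of $\RR$ applied to the numerical values, so they lie in the same equivalence class once we further quotient by within-group permutations of $\hypothesisIndex{}$ that reorder elements sharing a slot. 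Chaining these reductions, two configurations with the same multiset of \nickname~p-values lie in the same equivalence class, and $G$ takes the same value on them.

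The main obstacle I expect is the bookkeeping around ties in the $\pval{i}$'s when several elements of $\hypothesisIndex{}$ land in the same slot between consecutive negative-control statistics. Condition (ii) of the proposition is exactly what allows us to freely reorder these within $\hypothesisIndex{}$, so their internal order cannot affect $G$; this needs to be stated carefully but is not deep. Once handled, the chain ``$G$ depends only on the equivalence class $\Rightarrow$ the equivalence class is determined by the slot-occupancy pattern $\Rightarrow$ the slot-occupancy pattern is equivalent to the multiset of \nickname\ p-values'' closes the proof.
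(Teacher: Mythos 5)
Your proposal is correct and follows essentially the same route as the paper's proof: reduce $G$ to a function of the rank (interleaving) pattern via monotone invariance, then use within-group permutations to align two configurations sharing the same multiset of \nickname~p-values, which encodes exactly the slot-occupancy of test statistics among the ordered negative controls. The only cosmetic caveat is that matching two configurations generally requires arbitrary permutations within \emph{both} $\hypothesisIndex{}$ and $\hypothesisIndex{\text{nc}}$ (not only reorderings of elements sharing a slot), but condition (ii) grants all of these, so nothing is missing.
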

% \noindent The proof of \Cref{prop:permutation} relies on two facts: (1) the mapping from $\testStatisticsAll$ to its rank vector $R(\testStatisticsAll)$ is strictly increasing;
% (2) for $\testStatisticsAll$ and $\testStatisticsAll'$, if $\{\pval{i}\}_{i \in \hypothesisIndex{}} = \{\pval{i}'\}_{i \in \hypothesisIndex{}}$, then we can always permute $(\testStatistics{i})_{i \in \hypothesisIndex{}}$, $(\testStatistics{i})_{i \in \hypothesisIndex{\text{nc}}}$ separately such that $R(\testStatisticsAll_{P}) = R(\testStatisticsAll')$.

\begin{proof}[Proof of \Cref{prop:permutation}]
The mapping from $\testStatisticsAll$ with no ties to its rank $R(\testStatisticsAll)$ is a strictly increasing function.
% Here $\symmetricGroup{[\No + \NoNc]}$ denotes the symmetric group of $[\No + \NoNc]$ consisting of all permutations of $[\No + \NoNc]$.
Since $\permutationFunction{\testStatisticsAll} = \permutationFunction{g(\testStatisticsAll)}$ for arbitrary strictly increasing function $g(\cdot)$,
% we let $g(\cdot) = R(\cdot)$ and
then $\permutationFunction{\testStatisticsAll} = \permutationFunction{R(\testStatisticsAll)}$.
% i.e., $\permutationFunction{\cdot}$ depends on $[\testStatistics{}, \ncTestStatistics{}]$ only through their ranks.
For $\testStatisticsAll$ and $\testStatisticsAll'$, we can always permute $(\testStatistics{i})_{i \in  \hypothesisIndex{}}$ so that the ordering within $(\testStatistics{i})_{i \in  \hypothesisIndex{} }$ equals that of $(\testStatistics{i}')_{i \in  \hypothesisIndex{} }$. Similarly for $(\testStatistics{i})_{i \in  \hypothesisIndex{\text{nc}} }$.
Denote the permuted test statistics by $\testStatisticsAll_P$, and then $\{\pval{i}\}_{i \in \hypothesisIndex{}} = \{\pval{P,i}\}_{i \in \hypothesisIndex{}}$.
Since $\permutationFunction{\cdot}$ is invariant to permutations within $(\testStatistics{i})_{i \in  \hypothesisIndex{} }$ and $(\testStatistics{i})_{i \in  \hypothesisIndex{\text{nc}} }$, then $\permutationFunction{\testStatisticsAll} = \permutationFunction{\testStatisticsAll_P}$.
Note that $\{\pval{i}\}_{i \in \hypothesisIndex{}} = \{\pval{P,i}\}_{i \in \hypothesisIndex{}} = \{\pval{i}'\}_{i \in \hypothesisIndex{}}$ implies $R(\testStatisticsAll_P) = R(\testStatisticsAll')$, and further $\permutationFunction{\testStatisticsAll}
= \permutationFunction{\testStatisticsAll_P}
= \permutationFunction{R(\testStatisticsAll_P)}
= \permutationFunction{R(\testStatisticsAll')}
= \permutationFunction{\testStatisticsAll'}$.
Therefore,   $\permutationFunction{\testStatisticsAll}$ is a function of $\{\pval{i}\}_{i \in \hypothesisIndex{}}$.
\end{proof}

\subsection{Results in \Cref{sec:empirical.process}}\label{sec:results-empirical-process}

The next proposition suggests that $\widehat{\FDR}_{\lambda}(t)$ is a
conservative estimate of $\FDR(t)$. The proposition is valid for replacing the $+2$ by $+1$ in the numerator of $\widehat{\FDR}_{\lambda}(t)$.
\begin{proposition}\label{prop:FDR.estimate}
    Assume $(\testStatistics{i})_{i \in \hypothesisIndex{} \cup \hypothesisIndex{\text{nc}}}$ are independent and $\testStatistics{i} \stackrel{d}{=} \ncTestStatistics{j}$ for $i \in \nullHypothesisIndex$, $j \in \hypothesisIndex{\text{nc}}$, then
    $\EE\left[\widehat{\FDR}_{\lambda}(t) \right] \ge \FDR(t)$ when
    $\lambda = 1$.
\end{proposition}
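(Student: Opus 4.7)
The plan is to decompose $\FDR(t)$ via a leave-one-out argument and then exploit the independence of the negative controls from the investigated statistics. Working with the $+1$ version (the $+2$ version is only more conservative), I would first note that for each $i\in\nullHypothesisIndex$, writing $R_{-i}(t) = \sum_{j\in\hypothesisIndex{},\, j\neq i} \1_{\{T_j\le t\}}$, the event $\{T_i\le t\}$ forces $R(t)=R_{-i}(t)+1\ge 1$, so
\begin{align*}
\frac{\1_{\{T_i\le t\}}}{R(t)\vee 1}=\frac{\1_{\{T_i\le t\}}}{R_{-i}(t)+1}.
\end{align*}
Using mutual independence, $T_i \indep R_{-i}(t)$, and denoting by $F_0$ the common CDF of the null and negative-control statistics, we get $\EE\bigl[\1_{\{T_i\le t\}}/(R_{-i}(t)+1)\bigr]=F_0(t)\cdot\EE[1/(R_{-i}(t)+1)]$. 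Summing over $i\in\nullHypothesisIndex$ yields $\FDR(t)=F_0(t)\sum_{i\in\nullHypothesisIndex}\EE[1/(R_{-i}(t)+1)]$.

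Next I would establish the key deterministic inequality $R(t)\vee 1\le R_{-i}(t)+1$, which holds because $R(t)\le R_{-i}(t)+1$ and $1\le R_{-i}(t)+1$. Therefore $\EE[1/(R_{-i}(t)+1)]\le \EE[1/(R(t)\vee 1)]$ for every $i\in\nullHypothesisIndex$, and consequently
\begin{align*}
\FDR(t)\le \NoNull F_0(t)\cdot \EE\!\left[\frac{1}{R(t)\vee 1}\right]\le \No F_0(t)\cdot\EE\!\left[\frac{1}{R(t)\vee 1}\right].
\end{align*}

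For the other side, since $V_{\text{nc}}(t)$ depends only on $(T_j)_{j\in\hypothesisIndex{\text{nc}}}$ and is therefore independent of $R(t)$ (which depends only on $(T_i)_{i\in\hypothesisIndex{}}$), one may factor
\begin{align*}
\EE[\widehat{\FDR}_{1}(t)]=\frac{\No}{\NoNc+1}\,\EE[V_{\text{nc}}(t)+1]\cdot\EE\!\left[\frac{1}{R(t)\vee 1}\right].
\end{align*}
The identical-distribution assumption gives $\EE[V_{\text{nc}}(t)]=\NoNc F_0(t)$, and $\NoNc F_0(t)+1\ge (\NoNc+1)F_0(t)$ since $F_0(t)\le 1$, so $\EE[\widehat{\FDR}_1(t)]\ge \No F_0(t)\cdot\EE[1/(R(t)\vee 1)]\ge \FDR(t)$. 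Replacing $+1$ by $+2$ only increases the left-hand side, so the conclusion holds as stated.

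There is no real obstacle here: everything reduces to standard FDR bookkeeping plus the independence of negative controls. The only step worth pausing on is the inequality $R(t)\vee 1\le R_{-i}(t)+1$, which matters because it goes in the opposite direction to the more familiar $R_{-i}(t)\le R(t)$ and is what lets us trade a sum of $n_0$ leave-one-out expectations for a single expectation against $R(t)\vee 1$.
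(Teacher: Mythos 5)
Your proof is correct and follows the same overall strategy as the paper's: both arguments reduce the claim to the chain $\EE[\widehat{\FDR}_1(t)] \ge \No F_0(t)\,\EE\left[1/(R(t)\vee 1)\right] \ge \EE\left[V(t)/(R(t)\vee 1)\right] = \FDR(t)$, where the first inequality uses the independence of $V_{\text{nc}}(t)$ from $R(t)$ together with the bound $\EE\left[(V_{\text{nc}}(t)+1)/(\NoNc+1)\right] \ge F_0(t)$ (your version, $\NoNc F_0(t)+1 \ge (\NoNc+1)F_0(t)$, is the same fact). The one place you genuinely diverge is the second inequality: the paper simply cites Theorem 1 of Storey et al.\ (2004), extended to general continuous null CDFs, for $\EE\left[(\NoNull F_0(t) - V(t))/(R(t)\vee 1)\right] \ge 0$, whereas you prove it from scratch via the leave-one-out identity $\1_{\{T_i\le t\}}/(R(t)\vee 1) = \1_{\{T_i\le t\}}/(R_{-i}(t)+1)$ combined with the pointwise bound $R(t)\vee 1 \le R_{-i}(t)+1$. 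This makes your argument self-contained and transparent about exactly where mutual independence enters (you need $T_i \indep (T_j)_{j\ne i}$ for the leave-one-out factorization), though it buys no extra generality since the cited result rests on the same hypothesis. All remaining steps check out, including the observation that replacing $+1$ by $+2$ in the numerator only increases the estimator.
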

\noindent

\begin{proof}[Proof of \Cref{prop:FDR.estimate}]

For $0 < \lambda < 1$, $\PP(V_{\text{nc}}(\lambda) = \NoNc) > 0$ and $\EE\left[\widehat{\FDR}_\lambda (t)\right] = \infty > \FDR(t)$.
Thus, it is left to show the inequality is true for $\lambda = 1$. For simplicity, we write $\widehat{\FDR}_1 (t)$ as $\widehat{\FDR} (t)$.
We denote the distribution of true null and negative control test statistics by $\cdfTestStatisticsNull(t)$.
Since $V_{\text{nc}}(t) \le \NoNc$,
\begin{align}\label{proof:eq:cdf.estimate}
    \EE\left[\frac{V_{\text{nc}}(t)+1}{\NoNc+1}\right]  \ge \EE\left[\frac{V_{\text{nc}}(t)}{\NoNc}\right] = \cdfTestStatisticsNull(t).
\end{align}
The result in \cite[Theorem 1]{storey2004strong} of p-values following $U[0,1]$ can extend to test statistics of arbitrary distribution with continuous CDF,
\begin{align}\label{proof:eq:FDR.estimate}
     \EE\left[\frac{\NoNull \cdfTestStatisticsNull(t) - V(t)}{R(t) \vee 1} \right] \ge 0.
\end{align}
Since the negative control test statistics are independent of the test statistics under investigation and $\No \ge \NoNull$,
% we plug Eq.~\eqref{proof:eq:cdf.estimate} into Eq.~\eqref{proof:eq:FDR.estimate}
\begin{align*}
    \EE\left[\widehat{\FDR}(t)\right]
    &=\EE\left[\EE\left[\widehat{\FDR}(t) \mid R(t)\right] \right]
    =\EE\left[\frac{\No \EE\left[\frac{V_{\text{nc}}(t)+1}{\NoNc+1} \mid  R(t)\right]}{R(t) \vee 1}  \right] \\
    &\ge \EE\left[\frac{\No \cdfTestStatisticsNull(t)}{R(t) \vee 1} \right]
    \ge \EE\left[\frac{V(t)}{R(t) \vee 1} \right] = \FDR(t).
\end{align*}
\end{proof}

\begin{proof}[Proof of \Cref{prop:FDR.equivalence}]

Let $\testStatistics{q}'$, $\testStatistics{q}$ be the largest test statistic under investigation rejected by applying the \BH~procedure to the \nickname~p-values and the empirical-process-based step-up procedure, respectively.
It is straightforward to show the event that $\testStatistics{q}'$ does not exist is equivalent to  the event that $\testStatistics{q}$ does not exist.
It is left to show the equivalence holds when both $\testStatistics{q}'$ and $\testStatistics{q}$ are well-defined.
On one hand,
\begin{align*}
    % \pval{i} =
    \frac{V_{\text{nc}}(\testStatistics{q}')+2}{\NoNc+1}  \wedge 1 \le \frac{R(\testStatistics{q}')}{\No} q
    \implies
    \frac{\No\frac{V_{\text{nc}}(\testStatistics{q}')+2}{\NoNc+1}}{R(\testStatistics{q}') \vee 1} \le q
    \implies \stoppingTime \ge \testStatistics{q}',
\end{align*}
which further implies $\testStatistics{q} \ge \testStatistics{q}'$ by the definition of $\testStatistics{q}$.
On the other hand, by the discussion before \Cref{prop:FDR.equivalence},
% $T_q$ is a negative control test statistics, and
there exists $\varepsilon > 0$ such that $\stoppingTime - \varepsilon > \testStatistics{q}$, $V_{\text{nc}}(\stoppingTime - \varepsilon) \ge V_{\text{nc}}(\testStatistics{q})$,
$R(\stoppingTime - \varepsilon) = R(\testStatistics{q})$,
and $\widehat{\FDR}(\stoppingTime - \varepsilon) \le q$.
Then,
\begin{align*}
    \widehat{\FDR}(\testStatistics{q})
    = \frac{\No \cdot \frac{V_{\text{nc}}(\testStatistics{q})+2}{\NoNc+1} }{R(\testStatistics{q}) \vee 1}
    % \le \frac{\No \cdot \frac{V_{\text{nc}}(T_q - \varepsilon)+1}{\NoNc+1} }{R(\testStatistics{q}) \vee 1}
    \le \frac{\No \cdot \frac{V_{\text{nc}}(\stoppingTime - \varepsilon)+2}{\NoNc+1} }{R(\stoppingTime - \varepsilon) \vee 1}
    = \widehat{\FDR}(\stoppingTime - \varepsilon)
    \le q.
\end{align*}
Therefore,  $\testStatistics{q} \le \testStatistics{q}'$.

\end{proof}

\begin{proof}[Proof of \Cref{prop:FDR}]
% backward martingale intro: https://www.randomservices.org/random/martingales/Backwards.html
Without loss of generality, we assume the test statistics are supported on $(0,1)$.
Otherwise, we can always apply the transformation $t \to \frac{e^t}{1 + e^t}$.
Define a decreasing family of $\sigma$-algebras $\calF_t = \sigma(\falsePositive(s), \truePositive(s), \ncFalsePositive(s), ~t \le s \le 1)$ for $0 \le t \le 1$.
% Define the random process
% \begin{align*}
%     M(t) = \frac{\falsePositive(t)}{\ncFalsePositive(t) + 1},\quad {\FDP}(t) = \frac{\falsePositive(t)}{(\falsePositive(t) + \truePositive(t)) \vee 1}.
% \end{align*}
By definition, $M(t)$ in Eq.~\eqref{eq:super.martingale} is measurable with respect to $\calF_t$, $0 \le M(t) \le (\NoNc+1)\NoNull$, $M(0) = 0$, and $M(1) = \NoNull$.

We show $M(t)$ is a backward super-martingale with respect to $\{\calF_t\}$.
For $0 \le s < t$, by condition~\ref{FDR:assu:independent}, $(\falsePositive(s), \falsePositive(t))$, $(\falsePositive_{\text{nc}}(s), \falsePositive_{\text{nc}}(t))$, and $(\truePositive(s), \truePositive(t))$ are mutually independent.
On $\left\{\falsePositive(t) = ck, \ncFalsePositive(t) + 1 = c\right\}$ with non-zero probability,
% we aim to prove
% \begin{align}\label{proof:eq:mtg1}
%     \EE\left[ \frac{V(s)}{V_{\text{nc}}(s) + 1} \mid V(t) = ck, ~V_{\text{nc}}(t) + 1 = c \right]
%     \le k.
% \end{align}
% In fact,
\begin{align}\label{proof:eq:mtg2}
\begin{split}
    \EE\left[ \frac{\falsePositive(s)}{\ncFalsePositive(s) + 1} \mid \calF_t \right]
    &= \EE\left[ \frac{\falsePositive(s)}{\ncFalsePositive(s) + 1} \mid \falsePositive(t) = ck, ~\ncFalsePositive(t) + 1 = c \right]\\
    &=  \underbrace{\EE\left[V(s) \mid V(t) = ck\right]}_{:=\text{(I)}} \cdot \underbrace{\EE\left[\frac{1}{\ncFalsePositive(s) + 1} \mid \ncFalsePositive(t) + 1 = c \right]}_{:=\text{(II)}}.
\end{split}
\end{align}
By condition~\ref{FDR:assu:null.nc.uniformly.conservative}, there exists $0 \le p_{t}^s \le 1$ such that
\begin{align}\label{proof:eq:prob}
    \max_{i \in \nullHypothesisIndex }\frac{\cdfTestStatistics{i}(s)}{\cdfTestStatistics{i}(t)}
    \le  p_{t}^s
    \le \min_{j \in \hypothesisIndex{\text{nc}} }\frac{\cdfTestStatistics{j}(s)}{\cdfTestStatistics{j}(t)}.
\end{align}
By condition~\ref{FDR:assu:independent}, and Eq.~\eqref{proof:eq:prob},
% \zg{To check},
\begin{align}\label{proof:eq:stochastic.dominance}
    \falsePositive(s) \mid \falsePositive(t) &\lesssim Z, \quad Z \sim \text{Binomial}(\falsePositive(t), p_t^s),\\ \label{proof:eq:stochastic.dominance.2}
    \frac{1}{1+\ncFalsePositive(s)} \mid \frac{1}{1+\ncFalsePositive(t)} &\lesssim \frac{1}{1+Z^{\text{nc}}}, \quad Z^{\text{nc}} \sim \text{Binomial}(\ncFalsePositive(t), p_t^s).
\end{align}
For term (I) in Eq.~\eqref{proof:eq:mtg2}, by Eq.~\eqref{proof:eq:stochastic.dominance},
\begin{align}\label{proof:eq:mtg(I)}
    \text{(I)}
    \le \falsePositive(t) \cdot p_t^s
    = ck p_t^s.
\end{align}
For term (II) in Eq.~\eqref{proof:eq:mtg2}, by Eq.~\eqref{proof:eq:stochastic.dominance.2},
\begin{align}\label{proof:eq:mtg(II)}
    \begin{split}
        \text{(II)}
        &\le \sum_{i=0}^{c-1} \frac{1}{i+1} \binom{c-1}{i} (p_t^s)^{i} (1-p_t^s)^{c-1-i}\\
        &= \frac{1}{cp_{t}^s}\sum_{i=0}^{c-1} \binom{c}{i+1} (p_t^s)^{i+1} (1-p_t^s)^{c-1-i}
        =\frac{1}{c p_{t}^s} \left(1 - (1-p_t^s)^{c} \right).
    \end{split}
\end{align}
Plug Eq.~\eqref{proof:eq:mtg(I)}, Eq.~\eqref{proof:eq:mtg(II)} into Eq.~\eqref{proof:eq:mtg2},
\begin{align*}
    \EE\left[ \frac{\falsePositive(s)}{\ncFalsePositive(s) + 1} \mid \calF_t \right]
    \le ck p_t^s \cdot \frac{1}{c p_{t}^s} \left(1 - (1-p_t^s)^{c} \right)
    \le k
    = \frac{\falsePositive(t)}{\ncFalsePositive(t) + 1}.
\end{align*}

Next, we show $\stoppingTimeLambda$ is a stopping time with respect to $\{\calF_t\}$.
By definition, if $\lambda = 1$, then for $0 \le t \le 1$,
\begin{align*}
    \{\stoppingTime > t\}
    = \left\{\sup\left\{s \ge 0: \frac{\frac{\No}{\NoNc+1}(\ncFalsePositive(s) + 2)}{(\falsePositive(s) + \truePositive(s)) \vee 1} \le q \right\} > t \right\},
\end{align*}
is $\calF_t$ measurable.
If $0 < \lambda < 1$, for $t \ge \lambda$, $\{\stoppingTimeLambda > t\} = \emptyset$; for $t < \lambda$, $\hat{\pi}(\lambda)$ is $\calF_t$ measurable.
Thus, $\stoppingTimeLambda$ is a stopping time regarding $\{\calF_t\}$.

Since $M(t)$ is bounded and thus uniformly integrable, we can apply the optional stopping time theorem
% By the upcrossing inequality \zg{add reference}, $\lim_{t \to \infty} M(t)$ exists a.s.. Since $M(t)$ is u.i., it also converges in $L_1$.
% (Theorem 4.8.2 of Durrett)
 \begin{align}\label{proof:eq:super.martingale}
    \EE\left[M\left(\stoppingTimeLambda\right) \mid \calF_1\right]
    \le M(1)
    = \NoNull.
\end{align}

We next show
\begin{align}\label{proof:eq:compare1}
    \frac{\falsePositive(\stoppingTimeLambda)}{(\falsePositive(\stoppingTimeLambda) + \truePositive(\stoppingTimeLambda)) \vee 1}
    \le \frac{q}{\hat{\pi}(\lambda)} \cdot \frac{\NoNc+1}{\No} \cdot \frac{\falsePositive(\stoppingTimeLambda)}{\ncFalsePositive(\stoppingTimeLambda) + 1}.
    % (V(\stoppingTimeLambda) + S(\stoppingTimeLambda)) \vee 1 \ge \frac{\hat{\pi}(\lambda)(V_{\text{nc}}(\stoppingTimeLambda) + 1)}{q}
\end{align}
It suffices to discuss the case where $\stoppingTimeLambda > 0$.
Let $\calS_q = \left\{0 \le t \le \lambda: \frac{\hat{\pi}(\lambda)\No\frac{V_{\text{nc}}(t) + 2}{\NoNc+1}}{(V(t) + S(t)) \vee 1} \le q \right\}$, then $\stoppingTimeLambda = \sup \calS_q$.
If $\stoppingTimeLambda \in \calS_q$, then Eq.~\eqref{proof:eq:compare1} is obviously true.
If $\stoppingTimeLambda \notin \calS_q$, then there exists a sequence $(t_k) \subseteq \calS_q$ such that $t_k \to \stoppingTimeLambda$ as $k \to \infty$, and $\falsePositive(t_k) + \truePositive(t_k) \le \falsePositive(\stoppingTimeLambda) + \truePositive(\stoppingTimeLambda)$.
Since a Poisson process increases at most by one at a time, then $\lim_{k\to \infty}\ncFalsePositive(t_k) + 2 \ge \ncFalsePositive(\stoppingTimeLambda) + 1$. Therefore,
\begin{align*}
 q \cdot \frac{\NoNc+1}{\No} &\ge \lim_{k \to \infty}\frac{\hat{\pi}(\lambda)(\ncFalsePositive(t_k) + 2)}{(\falsePositive(t_k) + \truePositive(t_k)) \vee 1}
 \ge \frac{\hat{\pi}(\lambda)
 \left(\ncFalsePositive\left(\stoppingTimeLambda\right) + 1\right)}{\left(\falsePositive\left(\stoppingTimeLambda\right) + \truePositive\left(\stoppingTimeLambda\right)\right) \vee 1}.
\end{align*}

Finally, we compute the \FDR. By Eq.~\eqref{proof:eq:compare1},
\begin{align*}
    \FDR
    % = \EE\left[\widehat{\FDP}(\stoppingTimeLambda) \mid \calF_1\right]
    = \EE\left[\frac{\falsePositive(\stoppingTimeLambda)}{(\falsePositive(\stoppingTimeLambda) + \truePositive(\stoppingTimeLambda)) \vee 1} \mid \calF_1 \right]
    \le q \cdot \frac{\NoNc+1}{\No} \cdot \EE\left[\frac{1}{\hat{\pi}(\lambda)} \cdot \frac{\falsePositive(\stoppingTimeLambda)}{\ncFalsePositive(\stoppingTimeLambda) + 1} \mid \calF_1 \right].
\end{align*}
By the tower property and Eq.~\eqref{proof:eq:super.martingale},
\begin{align*}
    \EE\left[\frac{1}{\hat{\pi}(\lambda)} \cdot \frac{\falsePositive(\stoppingTimeLambda)}{\ncFalsePositive(\stoppingTimeLambda) + 1} \mid \calF_1 \right]
    = \EE\left[\frac{1}{\hat{\pi}(\lambda)} \cdot \EE\left[\frac{\falsePositive(\stoppingTimeLambda)}{\ncFalsePositive(\stoppingTimeLambda) + 1} \mid \calF_\lambda\right]\mid \calF_1 \right]
    \le \EE\left[\frac{1}{\hat{\pi}(\lambda)} \cdot \frac{\falsePositive(\lambda)}{\ncFalsePositive(\lambda) + 1} \mid \calF_1 \right].
\end{align*}
If $\lambda = 1$, then
\begin{align*}
    \FDR \le q \cdot \frac{\NoNc + 1}{\No} \cdot \frac{\NoNull}{\NoNc + 1} \le q.
\end{align*}
If $0 < \lambda < 1$, we plug in the definition of $\hat{\pi}(\lambda)$ and by condition~\ref{FDR:assu:independent},
    \begin{align*}
        % q \cdot \EE\left[\frac{1}{\hat{\pi}} \cdot \frac{\falsePositive(\lambda)}{\ncFalsePositive(\lambda) + 1} \mid \calF_1 \right]
        \FDR
        &\le q \cdot \EE\left[\frac{\NoNc - \ncFalsePositive(\lambda)}{\No - \truePositive(\lambda) - \falsePositive(\lambda) + 1} \cdot \frac{\falsePositive(\lambda)}{\ncFalsePositive(\lambda) + 1} \mid \calF_1 \right]\\
        &= q \cdot \EE\left[\frac{\NoNc - \ncFalsePositive(\lambda)}{\ncFalsePositive(\lambda) + 1} \mid \calF_1 \right] \cdot \EE\left[\frac{\falsePositive(\lambda)}{\No - \truePositive(\lambda) - \falsePositive(\lambda) + 1} \mid \calF_1 \right].
    \end{align*}
    By a similar argument of Eq.~\eqref{proof:eq:mtg(II)},
%     Since $V_{\text{nc}}(\lambda) \mid m_{\text{nc}} \sim \text{Binomial}(m_{\text{nc}}, p_\infty^\lambda)$,
% \begin{align*}
%     \EE\left[\frac{m_{\text{nc}} - V_{\text{nc}}(\lambda)}{V_{\text{nc}}(\lambda) + 1} \mid m_{\text{nc}} \right]
%     &= \sum_{i=0}^{m_{\text{nc}}} \frac{m_{\text{nc}}-i}{i+1} \binom{m_{\text{nc}}}{i} (p_\infty^\lambda)^{i} (1-p_\infty^\lambda)^{m_{\text{nc}}-i}\\
%         &= \frac{1-p_\infty^\lambda}{p_{\infty}^\lambda}\sum_{i=0}^{m_{\text{nc}}-1}  \binom{m_{\text{nc}}}{i+1} (p_\infty^\lambda)^{i+1} (1-p_\infty^\lambda)^{m_{\text{nc}}-1-i}\\
%         &=\frac{1-p_\infty^\lambda}{ p_\infty^\lambda} \left(1 - (1-p_\infty^\lambda)^{m_{\text{nc}}} \right).
% \end{align*}
% Since $m - S(\lambda) - V(\lambda) \le m_0 - V(\lambda)$,
% \begin{align*}
%     \EE\left[\frac{V(\lambda)}{m - S(\lambda) - V(\lambda) + 1} \mid m_1, m_0 \right] \le \EE\left[\frac{V(\lambda)}{m_0 - V(\lambda) + 1} \mid m_0 \right]
%     = \frac{ p_\infty^\lambda}{1-p_\infty^\lambda} \left(1 - (p_\infty^\lambda)^{m_0} \right).
% \end{align*}
    \begin{align*}
        % \EE\left[\frac{1}{\hat{\pi}} \cdot \frac{V(\lambda)}{V_{\text{nc}}(\lambda) + 1} \mid \calF_\infty \right]
        \FDR
        \le q \cdot \frac{1-p_1^\lambda}{p_1^\lambda} (1 - (1 - p_1^\lambda)^{\NoNc}) \cdot \frac{p_1^\lambda}{1-p_1^\lambda} (1 - (p_1^\lambda)^{\NoNull})
        \le q.
    \end{align*}
    % Finally, by the tower property,
    % \begin{align*}
    %     \EE[r(\stoppingTimeLambda) \mid m, m_{\text{nc}}]
    %     =\EE[\EE[r(\stoppingTimeLambda) \mid \calF_\infty] \mid m, m_{\text{nc}}]
    %     \le q.

\end{proof}

\begin{remark}
In the proof of \Cref{prop:FDR}, it may be tempting to use the alternative process
\[
  \frac{\falsePositive(t)}{(\ncFalsePositive(t) \vee 1)/\NoNc}
\]
as $M(t)$. However, this is not a super-martingale.
\end{remark}

\subsection{Results in
  \Cref{sec:localFDR}}\label{sec:results-localFDR}

\begin{proposition}\label{prop:convergence.rate.weak}
For any $\cdfTestStatistics{}$, $\cdfTestStatisticsNull$, and $0 < \varepsilon < 1$,
\begin{align*}
    \PP\left(\BayesRiskLambda(\EmpiricalThreshold) - \BayesRiskLambda(\trueThreshold) > 2(1+\lambda)\sqrt{\frac{\log(4/\varepsilon)}{2 (\NoNc \wedge \No)}} \right)
    \le \varepsilon.
\end{align*}
\end{proposition}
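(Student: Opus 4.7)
The plan is to prove this proposition by a standard two-sided empirical process argument combined with the Dvoretzky--Kiefer--Wolfowitz (DKW) inequality, exactly as hinted in the remark preceding the statement. Writing the population and empirical objectives as
\[
\BayesRiskLambda(t) = \cdfTestStatisticsNull(t) - \lambda \cdfTestStatistics{}(t),\qquad
\EmpiricalBayesRiskLambda(t) = \EmpiricalCdfTestStatisticsNull(t) - \lambda \EmpiricalCdfTestStatistics(t),
\]
the key observation is the standard ``ERM regret'' decomposition
\[
\BayesRiskLambda(\EmpiricalThreshold) - \BayesRiskLambda(\trueThreshold)
= \bigl[\BayesRiskLambda(\EmpiricalThreshold) - \EmpiricalBayesRiskLambda(\EmpiricalThreshold)\bigr]
+ \bigl[\EmpiricalBayesRiskLambda(\EmpiricalThreshold) - \EmpiricalBayesRiskLambda(\trueThreshold)\bigr]
+ \bigl[\EmpiricalBayesRiskLambda(\trueThreshold) - \BayesRiskLambda(\trueThreshold)\bigr].
\]
Since $\EmpiricalThreshold$ minimizes $\EmpiricalBayesRiskLambda$ (definition~\eqref{eq:Bayes.risk.2}), the middle bracket is non-positive, so the regret is bounded by $2\sup_t |\EmpiricalBayesRiskLambda(t) - \BayesRiskLambda(t)|$.

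Next, I would split the uniform deviation by the triangle inequality:
\[
\sup_t |\EmpiricalBayesRiskLambda(t) - \BayesRiskLambda(t)|
\le \sup_t |\EmpiricalCdfTestStatisticsNull(t) - \cdfTestStatisticsNull(t)|
+ \lambda \sup_t |\EmpiricalCdfTestStatistics(t) - \cdfTestStatistics{}(t)|.
\]
Each term is then controlled by the DKW inequality: for any $u > 0$,
\[
\PP\left(\sup_t |\EmpiricalCdfTestStatisticsNull(t) - \cdfTestStatisticsNull(t)| > u\right) \le 2 e^{-2\NoNc u^2},\qquad
\PP\left(\sup_t |\EmpiricalCdfTestStatistics(t) - \cdfTestStatistics{}(t)| > u\right) \le 2 e^{-2\No u^2}.
\]
Set $u = \sqrt{\log(4/\varepsilon)/(2(\NoNc \wedge \No))}$ so that each tail probability is at most $\varepsilon/2$. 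A union bound then gives, with probability at least $1 - \varepsilon$,
\[
\BayesRiskLambda(\EmpiricalThreshold) - \BayesRiskLambda(\trueThreshold)
\le 2(1+\lambda) \sqrt{\frac{\log(4/\varepsilon)}{2(\NoNc \wedge \No)}},
\]
which is the desired conclusion.

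There is no real obstacle in this proof; the only care-point is making sure the constant in front is exactly $2(1+\lambda)$ and not something larger, which requires applying DKW separately to the two empirical CDFs (rather than to the combined process) and bounding $1/\sqrt{\NoNc}$ and $1/\sqrt{\No}$ simultaneously by $1/\sqrt{\NoNc \wedge \No}$. Note that this argument exploits no regularity of $\cdfTestStatistics{}$ or $\cdfTestStatisticsNull$ whatsoever, which is consistent with the remark's statement that the $\No^{-1/2}$ rate on the regret holds even when the risk is almost flat around $\trueThreshold$.
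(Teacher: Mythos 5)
Your proposal is correct and follows essentially the same route as the paper's own proof: the ERM regret decomposition bounding the regret by $2\sup_t|\EmpiricalBayesRiskLambda(t)-\BayesRiskLambda(t)|$, a triangle-inequality split of the uniform deviation into the two empirical CDF terms, and separate applications of the DKW inequality with the same choice of deviation level. The constants work out exactly as you claim, so there is nothing to add.
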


\begin{proof}[Proof of \Cref{prop:convergence.rate.weak}]
    By the triangle inequality and the Dvoretzky–Kiefer–Wolfowitz (DKW) inequality \cite{dvoretzky1956asymptotic, massart1990tight}, for any $\varepsilon' > 0$,
    \begin{align}\label{proof:eq:weak.1}
    \begin{split}
        &\quad~\PP\left(\sup_{\threshold \in \RR}\left|
        \EmpiricalBayesRiskLambda(\threshold)
        - \BayesRiskLambda(\threshold)\right| > (1+\lambda) \varepsilon' \right)\\
        &\le \PP\left(\sup_{\threshold \in \RR}\left|
        \cdfTestStatisticsNull(\threshold)
        - \EmpiricalCdfTestStatisticsNull(\threshold)\right| > \varepsilon' \right) +  \PP\left(\sup_{\threshold \in \RR}\left|
        \lambda\cdfTestStatistics{}(\threshold)
        - \lambda\EmpiricalCdfTestStatistics(\threshold)\right| > \lambda \varepsilon' \right)\\
        &\le 2 e^{-2 \NoNc\varepsilon'^2} + 2 e^{-2\No\varepsilon'^2}
        \le  4 e^{-2 (\NoNc \wedge \No)\varepsilon'^2}.
        \end{split}
    \end{align}
    Since $\EmpiricalThreshold$ minimizes $\EmpiricalBayesRiskLambda$,
    \begin{align}\label{proof:eq:weak.2}
    \begin{split}
        &\quad~\BayesRiskLambda(\EmpiricalThreshold) - \BayesRiskLambda(\trueThreshold) \\
        &= \left(\BayesRiskLambda(\EmpiricalThreshold) - \EmpiricalBayesRiskLambda(\EmpiricalThreshold)\right) +
        \left(\EmpiricalBayesRiskLambda(\EmpiricalThreshold) -  \EmpiricalBayesRiskLambda(\trueThreshold)\right) +  \left(\EmpiricalBayesRiskLambda(\trueThreshold) -  \BayesRiskLambda(\trueThreshold)\right)  \\
        &\le 2 \sup_{\threshold \in \RR}\left|
        \EmpiricalBayesRiskLambda(\threshold)
        - \BayesRiskLambda(\threshold)\right|.
    \end{split}
    \end{align}
    Therefore, we combine Eq.~\eqref{proof:eq:weak.1}, Eq.~\eqref{proof:eq:weak.2}, and take $\varepsilon' = \sqrt{\frac{\log(4/\varepsilon)}{2 (\NoNc \wedge \No)}}$,
    \begin{align*}
        &\quad~\PP\left(\BayesRiskLambda(\EmpiricalThreshold) - \BayesRiskLambda(\trueThreshold) > 2 (1+\lambda)\sqrt{\frac{\log(4/\varepsilon)}{2 \NoNc \wedge \No}} \right) \\
        &\le \PP\left(\sup_{\threshold \in \RR}\left|
        \EmpiricalBayesRiskLambda(\threshold)
        - \BayesRiskLambda(\threshold)\right| > (1+\lambda) \sqrt{\frac{\log(4/\varepsilon)}{2 (\NoNc \wedge \No)}}  \right)
        = \varepsilon.
    \end{align*}
\end{proof}

\begin{proposition}\label{prop:convergence.rate}
   Consider the setting in \Cref{sec:localFDR} and assume the following assumptions hold:
    \begin{enumerate}
        \item \label{prop:assu:density.derivative} $\pdfTestStatistics{}(\trueThreshold)$, $\pdfTestStatisticsNull(\trueThreshold) > 0$ and $\pdfTestStatistics{}'(\trueThreshold)$, $\pdfTestStatisticsNull'(\trueThreshold)$ exist;
        \item \label{prop:assu:density.upper.bound}
         $\pdfTestStatistics{}(\threshold) \le \pdfTestStatistics{\max}$, $\pdfTestStatisticsNull(\threshold) \le \pdfTestStatistics{0,\max}$, in $|\threshold - \trueThreshold| \le D$ for some $\pdfTestStatistics{\max}$, $\pdfTestStatistics{0,\max}$, $D > 0$;
        \item \label{prop:assu:MLR} the derivative $(\pdfTestStatisticsNull/\pdfTestStatistics{})'(\trueThreshold)$ is positive;
        \item \label{prop:assu:Bayes.risk} for any $\delta > 0$, there exists $\varepsilon_\delta > 0$ such that $\cdfTestStatisticsNull(\threshold) - \lambda \cdfTestStatistics{}(\threshold) > \cdfTestStatisticsNull(\trueThreshold) - \lambda \cdfTestStatistics{}(\trueThreshold) + \varepsilon_\delta$ for $|\threshold - \trueThreshold| > \delta$.
    \end{enumerate}
    Then there exists a constant $C^*>0$ such that with
    probability at least $1 - C^*/\log(\No \wedge \NoNc)$,
    \begin{align}\label{eq:prop:convergence.rate}
        \left|\EmpiricalThreshold - \trueThreshold\right|
        \le \frac{\log(\No \wedge \NoNc)}{(\No \wedge
      \NoNc)^{1/3}}\quad \text{for all}~\No, \NoNc \ge 5.
    \end{align}
\end{proposition}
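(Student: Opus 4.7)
The plan is to combine a local quadratic lower bound on the population risk $\BayesRiskLambda$ around $\trueThreshold$ with a local modulus of continuity bound on the centered empirical process $\EmpiricalBayesRiskLambda - \BayesRiskLambda$, then use a peeling argument to turn the random radius $|\EmpiricalThreshold - \trueThreshold|$ into a deterministic one. This is the standard recipe for cube-root asymptotics in the spirit of Kim--Pollard and Van der Vaart--Wellner.

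First, I would establish a global localization step: since $\trueThreshold$ is the unique minimizer of $\BayesRiskLambda$ (assumption (iv)) and $\BayesRiskLambda - \EmpiricalBayesRiskLambda$ is uniformly small by the DKW inequality applied separately to $\EmpiricalCdfTestStatisticsNull$ and $\EmpiricalCdfTestStatistics$ (already used in \Cref{prop:convergence.rate.weak}), the empirical minimizer $\EmpiricalThreshold$ lies in the neighborhood $|\threshold - \trueThreshold| \le D$ of assumption (ii) with probability at least $1 - C_1/\log(\No \wedge \NoNc)$. On this good event all of (i)--(iii) apply.

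Second, I would establish the \emph{quadratic curvature} of the population risk. Differentiating $\BayesRiskLambda$, the first-order condition at $\trueThreshold$ reads $\pdfTestStatisticsNull(\trueThreshold) = \lambda \pdfTestStatistics{}(\trueThreshold)$. Assumption (iii) gives
\begin{equation*}
\bigl(\pdfTestStatisticsNull/\pdfTestStatistics{}\bigr)'(\trueThreshold)
= \frac{\pdfTestStatisticsNull'(\trueThreshold) - \lambda\,\pdfTestStatistics{}'(\trueThreshold)}{\pdfTestStatistics{}(\trueThreshold)} > 0,
\end{equation*}
so the second derivative $\BayesRiskLambda''(\trueThreshold) = \pdfTestStatisticsNull'(\trueThreshold) - \lambda\,\pdfTestStatistics{}'(\trueThreshold)$ is strictly positive. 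A Taylor expansion then yields $\BayesRiskLambda(\threshold) - \BayesRiskLambda(\trueThreshold) \ge c(\threshold-\trueThreshold)^2$ for some $c>0$ on a neighborhood of $\trueThreshold$, and assumption (iv) extends a linear lower bound outside that neighborhood.

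Third, I would control the local fluctuations of the empirical process. Define the centered process $W_{n,m}(\threshold) := (\EmpiricalBayesRiskLambda - \BayesRiskLambda)(\threshold)$. For fixed $\delta$, the variance of each increment $W_{n,m}(\threshold) - W_{n,m}(\trueThreshold)$ is bounded by $C\delta/(\No \wedge \NoNc)$ using assumption (ii), since the indicator classes $\{\mathbbm{1}_{\{X \le s\}} - \mathbbm{1}_{\{X \le \trueThreshold\}} : |s - \trueThreshold| \le \delta\}$ have envelope $1$ and variance $O(\delta)$. A Bernstein-type maximal inequality (or a chaining bound for the VC class of indicators of half-lines) applied separately to $\EmpiricalCdfTestStatisticsNull$ and $\lambda\EmpiricalCdfTestStatistics$ gives
\begin{equation*}
\sup_{|\threshold - \trueThreshold| \le \delta} |W_{n,m}(\threshold) - W_{n,m}(\trueThreshold)|
\;\le\; C\sqrt{\frac{\delta\,\log(\No\wedge\NoNc)}{\No\wedge\NoNc}}
\end{equation*}
with probability at least $1 - 1/(\text{const}\cdot\log(\No\wedge\NoNc))^{2}$, say. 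This is the step I expect to be the main technical obstacle: one must obtain the $\sqrt{\delta}$ dependence (not $\delta^{0}$, which would only give the weaker $n^{-1/4}$ rate of \Cref{prop:convergence.rate.weak}), and do so uniformly over the two independent samples of sizes $\No$ and $\NoNc$.

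Finally, I would combine these ingredients by a peeling argument. Since $\EmpiricalThreshold$ minimizes $\EmpiricalBayesRiskLambda$,
\begin{equation*}
c(\EmpiricalThreshold-\trueThreshold)^2
\;\le\; \BayesRiskLambda(\EmpiricalThreshold) - \BayesRiskLambda(\trueThreshold)
\;\le\; |W_{n,m}(\EmpiricalThreshold) - W_{n,m}(\trueThreshold)|.
\end{equation*}
Partitioning $\{2^{k-1}r_{n,m} < |\EmpiricalThreshold-\trueThreshold| \le 2^{k}r_{n,m}\}$ for $k=0,1,2,\dots$ with $r_{n,m} = \log(\No\wedge\NoNc)/(\No\wedge\NoNc)^{1/3}$ and applying the modulus of continuity bound on each shell, the curvature lower bound $c\,4^{k-1}r_{n,m}^2$ dominates the empirical fluctuation $C\sqrt{2^{k}r_{n,m}\log(\No\wedge\NoNc)/(\No\wedge\NoNc)}$ as soon as $r_{n,m}^{3/2} \gg \sqrt{\log(\No\wedge\NoNc)/(\No\wedge\NoNc)}$, which is precisely where the choice of $r_{n,m}$ provides a comfortable $(\log)^{1/2}$-factor of slack. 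Summing the failure probabilities across shells (and adding in the localization event) gives the bound $C^*/\log(\No\wedge\NoNc)$ and concludes the proof.
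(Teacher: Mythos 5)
Your proposal follows essentially the same route as the paper's proof: a localization step via DKW and the well-separation assumption (the paper's Lemma A.5), a quadratic lower bound on $\BayesRiskLambda$ obtained from the first-order condition $\pdfTestStatisticsNull(\trueThreshold)=\lambda\pdfTestStatistics{}(\trueThreshold)$ together with the positivity of $(\pdfTestStatisticsNull/\pdfTestStatistics{})'(\trueThreshold)$ (Lemma A.4), a maximal inequality with the crucial $\sqrt{\delta}$ localization for the indicator class (Lemma A.6, where the paper uses an expectation bound from Massart/Baraud plus Markov rather than your high-probability Bernstein/chaining bound), and a dyadic peeling argument with the shell radius chosen so that the geometric tail sum is $O(1/\log(\No\wedge\NoNc))$. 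The differences are cosmetic, so the proposal is correct and matches the paper's argument.
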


% point-wise
\begin{lemma}\label{lemm:risk.valley}
Under Assumptions~\ref{prop:assu:density.derivative}, \ref{prop:assu:MLR}, there exists $C' > 0$, $D' > 0$ such that for any $\threshold$ satisfying $\left|\threshold - \trueThreshold \right| \le D'$,
\begin{align}\label{eq:lemm:risk.valley}
    \BayesRiskLambda(\trueThreshold) \le \BayesRiskLambda(\threshold) - C' (\threshold - \trueThreshold)^2.
\end{align}
\end{lemma}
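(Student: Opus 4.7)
The plan is a straightforward local quadratic expansion of $R_\lambda$ around its minimizer. First, I would extract the first-order optimality condition. Since $F_0$ and $F$ admit densities $f_0$ and $f$ in a neighborhood of $\trueThreshold$ (so $\BayesRiskLambda = \cdfTestStatisticsNull - \lambda \cdfTestStatistics{}$ is differentiable there), and $\trueThreshold$ is an interior minimizer by assumption \ref{prop:assu:density.derivative} (which requires $\pdfTestStatistics{}(\trueThreshold), \pdfTestStatisticsNull(\trueThreshold) > 0$), optimality forces
$\BayesRiskLambda'(\trueThreshold) = \pdfTestStatisticsNull(\trueThreshold) - \lambda \pdfTestStatistics{}(\trueThreshold) = 0$, i.e.\ $\lambda = \pdfTestStatisticsNull(\trueThreshold)/\pdfTestStatistics{}(\trueThreshold)$.

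Next, I would show $\BayesRiskLambda''(\trueThreshold) = \pdfTestStatisticsNull'(\trueThreshold) - \lambda \pdfTestStatistics{}'(\trueThreshold)$ is strictly positive using the MLR assumption \ref{prop:assu:MLR}. Expanding the quotient rule,
\[
  (\pdfTestStatisticsNull/\pdfTestStatistics{})'(\trueThreshold)
  = \frac{\pdfTestStatisticsNull'(\trueThreshold)\pdfTestStatistics{}(\trueThreshold) - \pdfTestStatisticsNull(\trueThreshold)\pdfTestStatistics{}'(\trueThreshold)}{\pdfTestStatistics{}(\trueThreshold)^2} > 0,
\]
and substituting $\pdfTestStatisticsNull(\trueThreshold) = \lambda \pdfTestStatistics{}(\trueThreshold)$ yields $\pdfTestStatisticsNull'(\trueThreshold) - \lambda \pdfTestStatistics{}'(\trueThreshold) > 0$.

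Finally, I would apply the Peano form of the second-order Taylor expansion at $\trueThreshold$, which is valid because $\BayesRiskLambda'$ exists throughout a neighborhood of $\trueThreshold$ (where $\pdfTestStatisticsNull, \pdfTestStatistics{}$ are defined) and $\BayesRiskLambda'$ is differentiable at the single point $\trueThreshold$:
\[
  \BayesRiskLambda(\threshold) = \BayesRiskLambda(\trueThreshold) + \tfrac{1}{2} \BayesRiskLambda''(\trueThreshold)(\threshold - \trueThreshold)^2 + o\bigl((\threshold - \trueThreshold)^2\bigr).
\]
Choose $D' > 0$ small enough that the remainder is bounded in absolute value by $\tfrac{1}{4}\BayesRiskLambda''(\trueThreshold)(\threshold - \trueThreshold)^2$ whenever $|\threshold - \trueThreshold| \le D'$; setting $C' = \tfrac{1}{4}\BayesRiskLambda''(\trueThreshold) > 0$ then gives \eqref{eq:lemm:risk.valley}.

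The only delicate point is the Taylor step: assumption \ref{prop:assu:density.derivative} provides only pointwise differentiability of $\pdfTestStatisticsNull, \pdfTestStatistics{}$ at $\trueThreshold$, so the classical Lagrange form of the remainder is unavailable. This is exactly the setting in which the Peano form applies, and it delivers the $o((\threshold - \trueThreshold)^2)$ error that is all we need. Assumptions \ref{prop:assu:density.upper.bound} and \ref{prop:assu:Bayes.risk} play no role in this lemma; they will be used in the global-to-local reduction in the proof of \Cref{prop:convergence.rate}, which converts a uniform empirical-process deviation bound into a rate for $|\EmpiricalThreshold - \trueThreshold|$ via the quadratic growth established here.
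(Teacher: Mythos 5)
Your proposal is correct and follows essentially the same route as the paper's proof: first-order optimality gives $\pdfTestStatisticsNull(\trueThreshold) = \lambda \pdfTestStatistics{}(\trueThreshold)$, the MLR condition combined with the quotient rule yields $\BayesRiskLambda''(\trueThreshold) > 0$, and a second-order Taylor expansion with Peano remainder delivers the quadratic lower bound with $C' = \BayesRiskLambda''(\trueThreshold)/4$, exactly as in the paper. Your remark that only the Peano form is available under the pointwise differentiability in Assumption~\ref{prop:assu:density.derivative}, and that Assumptions~\ref{prop:assu:density.upper.bound} and~\ref{prop:assu:Bayes.risk} are not needed here, is also consistent with the paper's argument.
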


\begin{proof}[Proof of \Cref{lemm:risk.valley}]

We perform Taylor expansion of $\BayesRiskLambda(\threshold)$ at $\trueThreshold$ with the Peano's form of remainder,    \begin{align}\label{proof:lemm:eq:taylor.expansion}
        \BayesRiskLambda(\threshold)
        &= \BayesRiskLambda(\trueThreshold) + \BayesRiskLambda'(\trueThreshold) (\threshold - \trueThreshold) + \frac{\BayesRiskLambda''(\trueThreshold)}{2}(\threshold - \trueThreshold)^2
        + o((\threshold - \trueThreshold)^2).
    \end{align}
    Since $\trueThreshold$ minimizes $\BayesRiskLambda(\threshold)$, by the KKT condition,    \begin{align}\label{proof:lemm:eq:first.order.derivative}
        \BayesRiskLambda'(\trueThreshold)
        &= \cdfTestStatisticsNull'(\trueThreshold) - \lambda \cdfTestStatistics{}'(\trueThreshold)
        = \pdfTestStatisticsNull(\trueThreshold) - \lambda \pdfTestStatistics{}(\trueThreshold)
        = 0.
    \end{align}
    Note that
    \begin{align}\label{proof:lemm:eq:MLR}
        0 < \left(\frac{\pdfTestStatisticsNull}{\pdfTestStatistics{}}\right)'(\trueThreshold)
        = \frac{\pdfTestStatisticsNull'{}(\trueThreshold) \pdfTestStatistics{}(\trueThreshold) - \pdfTestStatisticsNull{}(\trueThreshold) \pdfTestStatistics{}'(\trueThreshold)}{\pdfTestStatistics{}^2(\trueThreshold)}.
    \end{align}
    By Eq.~\eqref{proof:lemm:eq:first.order.derivative} and \eqref{proof:lemm:eq:MLR},
    \begin{align}\label{proof:lemm:eq:second.order.derivative}
    \begin{split}
    \BayesRiskLambda''(\trueThreshold)
    &= \pdfTestStatisticsNull'(\trueThreshold) - \lambda \pdfTestStatistics{}'(\trueThreshold)
    = \pdfTestStatisticsNull'(\trueThreshold)
    - \frac{\pdfTestStatisticsNull(\trueThreshold)}{\pdfTestStatistics{}(\trueThreshold)} \pdfTestStatistics{}'(\trueThreshold)
    = \frac{\pdfTestStatisticsNull'(\trueThreshold)
     \pdfTestStatistics{}(\trueThreshold) - \pdfTestStatisticsNull(\trueThreshold) \pdfTestStatistics{}'(\trueThreshold)}{\pdfTestStatistics{}(\trueThreshold)}
     >0.
     \end{split}
    \end{align}
    Plug Eq.~\eqref{proof:lemm:eq:first.order.derivative} and \eqref{proof:lemm:eq:second.order.derivative} into Eq.~\eqref{proof:lemm:eq:taylor.expansion} and we have proved that there exists $D' > 0$, $C' = \BayesRiskLambda''(\trueThreshold)/4 > 0$, such that Eq.~\eqref{eq:lemm:risk.valley} is valid.
\end{proof}

\begin{lemma}\label{lemm:consistency}
Under the assumptions in \Cref{prop:convergence.rate},
$\EmpiricalThreshold \to \trueThreshold$ in probability as $\No$, $\NoNc \to \infty$.
\end{lemma}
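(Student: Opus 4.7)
The plan is to apply the standard M-estimator consistency argument: uniform convergence of the empirical objective plus a well-separated-minimum condition implies consistency of the argmin. Fortunately, the two ingredients needed are already essentially packaged in \Cref{prop:convergence.rate.weak} and Assumption \ref{prop:assu:Bayes.risk}.

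First, I would recall the uniform convergence of the empirical objective. By the Dvoretzky--Kiefer--Wolfowitz inequality applied separately to $\EmpiricalCdfTestStatisticsNull$ and $\EmpiricalCdfTestStatistics$, followed by the triangle inequality (exactly the bound derived at the beginning of the proof of \Cref{prop:convergence.rate.weak}), we have
\[
  \sup_{t \in \RR} \left| \EmpiricalBayesRiskLambda(t) - \BayesRiskLambda(t) \right| \toprob 0 \quad \text{as } \No \wedge \NoNc \to \infty.
\]
Second, since $\EmpiricalThreshold$ minimizes $\EmpiricalBayesRiskLambda$, the usual sandwich decomposition gives
\[
  0 \le \BayesRiskLambda(\EmpiricalThreshold) - \BayesRiskLambda(\trueThreshold) \le 2 \sup_{t \in \RR} \left| \EmpiricalBayesRiskLambda(t) - \BayesRiskLambda(t) \right|,
\]
and hence $\BayesRiskLambda(\EmpiricalThreshold) - \BayesRiskLambda(\trueThreshold) \toprob 0$. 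This step is identical to Eq.~(A.13) in the proof of \Cref{prop:convergence.rate.weak}, so no new estimate is required.

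Finally, I would convert risk convergence into threshold convergence by invoking Assumption \ref{prop:assu:Bayes.risk}. For any $\delta > 0$ there exists $\varepsilon_\delta > 0$ such that $|t - \trueThreshold| > \delta$ implies $\BayesRiskLambda(t) - \BayesRiskLambda(\trueThreshold) > \varepsilon_\delta$; taking the contrapositive with $t = \EmpiricalThreshold$,
\[
  \PP\left( |\EmpiricalThreshold - \trueThreshold| > \delta \right) \le \PP\left( \BayesRiskLambda(\EmpiricalThreshold) - \BayesRiskLambda(\trueThreshold) > \varepsilon_\delta \right) \longrightarrow 0,
\]
which is the claimed convergence in probability. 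Note that only Assumption \ref{prop:assu:Bayes.risk} is truly used here; the smoothness and monotone-likelihood-ratio assumptions \ref{prop:assu:density.derivative}--\ref{prop:assu:MLR} are reserved for upgrading consistency to the $(\No \wedge \NoNc)^{-1/3}$ rate in \Cref{prop:convergence.rate}.

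There is essentially no hard step; the only subtlety, which I would flag but not dwell on, is that $\EmpiricalBayesRiskLambda$ is piecewise constant so the argmin is a set rather than a point. The sandwich bound above only uses that $\EmpiricalThreshold$ attains the empirical infimum, so any measurable selection from the argmin works, and the argument is insensitive to the tie-breaking convention.
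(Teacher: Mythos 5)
Your proposal is correct and follows essentially the same route as the paper: both rest on the DKW-based uniform convergence bound from the proof of \Cref{prop:convergence.rate.weak} together with the well-separation Assumption \ref{prop:assu:Bayes.risk}. The paper arranges the decomposition slightly differently (showing directly that, with high probability, $\EmpiricalBayesRiskLambda(\trueThreshold) < \EmpiricalBayesRiskLambda(\threshold)$ for every $\threshold$ with $|\threshold - \trueThreshold| > \delta$, so the empirical minimizer cannot lie there), but this is the same standard M-estimation consistency argument as your sandwich bound.
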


\begin{proof}[Proof of \Cref{lemm:consistency}]
    % By Assumption~\ref{prop:assu:Bayes.risk} and \Cref{lemm:risk.valley}, there exists $D'' > 0$ such that for any $\delta < D''$,
    % \begin{align*}
    %     \BayesRiskLambda(\trueThreshold) < \BayesRiskLambda(\threshold) - C' \delta^2, \quad |\threshold - \trueThreshold| \ge \delta.
    % \end{align*}
    Let $\EmpiricalBayesRiskLambda(\threshold):= \EmpiricalCdfTestStatisticsNull(\threshold) - \lambda \EmpiricalCdfTestStatistics(\threshold)$.
    For any $\delta > 0$, there exists $\varepsilon > 0$ such that $\BayesRiskLambda(\trueThreshold) < \BayesRiskLambda(\threshold) - \varepsilon$ for any $|\threshold - \trueThreshold| > \delta$.
    % By the Dvoretzky–Kiefer–Wolfowitz (DKW) inequality \cite{dvoretzky1956asymptotic, massart1990tight}, for any $\varepsilon' > 0$,
    % \begin{align*}
    %     &\quad~\PP\left(\sup_{\threshold \in \RR}\left|
    %     \EmpiricalBayesRiskLambda(\threshold)
    %     - \BayesRiskLambda(\threshold)\right| > (1+\lambda) \varepsilon' \right)\\
    %     &\le \PP\left(\sup_{\threshold \in \RR}\left|
    %     \cdfTestStatisticsNull(\threshold)
    %     - \EmpiricalCdfTestStatisticsNull(\threshold)\right| > \varepsilon' \right) +  \PP\left(\sup_{\threshold \in \RR}\left|
    %     \lambda\cdfTestStatistics{}(\threshold)
    %     - \lambda\EmpiricalCdfTestStatistics(\threshold)\right| > \lambda \varepsilon' \right)\\
    %     &\le 2 e^{-2 \NoNc\varepsilon'^2} + 2 e^{-2\No\varepsilon'^2}
    %     \le  4 e^{-2 (\NoNc \wedge \No)\varepsilon'^2}.
    % \end{align*}
    By Eq.~\eqref{proof:eq:weak.1} with $\varepsilon' = \varepsilon/2(1+\lambda)$, we have with probability at least $1 - 8 e^{-2 (\NoNc \wedge \No)\varepsilon'^2}$, for any $|\threshold - \trueThreshold| > \delta$,
    \begin{align*}
    &\quad~\EmpiricalBayesRiskLambda(\trueThreshold)  - \EmpiricalBayesRiskLambda(\threshold)\\
          &= (\EmpiricalBayesRiskLambda(\trueThreshold)  - \BayesRiskLambda(\trueThreshold))
          + (\BayesRiskLambda(\trueThreshold) - \BayesRiskLambda(\threshold))
          + (\BayesRiskLambda(\threshold) - \EmpiricalBayesRiskLambda(\threshold))\\
          &< -\varepsilon + 2 (1+\lambda)\varepsilon'
          = 0.
    \end{align*}
   Therefore, $\PP(|\EmpiricalThreshold - \trueThreshold| \le \delta) \to 1$.
\end{proof}

\begin{lemma}\label{lemm:maximal.inequality}
Under the assumptions in \Cref{prop:convergence.rate},
there exists $C > 0$ such that for any $0 < \delta \le D$, $\No \wedge \NoNc \ge 5$,
    \begin{align}\label{lemm:eq:M.estimator}
        &\quad~\EE\left[\sup_{|\threshold - \trueThreshold| \le \delta} \left|
        \left(\EmpiricalBayesRiskLambda(\threshold) - \BayesRiskLambda{}(\threshold)\right) -
        \left(\EmpiricalBayesRiskLambda(\trueThreshold) - \BayesRiskLambda{}(\trueThreshold)\right)
        \right|\right] \\
        &\le C (1+\lambda) \left(\sqrt{\frac{\delta \left(\pdfTestStatistics{\max} \vee \pdfTestStatistics{0,\max}\right) \log(\No \wedge \NoNc)}{\No \wedge \NoNc}} + \frac{\log(\No \wedge \NoNc)}{\No \wedge \NoNc}\right).
    \end{align}
\end{lemma}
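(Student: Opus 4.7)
The quantity inside the supremum is a centered two-sample empirical process, so the plan is to split it into its two pieces and apply a standard local maximal inequality to each. Write
\[
\EmpiricalBayesRiskLambda(\threshold)-\BayesRiskLambda(\threshold) = \bigl(\EmpiricalCdfTestStatisticsNull(\threshold)-\cdfTestStatisticsNull(\threshold)\bigr) - \lambda\bigl(\EmpiricalCdfTestStatistics(\threshold)-\cdfTestStatistics{}(\threshold)\bigr),
\]
so by the triangle inequality the quantity to be bounded is at most the sum of
\[
S_{\NoNc} := \sup_{|\threshold-\trueThreshold|\le\delta}\bigl|(\EmpiricalCdfTestStatisticsNull(\threshold)-\cdfTestStatisticsNull(\threshold)) - (\EmpiricalCdfTestStatisticsNull(\trueThreshold)-\cdfTestStatisticsNull(\trueThreshold))\bigr|
\]
and $\lambda S_{\No}$, defined analogously with $\EmpiricalCdfTestStatistics$ and $\cdfTestStatistics{}$. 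I would bound each term by the same machinery and then combine. Notice that the overall bound can only deteriorate by replacing $\NoNc$ and $\No$ by $\NoNc\wedge\No$ and $\pdfTestStatistics{\max},\pdfTestStatistics{0,\max}$ by $\pdfTestStatistics{\max}\vee\pdfTestStatistics{0,\max}$, which produces the stated form.

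The next step is to recognize each $S_\No,S_\NoNc$ as the supremum of an empirical process indexed by the function class
\[
\calF_\delta = \{\1_{\{x\le\threshold\}} - \1_{\{x\le\trueThreshold\}}: |\threshold-\trueThreshold|\le\delta\}.
\]
This class is VC of dimension at most $2$, has constant envelope $F\equiv1$, and by Assumption~\ref{prop:assu:density.upper.bound} has a uniformly bounded second moment:
\[
\sup_{f\in\calF_\delta}\EE[f^2] \le \sup_{|\threshold-\trueThreshold|\le\delta} |\cdfTestStatistics{}(\threshold)-\cdfTestStatistics{}(\trueThreshold)| \le \pdfTestStatistics{\max}\,\delta,
\]
and analogously for $\cdfTestStatisticsNull$. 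For a uniformly bounded VC class with envelope $1$ and variance proxy $\sigma^2$, a standard local maximal inequality (e.g.\ Talagrand's inequality combined with the usual chaining bound, or Theorem~2.14.1 in van der Vaart \& Wellner) gives
\[
\EE\Bigl[\sup_{f\in\calF_\delta}\bigl|\tfrac{1}{n}\sum_i(f(X_i)-\EE f(X_i))\bigr|\Bigr] \lesssim \sqrt{\frac{\sigma^2\log(1/\sigma)}{n}} + \frac{\log(1/\sigma)}{n}.
\]
Plugging $\sigma^2 = \pdfTestStatistics{\max}\,\delta$ and dropping the $\log(1/\sigma)$ into $\log(n)$ (valid once $\sigma \ge 1/n$; the complementary case $\sigma < 1/n$ is absorbed into the $\log(n)/n$ term) yields the one-sided bound $\sqrt{\delta\,\pdfTestStatistics{\max}\log(\No)/\No} + \log(\No)/\No$.

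The final step is to add the two one-sided bounds, absorb the factor $\lambda$ (producing the $1+\lambda$ prefactor) and replace $(\No,\NoNc)$ by $\No\wedge\NoNc$ and $(\pdfTestStatistics{\max},\pdfTestStatistics{0,\max})$ by their maximum to obtain~\eqref{lemm:eq:M.estimator}. The restriction $\No\wedge\NoNc\ge 5$ is used only to guarantee $\log(\No\wedge\NoNc)\ge 1$ so the two terms in the bound truly dominate the constants from the maximal inequality. I expect the only real obstacle to be the bookkeeping of constants in the Bernstein-type local maximal inequality: the cleanest route is probably to cite a ready-made statement (e.g.\ Giné--Guillou or Giné--Koltchinskii's local uniform concentration for VC classes) rather than redo the chaining argument, since no finer structure of $\calF_\delta$ is exploited beyond its VC character and the pointwise variance bound.
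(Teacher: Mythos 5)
Your proposal is correct and follows essentially the same route as the paper: a triangle-inequality split into the two one-sample processes, the variance bound $\sup_f \EE[f^2]\le \pdfTestStatistics{\max}\delta$ from Assumption 2, and a Bernstein-type local maximal inequality, with the two pieces then combined via $\No\wedge\NoNc$ and $\pdfTestStatistics{\max}\vee\pdfTestStatistics{0,\max}$. The only cosmetic difference is that the paper, instead of citing a VC local maximal inequality with a $\log(1/\sigma)$ entropy term, directly counts the realized configurations of the indicator class (at most $2\No+2$, since $\1_{\{\testStatistics{i}\in[\trueThreshold,\threshold]\}}$ is monotone in $\threshold$ on each side) and applies Massart's Lemma 6.4 (or Baraud's Theorem 3.1) with entropy $\pi_\delta\le\log(4\No+4)$, which gives the $\log\No$ factor immediately and sidesteps the small-$\sigma$ case split you handle at the end.
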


\begin{proof}[Proof of \Cref{lemm:maximal.inequality}]
% Notice that $\BayesRisk(\threshold) = \cdfTestStatisticsNull(\threshold) - \lambda \cdfTestStatistics{}(\threshold)$,
By the triangle inequality,
\begin{align*}
    &\quad~\sup_{|\threshold - \trueThreshold| \le \delta}\left|
    \left(\EmpiricalBayesRiskLambda(\threshold) - \BayesRiskLambda{}(\threshold)\right) -
    \left(\EmpiricalBayesRiskLambda(\trueThreshold) - \BayesRiskLambda{}(\trueThreshold)\right)
    \right|\\
    &\le \sup_{|\threshold - \trueThreshold| \le \delta}\left|
    \left(\EmpiricalCdfTestStatisticsNull(\threshold) - \cdfTestStatisticsNull(\threshold)\right) -
    \left(\EmpiricalCdfTestStatisticsNull(\trueThreshold) - \cdfTestStatisticsNull(\trueThreshold)\right)
    \right| \\
    &\quad~+ \sup_{|\threshold - \trueThreshold| \le \delta}\lambda\left|
    \left(\EmpiricalCdfTestStatistics{}(\threshold) - \cdfTestStatistics{}(\threshold)\right) -
    \left(\EmpiricalCdfTestStatistics(\trueThreshold) - \cdfTestStatistics{}{}(\trueThreshold)\right)
    \right|.
\end{align*}
We show there exists $C > 0$, for any $0 < \delta \le D$, $\No \ge 5$,
\begin{align}\label{proof:lemm:eq:M.estimator}
    \EE\left[\sup_{|\threshold - \trueThreshold| \le \delta} \left|
        \left(\EmpiricalCdfTestStatistics(\threshold) - \cdfTestStatistics{}(\threshold)\right) -
        \left(\EmpiricalCdfTestStatistics(\trueThreshold) - \cdfTestStatistics{}(\trueThreshold)\right)
        \right|\right] \le C \left(\sqrt{\frac{\delta \pdfTestStatistics{\max} \log(\No)}{\No}} + \frac{\log(\No)}{\No}\right).
\end{align}
The analysis also applies to the null CDF part and combining the two parts yields the desired result.

For $\threshold < \trueThreshold$,
    \begin{align*}
        \left(\EmpiricalCdfTestStatistics(\threshold) - \cdfTestStatistics{}(\threshold)\right) -
        \left(\EmpiricalCdfTestStatistics(\trueThreshold) - \cdfTestStatistics{}(\trueThreshold)\right)
        = \frac{1}{n} \sum_{i=1}^\No \1_{\{\testStatistics{i} \in (\threshold, \trueThreshold]\}} - \PP\left(\testStatistics{i} \in (\threshold, \trueThreshold]\right),
    \end{align*}
    and similarly for $\threshold > \trueThreshold$. To bound the left hand side of Eq.~\eqref{proof:lemm:eq:M.estimator}, we define   \begin{align*}
        &\sigma_\delta
        := \sup_{|\threshold - \trueThreshold| \le \delta} \sqrt{\var\left(\1_{\{\testStatistics{i} \in [\threshold \wedge \trueThreshold, \threshold \vee \trueThreshold]\}}\right)},\\
        N_\delta :=& \left\{\left(i, \1_{\{\testStatistics{i} \in [\threshold \wedge \trueThreshold, \threshold \vee \trueThreshold]\}}\right), ~|\threshold - \trueThreshold| < \delta \right\}, \quad
        \pi_\delta
        := \EE \left[\log(2 |N_\delta|)\right].
    \end{align*}
    We bound $\sigma_\delta$ and $ \pi_\delta $ separately. For $\sigma_\delta$, by Assumption~\ref{prop:assu:density.upper.bound},
    \begin{align}\label{proof:lemm:eq:bound.sigma}
        \sigma_\delta
        \le \sqrt{\PP\left(\testStatistics{i} \in [\trueThreshold - \delta, \trueThreshold + \delta]\right)}
        \le \sqrt{2 \delta \pdfTestStatistics{\max}}.
    \end{align}
    For $\pi_\delta$, notice that $\1_{\{\testStatistics{i} \in [\trueThreshold, \threshold]\}}$ is binary and increases with regard to $\threshold$, then
    \begin{align*}
         \left|\left\{\left(i, \1_{\{\testStatistics{i} \in [\trueThreshold, \threshold]\}}\right), ~0 \le \threshold - \trueThreshold \le \delta \right\}\right| \le \No+1,
    \end{align*}
    and similarly for $-\delta \le \threshold - \trueThreshold \le 0$. Combine the two parts, $ \left|N_\delta\right| \le 2 \No+2$,
    % \begin{align}\label{proof:lemm:eq:bound.N}
    %      \left|N_\delta\right|
    %      \le \left|\{\{i, \testStatistics{i} \in [\trueThreshold, \threshold]\}, ~0 \le \threshold - \trueThreshold \le \delta \} \right| +
    %      \left|\{\{i, \testStatistics{i} \in [\threshold, \trueThreshold]\}, ~ -\delta \le \threshold - \trueThreshold \le 0 \} \right|
    %      \le 2 \No,
    % \end{align}
    and $\pi_\delta \le \log(4\No+4)$.
    By \cite[Lemma 6.4]{massart2007concentration} or \cite[Theorem 3.1]{baraud2016bounding},
    % and Eq.~\eqref{proof:lemm:eq:bound.sigma}, \eqref{proof:lemm:eq:bound.N},
    \begin{align*}
        &\quad~\EE\left[\sup_{|\threshold - \trueThreshold| \le \delta} \left|
        \frac{1}{n} \sum_{i=1}^\No \1_{\{\testStatistics{i} \in [\threshold \wedge \trueThreshold, \threshold \vee \trueThreshold]\}} - \PP\left(\testStatistics{i} \in [\threshold \wedge \trueThreshold, \threshold \vee \trueThreshold]\right)
        \right|\right]\\
        &\le 2 \sigma_\delta \sqrt{\frac{2 \pi_\delta}{\No}} + \frac{8\pi_\delta}{\No}
        \le 4 \sqrt{\frac{\delta \pdfTestStatistics{\max} \log(4\No+4)}{\No}} + \frac{8\log(4\No+4)}{\No}.
    \end{align*}
    Take $C = 16$, Eq.~\eqref{proof:lemm:eq:M.estimator} is valid for $n \ge 5$.
\end{proof}

\begin{proof}[Proof of \Cref{prop:convergence.rate}]
By the proof of \Cref{lemm:consistency}, for $\No \wedge \NoNc$ large enough,  $\PP(|\EmpiricalThreshold - \trueThreshold| > D'/2) \le 1/\log(\No \wedge \NoNc)$ for the $D'$ in \Cref{lemm:risk.valley}.
Define the rate
$r_{\No \wedge\NoNc} = ({\No \wedge \NoNc}/\log({\No \wedge \NoNc}))^{-1/3}$, and shells
\begin{align*}
 S_{{\No \wedge\NoNc},j} = \left\{\tau: \frac{2^{j-1}}{r_{\No \wedge\NoNc}} \le  |\threshold - \trueThreshold| \le \frac{2^j}{r_{\No \wedge\NoNc}} \right\}, \quad j \ge 1.
\end{align*}
If the event $\{|\EmpiricalThreshold- \trueThreshold| \ge 2^t/r_{\No \wedge\NoNc}\}$ happens for some $t \ge 0$, then there exists $j \ge t$ such that $\EmpiricalThreshold \in S_{{\No \wedge\NoNc}, j}$. Therefore, for $\No \wedge \NoNc$ large enough,
\begin{align*}
    &\quad~\PP\left(
    |\EmpiricalThreshold- \trueThreshold| \ge \frac{2^t}{r_{\No \wedge\NoNc}}
    \right)\\
    &\le \PP\left(
    |\EmpiricalThreshold - \trueThreshold| \ge \frac{2^t}{r_{\No \wedge\NoNc}},  ~|\EmpiricalThreshold - \trueThreshold|\le D'/2
    \right) + \PP\left(
    |\EmpiricalThreshold- \trueThreshold| > D'/2
    \right) \\
    &\le \sum_{j \ge t, ~\frac{2^{j-1}}{r_{\No \wedge\NoNc}} \le D'/2} \PP\left(
    \EmpiricalThreshold \in S_{\No \wedge\NoNc,j}
    \right)  + \frac{1}{\log(\No \wedge\NoNc)} \\
    &\le \sum_{j \ge t, ~\frac{2^{j-1}}{r_{\No \wedge\NoNc}} \le D'/2} \PP\left(
    \exists~\threshold \in  S_{\No \wedge\NoNc,j},~ \EmpiricalBayesRisk(\threshold) < \EmpiricalBayesRisk(\trueThreshold)
    \right)  + \frac{1}{\log(\No \wedge\NoNc)}.
\end{align*}
Notice that if $\EmpiricalBayesRisk(\threshold) < \EmpiricalBayesRisk(\trueThreshold)$
% for $\threshold \in S_{\No \wedge\NoNc,j}$ with $\frac{2^{j}}{r_{\No \wedge\NoNc}} \le D'$, then
and $|\threshold - \trueThreshold| \le D'$, by \Cref{lemm:risk.valley},
\begin{align*}
    \EmpiricalBayesRisk(\threshold) - \BayesRisk(\threshold) - (\EmpiricalBayesRisk(\trueThreshold) - \BayesRisk(\trueThreshold))
    &\le \BayesRisk(\trueThreshold) - \BayesRisk(\threshold)
    \le -C'(\threshold - \trueThreshold)^2
    \le  -C'\frac{4^{j-1}}{r_{\No \wedge\NoNc}^2}.
\end{align*}
Then by Markov inequality and \Cref{lemm:maximal.inequality},
\begin{align*}
    &~\quad\PP\left(
    \exists~\threshold \in  S_{\No \wedge\NoNc,j},~ \EmpiricalBayesRisk(\threshold) < \EmpiricalBayesRisk(\trueThreshold)
    \right)\\
    &\le \PP\left(\sup_{\threshold \in S_{\No \wedge\NoNc,j}}\left|\EmpiricalBayesRisk(\threshold) - \BayesRisk(\threshold) - (\EmpiricalBayesRisk(\trueThreshold) - \BayesRisk(\trueThreshold))\right| \ge C'\frac{4^{j-1}}{r_{\No \wedge\NoNc}^2} \right)\\
    &\le \frac{\EE\left[\sup_{\threshold \in S_{\No \wedge\NoNc,j}}\left|\EmpiricalBayesRisk(\threshold) - \BayesRisk(\threshold) - (\EmpiricalBayesRisk(\trueThreshold) - \BayesRisk(\trueThreshold))\right| \right]}{C'\frac{4^{j-1}}{r_{\No \wedge\NoNc}^2}}\\
    &\le \frac{C (1+\lambda) \left(\sqrt{\frac{(2^j/r_{\No \wedge \NoNc}) (\pdfTestStatistics{\max} \vee \pdfTestStatistics{0,\max})\log(\No \wedge \NoNc)}{\No \wedge \NoNc}} + \frac{\log(\No \wedge \NoNc)}{\No \wedge \NoNc}\right)}{C'\frac{4^{j-1}}{r_{\No \wedge\NoNc}^2}}\\
    &\le \frac{8C(1+\lambda) \sqrt{\pdfTestStatistics{\max} \vee \pdfTestStatistics{0,\max} \vee 1})}{C'} \cdot \sqrt{\frac{\log(\No \wedge \NoNc)}{\No \wedge \NoNc}} \cdot r_{\No \wedge\NoNc}^{3/2} \cdot  2^{-3j/2} \\
    &= \frac{8C(1+\lambda) \sqrt{\pdfTestStatistics{\max} \vee \pdfTestStatistics{0,\max} \vee 1})}{C'} \cdot 2^{-3j/2},
\end{align*}
where we use $\sqrt{\frac{(2^j/r_{\No \wedge\NoNc})\log(\No \wedge \NoNc)}{\No \wedge \NoNc}} \ge \frac{\log(\No \wedge \NoNc)}{\No \wedge \NoNc}$.
Finally, we sum over shells $S_{\No \wedge \NoNc, j}$,
\begin{align*}
    \PP\left(|\EmpiricalThreshold- \trueThreshold| \ge \frac{2^t}{r_{\No \wedge\NoNc}}\right)
    &\le \frac{8C(1+\lambda) \sqrt{\pdfTestStatistics{\max} \vee \pdfTestStatistics{0,\max} \vee 1})}{C'}  \sum_{j \ge t}^\infty  2^{-3j/2} + \frac{1}{\log(\No \wedge \NoNc)}\\
    &\le \frac{8C(1+\lambda) \sqrt{\pdfTestStatistics{\max} \vee \pdfTestStatistics{0,\max} \vee 1})}{C'} \cdot \frac{2^{-3t/2}}{1-2^{-3/2}} + \frac{1}{\log(\No \wedge \NoNc)}.
\end{align*}
Take $2^{3t/2} = \log(\No \wedge \NoNc)$,
\begin{align*}
    \PP\left(|\EmpiricalThreshold- \trueThreshold| \ge \frac{2^t}{r_{\No \wedge\NoNc}}\right)
    = \PP\left(|\EmpiricalThreshold- \trueThreshold| \ge \frac{\log(\No \wedge\NoNc)}{(\No \wedge\NoNc)^{1/3}}\right)
    \le \frac{C^*}{\log(\No \wedge\NoNc)},
\end{align*}
where we let $C^* =  \frac{8C(1+\lambda) \sqrt{\pdfTestStatistics{\max} \vee \pdfTestStatistics{0,\max} \vee 1})}{(1-2^{-3/2})C'} + 1$.
\end{proof}

% uniform
\begin{proposition}\label{prop:convergence.rate.uniform}
   Let $\Lambda = (a, b) \subseteq (0,\probNull)$ and $\calT =
   \{\trueThreshold: \lambda \in \Lambda\}$. Suppose there exists
   a neighbourhood $\calT_\Delta = (\inf {\calT} - \Delta, \sup
   {\calT} + \Delta)$ for some $\Delta > 0$ and some $0 <
   \pdfTestStatistics{}_{\min} < \pdfTestStatistics{}_{\max} < \infty$
   such that
   \begin{enumerate}
        \item \label{prop:assu:Bayes.risk.uniform}  $\inf_{\lambda \in
            \Lambda}\inf_{\threshold \notin
            \calT_\Delta}(\cdfTestStatisticsNull(\threshold) - \lambda
          \cdfTestStatistics{}(\threshold)) -
          (\cdfTestStatisticsNull(\trueThreshold) - \lambda
          \cdfTestStatistics{}(\trueThreshold)) > 0$;
        \item \label{prop:assu:density.upper.bound.uniform}
          $0 < \pdfTestStatistics{\min} \leq \pdfTestStatistics{}(t),\pdfTestStatistics{0}(t) \leq  \pdfTestStatistics{\max} < \infty$ for
          all $\threshold \in \calT_{\Delta}$;
        % \item \label{prop:assu:density.derivative}
        % $f'$, $\pdfTestStatisticsNull'$ exist on $\calT$;
        \item \label{prop:assu:MLR.uniform}
           $\pdfTestStatisticsNull'$, $f'$, $(\pdfTestStatisticsNull/\pdfTestStatistics{})'$ exist and $0 < f_{\min}'
     \le \pdfTestStatisticsNull'(\threshold), \pdfTestStatistics{}'(\threshold), (\pdfTestStatisticsNull/\pdfTestStatistics{})'(\threshold) \le f_{\max}' < \infty$  for
          all $\threshold \in \calT_{\Delta}$.
    \end{enumerate}
    Then there exists a constant $C^* > 0$ such that with probability at least $1 - C^*/\log(\No \wedge \NoNc)$,
    \begin{align}\label{eq:prop:convergence.rate}
        \sup_{\lambda \in \Lambda}\left|\EmpiricalThreshold - \trueThreshold\right|
        \le \frac{\log(\No \wedge \NoNc)}{(\No \wedge \NoNc)^{1/3}}, \quad \text{for all} \No \wedge \NoNc \ge 5.
    \end{align}
\end{proposition}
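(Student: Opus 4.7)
The plan is to revisit the shell-peeling argument used to prove \Cref{prop:convergence.rate} and verify that every ingredient can be made uniform in $\lambda \in \Lambda$. The key structural observation is that the empirical CDFs $\EmpiricalCdfTestStatisticsNull$ and $\EmpiricalCdfTestStatistics$ do not depend on $\lambda$, so
\begin{align*}
    \left(\EmpiricalBayesRiskLambda(\threshold) - \BayesRiskLambda(\threshold)\right) - \left(\EmpiricalBayesRiskLambda(\trueThreshold) - \BayesRiskLambda(\trueThreshold)\right) = U_n(\threshold, \trueThreshold) - \lambda\, V_n(\threshold, \trueThreshold),
\end{align*}
where $U_n(s,t) := (\EmpiricalCdfTestStatisticsNull - \cdfTestStatisticsNull)(s) - (\EmpiricalCdfTestStatisticsNull - \cdfTestStatisticsNull)(t)$ and $V_n$ is defined analogously, is a $\lambda$-free empirical process evaluated at a $\lambda$-dependent reference point. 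Since $|\lambda| \le 1$, a single uniform bound on $|U_n(s,t)| + |V_n(s,t)|$ over pairs $(s,t) \in \calT_\Delta^2$ with $|s - t| \le \delta$ will control fluctuations across all $\lambda \in \Lambda$ simultaneously.

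First I would upgrade \Cref{lemm:risk.valley} to a uniform quadratic lower bound. Using the first-order condition $\lambda = (\pdfTestStatisticsNull / \pdfTestStatistics{})(\trueThreshold)$, one rewrites
\begin{align*}
    \BayesRiskLambda''(\xi) = \pdfTestStatistics{}(\xi)\, (\pdfTestStatisticsNull/\pdfTestStatistics{})'(\xi) + \pdfTestStatistics{}'(\xi)\left[(\pdfTestStatisticsNull/\pdfTestStatistics{})(\xi) - (\pdfTestStatisticsNull/\pdfTestStatistics{})(\trueThreshold)\right].
\end{align*}
The first term is bounded below by $\pdfTestStatistics{\min}\, f_{\min}' > 0$ by assumptions \ref{prop:assu:density.upper.bound.uniform} and \ref{prop:assu:MLR.uniform}, while the second is $O(|\xi - \trueThreshold|)$ with slope at most $(f_{\max}')^2$, independent of $\lambda$. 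Hence one can choose $C', D' > 0$ depending only on the envelope constants such that the quadratic lower bound holds on $|\threshold - \trueThreshold| \le D'$ uniformly in $\lambda \in \Lambda$. A uniform version of \Cref{lemm:consistency} then follows from a single DKW bound on $\sup_\threshold (|\EmpiricalCdfTestStatisticsNull - \cdfTestStatisticsNull| + |\EmpiricalCdfTestStatistics - \cdfTestStatistics{}|)$ combined with the uniform separation in assumption~\ref{prop:assu:Bayes.risk.uniform}, giving $\PP(\sup_{\lambda \in \Lambda}|\EmpiricalThreshold - \trueThreshold| \le \Delta/2) \to 1$.

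The main technical step, which I expect to be the principal obstacle, is a pairwise version of \Cref{lemm:maximal.inequality}: rather than take the supremum over a shell around a fixed $\trueThreshold$, I would bound
\begin{align*}
    \EE\!\left[ \sup_{s,t \in \calT_\Delta,\, |s-t| \le \delta}\!\!\left|(\EmpiricalCdfTestStatistics - \cdfTestStatistics{})(s) - (\EmpiricalCdfTestStatistics - \cdfTestStatistics{})(t)\right| \right] \lesssim \sqrt{\frac{\delta\, \pdfTestStatistics{\max}\, \log \No}{\No}} + \frac{\log \No}{\No},
\end{align*}
and analogously for $\cdfTestStatisticsNull$. This should follow from the same Massart/Baraud inequality used in \Cref{lemm:maximal.inequality} applied to the indicator class $\{\1_{(a,b]} : a,b \in \calT_\Delta,\, |b-a| \le \delta\}$, which has VC dimension at most $2$, envelope with $L^2$-norm $\lesssim \sqrt{\delta\, \pdfTestStatistics{\max}}$ by assumption~\ref{prop:assu:density.upper.bound.uniform}, and at most $O(\No^2)$ distinct elements upon discretization so that the entropy factor is still $O(\log \No)$. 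With these three uniform ingredients, the shell argument of \Cref{prop:convergence.rate} transfers mechanically: for each $j \ge 1$, the union over $\lambda \in \Lambda$ of $\{\EmpiricalThreshold \in S_{\No \wedge \NoNc, j, \lambda}\}$ is contained in the single $\lambda$-free event $\{\sup_{|s-t| \le 2^j/r_{\No \wedge \NoNc}}(|U_n(s,t)| + |V_n(s,t)|) \ge C'\, 4^{j-1}/r_{\No \wedge \NoNc}^2\}$, whose probability decays as $2^{-3j/2}$ by Markov and the uniform maximal inequality. Summing over $j \ge t$ and setting $2^{3t/2} = \log(\No \wedge \NoNc)$ then yields the claimed rate.
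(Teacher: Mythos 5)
Your proposal is correct and follows essentially the same route as the paper's proof: the same second-derivative identity $\BayesRiskLambda''(\xi) = f(\xi)(\pdfTestStatisticsNull/\pdfTestStatistics{})'(\xi) + \pdfTestStatistics{}'(\xi)[(\pdfTestStatisticsNull/\pdfTestStatistics{})(\xi) - (\pdfTestStatisticsNull/\pdfTestStatistics{})(\trueThreshold)]$ yields the uniform quadratic lower bound with constants depending only on $f_{\min}, f_{\min}', f_{\max}'$, the same DKW-plus-separation argument gives uniform consistency, and the same observation that the relevant indicator class is contained in the class of interval indicators (with at most $O(\No^2)$ realized subsets, so the entropy factor stays $O(\log \No)$) gives the uniform maximal inequality feeding the shell argument. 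Your framing of the increment bound over $\lambda$-free pairs $(s,t)$ is just a cleaner restatement of what the paper does by indexing over $(\lambda,\threshold)$.
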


\begin{proof}[Proof of \Cref{prop:convergence.rate.uniform}]
We can follow the proof of \Cref{prop:convergence.rate} and substitute \Cref{lemm:risk.valley}, \Cref{lemm:consistency}, and \Cref{lemm:maximal.inequality} with their corresponding uniform results provided below.

Analogous to \Cref{lemm:risk.valley}, we have under Assumption~\ref{prop:assu:density.upper.bound.uniform} and Assumption~\ref{prop:assu:MLR.uniform}, there exists $C' > 0$, $D' > 0$ such that for any $\lambda \in \Lambda$, any $\threshold$ satisfying $\left|\threshold - \trueThreshold \right| \le D'$,
\begin{align}\label{eq:lemm:risk.valley.uniform}
    \BayesRiskLambda(\trueThreshold) \le \BayesRiskLambda(\threshold) - C' (\threshold - \trueThreshold)^2.
\end{align}
In fact, by Assumption~\ref{prop:assu:MLR.uniform}, we perform Taylor expansion of $\BayesRiskLambda(\threshold)$ at $\trueThreshold$. For $|\threshold - \trueThreshold| <  \Delta \wedge (f_{\min}' f_{\min}/2(f_{\max}')^2)$, \begin{align}\label{proof:lemm:eq:taylor.expansion.uniform}
        \BayesRiskLambda(\threshold)
        &= \BayesRiskLambda(\trueThreshold) + \BayesRiskLambda'(\trueThreshold) (\threshold - \trueThreshold) + \frac{\BayesRiskLambda''(\threshold')}{2}(\threshold - \trueThreshold)^2,
    \end{align}
    for some $\threshold' \in [t \wedge \trueThreshold, t \vee \trueThreshold]$.
    Since $\trueThreshold$ minimizes $\BayesRiskLambda(\threshold)$, by the KKT condition,    \begin{align}\label{proof:lemm:eq:first.order.derivative.uniform}
        \BayesRiskLambda'(\trueThreshold)
        &= \cdfTestStatisticsNull'(\trueThreshold) - \lambda \cdfTestStatistics{}'(\trueThreshold)
        = \pdfTestStatisticsNull(\trueThreshold) - \lambda \pdfTestStatistics{}(\trueThreshold)
        = 0.
    \end{align}
    By Assumption~\ref{prop:assu:MLR.uniform},
    \begin{align}\label{proof:lemm:eq:MLR.uniform}
        f_{\min}' < \left(\frac{\pdfTestStatisticsNull}{\pdfTestStatistics{}}\right)'(\threshold')
        = \frac{\pdfTestStatisticsNull'{}(\threshold') \pdfTestStatistics{}(\threshold') - \pdfTestStatisticsNull{}(\threshold') \pdfTestStatistics{}'(\threshold')}{\pdfTestStatistics{}^2(\threshold')}.
    \end{align}
    By Eq.~\eqref{proof:lemm:eq:first.order.derivative.uniform}, \eqref{proof:lemm:eq:MLR.uniform}, and Assumption~\ref{prop:assu:density.upper.bound.uniform}, Assumption~\ref{prop:assu:MLR.uniform},
    \begin{align}\label{proof:lemm:eq:second.order.derivative.uniform}
    \begin{split}
    \BayesRiskLambda''(\threshold')
    &= \pdfTestStatisticsNull'(\threshold') - \lambda \pdfTestStatistics{}'(\threshold')
    = \pdfTestStatisticsNull'(\threshold')
    - \frac{\pdfTestStatisticsNull(\trueThreshold)}{\pdfTestStatistics{}(\trueThreshold)} \pdfTestStatistics{}'(\threshold') \\
    &= \pdfTestStatisticsNull'(\threshold') - \frac{\pdfTestStatisticsNull(\threshold')}{\pdfTestStatistics{}(\threshold')} \pdfTestStatistics{}'(\threshold') + \frac{\pdfTestStatisticsNull(\threshold')}{\pdfTestStatistics{}(\threshold')} \pdfTestStatistics{}'(\threshold')
    - \frac{\pdfTestStatisticsNull(\trueThreshold)}{\pdfTestStatistics{}(\trueThreshold)} \pdfTestStatistics{}'(\threshold') \\
    &> \frac{\pdfTestStatisticsNull'(\threshold')
     \pdfTestStatistics{}(\threshold') - \pdfTestStatisticsNull(\threshold') \pdfTestStatistics{}'(\threshold')}{\pdfTestStatistics{}(\threshold')} - |\threshold' - \trueThreshold|\sup_{\threshold \in \calT_\Delta} \left(\frac{\pdfTestStatisticsNull}{\pdfTestStatistics{}}\right)'(\threshold) \sup_{\threshold \in \calT_\Delta} \pdfTestStatistics{}'(\threshold)\\
     &> f_{\min}' f_{\min} - |\threshold' - \trueThreshold| (f_{\max}')^2.
     \end{split}
    \end{align}
    Plug Eq.~\eqref{proof:lemm:eq:first.order.derivative.uniform} and \eqref{proof:lemm:eq:second.order.derivative.uniform} into Eq.~\eqref{proof:lemm:eq:taylor.expansion.uniform} and we have proved that Eq.~\eqref{eq:lemm:risk.valley.uniform} is valid for $C' =  f_{\min}' f_{\min}/4 > 0$ and $D' < \Delta \wedge (f_{\min}' f_{\min}/2(f_{\max}')^2)$.

Analogous to \Cref{lemm:consistency}, under the assumptions in \Cref{prop:convergence.rate.uniform}, $\sup_{\lambda \in \Lambda} |\EmpiricalThreshold - \trueThreshold| \to 0$ in probability as $\No$, $\NoNc \to \infty$.
    % By Assumption~\ref{prop:assu:Bayes.risk} and \Cref{lemm:risk.valley}, there exists $D'' > 0$ such that for any $\delta < D''$,
    % \begin{align*}
    %     \BayesRiskLambda(\trueThreshold) < \BayesRiskLambda(\threshold) - C' \delta^2, \quad |\threshold - \trueThreshold| \ge \delta.
    % \end{align*}
    % We prove the result for $\EmpiricalThreshold$, and the argument also applies to $\estimatedThreshold$.
    In fact, by Assumption~\ref{prop:assu:MLR.uniform}, $\BayesRiskLambda(\threshold)$ decreases on $(\inf \calT - \Delta, \trueThreshold]$ and increases on $[\trueThreshold, \sup \calT + \Delta)$.
    By Assumption~\ref{prop:assu:Bayes.risk.uniform} and \eqref{eq:lemm:risk.valley.uniform}, for any $0 < \delta < D'$, $\lambda \in \Lambda$, $\threshold - \trueThreshold > \delta$,
    \begin{align*}
        \BayesRiskLambda(\trueThreshold) - \BayesRiskLambda(\threshold)
        \le \BayesRiskLambda(\trueThreshold) - \BayesRiskLambda(\trueThreshold + \delta)
        \le -C'\delta^2.
    \end{align*}
    Similarly for $\threshold - \trueThreshold < -\delta$. The rest of the proof is similar to that of \Cref{lemm:consistency}.

Analogous to \Cref{lemm:maximal.inequality}, we have under Assumption~\ref{prop:assu:density.upper.bound.uniform},
there exists $C > 0$, $\delta > 0$, such that for $\No \wedge \NoNc \ge 5$,
    \begin{align*}
        \begin{split}
        &\quad~\EE\left[\sup_{\lambda \in \Lambda, |\threshold - \trueThreshold| \le \delta} \left|
        \left(\EmpiricalBayesRiskLambda(\threshold) - \BayesRiskLambda{}(\threshold)\right) -
        \left(\EmpiricalBayesRiskLambda(\trueThreshold) - \BayesRiskLambda{}(\trueThreshold)\right)
        \right|\right] \\
        &\le C (1+\lambda) \left(\sqrt{\frac{\delta \pdfTestStatistics{\max} \log(\No \wedge \NoNc)}{\No \wedge \NoNc}} + \frac{\log(\No \wedge \NoNc)}{\No \wedge \NoNc}\right).
        \end{split}
    \end{align*}
% \begin{proof}[Proof of \Cref{lemm:maximal.inequality.uniform}]
% Notice that $\BayesRisk(\threshold) = \cdfTestStatisticsNull(\threshold) - \lambda \cdfTestStatistics{}(\threshold)$,
% We prove the result for $\EmpiricalBayesRiskLambda(\trueThreshold)$ and $\No \wedge \NoNc$, and the argument also applies to $\EmpiricalBayesRiskLambdaSecond(\trueThreshold)$ and $\No$.
In fact, the outline of the proof is similar to that of \Cref{lemm:maximal.inequality}.
    It is left to characterize the complexity of the function class $\left\{\1_{\{\testStatistics{i} \in [\threshold \wedge \trueThreshold, \threshold \vee \trueThreshold]\}}, \lambda \in \Lambda, |t - \trueThreshold| < \delta\right\}$. We define
        \begin{align*}
        &\sigma_\delta
        := \sup_{\lambda \in \Lambda, |\threshold - \trueThreshold| \le \delta} \sqrt{\var\left(\1_{\{\testStatistics{i} \in [\threshold \wedge \trueThreshold, \threshold \vee \trueThreshold]\}}\right)},\\
        N_\delta :=& \left\{\left\{i: \1_{\{\testStatistics{i} \in [\threshold \wedge \trueThreshold, \threshold \vee \trueThreshold]\}}\right\}: ~\lambda \in \Lambda, ~|\threshold - \trueThreshold| < \delta \right\}, \quad
        \pi_\delta
        := \EE \left[\log(2 |N_\delta|)\right].
    \end{align*}
    We bound $\sigma_\delta$ and $ \pi_\delta $ separately. For $\sigma_\delta$, by Assumption~\ref{prop:assu:density.upper.bound.uniform},
    \begin{align*}
        \sigma_\delta
        \le \sqrt{\PP\left(\testStatistics{i} \in [\trueThreshold - \delta, \trueThreshold + \delta]\right)}
        \le \sqrt{2 \delta \pdfTestStatistics{\max}}.
    \end{align*}
    For $\pi_\delta$, notice that $\{[\threshold \wedge \trueThreshold, \threshold \vee \trueThreshold]: \lambda \in \Lambda, |\threshold - \trueThreshold| < \delta\}$ is a subset of all intervals of $\RR$, then
    \begin{align*}
         \left|\left\{\left\{i: \1_{\{\testStatistics{i} \in [\threshold \wedge \trueThreshold, \threshold \vee \trueThreshold]\}}\right\}: ~\lambda \in \Lambda, ~|\threshold - \trueThreshold| < \delta \right\}\right|
         \le 1 + \binom{\No+1}{2}
         \le \No^2,
    \end{align*}
    for $\No \ge 5$, and immediately $\pi_\delta \le \log(2\No^2)$. % $\No \ge 2$
    % Again by \cite[Lemma 6.4]{massart2007concentration} or \cite[Theorem 3.1]{baraud2016bounding}, we have that \eqref{lemm:eq:M.estimator} is true.
\end{proof}

\begin{remark}\label{rmk:non.monotone}
    In the cases with non-monotone likelihood ratios, we identify the test
    statistic values $t$ that minimize the objective function
    $\EmpiricalCdfTestStatisticsNull(\threshold) - \lambda
\EmpiricalCdfTestStatistics(\threshold)$ within a neighborhood $[t -h_n, t+h_n]$, where $h_n$ denotes a vanishing bandwidth. If there are a finite number of test statistic values $\tau_{\lambda,l}^*$ such that $\localFDR(\tau_{\lambda,l}^*) = q$, an analogous convergence result to \Cref{prop:convergence.rate} can be obtained.
\end{remark}

\section{Counter-examples}\label{sec:example}

\subsection{Stochastic dominance and \PRDS}\label{sec:example.PRDS}

We provide an example to show that condition~\ref{PRDS:assu:null.nc.identical} in
\Cref{prop:PRDS} cannot be relaxed to the
stochastic dominance
condition~\ref{vali:assu:null.nc.conservative.general} in
\Cref{prop:pvalue.validity}.
Consider two true null test statistics $\testStatistics{1}$, $\testStatistics{2} \sim U[0,1]$, two internal negative control test statistics $\testStatistics{3}$, $\testStatistics{4} \sim \text{Beta}(1,2)$, and assume all variables are independent.
Then $\testStatistics{1}$, $\testStatistics{2}$ uniformly stochastically dominate $\testStatistics{3}$, $\testStatistics{4}$, but
$(\pval{i})_{1 \le i \le 2}$ is not PRD because
$\PP\left(\pval{2} = 1 \mid \pval{1} = \frac{1}{3}\right)
= \frac{4}{9}
> \frac{5}{12}
= \PP\left(\pval{2} = 1 \mid \pval{1} = \frac{2}{3}\right)$.

\subsection{Fisher's method and permutation test}\label{sec:example.fisher}

We show directly combining Fisher's method with \nickname~p-values fails to control the type I error.
In \Cref{fig:fisher.test} panel (a), the theoretical null distribution $\chi_{2 \No}^2$ assuming independent p-values is inappropriate.
In contrast, the simulated null based on $1000$ permutations is
reasonably accurate (\Cref{fig:fisher.test} panel (b)).

\begin{figure}[tbp]
    \centering
    \begin{minipage}{8cm}
    \centering
\includegraphics[clip, trim = 0cm 0cm 0cm 0cm, width  = 7cm]{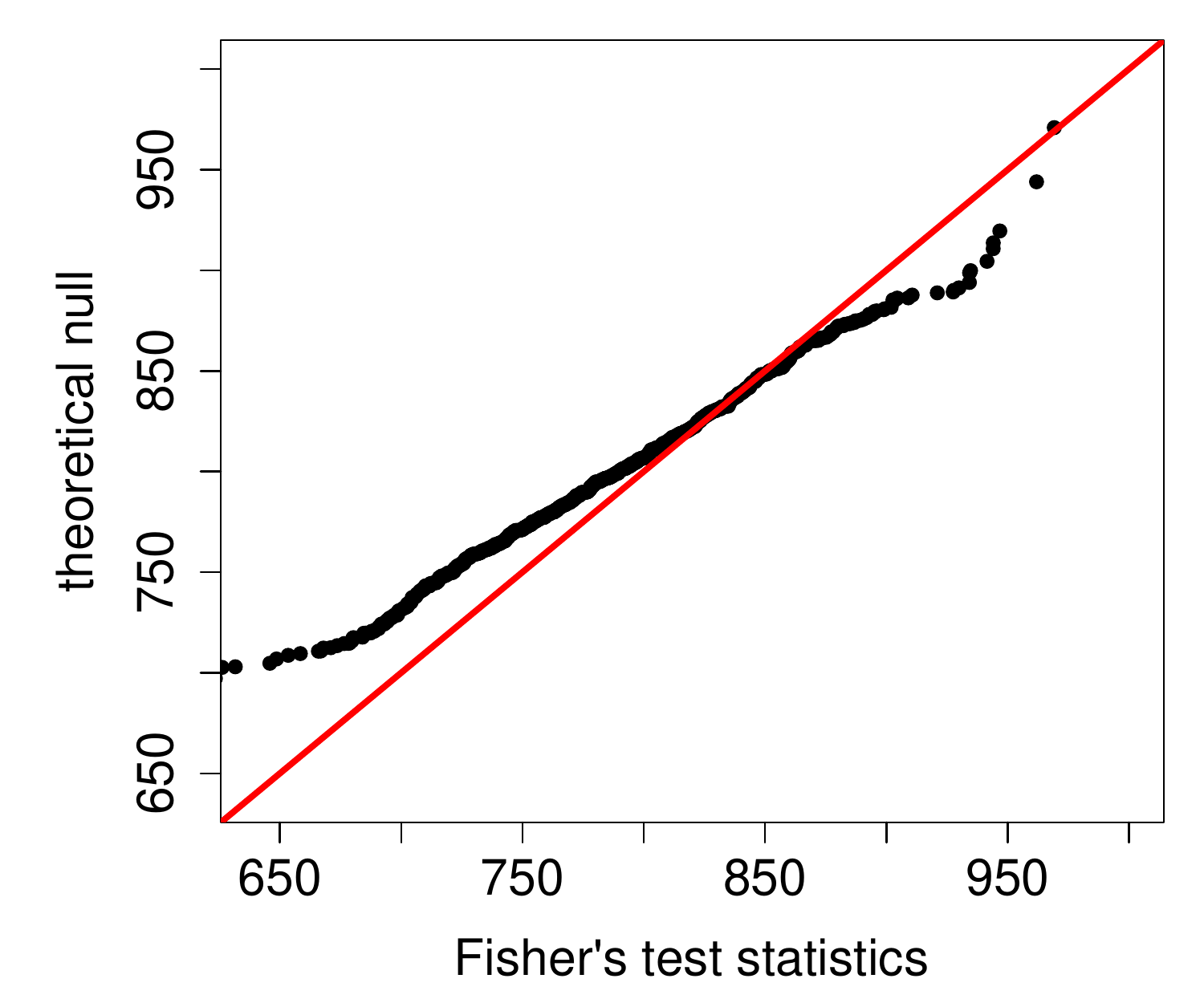}
    \subcaption{theoretical null}
    \end{minipage}
    \begin{minipage}{8cm}
    \centering
\includegraphics[clip, trim = 0cm 0cm 0cm 0cm, width  = 7cm]{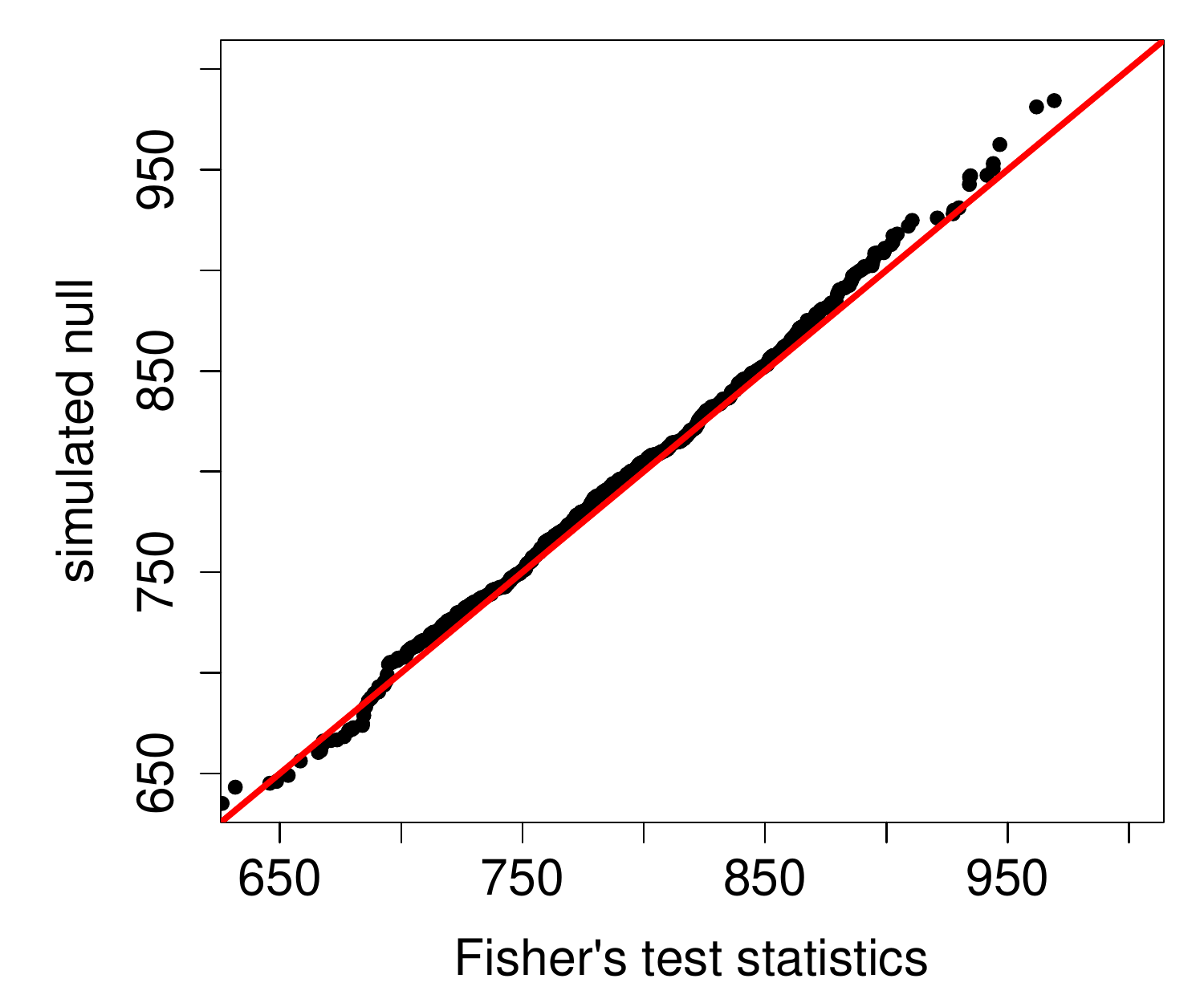}
    \subcaption{simulated null}
    \end{minipage}
    \caption{{Combining Fisher's method with \nickname~p-values.
    We generate $400$ test statistics for investigation and $400$ internal negative control test statistics independently from $\mathcal{N}(0,1)$, compute the \nickname~p-values and further the Fisher's test statistic.
    % To simulate the null distribution, we generate $400$ test statistics for investigation and $400$ internal negative control test statistics following i.i.d. $U[0,1]$, compute the test statistics of Fisher's method, and repeat $1000$ times.
    Panel (a) displays the quantile-quantile plot based on the theoretical null $\chi^2_{2\No}$ and panel (b) depicts the simulated null ($10^3$ permutations).
}}
    \label{fig:fisher.test}
\end{figure}

\subsection{\textcite[thm.\
  2]{bates21_testin_outlier_with_confor_p_values}}
\label{sec:bates-typo}

Theorem 2 in \cite{bates21_testin_outlier_with_confor_p_values} states that if $(\testStatistics{i})_{i \in \nullHypothesisIndex}$ is jointly independent and $(\testStatistics{i})_{i \in \nullHypothesisIndex} \independent (\testStatistics{j})_{j \in \hypothesisIndex{\text{nc}}}$, then $(\pval{i})_{i \in \hypothesisIndex{}}$ is \PRDS~on $(\pval{i})_{i \in \nullHypothesisIndex}$.
Surprisingly, this theorem does not impose any assumptions on
$(\testStatistics{i})_{i \in \hypothesisIndex{1}}$ (what
\textcite{bates21_testin_outlier_with_confor_p_values} call outliers). The
next simple counter-example shows that some assumptions on
$(\testStatistics{i})_{i \in \hypothesisIndex{1}}$ are indeed
necessary.   % with $\No = 2$ test statistics under investigation and
% $\NoNc = 1$ internal negative control.
In this example, $\nullHypothesisIndex = \{1\}$, $ \hypothesisIndex{1} =
\{2\}$, $\hypothesisIndex{\text{nc}} = \{3\}$,
$\testStatistics{1}$, $\testStatistics{3}$ are i.i.d.\ $U(0,1)$,
and $\testStatistics{2} = \1_{\{\testStatistics{1} < 0.5\}}$. For the
increasing set $\calD = \{\pvalAll: \pval{2} > 0.5\}$, we have
\begin{align*}
    \PP(\pvalAll \in \calD \mid \pval{1} = 0.5)
    &= \PP(\testStatistics{1} < 0.5 \mid \testStatistics{1} <
      \testStatistics{3}) = 2 \PP(\testStatistics{1} < 0.5, \testStatistics{1} < \testStatistics{3})
    = \frac{3}{4}, \\
    \PP(\pvalAll \in \calD \mid \pval{1} = 1)
    &= \PP(\testStatistics{1} < 0.5 \mid \testStatistics{1} \ge
      \testStatistics{3}) = 2 \PP(\testStatistics{1} < 0.5, \testStatistics{1} \ge \testStatistics{3})
    = \frac{1}{4}.
\end{align*}
Thus, the desired \PRDS~property is not true in this case.

A closer examination of Theorem 2 in
\cite{bates21_testin_outlier_with_confor_p_values} shows that a key
step in their proof (reproduced below using our notation) directly drops the
conditioning event $p_i = p$ without any justification:
\begin{align*}
    \PP\left(\pvalAll \in \calD \mid \pval{i} = p \right)
    = \EE_{\boldsymbol{Z}| \pval{i} = p}\left[ \PP\left(\pvalAll \in \calD \mid \boldsymbol{Z} \right)\right],
\end{align*}
where $\boldsymbol{Z}$ is the set of order statistics of
$(\testStatistics{j})_{j \in\hypothesisIndex{ \text{nc}}}$. This step
can be rectified by assuming that $(\testStatistics{i})_{i \in
  \hypothesisIndex{1}}$ is mutually independent and is independent of
the remaining test statistics.

\section{Additional numerical results}\label{sec:additional.simulation}

We report in \Cref{fig:nc.sample.size} the power of
\BH~\nickname~for $\No \in \{100, 200\}$, $\NoNonNull
\in \{10, 20\}$, $q \in \{0.1, 0.2\}$, and $\NoNc \in \{50, 100, 150,
\dotsc, 500\}$. We adopt the setting of independent p-values with
exact baseline p-values in \Cref{sec:simulation}. We do not report
the power of \BH, as it is numerically identical to \BH~oracle in this
case.

\begin{figure}[tbp]
    \centering
    \begin{minipage}{6.7cm}
    \centering
\includegraphics[clip, trim = 0cm 0cm 0cm 1cm, width = 5.6cm]{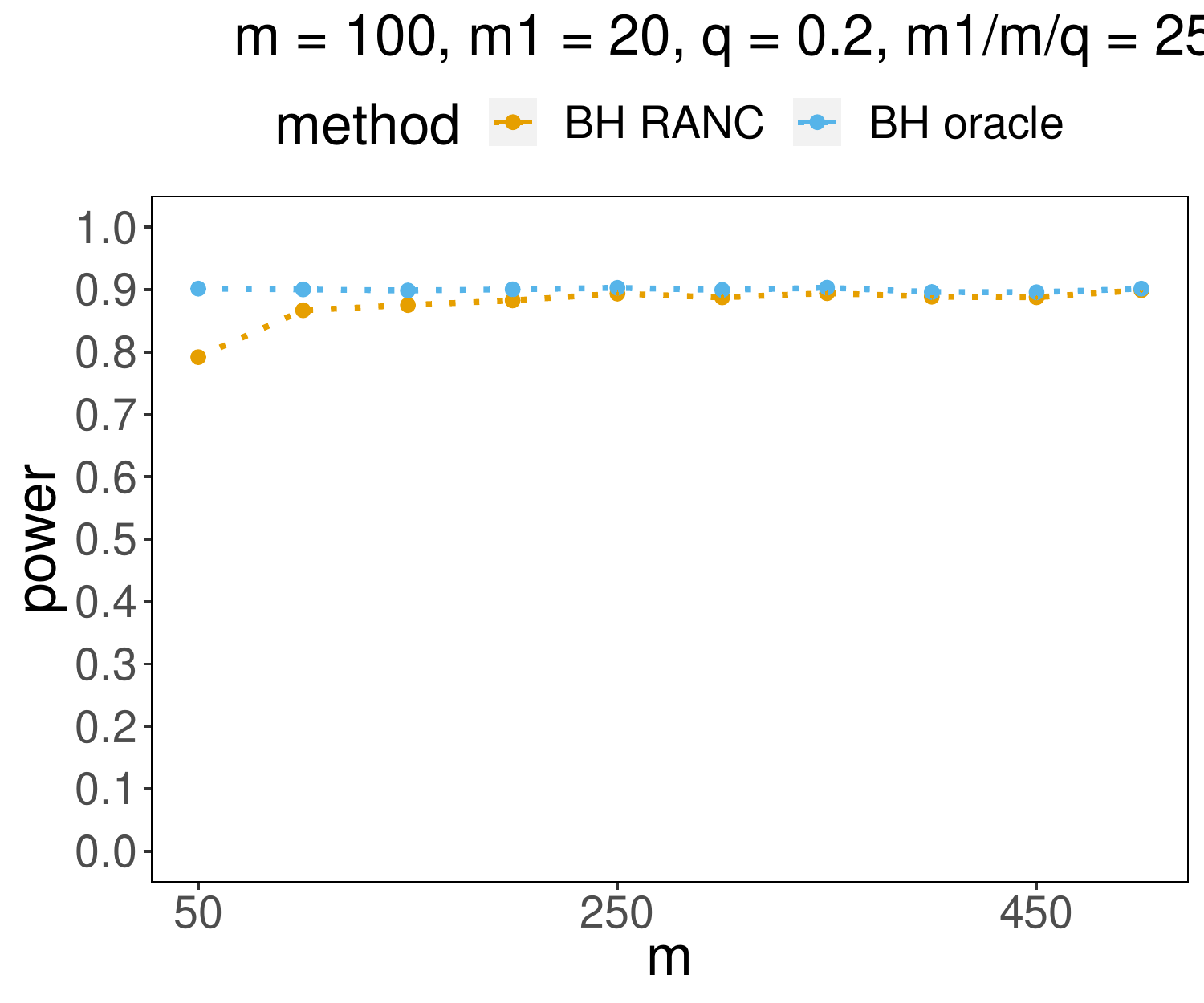}
    \subcaption{$\No = 100, \NoNonNull = 20, q = 0.2, \No/(\NoNonNull q) = 25$}
    \end{minipage}
        \begin{minipage}{6.7cm}
    \centering
\includegraphics[clip, trim = 0cm 0cm 0cm 1cm, width = 5.6cm]{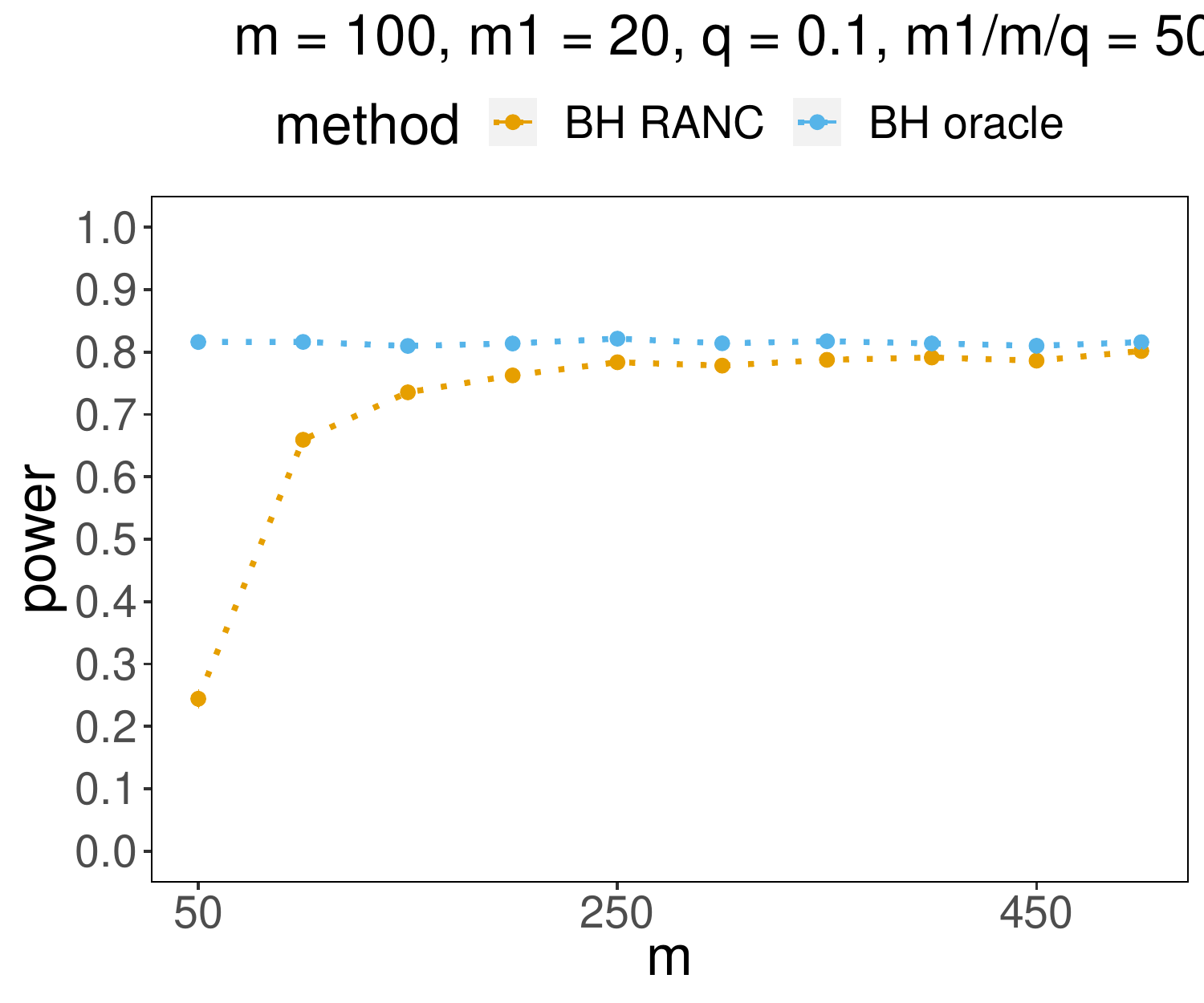}
    \subcaption{$\No = 100, \NoNonNull = 20, q = 0.1, \No/(\NoNonNull q) = 50$}
    \end{minipage}
    \begin{minipage}{6.7cm}
    \centering
\includegraphics[clip, trim = 0cm 0cm 0cm 1cm, width = 5.6cm]{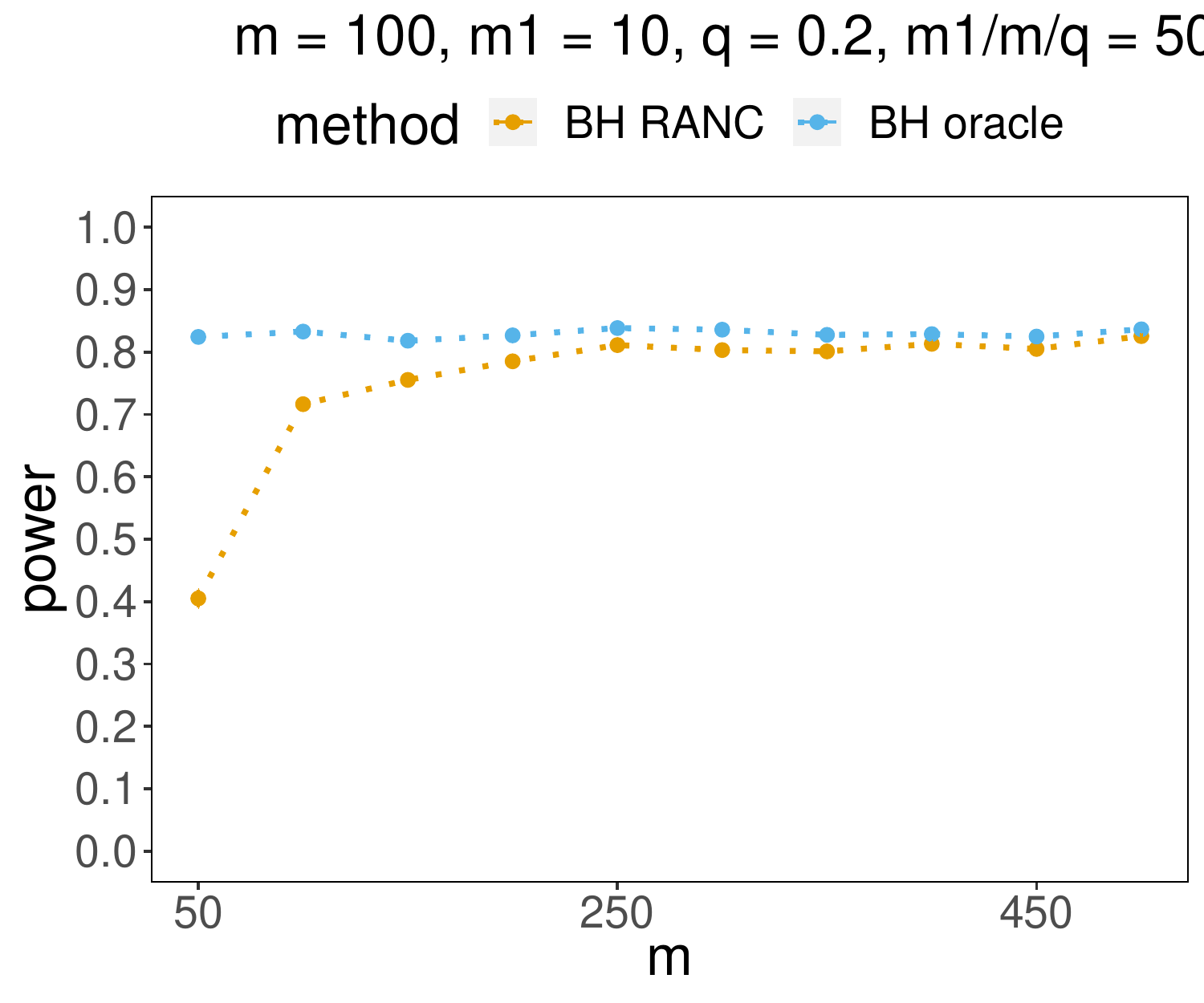}
    \subcaption{$\No = 100, \NoNonNull = 10, q = 0.2, \No/(\NoNonNull q) = 50$}
    \end{minipage}
            \begin{minipage}{6.7cm}
    \centering
\includegraphics[clip, trim = 0cm 0cm 0cm 1cm, width = 5.6cm]{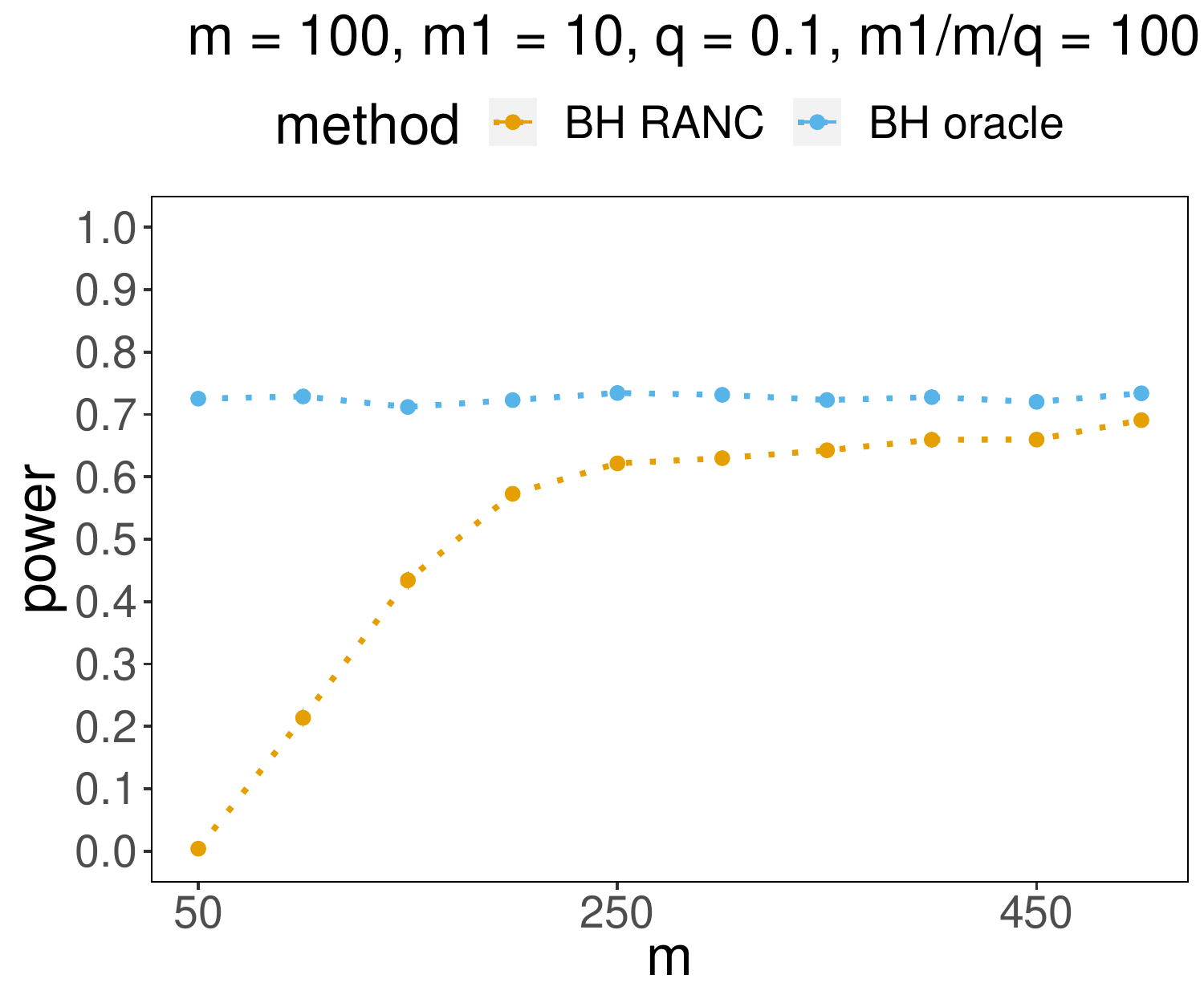}
    \subcaption{$\No = 100, \NoNonNull = 10, q = 0.1, \No/(\NoNonNull q) = 100$}
    \end{minipage}
    \begin{minipage}{6.7cm}
    \centering
\includegraphics[clip, trim = 0cm 0cm 0cm 1cm, width = 5.6cm]{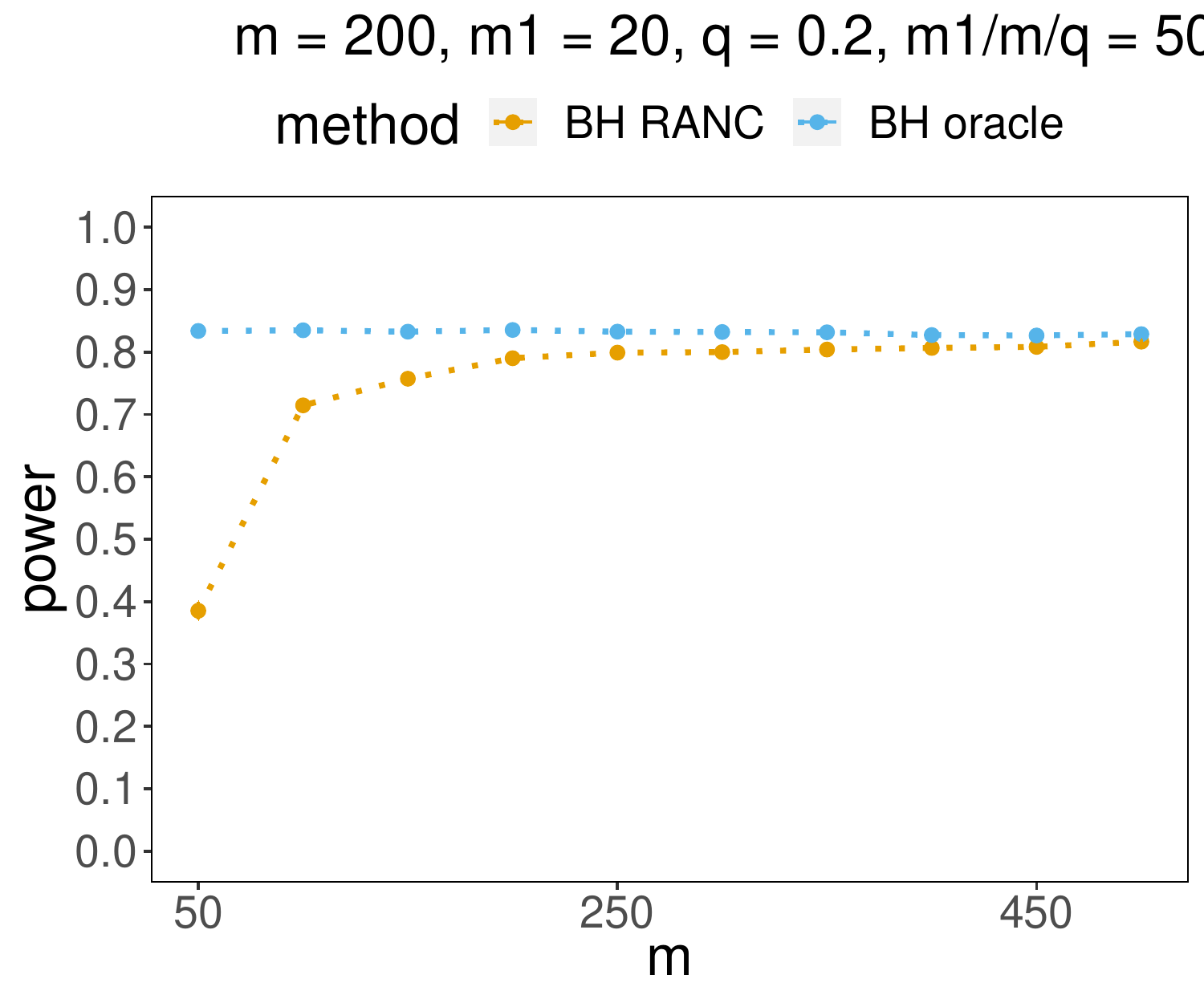}
    \subcaption{$\No = 200, \NoNonNull = 20, q = 0.2, \No/(\NoNonNull q) = 50$}
    \end{minipage}
            \begin{minipage}{6.7cm}
    \centering
\includegraphics[clip, trim = 0cm 0cm 0cm 1cm, width = 5.6cm]{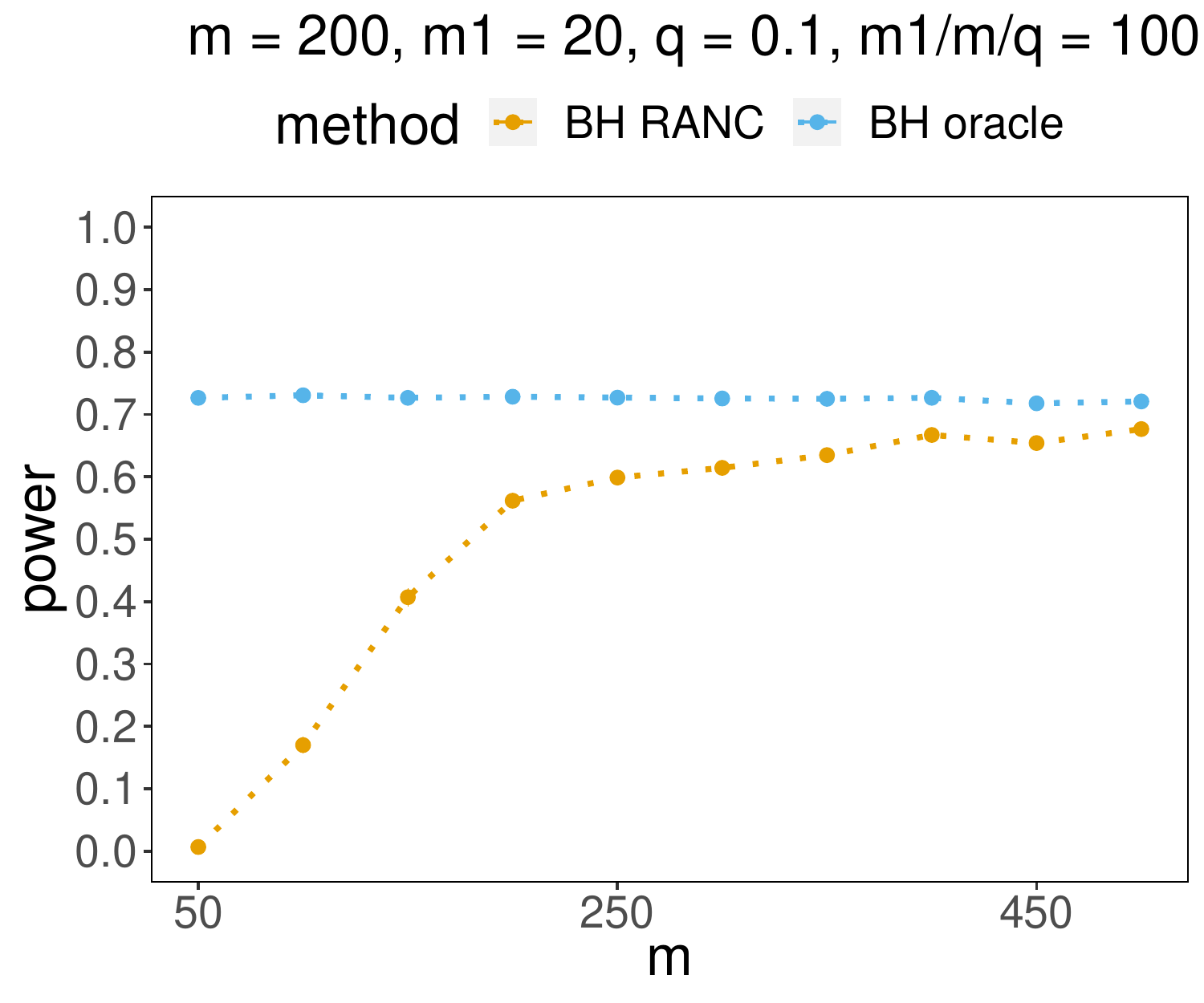}
    \subcaption{$\No = 200, \NoNonNull = 20, q = 0.1, \No/(\NoNonNull q) = 100$}
    \end{minipage}
    \begin{minipage}{6.7cm}
    \centering
\includegraphics[clip, trim = 0cm 0cm 0cm 1cm, width = 5.6cm]{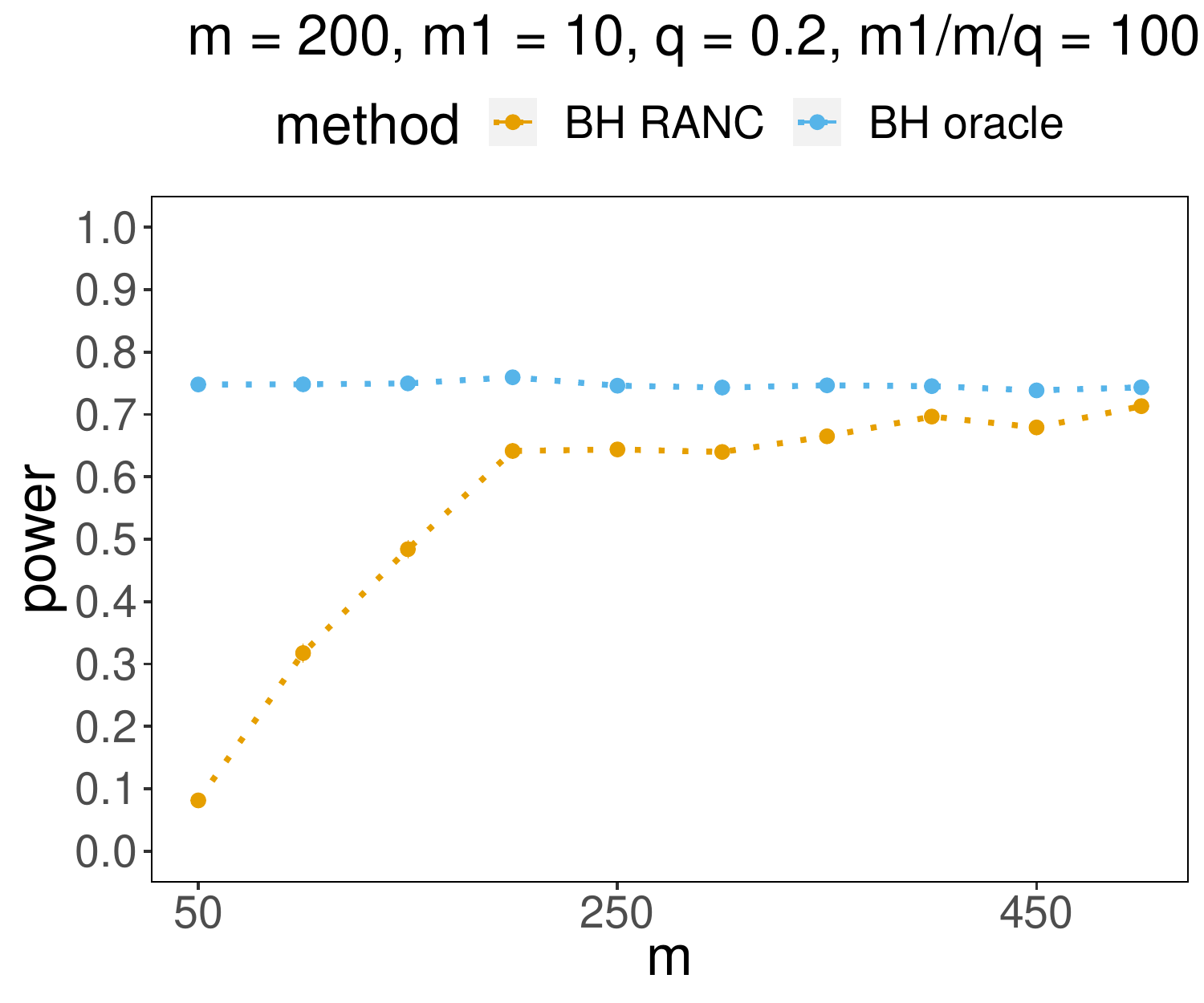}
    \subcaption{$\No = 200, \NoNonNull = 10, q = 0.2, \No/(\NoNonNull q) = 100$}
\end{minipage}
    \begin{minipage}{6.7cm}
    \centering
\includegraphics[clip, trim = 0cm 0cm 0cm 1cm, width = 5.6cm]{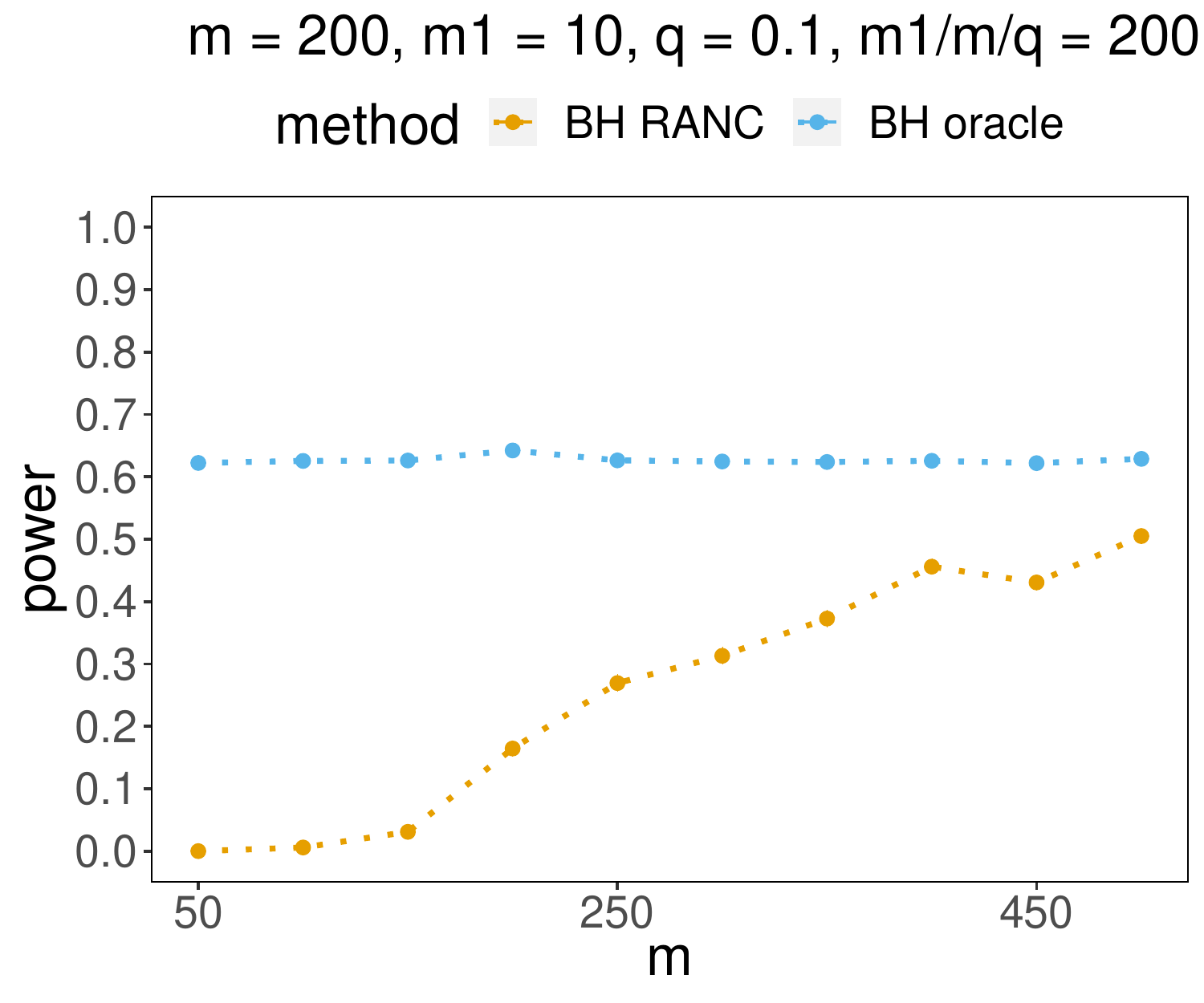}
    \subcaption{$\No = 200, \NoNonNull = 10, q = 0.1, \No/(\NoNonNull q) = 200$}
\end{minipage}
    \caption{{Power analysis of the \BH~\nickname~with different numbers of internal negative control units. We vary $\No \in \{100, 200\}$, $\NoNonNull  \in \{10, 20\}$, and $q \in \{0.1, 0.2\}$. The number of internal negative controls varies from $50$ to $500$. The results are aggregated over $10^3$ trials. }}
    \label{fig:nc.sample.size}
  \end{figure}

Next, we carry out this power analysis with weaker non-nulls.
In particular, we adopt the setting of independent p-values with exact
baseline p-values in \Cref{sec:simulation} but set the marginal
distribution of the non-nulls to $\Phi(Z - 2)$. We vary $\No \in
\{100, 200\}$, $\NoNonNull  \in \{10, 20\}$, and $q \in \{0.1,
0.2\}$. We increase the internal negative control sample size from
$50$ to $1000$. The results are reported in
\Cref{fig:nc.sample.size.threshold.weak}. In this setting, we find
that the \BH~\nickname~requires $\NoNc \ge
5 \No/(\FDRLevel \NoNonNull)$ internal negative controls to achieve
comparable power as \BH~oracle.

\begin{figure}[tbp]
    \centering
    \begin{minipage}{6.7cm}
    \centering
\includegraphics[clip, trim = 0cm 0cm 0cm 1cm, width = 5.6cm]{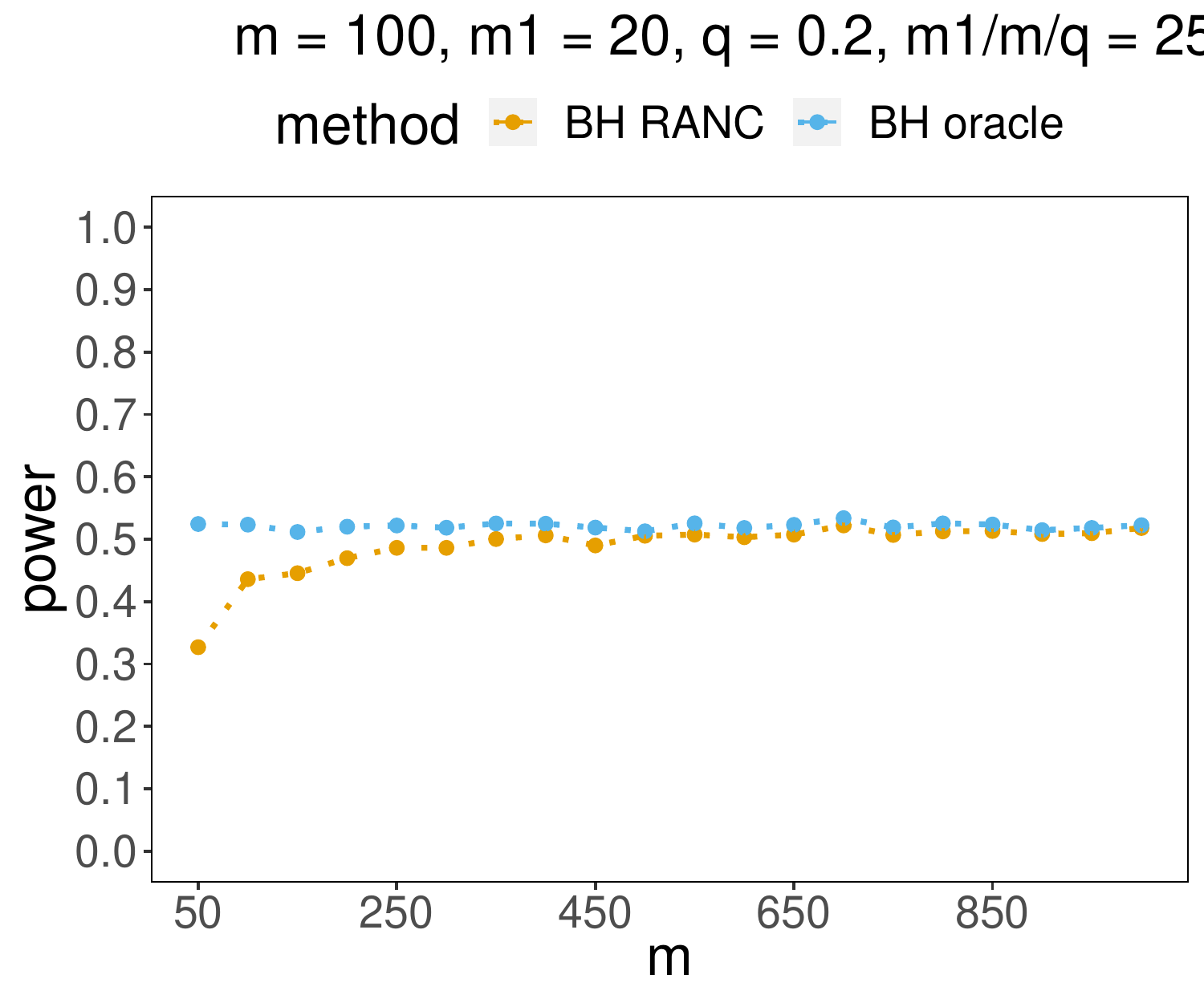}
    \subcaption{$\No = 100, \NoNonNull = 20, q = 0.2, \frac{\No}{\NoNonNull q} = 25$}
    \end{minipage}
    \begin{minipage}{6.7cm}
    \centering
\includegraphics[clip, trim = 0cm 0cm 0cm 1cm, width = 5.6cm]{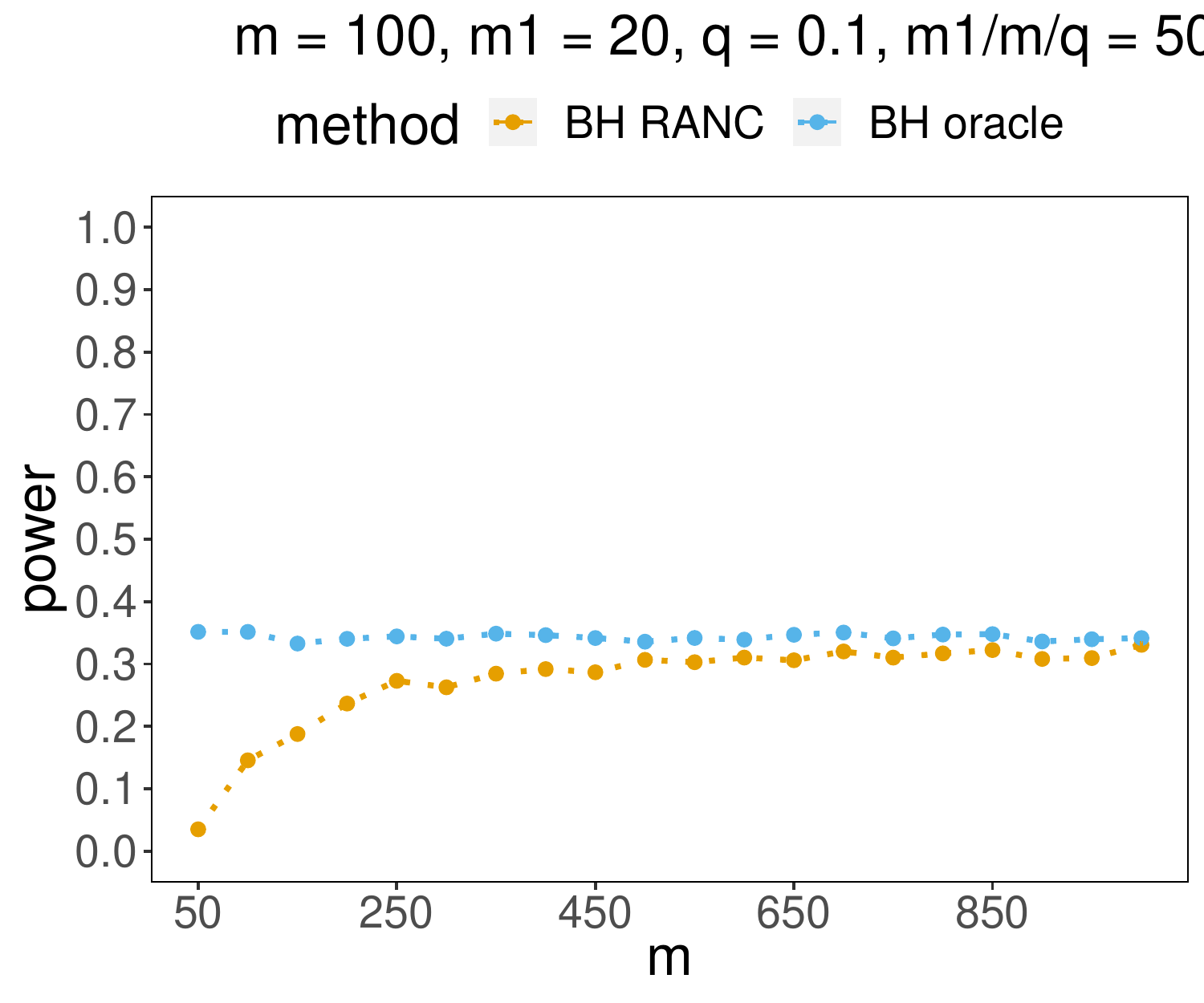}
    \subcaption{$\No = 100, \NoNonNull = 20, q = 0.1, \frac{\No}{\NoNonNull q} = 50$}
    \end{minipage}
    \begin{minipage}{6.7cm}
    \centering
\includegraphics[clip, trim = 0cm 0cm 0cm 1cm, width = 5.6cm]{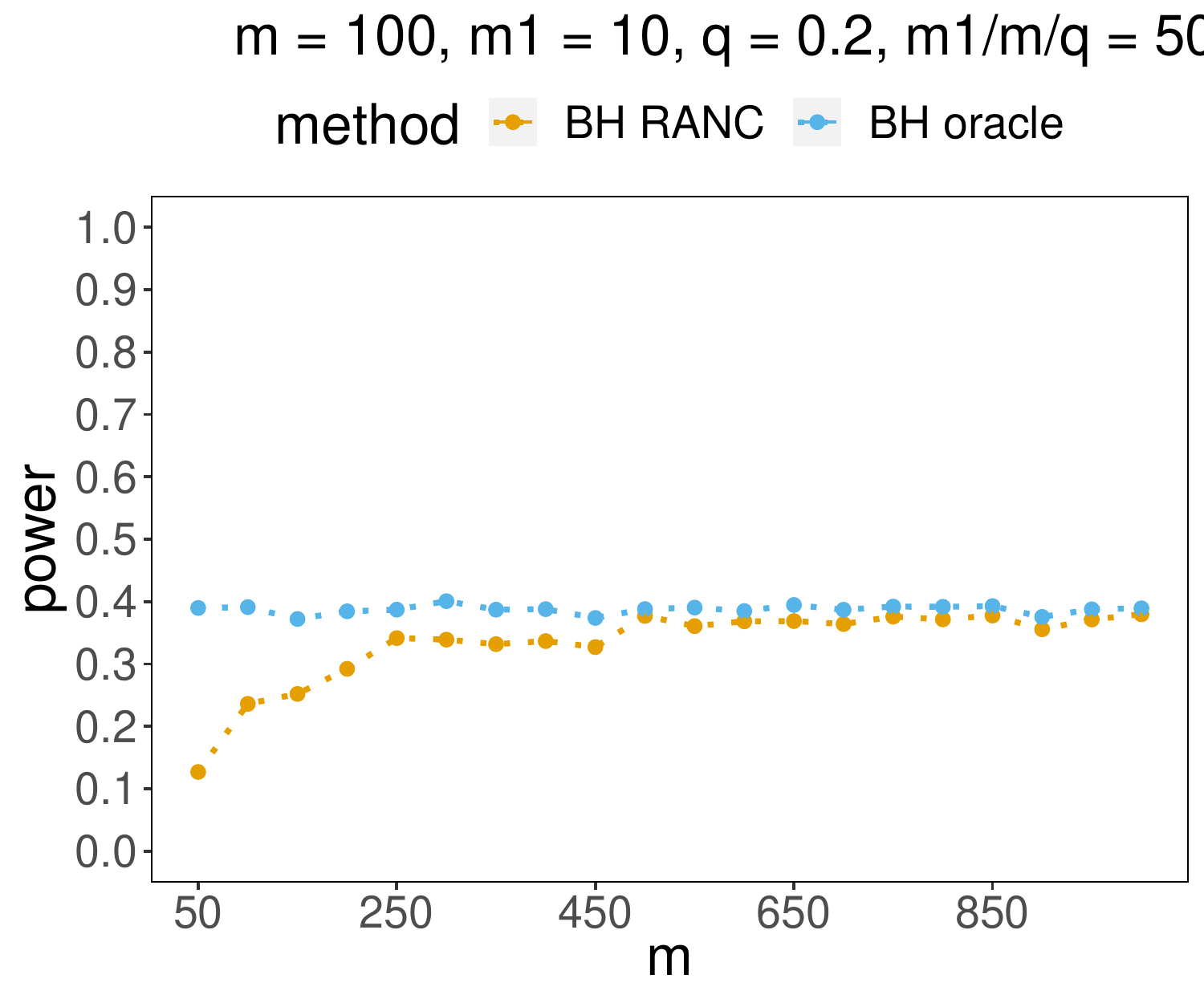}
    \subcaption{$\No = 100, \NoNonNull = 10, q = 0.2, \frac{\No}{\NoNonNull q} = 50$}
    \end{minipage}
    \begin{minipage}{6.7cm}
    \centering
\includegraphics[clip, trim = 0cm 0cm 0cm 1cm, width = 5.6cm]{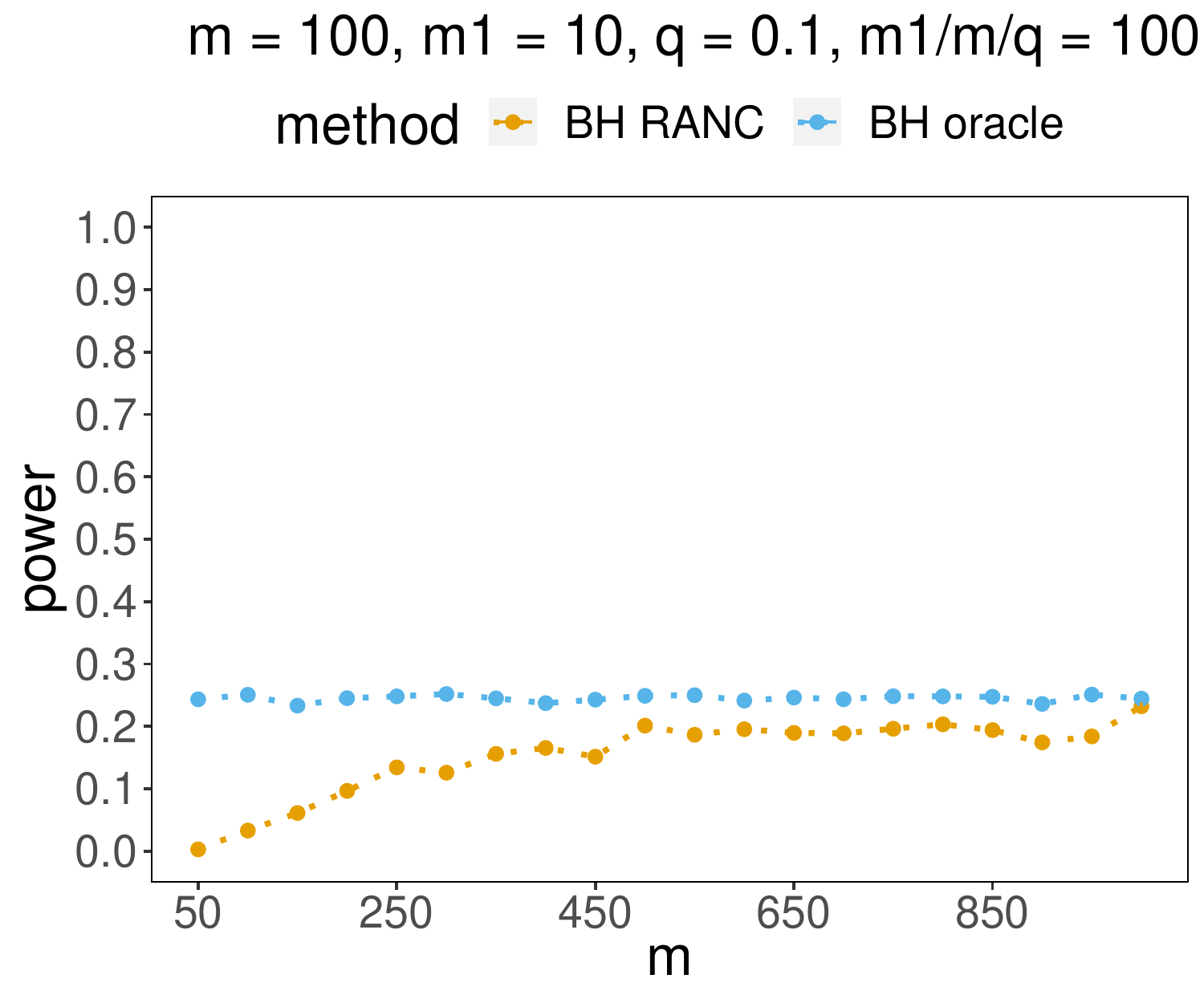}
    \subcaption{$\No = 100, \NoNonNull = 10, q = 0.1, \frac{\No}{\NoNonNull q} = 100$}
    \end{minipage}
    \begin{minipage}{6.7cm}
    \centering
\includegraphics[clip, trim = 0cm 0cm 0cm 1cm, width = 5.6cm]{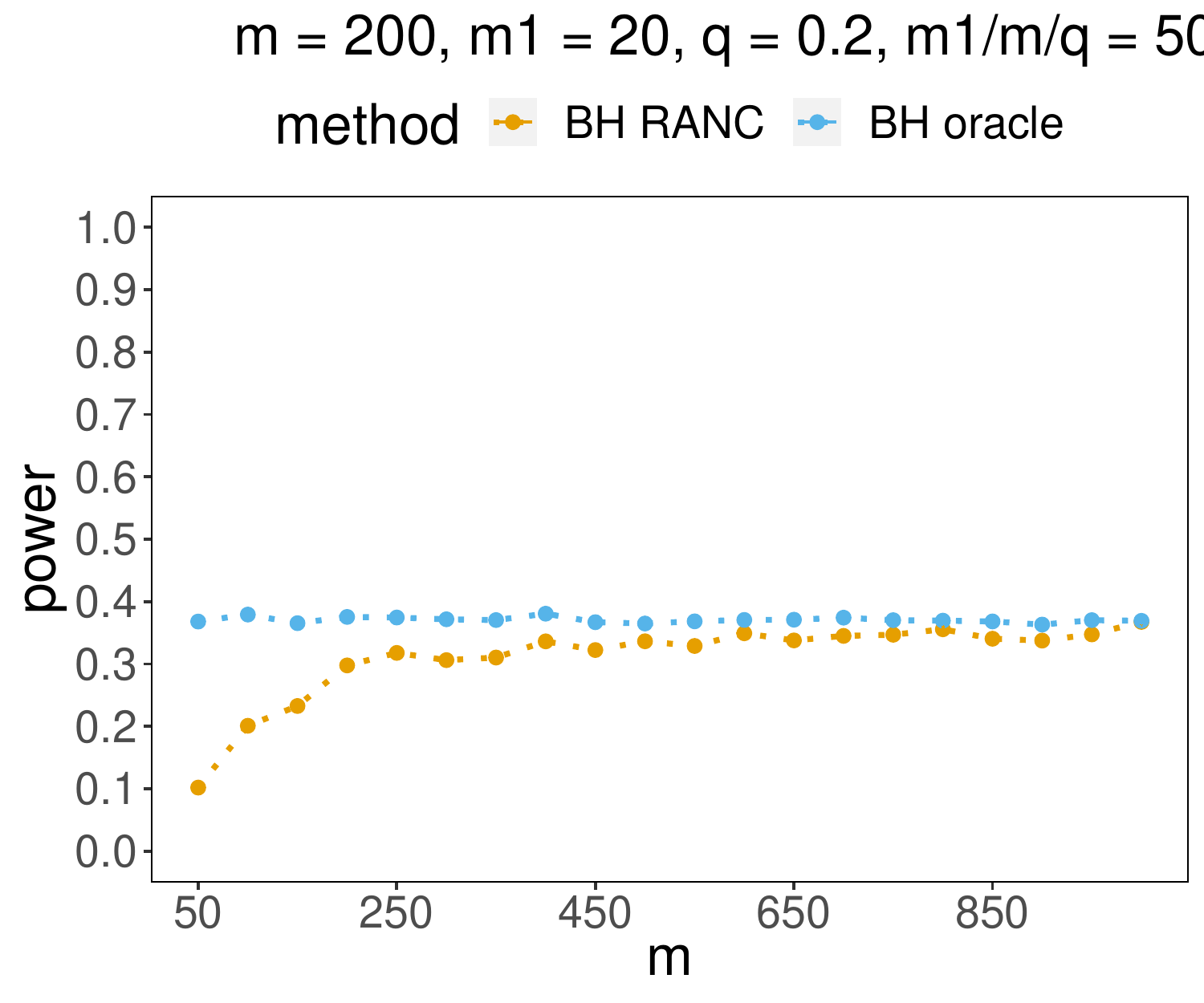}
    \subcaption{$\No = 200, \NoNonNull = 20, q = 0.2, \frac{\No}{\NoNonNull q} = 50$}
    \end{minipage}
            \begin{minipage}{6.7cm}
    \centering
\includegraphics[clip, trim = 0cm 0cm 0cm 1cm, width = 5.6cm]{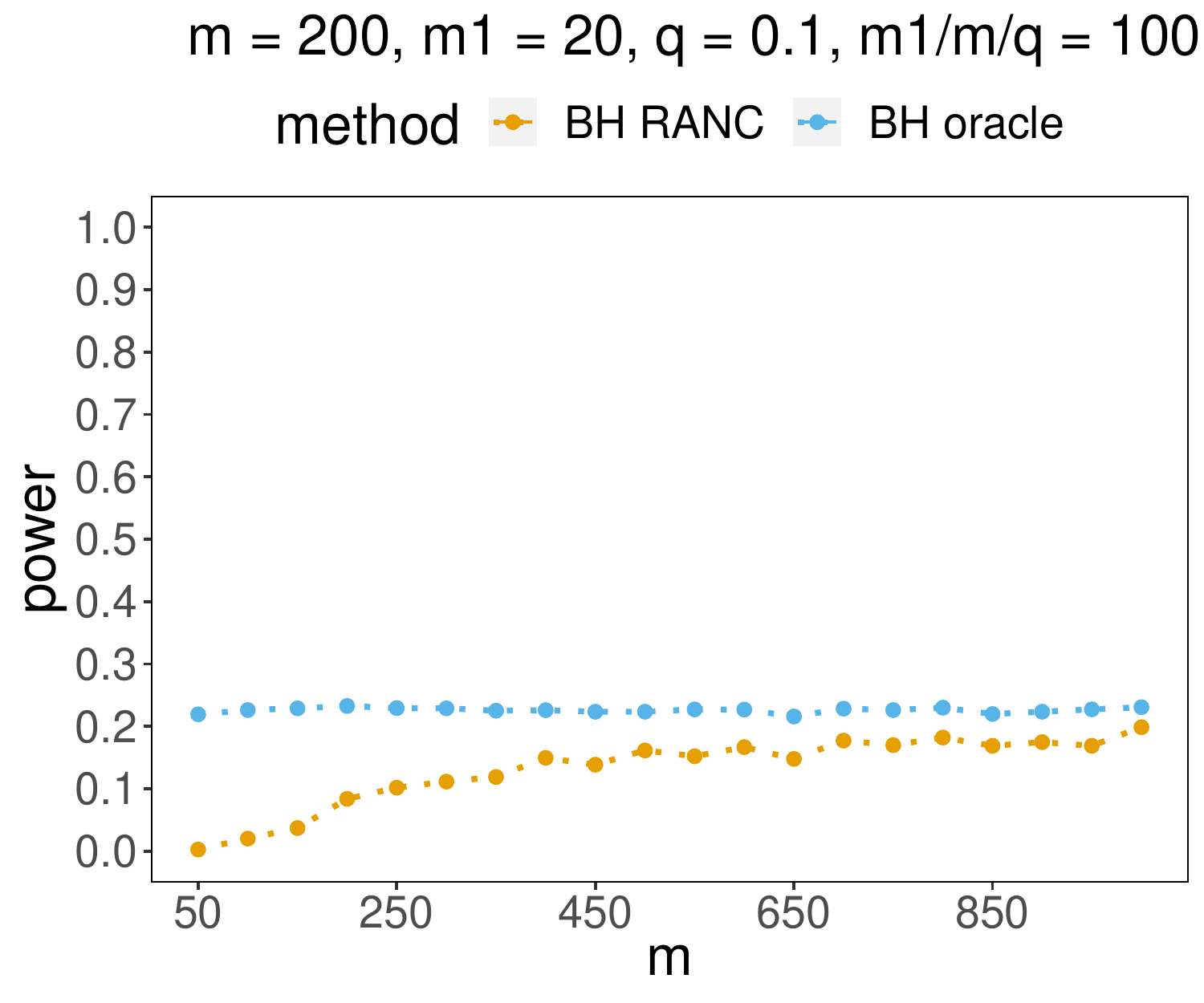}
    \subcaption{$\No = 200, \NoNonNull = 20, q = 0.1, \frac{\No}{\NoNonNull q} = 100$}
    \end{minipage}
    \begin{minipage}{6.7cm}
    \centering
\includegraphics[clip, trim = 0cm 0cm 0cm 1cm, width  = 5cm]{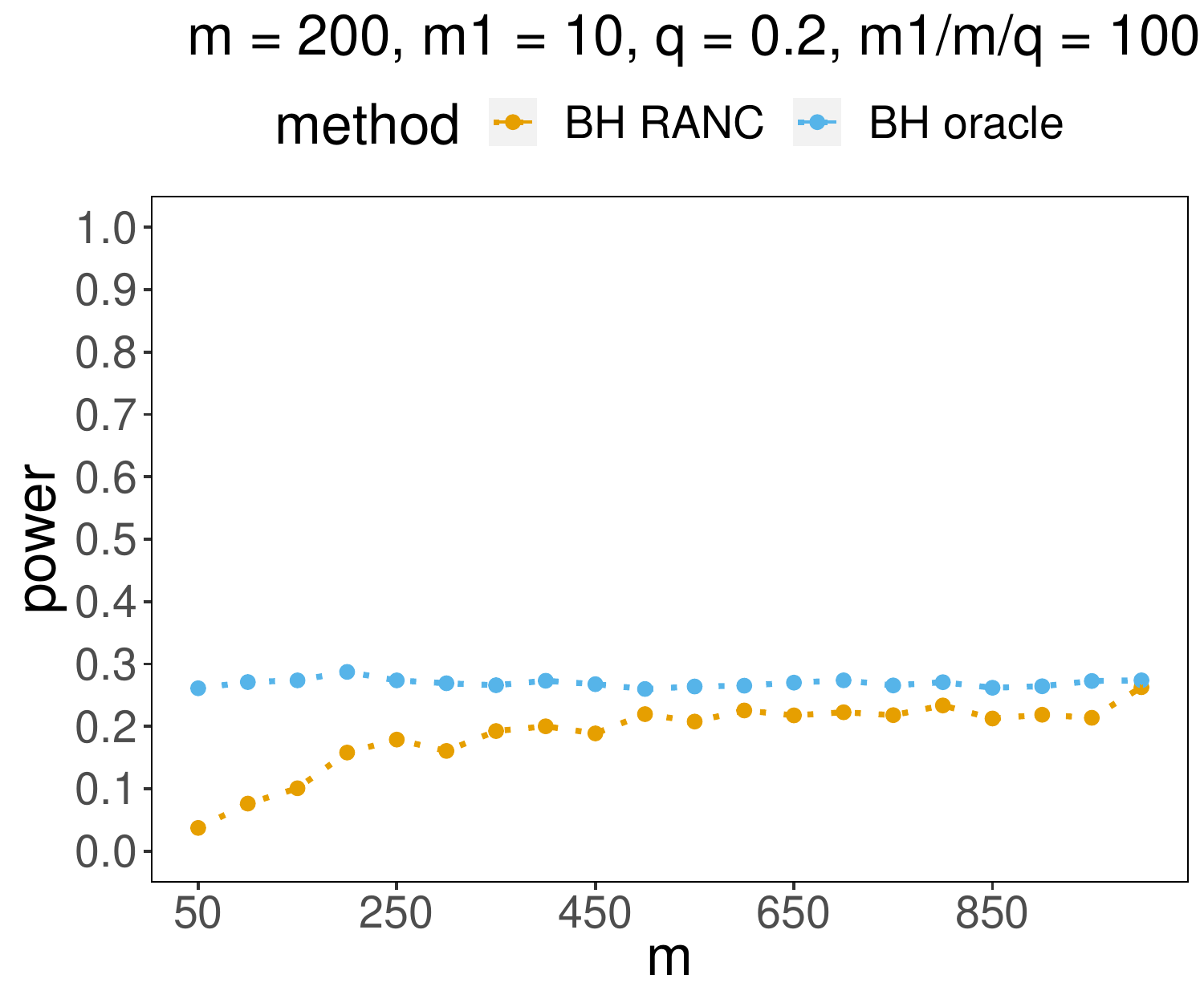}
    \subcaption{$\No = 200, \NoNonNull = 10, q = 0.2, \frac{\No}{\NoNonNull q} = 100$}
\end{minipage}
\begin{minipage}{6.7cm}
    \centering
\includegraphics[clip, trim = 0cm 0cm 0cm 1cm, width = 5.6cm]{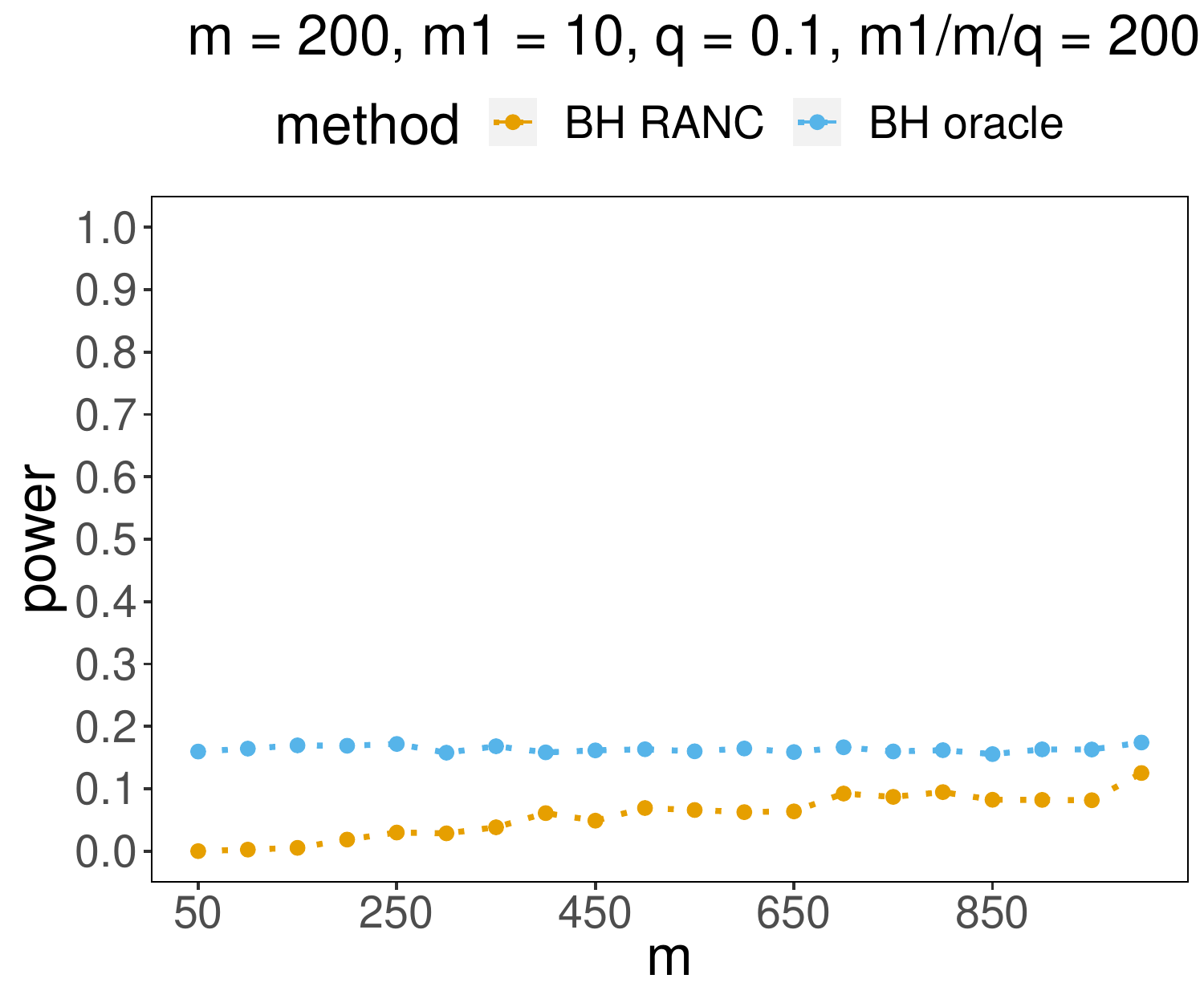}
    \subcaption{$\No = 200, \NoNonNull = 10, q = 0.1, \frac{\No}{\NoNonNull q} = 200$}
\end{minipage}
    \caption{{Power analysis of the \BH~\nickname~with weaker non-nulls and different numbers of internal negative control units. We vary $\No \in \{100, 200\}$, $\NoNonNull \in \{10, 20\}$, and $q \in \{0.1, 0.2\}$. The internal negative control sample size varies from $50$ to $1000$. The results are aggregated over $10^3$ trials.}}
    \label{fig:nc.sample.size.threshold.weak}
\end{figure}

\begin{figure}[h]
  \centering
  \includegraphics[width = \textwidth]{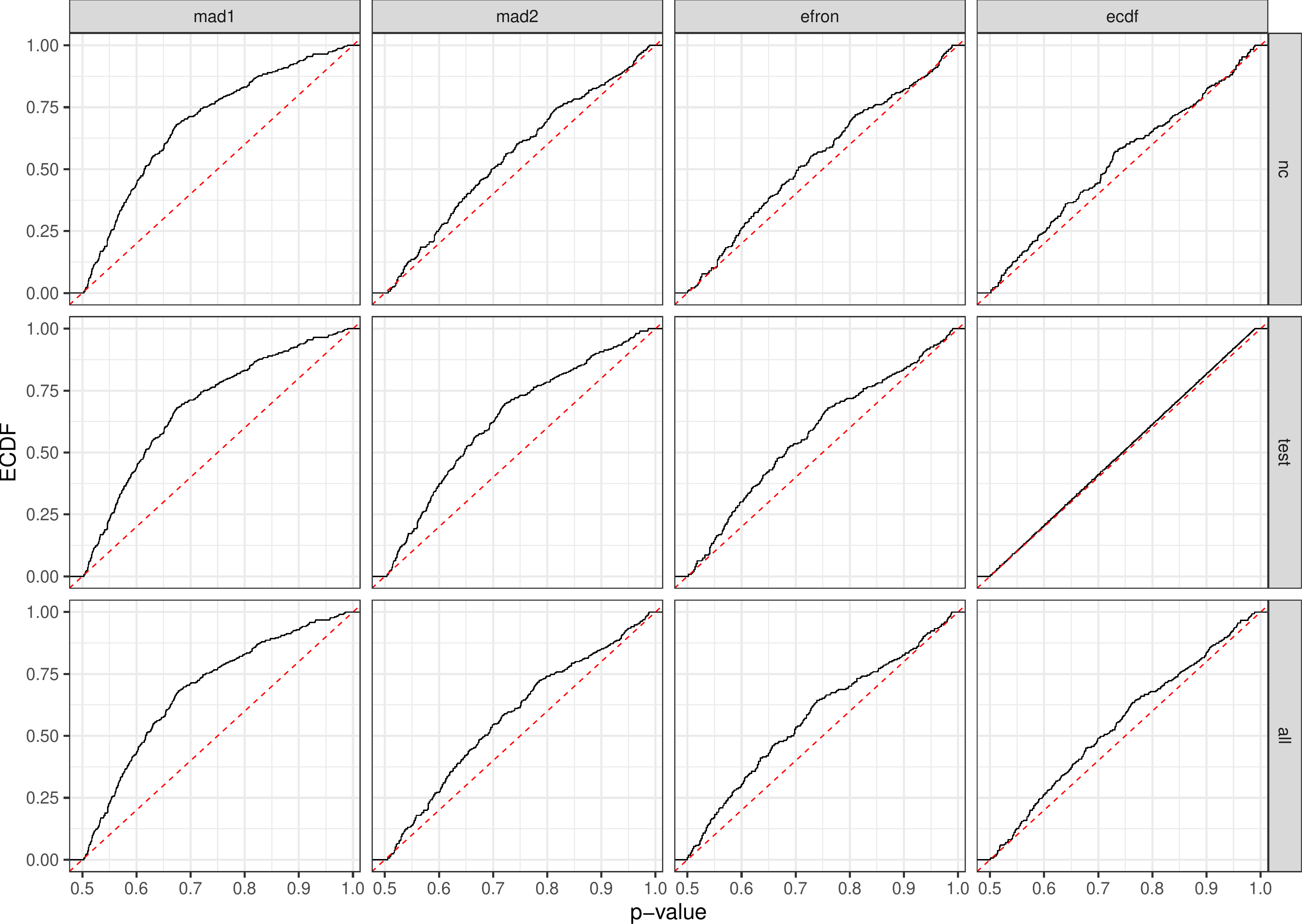}
  \caption{A comparison of the distribution of p-values for the
    proteomic application obtained from the empirical null
    distributions described in \Cref{sec:choice-empir-null}. It is
    expected that, when using the correct null distribution, p-values
    between 0.5 and 1 should be nearly uniformly distributed.}
  \label{fig:empirical-null}
\end{figure}

\begin{figure}[h]
  \centering
  \includegraphics[width = 0.5\textwidth]{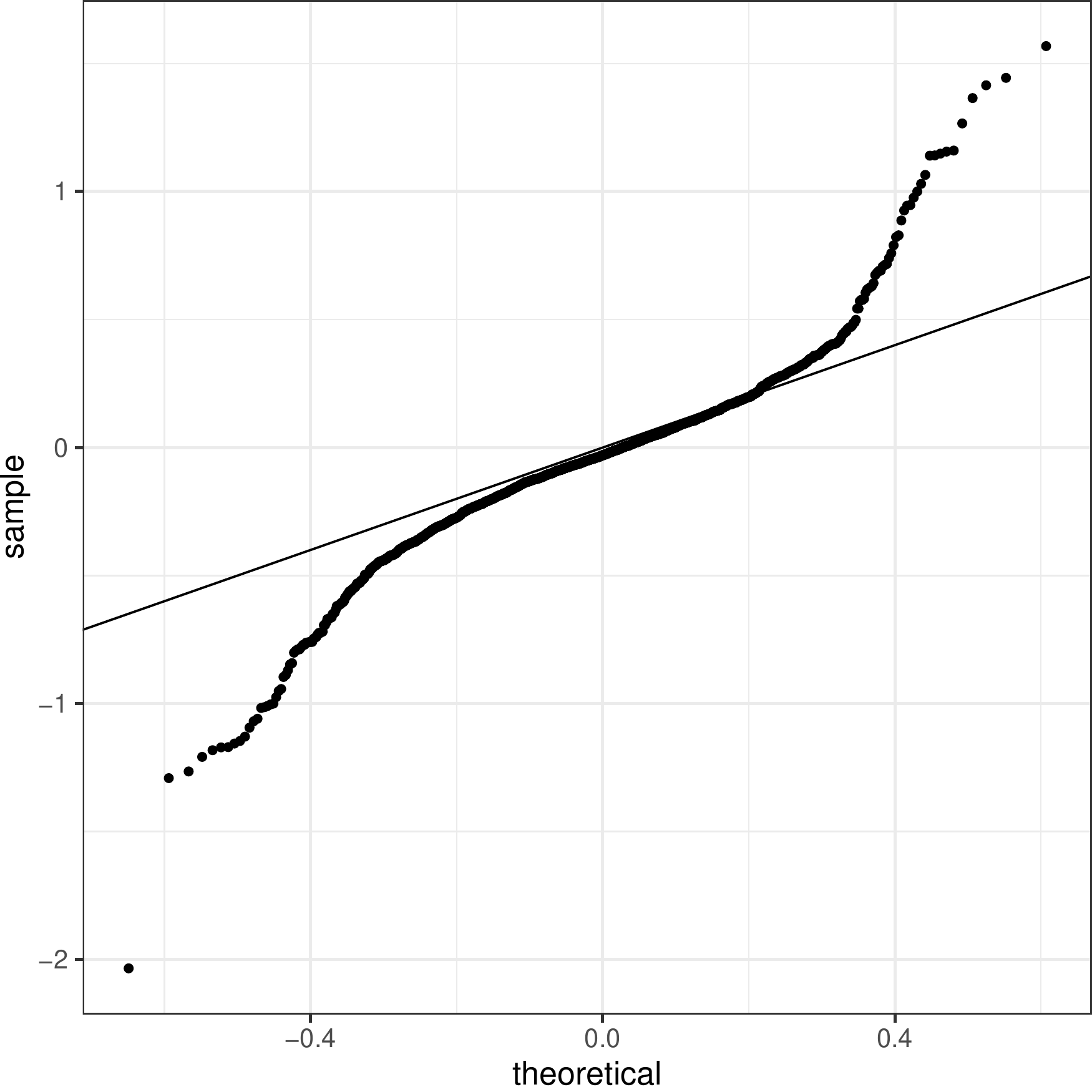}
  \caption{A quantile-quantile plot of the negative control statistics
    against the empirical null distribution $\text{N}(-0.02, 0.18)$
    obtained by Efron's method. The solid line corresponds to the diagonal.}
  \label{fig:qq-efron}
\end{figure}

\end{document}